\documentclass[11pt]{article}

\usepackage[english]{babel}
\usepackage[utf8x]{inputenc}
\usepackage{amsmath}
\usepackage{amssymb}
\usepackage[linesnumbered,ruled,vlined]{algorithm2e}

\usepackage[dvipsnames]{xcolor}
\definecolor{ao}{rgb}{0.0, 0.5, 0.0}

\usepackage{siunitx}

\SetKwInput{KwInput}{Input}                % Set the Input
\SetKwInput{KwOutput}{Output}              % set the Output

\SetCommentSty{mycommfont}

\usepackage{todonotes}
% 
% 
% 
% 
% Changing the margins in footnotes and thanks.
\usepackage[flushmargin]{footmisc}
\usepackage{etoolbox} % modify commands and many other things
\makeatletter
%%% We don't want to redefine \@makefntext
\patchcmd\maketitle{\@makefntext}{\@@@ddt}{}{}
%%% We don't want to have \rlap around the footnote mark
\patchcmd\maketitle{\rlap}{\mbox}{}{}
\makeatother

\usepackage{siunitx}
\usepackage{amsthm}
\usepackage{enumitem}
\usepackage{lmodern}
\usepackage{bbm}
\usepackage{etoolbox}
\usepackage{hyperref}

\usepackage{mathtools}

\mathtoolsset{showonlyrefs}

\usepackage{thmtools}
\usepackage{thm-restate}

\usepackage{titlesec}

\setcounter{secnumdepth}{4}
\setcounter{tocdepth}{4} 

\titleformat{\paragraph}
{\normalfont\normalsize\bfseries}{\theparagraph}{1em}{}
\titlespacing*{\paragraph}
{0pt}{3.25ex plus 1ex minus .2ex}{1.5ex plus .2ex}

\renewcommand{\theparagraph}{\thesubsubsection.(\roman{paragraph})}

\usepackage{empheq}
\usepackage{graphicx}

\usepackage{tikz}

\usepackage[left=1in,right=1in,top=1in,bottom=1in,bindingoffset=0cm]{geometry}

\def\D{\mathbb{D}}

\DeclareMathOperator{\Diag}{Diag}
\DeclareMathOperator{\bin}{bin}
\DeclareMathOperator{\good}{good}
\DeclareMathOperator{\poly}{poly}

\def\Q{\mathsf{Q}}
\def\ZZ{\mathsf{Z}}
\def\ZZb{\mathsf{Z}^{\bin}}
\def\HH{\mathsf{H}}
\def\WW{\mathsf{W}}

\def\HHb{\mathsf{H}^{\bin}}

\def\WWb{\mathsf{W}^{\bin}}

\setcounter{secnumdepth}{4}

\def\mathbi#1{{\textbf #1}}

\DeclareMathOperator*{\E}{\mathbb{E}}

\let\P\relax
\DeclareMathOperator*{\P}{\mathbb{P}}

\def\N{\mathcal{N}}
\def\Ra{\mathcal{R}}

\def\D{\mathcal{D}}

\def\la{\lambda_{\alpha}}

\def\a{\alpha}

\def\N{\mathcal{N}}
\def\T{\mathcal{T}}
\def\O{\Omega}

\def\S{\mathcal{S}}

\def\g{g_{\a}}
\def\bz{\mathbi{0}}

\def\bit{\{0,1\}}
\def\l{\ell}

\usepackage{upref}

\usepackage{soul}

\newtheorem{thm}{Theorem}[section]
\newtheorem{claim}[thm]{Claim}
\newtheorem{prop}[thm]{Proposition}
\newtheorem{lem}[thm]{Lemma}
\newtheorem{fact}[thm]{Fact}
\newtheorem{cor}[thm]{Corollary}

\newtheorem{defin}[thm]{Definition}
\newtheorem{remark}[thm]{Remark}

  \newenvironment{corbis}[1]
  {%
   \addtocounter{cor}{-1}%
   \begin{cor}}
  {\end{cor}}

\tikzset{
  block/.style    = {draw, thick, rectangle, minimum width = 2em},
sblock/.style      = {draw, thick, rectangle, minimum height = 6em,
minimum width = 6em}, 
}

\DeclareMathOperator{\Var}{Var}

\newcommand{\wB}{\widetilde{B}}

\newcommand{\cG}{\mathcal{G}}

\DeclareMathAlphabet{\mathbfsl}{OT1}{cmr}{bx}{sl}
\newcommand{\bU}{\mathbfsl{U}} 
\newcommand{\bZ}{\mathbfsl{Z}} 
\newcommand{\bu}{\mathbfsl{u}} 
\newcommand{\bX}{\mathbfsl{X}} 
\newcommand{\bs}{\mathbfsl{s}} 
\newcommand{\bY}{\mathbfsl{Y}} 
\newcommand{\by}{\mathbfsl{y}} 

\newcommand{\bzz}{\mathbfsl{z}} 
\newcommand{\bx}{\mathbfsl{x}} 
\newcommand{\bv}{\mathbfsl{v}} 
\newcommand{\bV}{\mathbfsl{V}} 
\newcommand{\bw}{\mathbfsl{w}} 
\newcommand{\wV}{\widetilde{\bV}}
\newcommand{\wv}{\widetilde{\bv}}
\newcommand{\wu}{\widetilde{\bu}}
\newcommand{\wg}{\widetilde{g}}
\newcommand{\Wj}{W^{(j)}}
\newcommand{\zj}{z^{(0)}_j}
\newcommand{\zzj}{z^{(1)}_j}

\def\Y{\mathcal{Y}}

\def\wt{\widetilde}
\def\wW{\widetilde{W}}

\newcommand\blfootnote[1]{%
  \begingroup
  \renewcommand\thefootnote{}\footnote{#1}%
  \addtocounter{footnote}{-1}%
  \endgroup
}

\makeatletter
\def\blfootnote{\gdef\@thefnmark{}\@footnotetext}
\makeatother

\setcounter{tocdepth}{3}

\title{\textsl{Ar\i kan meets Shannon:} \\
Polar codes with near-optimal convergence to channel capacity
\blfootnote{An extended abstract of this paper was presented at the 2020 ACM Symposium
on Theory of Computing (STOC)~\cite{Guruswami-Riazanov-Ye-STOC}.\\}}

\author{
 Venkatesan Guruswami\thanks{Computer Science Departmemt, Carnegie Mellon University, Pittsburgh, PA 15213. Email: \texttt{venkatg@cs.cmu.edu}. Research supported in part by NSF grants CCF-1422045, CCF-1563742, and CCF-1814603, and a Google Research Award.} 
\and
Andrii Riazanov\thanks{Computer Science Departmemt, Carnegie Mellon University, Pittsburgh, PA 15213. Email: \texttt{riazanov@cs.cmu.edu}. Research supported in part by NSF grants CCF-1422045, CCF-1563742, and CCF-1814603.} \and
Min Ye\thanks{Data Science and Information Technology Research
Center, Tsinghua-Berkeley Shenzhen Institute, Shenzhen, China. Email: \texttt{yeemmi@gmail.com}. Some of this research was carried out when the author was visiting Carnegie Mellon University.}
}

\date{}

\usepackage{tocloft}

\setlength{\cftbeforesecskip}{6pt}
\begin{document}

\maketitle
\thispagestyle{empty}

\parskip=0.5ex

\begin{abstract}
Let $W$ be a binary-input memoryless symmetric (BMS) channel with Shannon capacity $I(W)$ and fix any $\alpha > 0$. We construct, for any sufficiently small $\delta > 0$, binary linear codes of block length $O(1/\delta^{2+\alpha})$ and rate $I(W)-\delta$ that enable reliable communication on $W$ with quasi-linear time encoding and decoding.
Shannon's noisy coding theorem established the \emph{existence} of such codes (without efficient constructions or decoding) with block length $O(1/\delta^2)$. This quadratic dependence on the gap $\delta$ to capacity is known to be best possible. Our result thus yields a constructive version of Shannon's theorem with near-optimal convergence to capacity as a function of the block length. This resolves a central theoretical challenge associated with the attainment of Shannon capacity. Previously such a result was only known for the erasure channel.

\smallskip
Our codes are a variant of Ar\i kan's polar codes based on multiple carefully constructed local kernels, one for each intermediate channel that arises in the decoding. A crucial ingredient in the analysis
is a strong converse of the noisy coding theorem when communicating using random linear codes on arbitrary BMS channels. Our converse theorem shows extreme unpredictability of even a single message bit for random coding at rates slightly above capacity.

\end{abstract}

\begin{center}
\textbf{\small Keywords:}\\
{\small Polar codes, capacity-achieving codes, scaling exponent, finite blocklength}
\end{center}

\newpage
{\footnotesize 
\tableofcontents}
\thispagestyle{empty}

\newpage
\setcounter{page}{1}
\section{Introduction}

We construct binary linear codes that achieve the Shannon capacity of the binary symmetric channel, and indeed any binary-input memoryless symmetric (BMS) channel, with a near-optimal scaling between the code length and the gap to capacity. Further, our codes have efficient (quasi-linear time) encoding and decoding algorithms. Let us now describe the context of our result and its precise statement in more detail.

The binary symmetric channel (BSC) is one of the most fundamental and well-studied noise models in coding theory. 
The BSC with crossover probability $p \in (0,1/2)$ ($\mathrm{BSC}_p$) flips each transmitted bit independently with probability $p$. By Shannon's seminal noisy coding theorem \cite{Shannon48}, we know that the capacity of $\mathrm{BSC}_p$ is $1-h(p)$, where $h(\cdot )$ is the binary entropy function. This means that  reliable communication over $\mathrm{BSC}_p$ is possible at information rates approaching $1-h(p)$, and at rates above $1-h(p)$ this is not possible.
More precisely, for any $\delta > 0$, there \emph{exist} codes of rate $1-h(p)-\delta$ using which one can achieve miscommunication probability at most $2^{-\Omega(\delta^2 n)}$ where $n$ is the block length of the code. In fact, random linear codes under maximum likelihood decoding offer this guarantee with high probability.  Thus Shannon's theorem implies the existence of codes of block length $O(1/\delta^2)$ that can achieve small error probability on $\mathrm{BSC}_p$ at rates within $\delta$ of capacity. Conversely, by several classical results \cite{wolfowitz,strassen,Strassen09trans,Polyanskiy10}, we know that the block length has to be at least $\Omega(1/\delta^2)$ in order to approach capacity within $\delta$.

Shannon's theorem is based on the probabilistic method and does not describe the codes that approach capacity or give efficient algorithms to decode them from errors caused by $\mathrm{BSC}_p$. 
Thus the codes with rates $1-h(p)-\delta$ take at least time exponential in $1/\delta^2$ to construct as well as decode.
This is also true for concatenated coding schemes~\cite{forney} as the inner codes have to be decoded by brute-force, and either have to also be found by a brute-force search or allowed to vary over an exponentially large ensemble (leading to exponentially large block length).

The theoretical challenge of constructing codes of rate $1-h(p)-\delta$ with construction/decoding complexity scaling polynomially in $1/\delta$ in fact remained wide open for a long time. Finally, around 2013, two independent works~\cite{GX15,hassani-finite-scaling-paper-journal} gave an effective finite-length analysis of Ar\i kan's remarkable polar codes construction~\cite{arikan-polar}. (Ar\i kan's original analysis, as well as follow-ups like \cite{arikan-telatar}, proved convergence to capacity as the block length grew to infinity but did not quantify the speed of convergence.) Based on this, a construction of polar codes with block length, construction, and decoding complexity all bounded by a polynomial in $1/\delta$ to capacity was obtained in \cite{GX15,hassani-finite-scaling-paper-journal}. The result also applies to any BMS channel, not just the BSC.  

If the block length of the code scales as $O(1/\delta^\mu)$ as a function of the gap $\delta$ to capacity, we say that $\mu$ is the \emph{scaling exponent}. 
The above results established that the scaling exponent of polar codes is finite.  It is worth pointing out that polar codes are the \emph{only} known efficiently decodable capacity-achieving family proven to have a finite scaling exponent.  The work \cite{GX15} did not give an explicit upper bound on the scaling exponent of polar codes, whereas \cite{hassani-finite-scaling-paper-journal} showed the bound $\mu \le 6$.
Following some improvements in \cite{GB13,MHU16_unified}, the current best known upper bound on $\mu$ for the BSC (and any BMS channel) is $4.714$.

Note that random linear codes have optimal scaling exponent $2$. The above results thus raise the intriguing
challenge of constructing codes with scaling exponent close to $2$, a goal we could not even dream of till the recent successes of polar codes. 

Ar\i kan's original polar coding construction is based on a large tensor power of a simple $2 \times 2$ matrix,
which is called the \emph{kernel} of the construction.
For this construction, it was shown in \cite{hassani-finite-scaling-paper-journal} that the scaling exponent 
$\mu$ for Ar\i kan's original polar code construction is \emph{lower bounded} by $3.579$, even for the simple binary erasure channel. 
 Given this limitation, one approach to improve $\mu$ 
 is to consider polar codes based on $\ell \times \ell$ kernels for larger $\ell$. 
 However, better upper bounds on the scaling exponent of polar codes based on larger kernels have not been established except for the simple case of the binary erasure channel (BEC).\footnote{Polar codes based on $\ell \times \ell$ kernels have much larger block length $\ell^t$ compared to $2^t$ for the $2 \times 2$ case. So to get an improvement in $\mu$, one has to compensate for the increasing block length via better bounds on the local behavior of the kernel.}
 For the BEC, using large kernels, polar codes with scaling exponent $2+\alpha$ for any desired $\alpha > 0$ were given in the very nice paper~\cite{FHMV17} which spurred our work. (We will discuss this and other related works in more detail in Sections~\ref{subsec:prior-work}--\ref{subsec:polar-bec}.)

Our main result in this work is a polynomial time construction of polar codes based on large kernels that approach the optimal scaling exponent of $2$ for every BMS channel. Specifically, for any desired small $\alpha > 0$, by picking sufficiently large kernels (as a function of $\alpha$), the block length $N$ can be made as small as $O_\a(1/\delta^{2+\alpha})$ for codes of rate $I(W)-\delta$ (the notation $O_\a(\cdot)$ hides a constant that depends only on $\a$). The encoding and decoding complexity will be \emph{quasi-linear} in $N$, and thus can also have a near-quadratic growth with $1/\delta$.

\iffalse
Shannon's noisy channel coding theorem implies that for every
memoryless channel $W$ with binary inputs and a finite output
alphabet, there is a capacity $I(W) \ge 0$ and constants $a_W <
\infty$ and $b_W > 0$ such that the following holds: For all $\delta >
0$ and integers $N \ge a_W/\eps^2$, there {\em exists} a binary
code $C \subset \{0,1\}^N$ of rate at least $I(W) - \delta$ which enables
  reliable communication on the channel $W$ with probability of
  miscommunication at most $2^{-b_W \eps^2 N}$. A proof implying these quantitative bounds is implicit in Wolfowitz's proof of Shannon's theorem \cite{wolfowitz}. 
  \fi

\begin{thm}[Main]
\label{thm:intro-main}
Let $W$ be an arbitrary BMS channel with Shannon capacity $I(W)$.
For any desired $\a \in \left(0, \frac{1}{36}\right)$, if we choose a large enough constant $\l\ge \l_0(\a)$  to be a power of $2$, then there is a code $\mathcal{C}$ generated by the polar coding construction using kernels of size $\l\times\l$ such that the following four properties hold when $N$ is the code length: 
\begin{enumerate}
\itemsep=0ex
\vspace{-1ex}
\item 
the code construction has $N^{O_\a(1)}$ complexity;
\item both encoding and decoding have $O_\a(N\log N)$ complexity; 
\item the rate of $\mathcal{C}$ is at least  $I(W)-N^{-1/2+18\a}$; and
\item the block decoding error probability is bounded by  $\exp(-N^{\a})$  
when $\mathcal{C}$ is used for channel coding over $W$.
\end{enumerate}
\end{thm}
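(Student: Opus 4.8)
The plan is to construct the code by recursing Arıkan-style polarization with $\ell\times\ell$ kernels, but with a crucial twist: rather than fixing one kernel, I will use a family of kernels, one tailored to each intermediate synthetic channel that arises during the recursion. The backbone of the argument is the standard polar-coding template: one step of the transform takes a BMS channel $W$ and produces $\ell$ synthetic channels $W^{(1)},\dots,W^{(\ell)}$ (via the chosen $\ell\times\ell$ kernel acting on $\ell$ independent copies of $W$); iterating $t$ times gives $N=\ell^t$ synthetic channels, and the code is obtained by freezing the input bits corresponding to the ``bad'' synthetic channels (those with Bhattacharyya parameter, or conditional entropy, bounded away from $0$ and $1$). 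Successive cancellation decoding then runs in $O_\alpha(N\log N)$ time since each kernel has constant size $\ell=\ell(\alpha)$, and the block error probability is at most the sum of the Bhattacharyya parameters of the unfrozen channels, which I will drive below $\exp(-N^\alpha)$.

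\textbf{The quantitative heart of the argument} is controlling the speed of polarization, i.e.\ how fast the fraction of ``unpolarized'' synthetic channels decays with the number of recursion steps $t$. The scaling exponent $\mu$ is precisely the parameter governing this: if after $t$ steps the fraction of channels with entropy in $(\eta, 1-\eta)$ is $\ell^{-t/\mu+o(t)}$, then one gets block length $O(1/\delta^\mu)$ at gap $\delta$. To push $\mu$ down toward $2$, I need kernels whose \emph{local} polarization behavior is near-optimal. The key analytic tool is a potential/entropy argument: I would track how the vector of conditional entropies $(H(W^{(i)}))_i$ spreads out under one application of a well-chosen kernel, and show that a suitable convex potential function (something like $\sum_i g(H(W^{(i)}))$ for a carefully chosen concave $g$, e.g.\ a power $H^{1-1/\mu}(1-H)^{1-1/\mu}$-type function) contracts by a factor matching the $\mu\to 2$ target. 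This is where the \emph{strong converse for random linear codes on BMS channels} enters: it certifies that for a random linear kernel of size $\ell$, with high probability the induced synthetic channels have entropies so close to $\{0,1\}$ that even one message bit is essentially unpredictable just above capacity — this gives the near-optimal local contraction, and a probabilistic argument then extracts an explicit good kernel for each intermediate channel.

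\textbf{The main obstacle} I anticipate is precisely the need for \emph{channel-dependent} kernels and the bookkeeping this forces. On the BEC a single kernel works for all intermediate channels (they are all BECs), which is why \cite{FHMV17} succeeds there; for a general BMS channel the intermediate channels form a genuinely infinite-dimensional family, so I cannot fix one kernel at the outset. The fix is to discretize/degrade the space of intermediate channels into finitely many representatives, prove (via the strong converse) that each representative admits a near-optimal kernel, and then show robustness — that a kernel good for a nearby channel is still good enough — so that the same finite kernel list suffices throughout the recursion. Carrying this out requires (i) a quantitative ``local polarization'' lemma stating that the strong converse yields entropy of synthetic channels within $N^{-\Theta(1)}$ of $0$ or $1$ with the right multiplicity, (ii) a continuity/degradation argument so the finitely many kernels cover all channels encountered, and (iii) assembling these into the global recursion while tracking the $N^{-1/2+18\alpha}$ rate loss and $\exp(-N^\alpha)$ error bounds. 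The construction complexity $N^{O_\alpha(1)}$ then follows from approximating each intermediate channel to polynomial precision (standard channel-degradation/upgradation quantization) and searching the constant-size space of candidate kernels by brute force.
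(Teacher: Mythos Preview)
Your high-level plan matches the paper's approach almost exactly: multiple channel-dependent $\ell\times\ell$ kernels, the potential function $g_\alpha(h)=(h(1-h))^\alpha$ tracked recursively, the strong converse for bit-decoding of random linear codes as the engine showing random kernels give near-optimal local polarization, degradation/binning of intermediate channels to polynomial output size, and brute-force kernel search at each node. One nuance you omit is the paper's two-regime split (``suction at the ends'' vs.\ ``variance in the middle''): when the intermediate channel is already near $0$ or $1$ in entropy, the paper does \emph{not} search for a kernel but takes the fixed Ar\i kan kernel $A_2^{\otimes\log\ell}$, reserving the random-kernel search (and the strong converse) only for the moderate-entropy regime. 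This split is not merely a convenience---it is essential for the next point.

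The genuine gap in your proposal is the jump to $\exp(-N^\alpha)$ block error. You say you will approximate each intermediate channel to polynomial precision and then drive the sum of Bhattacharyya parameters below $\exp(-N^\alpha)$. But polynomial-precision binning means you can only certify that a binned channel has Bhattacharyya parameter $\le 1/\mathrm{poly}(N)$; you cannot directly certify anything smaller for the \emph{true} channel, because the binning error is itself $1/\mathrm{poly}(N)$. This by itself gives only inverse-polynomial decoding error (indeed, this is what the paper first proves as Theorem~\ref{thm:main1}). To reach $\exp(-N^\alpha)$ \emph{while keeping construction time polynomial}, the paper runs a separate multi-stage argument (Section~\ref{sec:exponential-decoding}, building on \cite{Wang-Duursma}): once a binned channel is certified to be mildly polarized (say $Z^{\bin}\le N^{-2}$), one knows the fixed Ar\i kan kernel was used from that point downward, and the deterministic squaring behavior $Z(W^+)=Z(W)^2$ of that specific kernel lets one \emph{extrapolate} a bound on the true (unbinned) Bhattacharyya parameter far below the binning precision---without ever computing the true channel. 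Counting ``good branches'' in the Ar\i kan tree then yields $Z\le\exp(-N^\alpha)$ for enough indices. Your proposal has no mechanism for this extrapolation, and without the suction-at-the-ends regime being handled by a \emph{known, fixed} kernel with explicit $Z$-evolution, the argument does not go through.
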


The above ``constructivizes" the quantitative finite-length version of Shannon's theorem with a small $\alpha$ slack in the speed of convergence to capacity. The lower bound on $\l$ can be chosen as $\ell_0(\a)=\exp(\Omega(\a^{-1.01}))$. More precisely, it should satisfy $\log\l_0(\a) \geq \frac{11}{\a}$ and $\frac{\log\l_0(\a)}{\log\log \l_0(\a) + 2} \geq \frac3{\a}$. 
% (the exact lower bound can be found in the conditions of Theorem~\ref{thm:kernel_seacrh_correct}). 
Note that a similar lower bound on $\l$ also appears in the aforementioned result for the BEC from~\cite{FHMV17}.
Due to the requirement of extremely large $\ell$, our result is thus primarily theoretical in nature, and meant to illustrate that the polar coding framework is powerful enough to achieve asymptotically optimal rate of convergence to Shannon capacity with efficient algorithms.

We would like to point out that in the conference  version of this work~\cite{Guruswami-Riazanov-Ye-STOC} we only proved inverse polynomial decoding error probability, as opposed to the inverse sub-exponential $\exp(-N^{\a})$ bound which we show here. This improvement uses the subsequent analysis of polarization due to Wang and Duursma in~\cite{Wang-Duursma}, where they extended the results of Theorem~\ref{thm:intro-main} to arbitrary discrete memoryless channels, possibly non-binary and asymmetric, and proved the $\exp\left(-N^{O(\a)}\right)$ bound on the decoding error probability. 
% We restrict our attention to binary-input symmetric channels in this paper. Wang and Duursma extended the results of Theorem~\ref{thm:intro-main} to any discrete memoryless channels (possibly non-binary and asymmetric) in~\cite{Wang-Duursma}. Furthermore, they also presented a tighter analysis of decoding error probability, which allowed to drive it down to inverse exponential in $N$ ($\exp\left(-N^{O(\a)}\right)$ in the context of Theorem~\ref{thm:intro-main}), as opposed to inverse polynomial decoding error probability, which we achieve in this paper. 
However, this was done at a cost of losing the polynomial-time construction complexity of the code. We are able to non-trivially combine the analysis from~\cite{Wang-Duursma} with our approach of constructing the code to achieve \emph{both} polynomial time construction and sub-exponentially small decoding error probability simultaneously. Getting a polynomial-time constructible version of the results of \cite{Wang-Duursma} for general channels with arbitrary input alphabet remains a challenging and interesting open question.

%  We restrict our attention to binary codes in this paper. We leave the challenge of extending Theorem~\ref{thm:intro-main} to codes over larger prime alphabets for future work. We suspect a similar proof strategy to ours will work in this case, though the required technical ingredients might get even more complicated.

\section{Overview of our construction and analysis} 

In order to better explain our work and situate it in the rich backdrop of related works on polar codes, we begin with some context and background on the phenomenon of channel polarization that lies at the heart of Ar\i kan's polar coding approach. 

\subsection{Channel transforms, entropy polarization, and polar codes}
\label{subsec:polarization-intro}
\sloppy Consider an arbitrary binary-input memoryless symmetric (BMS)\footnote{We say that a channel $W:\{0,1\}\to\Y$ is a BMS channel if there is a permutation $\pi$ on the output alphabet $\Y$ satisfying i) $\pi^{-1}=\pi$ and ii) $W(y|1)=W(\pi(y)|0)$ for all $y\in\Y$.} channel $W\,:\,\bit\to\Y$, and an ${\l \times \l}$ invertible binary matrix $K$ (referred to as the \emph{kernel}). Suppose that we are transmitting a binary vector $\bU = (U_1, U_2, \dots, U_{\l})$ uniformly chosen from $\bit^{\l}$ in the following way: first, it is transformed into $\bX = \bU K$, which is then transmitted through $\l$ copies of the channel $W$ to get the output $\bY = W^{\l}(\bX) \in \Y^{\l}$.

Now imagine decoding the input bits $U_i$ successively in the order of increasing $i$. This naturally leads to a binary-input channel $W_i : \bit \to \Y^{\l}\times\bit^{i-1}$, for each $i\in [\l]$, which is the channel ``seen" by the bit $U_i$ when all the previous bits $\bU_{<i}$ and all the channel outputs $\bY\in \Y^{\l}$ are known. Formally, the transition probabilities of this channel are
\begin{equation}
\label{Arikan_subchannels}
     W_i(\bY, \bU_{<i}\, |\, U_i) = \dfrac1{2^{\l - 1}} \sum_{\bV \in \bit^{\l - i}} W^{\l}\Big(\bY\, |\, (\bU_{<i}, U_i, \bV)K\Big), 
\end{equation}
where $\bU_{<i} \in \bit^{i-1}$ are the first $(i-1)$ bits of $\bU$, and the sum is over all possible values $\bV \in \bit^{\l-i}$ that the last $(\l-i)$ bits of $\bU$ can take. In this paper we will address the channel $W_i$ as ``Ar{\i}kan's bit-channel of $W$ with respect to $K$."

A \emph{polarization transform} associated with the kernel $K$ is then defined as a transformation that maps $\l$ copies of the channel $W$ to the channels $W_1$, $W_2, \dots, W_{\l}$.
For a BMS channel $W$, we define $H(W)$ as the conditional entropy of the channel input random variable given the channel output random variable when the channel input has uniform distribution.
Since $K$ is invertible, a direct implication of the chain rule for entropy gives \emph{entropy conservation property}, which is \begin{equation}
    \label{eq:entropy_conserv}
    \l \cdot H(W) = H(\bX|\bY)
    =H(\bU|\bY)
    =\sum_{i=1}^{\l}H(U_i|\bU_{<i},\bY)
    =\sum_{i=1}^{\l}H(W_i). 
\end{equation} 

If $K$ is invertible and is not upper-triangular under any column permutation (which we refer to as a \emph{mixing matrix}), then the bit-channels $W_1, W_2, \dots, W_{\l}$ start \emph{polarizing} -- some of them become better than $W$ (have smaller entropy), and some become worse \cite[Lemma 1 and Theorem 2]{KSU10}. The standard approach is then to recursively apply the polarization transform of $K$ to these bit-channels. This naturally leads to an $\l$-ary tree of channels. The $t$'th level of the tree corresponds to the linear transformation  $K^{\otimes t}$, the $t$-fold Kronecker product of $K$. \footnote{Actually, the analysis is more convenient if one applies a bit-reversal permutation of the $U_i$'s, and indeed we do so also in this paper, but this is not important for our current discussion.}

In his landmark paper~\cite{arikan-polar}, Ar\i kan proved that when $K = \left(\begin{smallmatrix}1 & 0\\ 1 & 1\end{smallmatrix}\right)$, at 
the $t$'th level, all but a $o(1)$ fraction of the channels (as $t \to \infty$) are either almost noiseless (have tiny entropy) or completely useless (have entropy very close to $1$). 
To get capacity-achieving codes from polarization, the idea is to use the almost-noiseless channels, which will constitute $\approx I(W)$ fraction by conservation of entropy, to carry the message bits, and ``freeze" the bits in the remaining positions to pre-determined values (eg. all $0$s). Thus the generator matrix of the code will consist of those rows of $K^{\otimes t}$ that correspond to the almost-noiseless positions.  Ar\i kan presented a successive cancellation (SC) decoder and proved that it can be implemented using $O(N \log N)$ operations where $N=\ell^t$ is the code length, thanks to the nice recursive structure of $K^{\otimes t}$.

For the parameters of the code, if one shows that at most $\delta_t$ fraction of the channels at the $t$'th level have entropies in the range $(\zeta_t,1-\zeta_t)$, then one (roughly) gets codes of length $2^t$, rate $I(W)-\delta_t-\zeta_t$, for which the SC decoder achieves decoding error probability $\zeta_t \ell^t$ for noise caused by $W$ (see, for example \cite[Theorem A.3]{Blasiok18}). Thus, one needs $\zeta_t$ sub-exponentially small in $t$ (i.e., at most $\exp(-\omega(t))$) to achieve good decoding error. For Ar\i kan's original $2 \times 2$ kernel, this was shown in \cite{arikan-telatar}. Korada, Sasoglu and Urbanke extended the analysis to arbitrary $\l\times\l$ mixing matrices over the binary field~\cite{KSU10}, and Mori and Tanaka established a similar claim over all finite fields~\cite{mori-tanaka}.
 
The fraction $\delta_t$  of \emph{unpolarized} channels (whose entropies fail to be sub-exponentially close to $0$ or $1$) governs the gap to capacity of polar codes. The above works established that $\lim_{t \to \infty} \delta_t = 0$, and thus polar codes achieve capacity asymptotically as the block length grows to infinity. However, they did not provide any  bounds on the speed at which $\delta_t \to 0$ as a function $t$, much less quantify a scaling exponent. Note that one would need to show $\delta_t \le O(\ell^{-t/\mu})$ to establish a scaling exponent of $\mu$, since the code length is $\ell^t$.

\subsection{Scaling exponents: prior work}
\label{subsec:prior-work}
For Ar\i kan's original kernel $\left(\begin{smallmatrix}1 & 0\\ 1 & 1\end{smallmatrix}\right)$, two independent works~\cite{hassani-finite-scaling-paper-journal,GX15} proved that $\delta_t$ drops to $0$ exponentially fast in $t$. This proved that Ar\i kan's polar codes have finite scaling exponent (i.e., converge to capacity polynomially fast in the block length), the first codes with this important feature. Blasiok {\it et al} generalized this result significantly \cite{Blasiok18}, proving that the entire class of polar codes, based on arbitrary mixing matrices over any prime field as kernels, has finite scaling exponent. 

For concrete upper bounds on the scaling exponent, the work of Hassani, Alishahi, and Urbanke~\cite{hassani-finite-scaling-paper-journal} had proved $\mu \le 6$ for Ar\i kan's original kernel. This was improved to $\mu \le 5.702$ in \cite{GB13}.
Mondelli, Hassani, and Urbanke~\cite{MHU16_unified} showed that $\mu \leq 4.714$ for any BMS channel $W$, and showed a better upper bound $\mu \leq 3.639$ for the case when $W$ is a binary erasure channel (BEC).
A \emph{lower bound} $\mu \ge 3.579$ appears in \cite{hassani-finite-scaling-paper-journal} for the case when successive cancellation decoder is used and analyzed using standard methods, which suggests that polar codes based on Ar\i kan's original $2 \times 2$ kernel fall short of the optimal scaling exponent of $2$.

For larger kernels, effective upper bounds on the scaling exponent are harder to establish as the local evolution of the channels is more complex. In fact, to the best of our knowledge, there is no such explicit bound in the literature, for any\footnote{Here we exclude special cases such as a block diagonal matrix with blocks of size at most $2$ which can be reduced to the $2 \times 2$ case but will only have a worse scaling exponent.} kernel of size bigger than $2$.
The analysis of polar codes is a lot more tractable for the case of erasure channels, where symbols get erased (replaced by a ``?" but never corrupted). Next we describe some results for erasure channels as well as the difficulty in extending these results to channels such as the BSC.

\subsection{Polar codes for erasure channels}
\label{subsec:polar-bec}
For the erasure channel, we have analyses of larger kernels and even codes with scaling exponent approaching $2$. 
Binary $\ell \times \ell$ kernels for powers of two $\ell \le 64$ optimized for the binary erasure channel appear in \cite{Miloslavskaya-Trifonov,Fazeli-Vardy,Yao-Fazeli-Vardy}; a $64 \times 64$ kernel achieving $\mu < 3$ is reported in \cite{Yao-Fazeli-Vardy}.

Pfister and Urbanke proved in~\cite{Pfister-Urbanke} that the optimal scaling exponent $\mu=2$ can be approached if one considers transmission over the $q$-ary erasure channel for large alphabet size $q$. They used polar codes based on $q \times q$ kernels.
Fazeli, Hassani, Mondelli, and Vardy~\cite{FHMV17} then established a similar result for the more challenging and also more interesting case of $q=2$, i.e., for the binary erasure channel, using $\ell \times \ell$ kernels for large $\ell$. They pose proving an analogous result for arbitrary BMS channels as an important challenge. Their conjecture that this can be accomplished provided some of the impetus for our work. Our analysis structure follows a similar blueprint to \cite{FHMV17} though the technical ingredients become significantly more complex for channels other than the BEC, as explained next.

The polarization process for erasure channels has a particularly nice structure. If the initial channel $W$ is the binary erasure channel with erasure probability $z$ (denoted $\mathrm{BEC}(z)$), then the Ar{\i}kan channels $W_i$, $i\in [\l]$, arising from the linear transformation by the kernel are also binary erasure channels (specifically, $\mathrm{BEC}(p_i(z))$ where $p_i(\cdot)$ are some polynomials of degree at most $\l$). 
Crucially, \emph{all} the channels in the recursive tree remain BEC. Therefore it suffices to prove the existence of a good polarizing kernel for the class of binary erasure channels, which is parameterized by a single number, the erasure probability, which also equals the entropy of the channel. As shown in \cite{FHMV17}, a random kernel works with good probability for all BEC universally. 
However, fundamentally the calculations for BEC revolve around the rank of various random subspaces, as decoding under the BEC is a linear-algebraic task. Moving beyond the BEC takes us outside the realm of linear algebra into information-theoretic settings where tight quantitative results are much harder to establish.

%  Second, such parameterization of all the channels by a single parameter (the erasure probability) permits an analysis of how a particular kernel $K$ polarizes all erasure channels \emph{universally}. 
% One then shows that a \emph{random} $\l \times \l$ kernel $K$  (for sufficiently large $\ell$)
% It then \redt{allows} to pick a single kernel, which polarizes erasure channels strongly  in a universal manner, and the recursive construction, and the fact that all the channels in the recursion are erasure channels, implies that the whole process will polarize fast enough to achieve a desired scaling exponent.

\subsection{The road to BSC: Using multiple kernels}
\label{sect:mix}

For the case when the initial channel $W$ is a BSC, a fundamental difficulty (among others) is that the channels in the recursion tree will no longer remain BSC (even after the first step). Further, to the best of our knowledge, the various channels that arise do not share a nice common exploitable structure.
 Therefore, we have to think of the intermediate channels as arbitrary BMS channels, a very large and diverse class of channels. It is not clear (to us) if there exists a single kernel to universally polarize \emph{all} BMS channels at a rapid rate. Even if such a kernel exists, proving so seems out of reach of current techniques. Finally, even for a specific BMS, proving that a random kernel polarizes it fast enough requires some very strong quantitative bounds on the performance and limitations of random linear codes for channel coding. We next describe these issues dealing with which constitutes the core of our contributions.
 
Since we are not able to establish that a single kernel could work for the whole construction universally, our idea is to pick different kernels, which depend on the channels that we face during the procedure. That way, by picking a suitable kernel for each channel in the tree, we can ensure that polarization is fast enough throughout the whole process. 

Though we use different kernels in the code construction, all of them have the same size $\l\times\l$. 
We say that a kernel is \emph{good} if all but a $\widetilde{O}(\l^{-1/2})$ fraction of the bit-channels obtained after polar transform by this kernel have entropy $\l^{-\Omega(\log\l)}$-close to either $0$ or $1$.
Given a BMS channel $W$, the code construction consists of $t$ steps, from Step $0$ to Step $t-1$.
At Step $0$, we find a good kernel $K_1^{(0)}$ for the original channel $W$.
After the polar transform of $W$ using kernel $K_1^{(0)}$, we obtain $\l$ bit-channels $W_1,\dots,W_{\l}$. Then in Step $1$, we find good kernels for each of these $\l$ bit-channels. More precisely, the good kernel for $W_i$ is denoted as $K_i^{(1)}$, and the bit-channels obtained by polar transforms of $W_i$ using kernel $K_i^{(1)}$ are denoted as $\{W_{i,j}:j\in[\l]\}$; see Figure~\ref{fig:12305} for an illustration.
At Step $j$, we will have $\l^j$ bit-channels $\{W_{i_1,\dots,i_j}:i_1,\dots,i_j\in[\l]\}$. For each of them, we find a good kernel
$K_{i_1,\dots,i_j}^{(j)}$. After polar transform of $\{W_{i_1,\dots,i_j}:i_1,\dots,i_j\in[\l]\}$ using these kernels, we will obtain $\l^{j+1}$ bit-channels. Finally, after the last step (Step $t-1$), we will obtain $N=\l^t$ bit-channels $\{W_{i_1,\dots,i_t}:i_1,\dots,i_t\in[\l]\}$.
Using the good kernels we obtained in this code construction procedure, we can build an $N\times N$ matrix (or we can view it as a large kernel) $M^{(t)}$ such that the $N$ bit-channels resulting from the polar transform of the original channel $W$ using this large kernel $M^{(t)}$ are exactly $\{W_{i_1,\dots,i_t}:i_1,\dots,i_t\in[\l]\}$.
We will say a few more words about this in Section~\ref{sect:encdec} and provide all the details in Section~\ref{sect:cons}.

Define now a random process by $\WW_0 = W$ and $\WW_{j+1} = (\WW_{j})_i$ for $i$ uniformly chosen from $[\l]$, where $(\WW_{j})_i$ is the $i^{\text{th}}$ Ar\i kan bit-channel of $\WW_j$ with respect to the appropriate kernel chosen in the construction phase.
 In other words, this is a random process of going down the tree of channels, where a uniformly random child of a current node is taken at each step. Finally, define another random process $\HH_j \coloneqq H\left(\WW_j\right)$. Since every kernel in the construction is chosen to be invertible, $\HH_j$ is a martingale due to the conservation of entropy property~\eqref{eq:entropy_conserv}. It is clear that $\WW_j$ marginally is a uniformly random channel of the $j^{\text{th}}$ level of channel tree, and then $\HH_j$ is the entropy of such a randomly chosen channel.

\begin{figure}
    \centering
    \resizebox{0.48\textwidth}{!}{
\begin{tikzpicture}
\node at (7, 9.4) (w0) {\LARGE $W$};
\node [block] at (7, 7.6) (q0) 
{\LARGE Find $K_1^{(0)}$};
\node at (2.5, 5.8) (w1) {\LARGE $W_1$};
\node at (7, 5.8) (w2) {\LARGE $W_2$};
\node at (11.5, 5.8) (w3) {\LARGE $W_3$};
\node [block] at (2.5, 4) (q1) 
{\LARGE Find $K_1^{(1)}$};
\node [block] at (7, 4) (q2) 
{\LARGE Find $K_2^{(1)}$};
\node [block] at (11.5, 4) (q3) 
{\LARGE Find $K_3^{(1)}$};
\node at (1, 2) (w11) {\LARGE $W_{1,1}$};
\node at (2.5, 2) (w12) {\LARGE $W_{1,2}$};
\node at (4, 2) (w13) {\LARGE $W_{1,3}$};
\node at (5.5, 2) (w21) {\LARGE $W_{2,1}$};
\node at (7, 2) (w22) {\LARGE $W_{2,2}$};
\node at (8.5, 2) (w23) {\LARGE $W_{2,3}$};
\node at (10, 2) (w31) {\LARGE $W_{3,1}$};
\node at (11.5, 2) (w32) {\LARGE $W_{3,2}$};
\node at (13, 2) (w33) {\LARGE $W_{3,3}$};

\node at (1, 1.5)  {\LARGE $\vdots$};
\node at (2.5, 1.5)  {\LARGE $\vdots$};
\node at (4, 1.5)  {\LARGE $\vdots$};
\node at (5.5, 1.5)  {\LARGE $\vdots$};
\node at (7, 1.5)  {\LARGE $\vdots$};
\node at (8.5, 1.5)  {\LARGE $\vdots$};
\node at (10, 1.5)  {\LARGE $\vdots$};
\node at (11.5, 1.5) {\LARGE $\vdots$};
\node at (13, 1.5)  {\LARGE $\vdots$};

\draw[->, thick] (w0)--(q0);
\draw[->, thick] (q0)--(w1);
\draw[->, thick] (q0)--(w2);
\draw[->, thick] (q0)--(w3);
\draw[->, thick] (w1)--(q1);
\draw[->, thick] (w2)--(q2);
\draw[->, thick] (w3)--(q3);
\draw[->, thick] (q1)--(w11);
\draw[->, thick] (q1)--(w12);
\draw[->, thick] (q1)--(w13);
\draw[->, thick] (q2)--(w21);
\draw[->, thick] (q2)--(w22);
\draw[->, thick] (q2)--(w23);
\draw[->, thick] (q3)--(w31);
\draw[->, thick] (q3)--(w32);
\draw[->, thick] (q3)--(w33);
\end{tikzpicture} 
}
\hfill
\resizebox{0.5\textwidth}{!}{
\begin{tikzpicture}
\node at (13.3, 9) (y9) {\LARGE $Y_1$};
\node at (13.3, 8) (y8) {\LARGE $Y_2$};
\node at (13.3, 7) (y7) {\LARGE $Y_3$};
\node at (13.3, 6) (y6) {\LARGE $Y_4$};
\node at (13.3, 5) (y5) {\LARGE $Y_5$};
\node at (13.3, 4) (y4) {\LARGE $Y_6$};
\node at (13.3, 3) (y3) {\LARGE $Y_7$};
\node at (13.3, 2) (y2) {\LARGE $Y_8$};
\node at (13.3, 1) (y1) {\LARGE $Y_9$};

\node [block] at (11.8, 9) (w9) {\LARGE $W$};
\node [block] at (11.8, 8) (w8) {\LARGE $W$};
\node [block] at (11.8, 7) (w7) {\LARGE $W$};
\node [block] at (11.8, 6) (w6) {\LARGE $W$};
\node [block] at (11.8, 5) (w5) {\LARGE $W$};
\node [block] at (11.8, 4) (w4) {\LARGE $W$};
\node [block] at (11.8, 3) (w3) {\LARGE $W$};
\node [block] at (11.8, 2) (w2) {\LARGE $W$};
\node [block] at (11.8, 1) (w1) {\LARGE $W$};

\node at (10.3, 9) (x9) {\LARGE $X_1$};
\node at (10.3, 8) (x8) {\LARGE $X_2$};
\node at (10.3, 7) (x7) {\LARGE $X_3$};
\node at (10.3, 6) (x6) {\LARGE $X_4$};
\node at (10.3, 5) (x5) {\LARGE $X_5$};
\node at (10.3, 4) (x4) {\LARGE $X_6$};
\node at (10.3, 3) (x3) {\LARGE $X_7$};
\node at (10.3, 2) (x2) {\LARGE $X_8$};
\node at (10.3, 1) (x1) {\LARGE $X_9$};

\node [sblock] at (7.8, 8) (K3) {\LARGE $K_1^{(0)}$};
\node [sblock] at (7.8, 5) (K2) {\LARGE $K_1^{(0)}$};
\node [sblock] at (7.8, 2) (K1) {\LARGE $K_1^{(0)}$};

\node [sblock] at (2.5, 8) (KK3) {\LARGE $K_1^{(1)}$};
\node [sblock] at (2.5, 5) (KK2) {\LARGE $K_2^{(1)}$};
\node [sblock] at (2.5, 2) (KK1) {\LARGE $K_3^{(1)}$};

\node at (0, 9) (uu9) {\LARGE $U_1$};
\node at (0, 8) (uu8) {\LARGE $U_2$};
\node at (0, 7) (uu7) {\LARGE $U_3$};
\node at (0, 6) (uu6) {\LARGE $U_4$};
\node at (0, 5) (uu5) {\LARGE $U_5$};
\node at (0, 4) (uu4) {\LARGE $U_6$};
\node at (0, 3) (uu3) {\LARGE $U_7$};
\node at (0, 2) (uu2) {\LARGE $U_8$};
\node at (0, 1) (uu1) {\LARGE $U_9$};

\draw[->, thick] (uu9)--(1.2,9);
\draw[->, thick] (uu8)--(1.2,8);
\draw[->, thick] (uu7)--(1.2,7);
\draw[->, thick] (uu6)--(1.2,6);
\draw[->, thick] (uu5)--(1.2,5);
\draw[->, thick] (uu4)--(1.2,4);
\draw[->, thick] (uu3)--(1.2,3);
\draw[->, thick] (uu2)--(1.2,2);
\draw[->, thick] (uu1)--(1.2,1);

\draw[red, line width=2pt] (3.8,9)--(4.5, 9);
\draw[red, line width=2pt] (3.8,8)--(4.5, 8);
\draw[red, line width=2pt] (3.8,7)--(4.5, 7);
\draw[blue, line width=2pt] (3.8,6)--(4.5, 6);
\draw[blue, line width=2pt] (3.8,5)--(4.5, 5);
\draw[blue, line width=2pt] (3.8,4)--(4.5, 4);
\draw[ao, line width=2pt] (3.8,3)--(4.5, 3);
\draw[ao, line width=2pt] (3.8,2)--(4.5, 2);
\draw[ao, line width=2pt] (3.8,1)--(4.5, 1);

\draw[red, line width=2pt] (4.5, 9)--(5.8, 9);
\draw[red, line width=2pt] (4.5, 8)--(5.8, 6);
\draw[red, line width=2pt] (4.5, 7)--(5.8, 3);
\draw[blue, line width=2pt] (4.5, 6)--(5.8, 8);
\draw[blue, line width=2pt] (4.5, 5)--(5.8, 5);
\draw[blue, line width=2pt] (4.5, 4)--(5.8, 2);
\draw[ao, line width=2pt] (4.5, 3)--(5.8, 7);
\draw[ao, line width=2pt] (4.5, 2)--(5.8, 4);
\draw[ao, line width=2pt] (4.5, 1)--(5.8, 1);

\draw[->, red, line width=2pt] (5.8, 9)--(6.5,9);
\draw[->, blue, line width=2pt] (5.8, 8)--(6.5,8);
\draw[->, ao, line width=2pt] (5.8, 7)--(6.5,7);
\draw[->, red, line width=2pt] (5.8, 6)--(6.5,6);
\draw[->, blue, line width=2pt] (5.8, 5)--(6.5,5);
\draw[->, ao, line width=2pt] (5.8, 4)--(6.5,4);
\draw[->, red, line width=2pt] (5.8, 3)--(6.5,3);
\draw[->, blue, line width=2pt] (5.8, 2)--(6.5,2);
\draw[->, ao, line width=2pt] (5.8, 1)--(6.5,1);

\draw[->, thick] (9.1,9)--(x9);
\draw[->, thick] (9.1,8)--(x8);
\draw[->, thick] (9.1,7)--(x7);
\draw[->, thick] (9.1,6)--(x6);
\draw[->, thick] (9.1,5)--(x5);
\draw[->, thick] (9.1,4)--(x4);
\draw[->, thick] (9.1,3)--(x3);
\draw[->, thick] (9.1,2)--(x2);
\draw[->, thick] (9.1,1)--(x1);

\draw[->, thick] (x9)--(w9);
\draw[->, thick] (x8)--(w8);
\draw[->, thick] (x7)--(w7);
\draw[->, thick] (x6)--(w6);
\draw[->, thick] (x5)--(w5);
\draw[->, thick] (x4)--(w4);
\draw[->, thick] (x3)--(w3);
\draw[->, thick] (x2)--(w2);
\draw[->, thick] (x1)--(w1);

\draw[->, thick] (w9)--(y9);
\draw[->, thick] (w8)--(y8);
\draw[->, thick] (w7)--(y7);
\draw[->, thick] (w6)--(y6);
\draw[->, thick] (w5)--(y5);
\draw[->, thick] (w4)--(y4);
\draw[->, thick] (w3)--(y3);
\draw[->, thick] (w2)--(y2);
\draw[->, thick] (w1)--(y1);
\end{tikzpicture}
}
    \caption{The left figure illustrates the code construction, and the right figure shows the encoding procedure for the special case of $\l=3$ and $t=2$. All the kernels in this figure have size $3\times 3$. One can show that the bit-channel $W_{i,j}$ in the left figure is exactly the channel mapping from $U_{3(i-1)+j}$ to $(\bU_{[1:3(i-1)+j-1]},\bY_{[1:9]})$ in the right figure.}
    \label{fig:12305}
\end{figure}

\subsection{Analysis of polarization via recursive potential function}
The principle behind polarization is that for large enough $t$, almost all of the channels on the $t$-th level of the tree from Figure~\ref{fig:12305} will be close to either the useless or noiseless channel, i.e., their entropy is very close to $1$ or $0$, correspondingly. To estimate how fast such polarization actually happens, one needs to bound the fraction of channels on the $t$-th level that are not yet sufficiently polarized, i.e., $\P\Big[\HH_t \in (\zeta, 1 - \zeta)\Big]$ for some tiny threshold $\zeta$, and show that this fraction vanishes rapidly with increasing $t$.

Specifically, we have the following result (stated explicitly in ~\cite[Theorem A.3]{Blasiok18}) already alluded to in Section~\ref{subsec:polarization-intro}: if for all $t$
\begin{equation}
\label{eq:strong_polarization}
     \P[\HH_t \in (p\l^{-t}, 1 - p\l^{-t})] \leq D\cdot\beta^t,
\end{equation}
then this corresponds to a polar code with block length $N = \l^{t}$, rate $(D\cdot\beta^t + p\l^{-t})$-close to the capacity of the channel, and decoding error probability at most $p$ under the successive cancellation decoder
\footnote{For this part the reader should think of $p$ as being inverse polynomial (of fixed degree) in $N$. We will discuss improving the decoding error probability in Section~\ref{sec:overview:exponential-decoding}.}.

To track the fractions of polarized and non-polarized channels at each level of the construction, we use a potential function 
\begin{equation}
\label{eq:potential}
    \g(h) = (h(1-h))^{\a},
\end{equation}
where $\a > 0$ is some small fixed parameter. This $\a$ corresponds to the gap to the scaling exponent in Theorem~\ref{thm:intro-main}, and in this paper we always consider $\a < \frac1{12}$ (and smaller in some cases). Such a potential function was also used for example in~\cite{MHU16_unified} and~\cite{FHMV17}. We are going to track the expected value $\E[\g(\HH_t)]$ as $t$ increases, since Markov's inequality implies
\begin{equation}
\label{eq:strong_Markov}
     \P[ \HH_t \in (p\l^{-t}, 1 - p\l^{-t})] = \P [\g(\HH_t) \geq \g(p\l^{-t})] \leq \dfrac{\E[\g(\HH_t)]}{\g(p\l^{-t})}\leq 2\left(\l^t/p\right)^{\a}\cdot \E[\g(\HH_t)]. 
\end{equation}

To upper bound $\E[\g(\HH_t)]$ we choose kernels in our construction so that the average of the potential function of all the children of any channel in the tree decreases significantly with respect to the potential function of this channel.
Precisely, we want for any channel $W'$ in the tree
\begin{equation}
\label{mult_decrease}
\E_{i\sim[\l]}\Big[\g\left(H(W'_i)\right)\Big] \leq \lambda_{\a}\cdot\g\left(H(W')\right),
\end{equation}
where $W'_i$ are the children of $W'$ in the construction tree for $i\in[\l]$, and the constant $\lambda_{\a}$ only depends on $\a$ and $\l$, but is universal for all the channels in the tree (and thus for all the kernels chosen during the construction). If one can guarantee that~\eqref{mult_decrease} holds throughout the construction process, then for the martingale process $\HH_t$ obtain
\begin{equation}
\begin{aligned}
\label{eq:Ega_evolution} \noeqref{eq:Ega_evolution}
\E\Big[\g\left(\HH_t\right)\Big] &= \E\left[\underset{j\sim [\l]}{\E}\Big[\g\left(H\left((\WW_{t-1})_j\right)\right) \Big]\, \bigg\rvert\, \WW_{t-1} \right]\\ &= 
\E\left[\dfrac1{\l}\dfrac{\sum_{j=1}^{\l}\g\left(H\left((\WW_{t-1})_j\right)\right)}{\g\left(H(\WW_{t-1})\right)}\cdot\g\left(H(\WW_{t-1})\right) \, \bigg\rvert\, \WW_{t-1}  \right] \\
&\leq \lambda_{\a}\cdot\E\Big[\g\left(\HH_{t-1}\right)\Big],
\end{aligned}
\end{equation}
and then inductively 
\begin{equation}
\label{eq:Ega_exponential}
    \E\Big[\g\left(\HH_t\right)\Big] \leq \la\cdot \E\Big[\g\left(\HH_{t-1}\right)\Big] \leq  \la^2\cdot \E\Big[\g\left(\HH_{t-2}\right)\Big] \leq \cdots \leq \la^t \HH_0 \leq  \la^t.
\end{equation}

Then~\eqref{eq:strong_Markov} and~\eqref{eq:strong_polarization} imply existence of code with rate $O\left((N/p)^{\a}\cdot\la^t\right)$-close to capacity of the channel. Since our main task is to achieve a gap which is close to $N^{-1/2} = \l^{-t/2}$, we need to argue that it is possible to choose kernels at each step in the construction so that~\eqref{mult_decrease} always holds for some $\a \to 0$ and  $\la \approx \l^{-1/2}$.

\subsection{Sharp transition in polarization}
\label{sect:sharp}
The main technical contribution of this paper consists in showing that if $\l$ is large enough, it is possible to choose kernels in the construction process for which $\la$ is close to $\l^{-1/2}$. Specifically, we prove that if $\l$ is a power of $2$ such that $\log \l = \O\left(\frac1{\a^{1.01}}\right)$, then it is possible to achieve 
\begin{equation}
    \label{eq:lambda_bound}
    \la \leq \l^{-1/2 + 5\a}.
\end{equation}
To obtain such a behavior, while choosing the kernel for the current channel $W'$ during the recursive process we differentiate between two cases:

\medskip \noindent {\bf Case 1: $W'$ is already very noisy or almost noiseless.} Such regime is called \emph{suction at the ends} (following \cite{Blasiok18}), and it is known that polarization happens (much) faster for this case. So in this case we take a standard Ar\i kan's polarization kernel $K = \left(\begin{smallmatrix}1 & 0\\ 1 & 1\end{smallmatrix}\right)^{\otimes \log\l}$ and prove~\eqref{mult_decrease} with a geometric decrease factor $\la \leq \l^{-1/2}$.
    
\medskip \noindent {\bf Case 2: $W'$ is neither very noisy nor almost noiseless.} We refer to this case as \emph{variance in the middle} regime (following~\cite{Blasiok18} again). For such a channel we adopt the framework from~\cite{FHMV17} and show a \emph{sharp transition in polarization} for a random kernel $K$ and $W'$. Specifically, we prove that with high probability over $K \sim \bit^{\l\times\l}$ (for $\l$ large enough) it holds
    \begin{equation}
    \label{eq:sharp_trans}
        \begin{aligned}
            &H(W'_i(K)) \leq \l^{-\O(\log \l)} &&\text{\normalfont{for}}  && i \geq \l\cdot H(W') + \O(\l^{1/2}\log^3\l),\\
    &H(W'_i(K)) \geq 1 - \l^{-\O(\log \l)} &&\text{\normalfont{for}}  && i \leq \l\cdot H(W') - \O(\l^{1/2}\log^3\l).
        \end{aligned}
    \end{equation}
    It then follows that only about $\widetilde{O}(\l^{-1/2})$ fraction of bit-channels are not polarized for some kernel $K$, which then easily translates into the bound~\eqref{eq:lambda_bound} on $\la$ that we desire. Note that we  can always ensure that we take an invertible kernel $K$, since a random binary matrix is  invertible with at least some constant probability. 
    
    Proving such a sharp transition constitutes the bulk of the technical work in this paper. In Section~\ref{BMS_section} we show that inequalities in~\eqref{eq:sharp_trans} correspond to decoding a single bit of a message which is transmitted through $W'$ using a random linear code. The first set of inequalities in~\eqref{eq:sharp_trans} then correspond to saying that one can decode this single bit with low error probability with high probability over the randomness of the code, if the rate of the code is at least approximately $\l^{-1/2}$ smaller than the capacity of the channel (where $\l$ is the blocklength of the code). This follows from the well-studied fact that random linear codes achieve Shannon's capacity over any BMS (\cite{Gallager65}, \cite{Barg_Forney_correspond}). 
    
    The second set of inequalities, on the other hand, corresponds to saying that for random linear codes with rate exceeding capacity by at least $\approx \l^{-1/2}$, even predicting a single bit of the message with tiny advantage over a uniform guess is not possible. While it follows from the converse Shannon's coding theorem that decoding the \emph{entire} message is not possible (even with small success probability) for \emph{any} code above capacity, it does not follow that one cannot recover \emph{a particular message bit} with accuracy noticeably better than random guessing. In fact, if we only want to decode a specific message bit and we do not put any constraints on the code, then we can easily construct codes with rate substantially above capacity that still allow us to decode this specific message bit with high probability. All we need to do here is to repeat the message bit sufficiently many times in the codeword, decode each copy based on the corresponding channel output, and then take a majority vote. The overal code rate does not even figure in this argument.
  Therefore, one can only hope that the converse theorem for bit-decoding holds for certain code ensembles, and for the purpose of this paper, we restrict ourselves to random linear code ensemble.
    While the converse for bit-decoding in this case is surely intuitive, establishing it in the strong quantitative form \eqref{eq:sharp_trans} that we need, and also for all BMS channels, turns out to be a challenging task. We describe some of the  ideas behind our strong converse theorem for bit-decoding in Section~\ref{sect:outline}.

\subsection{Encoding and decoding} \label{sect:encdec}
Once we have obtained the kernels in the code construction (see Section~\ref{sect:mix}), the encoding procedure is pretty standard; see \cite{Presman15,Ye15,Gabry17,Benammar17,Wang18} for discussions on polar codes using multiple kernels.
As mentioned in Section~\ref{sect:mix}, we can build an $N\times N$ matrix $M^{(t)}:=D^{(t-1)}Q^{(t-1)}D^{(t-2)}Q^{(t-2)}\dots D^{(1)}Q^{(1)} D^{(0)}$, where the matrices $Q^{(1)}, Q^{(2)},\dots,Q^{(t-1)}$ are some permutation matrices, and $D^{(0)}, D^{(1)},\dots,D^{(t-1)}$ are block diagonal matrices. In particular, all the blocks on the diagonal of $D^{(j)}$ are the kernels that we obtained in Step $j$ of the code construction. We illustrate the special case of $\l=3$ and $t=2$ in Figure~\ref{fig:12305}.
We take a random vector $\bU_{[1:N]}$ consisting of $N=\l^t$ i.i.d. Bernoulli-$1/2$ random variables and we transmit the random vector $\bX_{[1:N]}$ through $N$ independent copies of $W$. Denote the output vector as $\bY_{[1:N]}$. Then for every $i\in[N]$, the bit-channel mapping from $U_i$ to $(\bU_{[1:i-1]},\bY_{[1:N]})$ is exactly $W_{i_1,\dots,i_t}$, where $(i_1,\dots,i_t)$ is $\l$-ary expansion of $i$.

We have shown that almost all of the $N$ bit-channels $\{W_{i_1,\dots,i_t}:i_1,\dots,i_t\in[\l]\}$ become either noiseless or completely noisy. In the code construction, we can track $H(W_{i_1,\dots,i_t})$ for every $(i_1,\dots,i_t)\in[\l]^t$, and this allows us to identify which $U_i$'s are noiseless under successive decoding. Then in the encoding procedure, we only put information in these noiseless $U_i$'s and set all the other $U_i$'s to be some ``frozen" value, e.g., $0$. This is equivalent to saying that the generator matrix of our code is the submatrix of $M^{(t)}$ consisting of rows corresponding to indices $i$ of the noiseless $U_i$'s. In Section~\ref{sect:cons}, we will show that similarly to the classical polar codes, both the encoding and decoding of our code also have $O(N\log N)$ complexity.

As a final remark, we mention that  we need to quantize every bit-channel we obtain in every step of the code construction. More precisely, we merge the output symbols whose log-likelihood ratios are close to each other, so that after the quantization, the output alphabet size of every bit-channel is always polynomial in $N$. This allows us to construct the code in polynomial time. Without quantization, the output alphabet size would eventually be exponential in $N$.  We will provide more details about this aspect, and how it affects the code construction and the analysis of decoding, in Section~\ref{sect:local} and Section~\ref{sect:cons}.

\subsection{Inverse sub-exponential decoding error probability}
\label{sec:overview:exponential-decoding}
Up to this moment, the described construction only achieved inverse polynomial decoding error probability. One reason for this restriction comes from the quantization of the bit-channels that we do, which leads to only having approximations of the actual bit-channels. In particular this means that we only track the parameters (entropy and Bhattacharyya parameter) of the bit-channels approximately, with an additive error inverse polynomial in the blocklength. This directly translates to only claiming inverse polynomial decoding error probability. 

It a recent work Wang and Duursma~\cite{Wang-Duursma} show that it is possible to achieve a good scaling exponent ($2 + O(\a)$) and inverse sub-exponential decoding error probability ($\exp(-N^{\a})$) for polar codes simultaneously, using the idea of multiple kernels in the construction. However, the construction phase in~\cite{Wang-Duursma} tracked the exact bit-channels that are obtained in the $\l$-ary tree of channels (without quantization), which means that the construction of such polar codes is no longer doable in polynomial time. This is because (most of) the exact bit-channels cannot even be described in a tractable way, since they have exponential size of output alphabet. 

We combine our approach of using Ar\i kan's kernels for polarized bit-channels (Case $1$ in Section~\ref{sect:sharp}) with a strong analysis of polarization from~\cite{Wang-Duursma} to achieve good scaling exponent, inverse sub-exponential decoding error probability, and polynomial time complexity of construction, all at the same time. The main idea behind our argument is that even though we cannot track the exact bit-channels in the construction, we know how basic Ar\i kan's kernel evolves the parameters of the bit-channels. Then, if we start with a slightly polarized bit-channel, and take a sufficient amount of ``good" branches of Ar\i kan's $2\times 2$  kernels, we end up with a strongly polarized channel. The crucial observation here is that it suffices to only track the approximation of the bit-channel to verify that it is slightly polarized, and no additional computation is needed to check how many ``good" branches were taken in the tree of bit-channels. In such a way, we show that it is possible to prove very strong polarization for bit-channels, which leads to good decoding error probability, while still only tracking the approximations of the bit-channels, which keeps the construction complexity polynomial. All of these arguments, which lead to the main result of this paper, are made precise and proven in Section~\ref{sec:exponential-decoding}.

\vspace{-2ex}

\section{Outline of strong converse for random linear codes}
\label{sect:outline}
In this section we describe the plan of the proof for the strong converse theorem for bit-decoding random linear codes under the binary symmetric channel. In particular, we need to show the sharp transition as in~\eqref{eq:sharp_trans}, when the channel is BSC. The proof for the general BMS channel case follows the same blueprint by using the fact that a BMS channel can be represented as a convex combination of BSC subchannels, but executing it involves overcoming several additional technical hurdles. Let us fist formulate the precise theorem for the binary symmetric channel. 
\begin{thm}\label{thm:over:BSC_converse} Let $W$ be the $\text{BSC}_p$ channel, and let $\l$, $k$ be integers that satisfy ${\l \geq k \geq \l(1-H(W)) + \O(\l^{1/2}\log\l)}$ and $\l \geq 8$. Let $G$ be a random binary matrix uniform over $\bit^{k\times \l}$. Suppose a message $\bV\cdot G$ is transmitted through $\l$ copies of the channel $W$, where $\bV$ is uniformly random over $\bit^k$, and let $\bY$ be the output vector, i.e. $\bY = W^{\l}(\bV\cdot G)$. Then, with probability at least $1 - \l^{-\O(\log\l)}$ over the choice of~$G$ it holds  ${H\left(V_1\;\big\lvert\;\bY\right) \geq 1 - \l^{-\O(\log \l)}}$.
\end{thm}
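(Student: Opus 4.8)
The plan is to reduce the statement to a quantitative soft-covering (channel-resolvability) bound for random linear codes, and to prove that bound by a \emph{truncated} second-moment estimate — the truncation being exactly what lets the argument reach the Shannon rate $1-h(p)$ rather than the cutoff rate. Write $U$ for the uniform distribution on $\bit^{\l}$, let $g_1$ be the first row of $G$ and $G'$ its remaining $k-1$ rows, and let $P_b(\by)=2^{-(k-1)}\sum_{\bv'\in\bit^{k-1}}W^{\l}(\by\mid bg_1+\bv'G')$ be the conditional law of $\bY$ given $V_1=b$, i.e.\ the output of the random affine code $C_b:=bg_1+\mathrm{rowspan}(G')$ passed through $W^{\l}$. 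An elementary computation (using $1-h(t)\le(2t-1)^2$ and $\tfrac{(a-b)^2}{a+b}\le|a-b|$) gives, for every fixed $G$,
\[
1-H(V_1\mid\bY)\ \le\ d_{\mathrm{TV}}(P_0,P_1)\ \le\ d_{\mathrm{TV}}(P_0,U)+d_{\mathrm{TV}}(P_1,U),
\]
so it suffices to prove that for each $b\in\bit$, $d_{\mathrm{TV}}(P_b,U)\le\l^{-\Omega(\log\l)}$ with probability $1-\l^{-\Omega(\log\l)}$ over $G$. Before that I would dispose of a rank nuisance: if $\l-k$ is not too small then $G'$ has full rank $k-1$ except with probability $\le 2^{-(\l-k+1)}$, and if $\l-k$ \emph{is} small I would instead work with the subcode generated by the first $k':=\l-\Theta(\log^2\l)$ rows of $G$ — this only decreases $H(V_1\mid\bY)$, since $V_1\to\bY'\to\bY$ forms a Markov chain ($\bY$ is the subcode output $\bY'$ plus an independent shift), and $k'$ still exceeds $\l(1-h(p))+\Omega(\l^{1/2}\log\l)$. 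So I may assume $G'$ has full rank, whence the codewords of $C_b$ are pairwise independent and each is marginally uniform on $\bit^{\l}$ (the zero codeword, present only for $b=0$, contributes a deterministic $\le 2^{-(k-1)}$ and is set aside).

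For the core estimate, fix a Hamming radius $r:=p\l-c\,\l^{1/2}\log\l$ with $c>0$ a small constant, and split $P_b=P_b^{>}+P_b^{\le}$ according to whether a codeword $\bx$ has $d_H(\bx,\by)<r$ or $\ge r$; equivalently this truncates the information density $\log_2\!\big(W^{\l}(\by\mid\bx)/2^{-\l}\big)$ at the level $\l(1-h(p))+c'\l^{1/2}\log\l$ with $c'=c\log_2\tfrac{1-p}{p}$, which (for $c$ small) lies $\Omega(\l^{1/2}\log\l)$ below $\log_2|C_b|$. Then $d_{\mathrm{TV}}(P_b,U)\le\|P_b^{>}\|_1+\sum_{\by}(P_b^{\le}(\by)-2^{-\l})^{+}$. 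The first term is \emph{deterministic}: $\|P_b^{>}\|_1=\Pr_{\bZ\sim\mathrm{Ber}(p)^{\l}}[\mathrm{wt}(\bZ)<r]\le\exp(-\Omega(\log^2\l))=\l^{-\Omega(\log\l)}$ by a Chernoff bound. For the second, let $\tau:=(1-p)^{\l-r}p^{r}$ be the largest value a surviving term $W^{\l}(\by\mid\bx)\mathbbm{1}[d_H(\bx,\by)\ge r]$ can attain; then marginal uniformity gives $\E_G[P_b^{\le}(\by)]\le 2^{-\l}$, and pairwise independence together with this pointwise bound gives $\Var_G(P_b^{\le}(\by))\le\tau\,2^{-\l}/|C_b|$. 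Using $\E[(X-\E X)^{+}]\le\tfrac12\sqrt{\Var(X)}$ and summing over all $2^{\l}$ outputs,
\[
\E_G\Big[\textstyle\sum_{\by}(P_b^{\le}(\by)-2^{-\l})^{+}\Big]\ \le\ \tfrac12\,2^{\l}\sqrt{\tau\,2^{-\l}/|C_b|}\ =\ \tfrac12\sqrt{2^{\l}\tau/2^{k-1}}\ =\ \tfrac12\cdot 2^{-\Omega(\l^{1/2}\log\l)},
\]
since $2^{\l}\tau/2^{k-1}=2^{\,c'\l^{1/2}\log\l-(k-1-\l(1-h(p)))}$ and the hypothesis $k-1-\l(1-h(p))=\Omega(\l^{1/2}\log\l)$ dominates $c'\l^{1/2}\log\l$ once $c$ is chosen small. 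Markov's inequality then gives $\sum_{\by}(P_b^{\le}(\by)-2^{-\l})^{+}\le\l^{-\Omega(\log\l)}$ with probability $1-\l^{-\Omega(\log\l)}$, and union-bounding over $b\in\bit$ completes the proof.

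The step I expect to be the real obstacle is the truncation, and even more its analogue for a general BMS channel. A plain second moment would bound $\Var_G(P_b(\by))$ by $2^{-\l}(p^2+(1-p)^2)^{\l}/|C_b|$, which is small only when the rate exceeds the \emph{cutoff rate} $\log_2(1+(1-2p)^2)$, strictly above the capacity $1-h(p)$ for $p\in(0,\tfrac12)$; deleting the codewords within distance $r$ of $\by$ — precisely the heavy tail of the likelihoods — is what brings the threshold down to the Shannon capacity, and the margin $r=p\l-\Theta(\l^{1/2}\log\l)$ must be tuned against the rate slack $\Omega(\l^{1/2}\log\l)$ to obtain the stated $\l^{-\Omega(\log\l)}$ rates. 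Carrying this out for an arbitrary BMS channel (which the paper needs elsewhere) requires writing $W$ as a mixture of $\mathrm{BSC}$'s, conditioning on which sub-$\mathrm{BSC}$ governs each of the $\l$ channel uses, running the above argument conditionally, and controlling the concentration of the resulting sub-channel profile; this bookkeeping is where most of the technical weight of the converse resides.
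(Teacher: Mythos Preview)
Your proof is correct and follows a genuinely different route from the paper's. The paper works pointwise in the output: it conditions on $\by$ typical for the all-zero input, writes $H(V_1\mid\bY=\by)=h\bigl(\P(V_1{=}0,\bY{=}\by)/\P(\bY{=}\by)\bigr)$, and shows both numerator and denominator concentrate (via Chebyshev on the shifted weight enumerators $B_g(d,\by)$, with $d$ split into a ``substantial'' range near $p\l$ and a ``negligible'' tail) so that the ratio is close to $\tfrac12$. Your resolvability framing --- bounding $1-H(V_1\mid\bY)\le d_{\mathrm{TV}}(P_0,U)+d_{\mathrm{TV}}(P_1,U)$ and then proving the output of a random affine code is TV-close to uniform via a truncated second moment --- avoids conditioning on typical $\by$ altogether and is more streamlined; in particular your bound on $\|P_b^{>}\|_1$ is deterministic in $G$, whereas the paper handles its tail by Markov over $G$. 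Both arguments rest on the same two ingredients (pairwise independence of random linear codewords and truncation of the likelihood by Hamming distance), so the difference is packaging rather than content; the paper's version, organized around per-distance weight enumerators, is what it then extends verbatim to the general BMS case. One small remark: when $p\asymp\l^{-1/2}$ the factor $\log_2\tfrac{1-p}{p}$ is of order $\log\l$, so your margin analysis as written needs slack $\Omega(\l^{1/2}\log^2\l)$ to go through uniformly in $p$; the paper's BSC proof in fact also uses $8\sqrt{\l}\log^2\l$ (via an explicit case split on $p\lessgtr\l^{-1/2}$), so this is not a discrepancy between the two approaches. Your rank digression is harmless but unnecessary: pairwise independence of $\{\bv'G'\}$ for nonzero $\bv'$, taken over the randomness of $G'$, holds regardless of the realized rank.
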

We want to point out two quantitative features of the above theorem. First, it applies at rates $\approx \Omega(\l^{-1/2})$ above capacity. Second, it rules out predicting the bit $V_1$ with advantage $\ell^{-\omega(1)}$ over random guessing. Both these features are important to guarantee the desired bound  $\lambda_\alpha \lesssim \l^{-1/2}$. 

\medskip\noindent {\bf Proof plan.}\quad We prove the lower bound on $H\left(V_1\;\big\lvert\;\bY\right)$ by lower bounding $\E\limits_{g\sim G}\left[H\left(V_1\;\big\lvert\;\bY\right)\right]$ and using Markov's inequality. Thus we write
\begin{equation}
\label{eq:intro:confused_notation}
\E_{g\sim G}\big[H^{(g)}(V_1|\bY)\big] = \sum_g \P(G=g) H^{(g)}(V_1|\bY) = \sum_g \P(G=g)\hspace{-3pt} \left(\sum_{\by \in \Y^{\l}} \P\nolimits^{(g)}(\bY=\by) H^{(g)}(V_1|\bY=\by) \right),
\end{equation}
where the summation of $g$ is over $\bit^{k\times\l}$, and by $\P\nolimits^{(g)}(\cdot)$ and $H^{(g)}(\cdot)$ we denote probability and entropy over the randomness of the message $\bV$ and channel noise \emph{for a fixed matrix $g$}.

\smallskip \noindent {\bf 1:  Restrict to zero-input.}\quad The first step is to use the linearity of the (random linear) code and the additive structure of BSC to prove that we can change $\P\nolimits^{(g)}(\bY=\by)$ to ${\P\nolimits^{(g)}(\bY=\by|\bV = \mathbi{0})}$ in the above summation, where $\mathbi{0}$ is the all-zero vector. This observation is crucial for our arguments, since it allows to only consider the outputs which are ``typical" for the all-zero codeword, and there is no dependence on $g$ in this case. Formally, in Appendix~\ref{app:BMS_lemmas} we prove (Lemma~\ref{typical_entropy_BSC}):
\[ \E\limits_{g\sim G}\big[H^{(g)}(V_1|\bY)\big] = \sum\limits_{\by \in \Y^{\l}}  \P(\bY=\by|\bV = \mathbi{0}) \cdot\E\limits_{g\sim G}\left[H^{(g)}(V_1|\bY=\by)\right].\]
\smallskip \noindent {\bf 2: Define a typical set of outputs.}\quad We define a typical output set for the zero-input as $\mathcal{F} \coloneqq \Big\{\by \in \Y^{\l}\,:\, |wt(\by) - \l p| \leq 2\sqrt{\l}\log\l \Big\}$. It is clear that  if zero-vector is transmitted through the channel, the output will be a vector from $\mathcal{F}$ with high probability. It means that we do not lose too much in terms of accuracy if we restrict our attention only to this typical set, so the following inequality suffices as a good lower bound on the expectation.
\begin{equation}
\label{eq:intro:expect_of_entropy_typical}
    \E\limits_{g\sim G}\big[H^{(g)}(V_1|\bY)\big]  \geq \sum\limits_{\by \in \mathcal{F}}  \P(\bY=\by|\bV = \mathbi{0}) \cdot\E\limits_{g\sim G}\left[H^{(g)}(V_1|\bY=\by)\right].
\end{equation}

\noindent {\bf 3: Fix a typical output $\by\in \mathcal{F}$.} 
For a fixed choice of $\by\in\mathcal{F}$, we express $H^{(g)}(V_1|\bY=\by)  = h(\P\nolimits^{(g)}(V_1=0 | \bY=\by)) = h \left(\frac{\P\nolimits^{(g)}(V_1=0,\bY=\by)}{\P\nolimits^{(g)}(\bY=\by)} \right)$.
It suffices to show that the ratio of these probabilities is very close to $1/2$ w.h.p. 
To this end, we will show that both denominator and numerator are highly concentrated around their respective means for $g\sim G$, and that the means have a ratio nearly $1/2$ .
Focusing on the denominator (the argument for the numerator is very similar), we have:
\begin{equation}
\label{eq:intro:prob_of_y}
   2^k\cdot\P\nolimits^{(g)}(\bY = \by) =  \P(\bY=\by\,|\,\bV = \mathbi{0}) +
\sum_{d=0}^{\l} B_g(d, \by) p^d (1-p)^{{\l}-d},
\end{equation}
where $B_g(d,\by)$ is equal to the number of nonzero codewords in the code spanned by the rows of $g$ at Hamming distance $d$ from $\by$. We proceed with proving concentration on the summation above by splitting it into two parts.

\setlength{\leftskip}{0.3cm}

\smallskip \noindent {\bf 3a: Negligible part.}\quad It is very unlikely that an input codeword $\bx$ such that $|\text{dist}(\bx,\by)-\ell p| \geq 6\sqrt{\l}\log\l$ was transmitted, if $\by$ was received as the output. It is then possible to show that the expectation (over $g\sim G$) of $\sum\limits_{d\,:\,|d - \l p| \geq 6\sqrt{\l}\log\l} \hspace{-18pt}B_g(d, \by) p^d (1-p)^{{\l}-d}$ is negligible with respect to the expectation of the whole summation. Markov's inequality implies then that this sum is negligible with high probability over $g\sim G$.

\smallskip \noindent {\bf 3b: Substantial part.} On the other hand, for any $d$ such that $|d - \l p| \leq 6\sqrt{\l}\log\l$, the expectation of $B_g(d, \by)$ is going to be extremely large for the above-capacity regime. We can apply Chebyshev's inequality to prove concentration on every single weight coefficient $B_g(d, \by)$ with $d$ in such a range. A union bound then implies that they are all concentrated around their means simultaneously.

\setlength{\leftskip}{0pt}

\smallskip
\noindent This proves that the summation over $d$ is concentrated around its mean in~\eqref{eq:intro:prob_of_y}. Finally, since 
$|wt(\by) - \l p| \leq 2\sqrt{\l}\log\l$ for $\by\in \mathcal{F}$ and we leave enough room above the capacity of the channel, w.h.p. over choice of $g$ we have $B_g(wt(\by),\by)\gg 1$, and consequently
$\P(\bY=\by\,|\,\bV = \mathbi{0})=p^{wt(\by)}(1-p)^{\l-wt(\by)}$ is negligible compared to the second sum term 
in~\eqref{eq:intro:prob_of_y}.

\medskip \noindent {\bf 4: Concentration of entropy.}\quad Proving in the same way concentration on $\P\nolimits^{(g)}(V_1=0,\bY=\by)$, we derive that $\frac{\P\nolimits^{(g)}(V_1=0,\bY=\by)}{\P\nolimits^{(g)}(\bY=\by)}$ is close to $\frac12$ with high probability for any typical $\by\in\mathcal{F}$, and thus $\E_{g\sim G}[H^{(g)}(V_1|\bY=\by)]$ is close to $1$ with high probability for such $\by$. Recalling that the probability to receive $\by \in \mathcal{F}$ is overwhelming for zero-vector input, out of~\eqref{eq:intro:expect_of_entropy_typical} obtain the desired lower bound on 
 $\E\limits_{g\sim G}\big[H^{(g)}(V_1|\bY)\big]$.

\vspace{0.1cm}
The full proof for the BSC case is presented in Section~\ref{sec:BSC_converse}. In order to generalize the proof to general BMS channels we need to track and prove concentration bounds for many more parameters (in the BSC case, we had a single parameter $d$ that was crucial). More specifically, in the BSC case we have to deal with a single binomial distribution when trying to estimate the expectation of $B_g(d,\by)$. For general BMS channels, however, we have to cope with a multinomial distribution and an ensemble of binomially distributed variables that depend on the particular realization of that multinomial distribution. Moreover, we emphasize that Theorem~\ref{thm:over:BSC_converse} and its analogue for BMS must hold in the \emph{non-asymptotic regime}, namely for all code lengths above some absolute constant which does not depend on the channel. (In contrast, in typical coding theorems in information theory one fixes the channel and lets the block length grow to infinity.) 
We show how to overcome all these technical challenges for the general BMS case in Section~\ref{sec:bit-decoding}.

% A BMS channel $W$ can always be decomposed into multiple BSC subchannels. Assume there are $m$ subchannels in this decomposition, and we write them as $\mathrm{BSC}_{p_1},\dots,\mathrm{BSC}_{p_m}$. When transmitting a bit through $W$, with probability $q_i$ this bit goes through the subchannel $\mathrm{BSC}_{p_i}$. When transmitting $\l$ bits through $W$, we denote the number of bits going through $\mathrm{BSC}_{p_i}$ as $d_i$ and assume that $t_i$ out of these $d_i$ bits are ``flipped" by the subchannel $\mathrm{BSC}_{p_i}$. By this definition, the vector $(d_1,\dots,d_m)$ follows multinomial distribution and $t_i \sim \text{Binom}(d_i, p_i)$.

\subsection*{Organization of rest of the paper}
The rest of the paper, which contains all the formal theorem statements and full proofs, is organized as follows. In Section~\ref{sect:KC}, we describe how to find a good polarizing kernel for any BMS, and reduce its analysis to a strong coding theorem and its converse for bit-decoding of random linear codes.
The case when the BMS has entropy already reasonably close to either $0$ or $1$ is handled in Section~\ref{sec:suctions}.
Also, the analysis of the complexity of the kernel finding algorithm is deferred to Section~\ref{sect:cons}. 

Turning to the converse coding theorem for random codes, as a warmup this is first proven for the case of the binary symmetric channel in Section~\ref{sec:BSC_converse}. We then present the proof for general BMS channels in Section~\ref{sec:bit-decoding}. Finally, Section~\ref{sect:cons} has the complete details of our code construction based on the multiple kernels found at various levels, and a sketch of the encoding and decoding algorithms, which when all combined yield Theorem~\ref{thm:main1}, which is almost our main result, but with decoding error probability proven to be only inverse polynomial in the blocklength. 

Lastly, in Section~\ref{sec:exponential-decoding} we show how to combine the tight analysis of the polarization from~\cite{Wang-Duursma} and our construction of codes from Section~\ref{sect:cons} to obtain our final result, also stated in the introductory section as Theorem~\ref{thm:intro-main}, with inverse sub-exponential $\exp(-N^{\a})$ decoding error probability.

\parskip=0.5ex
\section{Useful entropic facts}

\subsection{Binary entropy function}
All the logarithms in this paper are to the base $2$. The binary entropy function is defined as $h(x) = x\log\frac{1}{x} + (1-x)\log\frac1{1-x}$ for $x\in [0,1]$, where $0\log 0$ is taken to be $0$. We will use a simple fact that $h(x) \leq 2x\log\frac1x$ for $x\in [0,1/2)$ several times in the proofs.
The following proposition follows from the facts that $h(x)$ is concave, increasing for $x\in [0, 1/2)$, and symmetric around $1/2$, i.e. $h(x) = h(1-x)$ for $x\in [0,1]$.
\begin{prop}
\label{prop:entropy_differ}
For any $x, y \in [0,1]$, $|h(x) - h(y)| \leq h(|x-y|)$.
\begin{proof}
The inequality is trivial when $x$ or $y$ is equal to $0$ or $h(x) = h(y)$. Without loss of generality, assume $x > y$. Further, consider first the case $h(x) > h(y)$. We have two cases:
\begin{enumerate}
    \item[(a)] $0 < y \leq (x-y) < x$. By the mean value theorem, we can write $(h(x) - h(x-y)) = h'(\xi_1)y$ for some $\xi_1 \in (x-y, x)$, and $h(y) = h(y) - h(0) = h'(\xi_2)y$ for some $\xi_2 \in (0, y)$. Then $\xi_2 \leq \xi_1$, and since $h$ is concave, it follows that $h'(\xi_2) \geq h'(\xi_1)$, thus $h(x) - h(x-y) \leq h(y)$. Rearranging, obtain the desired inequality.
    \item[(b)] $0 < (x-y) \leq y < x$. By the same argument, one has $(h(x) - h(y)) = h'(\xi_1)(x-y)$ for $\xi_1 \in (y, x)$ and $h(x-y) = h(x-y) - h(0) = h'(\xi_2)(x-y)$ for some $\xi_2 \in (0, x-y)$, and so $\xi_2 \leq \xi_1$, therefore $h'(\xi_2) \geq h'(\xi_1)$ by concavity. Thus $h(x) - h(y) \leq h(x-y)$.
\end{enumerate}
Next, if $h(x) < h(y)$, define $x' = 1-y$ and $y' = 1-x$. It follows that $x' > y'$ and $h(x') = h(y) > h(x) = h(y')$, so the inequality in the proposition holds for $x'$ and $y'$ by the cases (a)-(b) above. But clearly $|h(x) - h(y)| = |h(x') - h(y')| \leq h(|x' - y'|) = h(|x - y|)$ by symmetry of $h$ around~$\frac12$.
\end{proof}
\end{prop}

\begin{prop}
\label{prop:entropy_half}
$h(x) \leq 2x\log\frac1x$ for $x\in [0,1/2]$.
\begin{proof}
Consider the function $f(x) = 2x\log\frac1x - h(x) = x\log\frac{1}{x} - (1-x)\log\frac1{1-x}$ on $[0, 1/2]$. We have $f''(x) = \dfrac{2x - 1}{x(1-x)\ln 2} < 0$ on $(0, 1/2)$, so $f$ is strictly concave on this interval, and further $f(0) = f(1/2) = 0$. Therefore, $f(x)$ is positive on $(0, 1/2)$.
\end{proof}
\end{prop}

\subsection{Channel degradation}
\begin{defin}
\label{def:degrad}
Let $W\,:\, \bit \to \Y$ and $\wW\,:\, \bit \to \wt{\Y}$ be two BMS channels. We say that $\wW$ is \emph{degraded} with respect to $W$, or, correspondingly, $W$ is \emph{upgraded} with respect to $\wW$, denoted as $\wW \preceq W$, if there exists a discrete memoryless channel $W_1\, :\, \Y \to \wt{\Y}$ such that
\begin{equation*}
\label{eq:degrad}
     \wW(\wt{y}\, |\, x) = \sum_{y\in\Y}W(y\,|\,x)W_1(\wt{y}\,|\,y) \qquad\quad \forall\ x\in\bit,\ \wt{y}\in\wt{\Y}.
\end{equation*}
\end{defin}
Note that this is equivalent to saying that $\wW(x)$ and $W_1(W(x))$ are identically distributed for any $x\in\bit$. In other words, one can simulate the usage of $\wW$ by first using the channel $W$ and then applying some other channel $W_1$ to the output of $W$ to get a final output.

We will use some useful facts from~\cite[Lemma 3]{Tal_Vardy} and~\cite[Lemma IV.1]{Ye15}. Note that Proposition~\ref{prop:degrad_subchannel} below was first proved in \cite[Lemma 21]{KU-lossy} for the special case of Ar{\i}kan kernel and then generalized in \cite[Lemma IV.1]{Ye15} to general kernels.
\begin{prop}
\label{prop:degrad_entropy}
Let $W$ and $\wW$ be two BMS channels, such that $\wW\preceq W$. Then $H(\wW) \geq H(W)$.
\end{prop}

\begin{prop}
\label{prop:degrad_subchannel}
Let $W$ and $\wW$ be BMS channels, such that $\wW\preceq W$, and $K \in \bit^{\l\times\l}$ be any invertible matrix. Denote by $W_i$, $\wW_i$ the Ar\i kan's bit-channels of $W$ and $\wW$ with respect to the kernel $K$ for any $i \in [\l]$. Then  for any $i \in [\l]$, we have $\wW_i \preceq W_i$, and consequently $H(\wW_i) \ge H(W_i)$.
\end{prop}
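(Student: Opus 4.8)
\medskip\noindent\textbf{Proof plan.}\quad The entropy inequality $H(\wW_i)\ge H(W_i)$ will follow from $\wW_i\preceq W_i$ together with Proposition~\ref{prop:degrad_entropy}, so the whole task is to exhibit, for each $i\in[\l]$, one discrete memoryless channel that turns the output of $W_i$ into the output of $\wW_i$. The guiding principle is that the bit-channel construction only \emph{post-processes} channel outputs and \emph{relabels} inputs via the fixed invertible matrix $K$, and degradation is preserved under both operations. Since $\wW\preceq W$, I would fix a channel $W_1:\Y\to\wt{\Y}$ witnessing this as in Definition~\ref{def:degrad}, and observe first that applying $W_1$ independently to each of the $\l$ coordinates degrades $W^{\l}$ to $\wW^{\l}$, since $\wW^{\l}(\wt{\bY}\mid\bx)=\prod_{j}\wW(\wt{Y}_j\mid x_j)=\prod_{j}\sum_{y}W(y\mid x_j)W_1(\wt{Y}_j\mid y)=\sum_{\bY}W^{\l}(\bY\mid\bx)\prod_{j}W_1(\wt{Y}_j\mid Y_j)$.

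Next I would recall that the channel seen by $U_i$ has output alphabet $\Y^{\l}\times\bit^{i-1}$, namely the genuine channel outputs $\bY$ together with the previously-decoded bits $\bU_{<i}$. So the degrading channel to use is $\Phi_i:\Y^{\l}\times\bit^{i-1}\to\wt{\Y}^{\l}\times\bit^{i-1}$ which, on input $(\bY,\bu)$, sends each coordinate of $\bY$ through $W_1$ and copies $\bu$ verbatim; this is a legitimate degrading channel precisely because it does not depend on the input bit $U_i$. To check that $\Phi_i$ degrades $W_i$ to $\wW_i$, I would start from the defining formula~\eqref{Arikan_subchannels}, push the coordinatewise action of $W_1$ inside the sum over $\bV\in\bit^{\l-i}$ and inside the product of the $\l$ uses of $W$, apply the single-coordinate identity $\sum_{y\in\Y}W(y\mid x)W_1(\wt{y}\mid y)=\wW(\wt{y}\mid x)$ to each factor, reassemble the product into $\wW^{\l}\big(\wt{\bY}\mid(\bU_{<i},U_i,\bV)K\big)$, and finally re-sum over $\bV$ with the very same normalization $2^{-(\l-i)}$. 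The result is exactly $\wW_i(\wt{\bY},\bU_{<i}\mid U_i)$, so $\wW_i\preceq W_i$, and Proposition~\ref{prop:degrad_entropy} gives the claim.

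I do not expect a genuine obstacle here: the verification is a routine interchange of finite sums and products resting on the multiplicativity of product channels. The one point that needs care is the bookkeeping — the prefix $\bU_{<i}$ is part of the bit-channel's \emph{output}, so $\Phi_i$ must carry it through untouched while $W_1$ acts only on the true channel symbols $\bY$; conflating the two would break the argument. (This statement is essentially \cite[Lemma IV.1]{Ye15}; the short derivation above is included for completeness.)
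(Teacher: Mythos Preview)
Your argument is correct and is exactly the standard one: apply the witnessing degrading map $W_1$ coordinatewise to the $\bY$-part of the bit-channel output while copying $\bU_{<i}$ verbatim, then push the sum through the product in \eqref{Arikan_subchannels}. The paper itself does not supply a proof of this proposition; it simply imports the statement from \cite[Lemma IV.1]{Ye15} (and \cite[Lemma 3]{Tal_Vardy}), which you already cite, so there is nothing to compare against.
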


\section{Give me a channel, I'll give you a kernel}
\label{sect:KC}
In this section we show that for any given binary-input memoryless symmetric (BMS) channel  $W$ we can find a kernel $K$ of size $\l\times\l$, such that the Ar{\i}kan bit-channels of $W$ with respect to this kernel will be highly polarized. By this we mean that the multiplicative decrease $\la$ defined in \eqref{mult_decrease} will be sufficiently close to $\l^{-1/2}$. The algorithm (Algorithm~\ref{algo:kernel_search}) to find such a kernel is as follows: if the channel is already almost noiseless or too noisy (entropy is very close to $0$ or $1$), we take this kernel to be a tensor power of original Ar{\i}kan's kernel for polar codes, $A_2 =\left( \begin{smallmatrix}1 & 0\\ 1 & 1\end{smallmatrix}\right)$. Otherwise, the algorithm will just try out all the possible invertible kernels in $\bit^{\l\times\l}$, until a ``good" kernel is found, which means that conditions~\eqref{eq:algo_stop_condition} should be satisfied. Before proving that Algorithm~\ref{algo:kernel_search} achieves our goals of bringing $\la$ close to $\l^{-1/2}$, we discuss several details about it.

\subsection{Local kernel construction}  \label{sect:local}

\begin{algorithm}[h]
 \caption{Kernel search}
\label{algo:kernel_search}
\DontPrintSemicolon
\SetAlgoLined
  \KwInput{BMS channel $\wW$ with output size $\leq \mathsf{Q}$, error parameter $\Delta$, and number $\l$} 
  \KwOutput{invertible kernel $K\in \bit^{\l\times\l}$}

 \eIf{$H(\wW)  < \l^{-4}$ \emph{\textbf{or}}  $H(\wW)  > 1 - \l^{-4} + \Delta$}
  { \Return $K = A_2^{\otimes \log\l}$}
  {\For{$K \in \bit^{\l\times\l}$, \emph{\textbf{if}} $K$ is invertible}{
  Compute Ar{\i}kan's bit-channels $\wW_i(K)$ of $\wW$ with respect to the kernel $K$, as in~\eqref{Arikan_subchannels}
  
  \If{ 
  \vspace{-2pt}
  \begin{equation}
      \begin{aligned}
  \label{eq:algo_stop_condition}
      &H(\wW_i(K)) \leq \l^{-(\log\l)/4} &&\text{\normalfont{for}}  && i \geq \l\cdot H(\wW) + \l^{1/2}\log^3\l\\
    &H(\wW_i(K)) \geq 1 - \l^{-(\log\l)/20} &&\text{\normalfont{for}}  && i \leq \l\cdot H(\wW) - 14\l^{1/2}\log^3\l
  \end{aligned}
  \end{equation}}{\Return $K$}
  }}
\end{algorithm}

As briefly discussed at the end of Section~\ref{sect:encdec}, we are unable to efficiently track all the bit-channels in the $\l$-ary recursive tree \emph{exactly}. This is because the size of the output alphabet of the channels increase \emph{exponentially} after each step deeper into the tree (this simply follows from the definition of bit-channels~\eqref{Arikan_subchannels}). Thus computing all the channels (and their entropies) cannot be done in poly$(N)$ time. To overcome this issue we follow the approach of~\cite{Tal_Vardy}, with subsequent simplification in~\cite{GX15}, of approximating the channels in the tree by degrading (see Definition~\ref{eq:degrad}) them. Degradation is achieved via the procedure of merging the output symbols, which (a) decreases the output alphabet size, and (b) does not change the entropy of the channel too much. This implies (with all the details worked out in Section~\ref{sect:cons}) that we can substitute all the channels in the tree of depth $t$ by their \emph{degraded approximations}, such that all the channels have output alphabet size at most $\Q$ (a parameter depending on $N = \l^{t}$ to be chosen), and that if $\wW$ is a degraded approximation of the channel $W$ in the tree, than $H(W) \leq H(\wW) \leq H(W) + \Delta$ for some $\Delta$ depending on $\Q$. Moreover, in Theorem~\ref{thm:kernel_seacrh_correct} which we formulate and prove shortly, we show that when we apply Algorithm~\ref{algo:kernel_search} to a degraded approximation $\wW$ of $W$ with small enough $\Delta$, then, even though  conditions~\eqref{eq:algo_stop_condition} only dictate a sharp transition for $\wW$, the same kernel will induce a sharp transition in polarization for $W$.

The second issue which such degraded approximation resolves is the running time of Algorithm~\ref{algo:kernel_search}. Notice that we are only going to apply it for channels with output size bounded by $\Q$, and recall that we think of $\l$ as of a constant (though very large). First of all, trying out all the possible kernels will then also take a constant number of iterations. Finally, within each iteration, just calculating all the Ar{\i}kan's bit-channels and their entropies in a straightforward way will take poly$(\Q^{\l})$ time, which is just poly$(\Q)$ when we treat $\l$ as a constant. Therefore by choosing $\Q$ to be polynomial in $N$, the algorithm indeed works in poly$(N)$ time. 

We now leave the full details concerning the complexity of the algorithm to be handled in Section~\ref{sect:cons}, and proceed with showing that Algorithm~\ref{algo:kernel_search} always returns a kernel which makes $\la$ from~\eqref{mult_decrease} close to $\l^{-1/2}$.

\begin{thm}
\label{thm:kernel_seacrh_correct} 
Let $\a \in \left(0, \frac1{12}\right)$ be a small fixed constant. Let $\l$ be an even power of $2$ such that $\log\l \geq \frac{11}{\a}$ and $\frac{\log\l}{\log\log\l + 2} \geq \frac{3}{\a}$. Let $W : \bit\to\Y$ and $\wW:\bit \to\wt{\Y}$ be two BMS channels, such that $\wW \preceq W$, ${H(\wW) - \Delta \leq H(W)\leq H(\wW)}$ for some $0 \leq \Delta \leq \l^{-\log \l}$, and $|\wt{\Y}| \leq \Q$. Then Algorithm~\ref{algo:kernel_search} on inputs $\wW$, $\Delta$, and $\l$ returns a kernel $K \in \bit^{\l\times\l}$ that satisfies
\begin{equation}
\label{eq:mult_decrease_thm}
    \dfrac{1}{\l\cdot\g(H(W))} \sum_{i=1}^{\l}\g\left(H(W_i)\right) \leq \l^{-\frac12 + 5\a},
\end{equation} 
where $W_1, W_2, \dots, W_{\l}$ are the Ar{\i}kan's bit-channels of $W$ with respect to the kernel $K$, and $g_\a(\cdot)$ is the potential function $g_\a(h) = \left(h(1-h)\right)^{\a}$ for any $h\in[0,1]$, as defined in~\eqref{eq:potential}.
\end{thm}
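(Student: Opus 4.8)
\quad The goal is to establish the multiplicative decrease \eqref{eq:mult_decrease_thm} for the channel $W$, given only that the algorithm succeeds in finding a kernel satisfying the sharp-transition conditions \eqref{eq:algo_stop_condition} for the degraded approximation $\wW$. The plan is to split into the two cases mirroring the algorithm's structure. First I would handle the \emph{suction} case, where $H(\wW) < \l^{-4}$ or $H(\wW) > 1 - \l^{-4} + \Delta$; by the hypothesis $H(\wW) - \Delta \le H(W) \le H(\wW)$, the channel $W$ itself is then very noisy or almost noiseless, and the kernel returned is $A_2^{\otimes \log \l}$. Here I would invoke the known strong polarization behavior of Ar\i kan's kernel at the ends (the analysis deferred to Section~\ref{sec:suctions}), which gives a geometric decrease with factor $\la \le \l^{-1/2}$, comfortably inside the claimed bound $\l^{-1/2+5\a}$.

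For the main case, where $\wW$ is neither too noisy nor almost noiseless, the algorithm returns a kernel $K$ for which \eqref{eq:algo_stop_condition} holds for $\wW$. The first key step is to \emph{transfer} this sharp transition from $\wW$ to $W$: since $\wW \preceq W$, Proposition~\ref{prop:degrad_subchannel} gives $\wW_i \preceq W_i$ and hence $H(W_i) \le H(\wW_i)$ for every $i$, so the ``high index $\Rightarrow$ tiny entropy'' part transfers immediately. For the ``low index $\Rightarrow$ entropy near $1$'' part one needs the reverse inequality; I would use that $|H(\wW) - H(W)| \le \Delta \le \l^{-\log\l}$ together with an entropy-continuity argument (of the flavor of Proposition~\ref{prop:entropy_differ}) to argue that the threshold $\l\cdot H(W)$ differs from $\l\cdot H(\wW)$ by only $O(\l^{1-\log\l}) = o(\l^{1/2}\log^3\l)$, so that after enlarging the transition window by the stated slack (the reason \eqref{eq:algo_stop_condition} uses $\l^{1/2}\log^3\l$ vs.\ $14\l^{1/2}\log^3\l$ and the $\ell^{-\log\ell/5}$ vs.\ $\ell^{-\log\ell/20}$ asymmetry), the sharp transition \eqref{eq:sharp_trans}-type bounds hold for $W$ as well. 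Here I should be careful: $H(W_i) \le H(\wW_i)$ could in principle push a channel that was ``near $1$'' for $\wW$ down below the threshold for $W$; this is why the transferred statement is phrased only about entropies being $\le \l^{-\Omega(\log\l)}$ or $\ge 1 - \l^{-\Omega(\log\l)}$, and one checks the degradation only hurts in the ``safe'' direction for the noiseless channels and needs the window slack for the noisy ones.

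Once the sharp transition for $W$ is in hand, the final step is a direct computation bounding $\frac{1}{\l\,\g(H(W))}\sum_{i=1}^\l \g(H(W_i))$. I would partition $[\l]$ into three ranges: indices $i \ge \l H(W) + O(\l^{1/2}\log^3\l)$ (where $H(W_i) \le \l^{-\Omega(\log\l)}$, so $\g(H(W_i)) \le \l^{-\Omega(\a\log\l)}$), indices $i \le \l H(W) - O(\l^{1/2}\log^3\l)$ (where $1 - H(W_i) \le \l^{-\Omega(\log\l)}$, so again $\g(H(W_i)) = (H(W_i)(1-H(W_i)))^\a \le \l^{-\Omega(\a\log\l)}$), and the ``middle'' range of at most $O(\l^{1/2}\log^3\l)$ indices where I bound $\g(H(W_i)) \le 1$ trivially. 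Summing: the two extreme ranges contribute at most $\l \cdot \l^{-\Omega(\a\log\l)}$, which is negligible, and the middle range contributes at most $O(\l^{1/2}\log^3\l)/\l = O(\l^{-1/2}\log^3 \l)$ to the normalized sum before dividing by $\g(H(W))$. Since $W$ is in the ``variance in the middle'' regime, $H(W)$ is bounded away from $0$ and $1$ (being within $\Delta$ of $H(\wW)$ which satisfies $\l^{-4} \le H(\wW) \le 1 - \l^{-4} + \Delta$), so $\g(H(W)) = (H(W)(1-H(W)))^\a \ge \l^{-4\a}$ roughly. Putting this together yields the normalized sum is at most $\l^{-1/2+O(\a)}\cdot\mathrm{polylog}(\l)$, and the constraints $\log\l \ge 11/\a$ and $\frac{\log\l}{\log\log\l+2}\ge 3/\a$ are exactly what is needed to absorb the $\mathrm{polylog}(\l)$ factors and the $\l^{-4\a}$ denominator into the final exponent, giving $\le \l^{-1/2+5\a}$.

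\textbf{Main obstacle.}\quad The delicate part is the transfer step in the middle case: I need to ensure that degradation (which only increases entropies, via Proposition~\ref{prop:degrad_subchannel}) does not destroy the ``entropy near $1$'' conclusion for the low-index channels, and that the entropy shift $\Delta \le \l^{-\log\l}$ between $W$ and $\wW$ is genuinely small enough compared to the transition window $\l^{1/2}\log^3\l$ to justify widening the thresholds as in \eqref{eq:algo_stop_condition}. Getting the constants and the direction of all inequalities to line up — in particular why the two thresholds in \eqref{eq:algo_stop_condition} are asymmetric ($+\l^{1/2}\log^3\l$ vs.\ $-14\l^{1/2}\log^3\l$, and the different exponents $\log\l/5$ vs.\ $\log\l/20$) — is where the real bookkeeping lies, and it is driven by the asymmetric effect of degradation on the two ends.
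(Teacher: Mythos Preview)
Your overall architecture matches the paper's: split into suction-at-the-ends versus variance-in-the-middle, defer the former to Section~\ref{sec:suctions}, and for the latter transfer the sharp transition from $\wW$ to $W$ and then do the three-range sum. The final computation and the lower bound $\g(H(W)) \gtrsim \l^{-4\a}$ are right.

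But there is a real gap in your transfer step, and you have correctly identified where it is without actually closing it. You need $H(W_i)\ge 1-\l^{-\Omega(\log\l)}$ for low $i$, knowing only $H(\wW_i)\ge 1-\l^{-\log\l/20}$ and $H(W_i)\le H(\wW_i)$. Widening the transition window or invoking Proposition~\ref{prop:entropy_differ} does nothing here: the window shift $|\l H(W)-\l H(\wW)|\le \l\Delta$ only tells you the \emph{location} of the transition is stable, not that the individual $H(W_i)$ stay near~$1$. The missing idea is \emph{entropy conservation}. Since $K$ is invertible, $\sum_i H(\wW_i)=\l H(\wW)$ and $\sum_i H(W_i)=\l H(W)$, so
\[
\sum_{i=1}^{\l}\bigl(H(\wW_i)-H(W_i)\bigr)=\l\bigl(H(\wW)-H(W)\bigr)\le \l\Delta.
\]
Every summand is nonnegative (by Proposition~\ref{prop:degrad_subchannel}), hence each one is at most $\l\Delta\le \l\cdot\l^{-\log\l}$. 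This gives $H(W_i)\ge H(\wW_i)-\l\Delta\ge 1-\l^{-\log\l/20}-\l^{1-\log\l}\ge 1-\l^{-\log\l/21}$, which is exactly what the paper does.

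A secondary point: your diagnosis of the asymmetries in \eqref{eq:algo_stop_condition} is off. The $+\l^{1/2}\log^3\l$ versus $-14\l^{1/2}\log^3\l$ window widths and the $\l^{-\log\l/5}$ versus $\l^{-\log\l/20}$ exponents are \emph{not} artifacts of the degradation transfer; they are inherited directly from the asymmetric quantitative strengths of the coding theorem (Theorem~\ref{thm:positive_Shannon_BMS}) and the converse (Theorem~\ref{thm:converse_Shannon_BMS}). The transfer step only degrades the exponent from $\log\l/20$ to $\log\l/21$, and does not touch the window widths at all. You should also note explicitly (as the paper does via Theorem~\ref{thm:Arikans_entropies_polarize}) why the algorithm is guaranteed to return \emph{some} kernel in the middle case; you currently assume this without justification.
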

\begin{proof} As we discussed above, we consider two cases:

\vspace{0.2cm}

\noindent \textbf{Suction at the ends.} If $H(\wW) \notin (\l^{-4}, 1-\l^{-4} + \Delta)$,  Algorithm~\ref{algo:kernel_search} returns a standard Ar{\i}kan's kernel $K = A_2^{\otimes \log\l}$ on input $\wW$ and $\Delta$. For this case $H(W) \notin (\l^{-4}, 1 - \l^{-4})$, and fairly standard arguments imply that the polarization under such a kernel is much faster when the entropy is close to $0$ or $1$. For completeness, we present the full proofs for this case in a deferred Section~\ref{sec:suctions}. Specifically, Lemma~\ref{lem:suction_evolution} immediately implies the result of the theorem for this regime, as we pick $\log\l \geq \frac1{\a}$. 

\vspace{0.2cm}

\noindent \textbf{Variance in the middle.} \sloppy Otherwise, if $H(\wW) \in (\l^{-4}, 1-\l^{-4} + \Delta)$, it holds $H(W) \in {(\l^{-4} - \Delta, 1 - \l^{-4} + \Delta)}$, thus $H(W) \in \left(\l^{-4}/2, 1- \l^{-4}/2\right)$ since $0 \leq \Delta \leq \l^{-\log \l}$ and $\log\ell$ is large by the conditions of the theorem.

We first need to argue that the algorithm will at least return some kernel. This argument is one of the main technical contributions of this work, and we formulate it as Theorem~\ref{thm:Arikans_entropies_polarize} in Section~\ref{BMS_section}. The theorem essentially claims that for any $\wW$ an overwhelming fraction of possible kernels $K \in \bit^{\l\times\l}$ satisfies the conditions in~\eqref{eq:algo_stop_condition} for $\wW$ and $K$ (note that we do not use any conditions on the size of $\wt{\Y}$ or the entropy $H(\wW)$ at all at this point). Clearly then, there is a decent fraction of \emph{invertible} kernels from $\bit^{\l\times\l}$ which also satisfy these conditions. Therefore, the algorithm will indeed terminate and return such a good kernel. Moreover, since the theorem claims that a random kernel from $\bit^{\l\times\l}$ will satisfy~\eqref{eq:algo_stop_condition} with high probability, and it is also known that it will be invertible with at least some constant probability. It means that instead of iterating through all possible kernels in step $4$ of Algorithm~\ref{algo:kernel_search}, we could take a random kernel and check it, and then the number of iterations needed to find a good kernel would be very small with high probability. However, to keep everything deterministic, we stick to the current approach. 

Suppose now the algorithm returned an invertible kernel $K \in \bit^{\l\times\l}$, which means that relations~\eqref{eq:algo_stop_condition} hold for $\wW$ and Ar{\i}kan's bit-channels $\wW_1, \wW_2,\dots, \wW_{\l}$ (we omit dependence on $K$ from now on). Denote also $W_i = W_i(K)$ as an Ar{\i}kan's bit-channels of $W$ with respect to $K$. First, since degradation is preserved after considering Ar{\i}kan's bit-channels according to Proposition~\ref{prop:degrad_subchannel}, $\wW_i \preceq W_i$, thus $H(W_i) \leq H(\wW_i)$ for all $i\in[\l]$. Now, similarly to the proof of Proposition~\ref{prop:approx_accumulation}, since $K$ is invertible, conservation of entropy implies $\sum_{i =1}^{\l}\left(H(\wW_i) - H(W_i)\right) = \l \left(H(\wW) -  H(W)\right) \leq \l \cdot\Delta$, therefore derive $H(W_i) \leq H(\wW_i) \leq H(W_i) + \l\cdot\Delta$ for any $i \in [\l]$. Then deduce from \eqref{eq:algo_stop_condition}
  \begin{equation}
      \begin{aligned}
  \label{eq:algo_analysis_subchannels}
      &H(W_i) \leq H(\wW_i) \leq \l^{-(\log\l)/4} &&\text{\normalfont{for}}  && i \geq \l\cdot H(\wW) + \l^{1/2}\log^3\l\\
    &H(W_i) \geq H(\wW_i) - \l\cdot\Delta \geq 1 - \l^{-(\log\l)/21} &&\text{\normalfont{for}}  && i \leq \l\cdot H(\wW) - 14\cdot\l^{1/2}\log^3\l,
  \end{aligned}
  \end{equation}
where we used that $\Delta \leq \l^{-\log \l}$ and $\ell$ is large in the condition of the theorem.

 Recall that $H(W) \in \left(\l^{-4}/2, 1- \l^{-4}/2\right)$ for variance in the middle regime, and note that this implies 
 \begin{equation}
 \label{eq:g_a_lower}
 \g(H(W)) \geq \g(\l^{-4}/2) = \left(\frac12\cdot(1 - \l^{-4}/2)\right)^{\a}\cdot\l^{-4\a} \geq \left(\frac14\right)^{\a}\l^{-4\a}  \geq \frac12\l^{-4\a}, 
 \end{equation}
 since $\g$ is increasing on $(0, 1/2)$ and $\a < 1/2$. Using~\eqref{eq:algo_analysis_subchannels} and the trivial bound $\g(x) \leq 1$ for all the indices $i$ close to $\l\cdot H(\wW)$ obtain that the LHS of the desired inequality \eqref{eq:mult_decrease_thm} is at most
\begin{equation}
 \label{eq:mult_decrease_proof}
\begin{aligned}
     &\hspace{-2cm}\dfrac{1}{\l\cdot\g(H(W))} \Bigg(&&\sum_{i=1}^{ \l\cdot H(\wW) - 14\cdot\l^{1/2}\log^3\l}\g\left(1 - \l^{-(\log\l)/21}\right) + 15\l^{1/2}\log^3\l  \\ & &+&\hspace{3pt} \sum_{i=\l\cdot H(\wW) + \l^{1/2}\log^3\l}^{\l}\g\left(\l^{-(\log\l)/4}\right)\Bigg)   \\
     &&& \hspace{-2cm}\overset{(a)}{<} \rlap{$2\ell^{4\a - 1}\left(15\l^{1/2}\log^3\l + \ell\cdot H(\wW) \cdot \ell^{-(\a\log\l)/21} + (\ell - \ell\cdot H(\wW)) \cdot \ell^{-(\a\log\l)/4} \right)$} \\
     &&& \hspace{-2cm}< \rlap{$30\l^{-\frac12 + 4\a}\log^3\l +2\l^{-(\a \log \l)/21 + 4\a}$}  \\
     &&& \hspace{-2cm}\overset{(b)}{\leq} \rlap{$\l^{-\frac12 + 4\a}\left(30\log^3\l + 2\l^{-1/42}\right) < \l^{-\frac12 + 4\a} \cdot 32\log^3\l $},     \\
     &&& \hspace{-2cm}\overset{(c)}{\leq} \rlap{$\l^{-\frac12 + 5\a}$},  
 \end{aligned}
\end{equation}
where $(a)$ follows from~\eqref{eq:g_a_lower} and the fact that $\g(x) = \g(1-x) \leq x^{\a}$ for $x \in (0, 1)$; $(b)$ uses the condition $\log\l \geq \dfrac{11}{\a}$, and $(c)$ uses $\dfrac{\log\l}{\log\log\l + 2} \geq \dfrac{3}{\a}$ from the requirements that we have on~$\l$ in the conditions of this theorem. 
\end{proof}

\begin{remark} \label{rmk:rmk}
In this paper, we are interested in the cases where $\a$ is very close to $0$.
For such $\a$,
we can absorb the two conditions on $\l$ in Theorem~\ref{thm:kernel_seacrh_correct} into one condition $\log\l\geq \Omega(\a^{-1.01})$ for convenience of notation. 
\end{remark}

\subsection{Strong channel coding and converse theorems}
\label{BMS_section}
In this section we will show that Algorithm~\ref{algo:kernel_search}, which is used to prove the multiplicative decrease of almost $\l^{-1/2}$ as in~\eqref{eq:mult_decrease_thm} in the settings of Theorem~\ref{thm:kernel_seacrh_correct}, indeed always returns some kernel for the regime when the entropy of the channel is not close to $0$ or $1$. While the analysis of suction at the ends regime, deferred to Section~\ref{sec:suctions}, follows standard methods in the literature and only relies on the fact that polarization becomes much faster when the channel is noiseless or useless, in this section we will follow the ideas from~\cite{FHMV17} and prove a \emph{sharp transition in the polarization behaviour}, when we use a random and sufficiently large kernel. 

The sharp transition stems from the fact that when the kernel $K$ is large enough, with high probability (over randomness of $K$) all the Ar{\i}kan's bit-channel with respect to $K$, except for approximately $\l^{1/2}$ of them in the middle, are guaranteed to be either very noisy or almost noiseless. 
We formulate the main result of this section in the following theorem, which was used in the proof of Theorem~\ref{thm:kernel_seacrh_correct}:
\begin{thm}
\label{thm:Arikans_entropies_polarize}
Let $W$ be any BMS channel. Let $W_1, W_2, \dots, W_{\l}$ be the Ar{\i}kan's bit-channels defined in \eqref{Arikan_subchannels} with respect to the kernel $K$ chosen uniformly at random from $\bit^{\l\times\l}$, where $\ell$ is a large integer such that $\log\l > 40$.
Then for the following inequalities all hold with probability $(1 - o_{\l}(1))$ over the choice of~$K$:
\begin{enumerate}[label=(\alph*)]
    \item $H(W_i) \leq \l^{-(\log \l)/4}$\hspace{5pt}\qquad\qquad for\quad  $i \geq \l\cdot H(W) + \l^{1/2}\log^3\l$;
    \item $H(W_i) \geq 1 - \l^{-(\log \l)/20}$\hspace{15pt}\quad for\quad  $i \leq \l\cdot H(W) - 14\cdot\l^{1/2}\log^3\l$.
\end{enumerate}
\end{thm}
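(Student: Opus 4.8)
The plan is to reduce both statements to facts about random linear codes and then invoke the converse theorem (Theorem~\ref{thm:over:BSC_converse}) together with the classical achievability of capacity by random linear codes. The key observation, which I would establish first, is the standard identification: for a uniformly random $K\in\bit^{\l\times\l}$, Ar{\i}kan's bit-channel $W_i$ is exactly the channel faced by the message bit $U_i$ when the block $\bU_{[1:i-1]}$ is used as part of a random linear encoding. More precisely, splitting $K$ into its first $i$ rows and its last $\l-i$ rows, and recalling~\eqref{Arikan_subchannels}, the entropy $H(W_i)=H(U_i\mid \bU_{<i},\bY)$ behaves like the uncertainty of a single message bit transmitted through $\l$ copies of $W$ using a random linear code whose generator matrix has $\l-i$ "free" rows (for part~(a)) or, dually, $i$ rows (for part~(b)). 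I would make this dictionary precise so that part~(a) becomes a coding (achievability) statement and part~(b) becomes a converse statement.

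\textbf{Part (a): small entropy for large $i$.} Here I would fix $i\ge \l H(W)+\l^{1/2}\log^3\l$ and argue that, over the randomness of $K$, the bit $U_i$ can be decoded from $(\bU_{<i},\bY)$ with error probability at most $\l^{-\Omega(\log\l)}$ with probability $1-o_\l(1)$ over $K$. This is precisely the regime in which a random linear code of blocklength $\l$ and rate at least $\l^{-1/2}\log^3\l$ below capacity $1-H(W)$ succeeds: conditioned on $\bU_{<i}$, the remaining $\l-i+1$ coordinates of $\bU$ form a random linear code of dimension $\l-i+1 \le \l(1-H(W)) - \l^{1/2}\log^3\l + 1$, so we are comfortably below capacity. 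Invoking the well-known fact that random linear codes achieve the capacity of any BMS channel (Gallager~\cite{Gallager65}, or the Barg--Forney correspondence) with exponentially (in $\l^{1/2}\log^3\l$, hence $\l^{-\omega(1)}$) small error, and then converting small error probability to small conditional entropy via Fano's inequality, yields $H(W_i)\le \l^{-\Omega(\log\l)}$. A union bound over the $O(\l)$ relevant indices $i$ costs only a factor of $\l$, which is absorbed in the $o_\l(1)$ and in the slack in the exponent. I would need to be a little careful that the "achievability" statement I cite holds in the \emph{non-asymptotic} regime — i.e., for all blocklengths $\l$ above an absolute constant, uniformly over the channel — but this is classical.

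\textbf{Part (b): large entropy for small $i$.} This is the substantive half and is exactly where Theorem~\ref{thm:over:BSC_converse} (and its BMS analogue, to be proven in Section~\ref{sec:bit-decoding}) enters. Fix $i\le \l H(W)-14\l^{1/2}\log^3\l$. By the dictionary above, $H(W_i)=H(U_i\mid \bU_{<i},\bY)$, and conditioning on $\bU_{<i}$ only helps us; so it suffices to show that $U_i$ cannot be predicted from $\bY$ alone (together with the free rows playing the role of the code) with advantage better than $\l^{-\omega(1)}$. Setting $k=\l-i+1$, we have $k \ge \l(1-H(W)) + 14\l^{1/2}\log^3\l \ge \l(1-H(W))+\Omega(\l^{1/2}\log\l)$, so the hypothesis of Theorem~\ref{thm:over:BSC_converse} is met with room to spare, and it gives $H(V_1\mid\bY)\ge 1-\l^{-\Omega(\log\l)}$ with probability $1-\l^{-\Omega(\log\l)}$ over the random generator matrix. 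Translating $V_1$ back to $U_i$ and absorbing the failure probability into $o_\l(1)$ via a union bound over the $O(\l)$ indices completes part~(b). The main obstacle, therefore, is entirely contained in Theorem~\ref{thm:over:BSC_converse} and its BMS generalization — the strong, non-asymptotic, single-bit converse — whose proof outline was sketched in Section~\ref{sect:outline}; granting that, the present theorem is a matter of carefully setting up the correspondence between bit-channels and random linear codes and bookkeeping the $\widetilde O(\l^{1/2})$ band of indices in the middle that we make no claim about.
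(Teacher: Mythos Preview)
Your proposal is correct and follows essentially the same approach as the paper: reduce $H(W_i)$ to the bit-decoding entropy $H(V_1\mid W^\ell(\bV\cdot G))$ for a random linear code with generator $G$ equal to the last $k=\ell-i+1$ rows of $K$ (the paper formalizes this as Proposition~\ref{prop:Arikan-bit}), then invoke achievability of capacity by random linear codes for part~(a) and the strong single-bit converse (Theorem~\ref{thm:converse_Shannon_BMS}) for part~(b), finishing with a union bound over the $O(\ell)$ indices. One wording caution: your sentence ``conditioning on $\bU_{<i}$ only helps us'' in part~(b) is backwards as stated---conditioning lowers entropy, so it makes the lower bound \emph{harder}---but your actual setup with $k=\ell-i+1$ is exactly the equality of Proposition~\ref{prop:Arikan-bit}, not an inequality, so the argument is fine.
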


\medskip
\begin{remark}
One can notice that the above theorem is stated for any BMS channel $W$, independent of the value of $H(W)$. 
\end{remark}

\medskip
The proof of this theorem relies on results concerning bit-decoding for random linear codes that are interesting beyond the connection to polar codes. The following proposition shows how to connect Ar{\i}kan's bit-channels to this context.

\begin{prop}
\label{prop:Arikan-bit}
Let $W$ be a BMS channel, $K\in\bit^{\l\times\l}$ be an invertible matrix, and $i \in [\l]$. Set $k = \l - i + 1$, and let $G$ be a matrix which is formed by the last $k$ rows of $K$. Let $\bU$ be a random vector uniformly distributed over $\bit^{\l}$, and $\bV$ be a random vector uniformly distributed over $\bit^{k}$. Then
\begin{equation}
\label{eq:Arikan-bit_decod}
     H\left(U_i\;\Big\lvert\; W^{\l}(\bU\cdot K), \bU_{<i}\right) = H\left(V_1\;\Big\lvert\;W^{\l}(\bV \cdot G)\right)  .
\end{equation}
\end{prop}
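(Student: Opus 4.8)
\textbf{Proof plan for Proposition~\ref{prop:Arikan-bit}.}

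The plan is to unwind the definition of the bit-channel $W_i$ in \eqref{Arikan_subchannels} and show that, after discarding a portion of the conditioning that is statistically independent of the bit in question, one is left exactly with the channel that sends $V_1$ through $W^\l(\bV\cdot G)$. The key structural observation is that $G$ consists of the \emph{last} $k = \l-i+1$ rows of $K$, so a vector of the form $\bV\cdot G$ with $\bV$ uniform over $\bit^k$ is exactly $(\bz,\bV)K$ with the first $i-1$ coordinates of the ``pre-image'' frozen to zero; more generally $(\bU_{<i}, U_i, \bV')K$ ranges over a coset of the row span of $G$ as $\bV'$ varies.

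First I would fix notation: write $\bU = (\bU_{<i}, U_i, \bU_{>i})$ where $\bU_{>i}\in\bit^{\l-i}$, and let $G'$ denote the last $k = \l-i+1$ rows of $K$, so that $(U_i,\bU_{>i})\cdot G' $ is the contribution of the last $k$ coordinates of $\bU$ to $\bX = \bU K$, while $\bU_{<i}\cdot \bar K$ (with $\bar K$ the first $i-1$ rows) is the contribution of the first $i-1$ coordinates. Thus $\bX = \bU_{<i}\bar K + (U_i,\bU_{>i})G'$. Now I would argue that conditioning on $\bU_{<i}$ is ``free'': since $\bU_{<i}$ is uniform and independent of $(U_i,\bU_{>i})$, and since for each fixed value $\bu_{<i}$ the map $\by\mapsto \by \ominus$ (noise-free image of $\bu_{<i}\bar K$) is a measure-preserving bijection on the channel output space compatible with the BMS/additive structure at the level of transition probabilities — more robustly, one simply notes that $H(U_i \mid W^\l(\bU K), \bU_{<i}) = H(U_i \mid W^\l(\bU K), \bU_{<i})$ and the joint law of $(U_i, W^\l(\bU K))$ given $\bU_{<i} = \bu_{<i}$ does not depend on $\bu_{<i}$ up to relabeling of outputs that preserves entropy. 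Hence the conditional entropy equals $H(U_i \mid W^\l((\bz, U_i,\bU_{>i})K))$, where $\bz\in\bit^{i-1}$. Finally, $(\bz, U_i, \bU_{>i})K = (U_i,\bU_{>i})G'$, and relabeling $(U_i,\bU_{>i})$ as $\bV = (V_1,\ldots,V_k)$ (uniform over $\bit^k$) with $G = G'$, this is precisely $H(V_1 \mid W^\l(\bV\cdot G))$, which is the right-hand side of \eqref{eq:Arikan-bit_decod}.

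The main obstacle, and the step that needs care, is justifying that conditioning on $\bU_{<i}$ can be dropped without changing the entropy. The cleanest route is probably not to invoke channel symmetry at all, but to observe directly from \eqref{Arikan_subchannels} that $W_i(\bY,\bU_{<i}\mid U_i)$ factors as (law of $\bU_{<i}$) times a kernel in $(\bY,U_i)$ that depends on $\bU_{<i}$ only through an invertible, entropy-preserving reparametrization of the $\bY$-coordinate — namely adding the noiseless image of $\bU_{<i}\bar K$. Since such a reparametrization is a bijection on $\Y^\l$ for each fixed $\bu_{<i}$ (it is a permutation of the finite output alphabet composed coordinatewise, using that $W$ is symmetric so $\Y$ carries an involution and outputs can be shifted), it leaves $H(U_i \mid \bY, \bU_{<i}=\bu_{<i})$ unchanged and independent of $\bu_{<i}$. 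Averaging over $\bu_{<i}$ then gives the claim. I would present this as a short lemma or absorb it into the proof, taking care to state precisely why the shift is a bijection on outputs — this is where the BMS hypothesis (existence of the involution $\pi$ with $W(y|1) = W(\pi(y)|0)$) is actually used.
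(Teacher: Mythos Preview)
Your proposal is correct and follows essentially the same approach as the paper's proof in Appendix~\ref{app:BMS_lemmas}: decompose $\bU K$ according to the first $i-1$ and last $k$ rows of $K$, use the BMS symmetry to show that $H(U_i\mid \bY,\bU_{<i}=\bw)$ is independent of $\bw$ via an output-shifting bijection, then set $\bw=\bz$ and identify $(U_i,\bU_{>i})$ with $\bV$. The paper makes the shift explicit through the BSC-subchannel decomposition and a coordinatewise ``flip'' operator $\oplus$ satisfying $W^\l(\by\mid\bx)=W^\l(\by\oplus\bx\mid\bz)$, which is exactly the involution-based relabeling you describe; one small remark is that your sentence ``the cleanest route is probably not to invoke channel symmetry at all'' is misleading, since (as you yourself note a few lines later) the bijection on $\Y^\l$ does rely on the BMS involution $\pi$.
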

We are implicitly using the concept of coset codes \cite[Section 6.2]{Gallager68} in this proposition, and the proof technique here is quite standard in the polar coding literature. For example, the same proof technique is used to show that the values of the frozen bits do not matter for polar codes  \cite{arikan-polar,KU-lossy}.
The proof of this proposition only uses basic properties of BMS channels and linear codes, and is deferred to Appendix \ref{app:BMS_lemmas}. Notice now that the LHS of~\eqref{eq:Arikan-bit_decod} is exactly the entropy $H(W_i)$ of the $i$-th Ar{\i}kan's bit-channel of $W$ with respect to the kernel $K$, by definition of this bit-channel. On the other hand, one can think of the RHS of~\eqref{eq:Arikan-bit_decod} in the following way: look at $G$ as a generator matrix for a linear code of blocklength $\l$ and dimension $k$, which is transmitted through the channel $W$. Then $H\left(V_1\;\Big\lvert\;W^{\l}(\bV \cdot G)\right)$ in some sense corresponds to how well one can decode the first bit of the message, given the output of the channel. Since in Theorem~\ref{thm:Arikans_entropies_polarize} we are interested in random kernels, the generator matrix $G$ is also random, and thus we are indeed interested in understanding bit-decoding of random linear codes.

\subsubsection{The BEC case}

When $W$ is the binary erasure channel, a statement 
very similar to Theorem~\ref{thm:Arikans_entropies_polarize} was established in \cite{FHMV17}. The situation for the BEC is simpler and we now describe the intuition behind this.

Suppose we map uniformly random bits $\bU \in \{0,1\}^\l$ to $\bX = \bU K$ for a \emph{random} $\ell \times \ell$ binary matrix $K$.
We will observe $\approx (1-z) \ell$ bits of $\bX$ after it passes through $\mathrm{BEC}(z)$; call these bits $\bZ$. For a random $K$, with high probability the first $\approx z \ell$ bits of $\bU$ will be almost independent of these observed bits $\bZ$. When this happens we will have $H(W_i) = 1$ for $i \lesssim z \ell$.  On the other hand, w.h.p. over the choice of $K$, the remaining bits $U_i$ for $i \gtrsim z \ell$ can be uniquely determined as linear combinations of $\bZ$ and $U_i, i \lesssim z \ell$, making the corresponding conditional entropies $H(W_i)=0$. Thus except for a few exceptional indices around $i \approx z \ell$, the entropy $H(W_i)$ will be very close to $0$ or $1$. The formal details and quantitative aspects are non-trivial as the argument has to handle the case when $z$ is itself close to $0$ or $1$, and one has to show the number of exceptional indices to be $\lesssim \sqrt{\ell}$ (which is the optimal bound). But ultimately the proof amounts to understanding the ranks of various random subspaces. When $W$ is a BMS channel, the analysis is no longer linear-algebraic, and becomes more intricate. This is the subject of the rest of this section as well as Sections~\ref{sec:BSC_converse} and~\ref{sec:bit-decoding}.

\subsubsection{Part (a): channel capacity theorem}

Part (a) of Theorem~\ref{thm:Arikans_entropies_polarize} corresponds to transmitting through $W$ random linear codes  with rates \emph{below} the capacity of the channel. For this regime, it turns out that we can use the classical result that random linear codes achieve the capacity of the channel with \emph{low error decoding probability}. Trivially, the bit-decoding error probability is even smaller, making the corresponding conditional entropy also very small. Therefore, the following theorem follows from classical Shannon's theory:
\begin{thm}
\label{thm:positive_Shannon_BMS}
Let $W$ be any BMS channel and $k \leq \l(1-H(W)) - \l^{1/2}\log^3\l$, where $\l \geq 4$. Let $G$ be a random binary  matrix uniform over $\bit^{k\times \l}$. Suppose a codeword $\bV\cdot G$ is transmitted through $\l$ copies of the channel $W$, where $\bV$ is uniformly random over $\bit^k$, and let $\bY$ be the output vector, i.e. $\bY = W^{\l}(\bV\cdot G)$. Then with high probability over the choice of $G$ it holds  ${H\left(V_1\;\big\lvert\;\bY\right) \leq \l^{-(\log \l)/4}}$.
\end{thm}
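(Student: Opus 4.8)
The plan is to reduce the claimed bound on $H(V_1\mid \bY)$ to the classical statement that random linear codes achieve capacity on BMS channels with exponentially small maximum-likelihood decoding error. First I would invoke the standard Shannon/Gallager-type theorem (as cited in the excerpt, \cite{Gallager65, Barg_Forney_correspond}): since $k \leq \l(1-H(W)) - \l^{1/2}\log^3\l$, the rate $R = k/\l$ is below capacity $1-H(W)$ by a margin of at least $\l^{-1/2}\log^3\l$. Random coding arguments then give that, averaged over $G\sim\bit^{k\times\l}$, the probability of block-decoding error (for the optimal, i.e. ML, decoder) is at most $2^{-\Omega(\l\cdot(\text{gap})^2)} = 2^{-\Omega(\log^6\l)}$, which is $\l^{-\Omega(\log^5\l)}$, far smaller than $\l^{-(\log\l)/5}$. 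By Markov's inequality, for all but a $\l^{-\Omega(\log\l)}$ fraction of matrices $G$, the block-decoding error probability $P_e(G)$ is at most $\l^{-\Omega(\log\l)}$, say $\le \l^{-(\log\l)/4}$; call these $G$ "good." I would carry out the rest of the argument conditioned on a good $G$.

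Next I would convert a small \emph{block} error probability into a small \emph{conditional entropy} $H(V_1\mid\bY)$. For a good $G$, let $\hat{\bV}(\bY)$ be the ML estimate of $\bV$; then $\P[\hat{\bV}\neq\bV]\le P_e(G)$. In particular $\hat V_1(\bY)$, the first coordinate of $\hat{\bV}$, satisfies $\P[\hat V_1 \neq V_1] \le P_e(G)$, so $V_1$ is determined by $\bY$ up to an error event of probability $\le P_e(G)$. By Fano's inequality, $H(V_1\mid\bY) \le h(P_e(G)) + P_e(G)\cdot\log 2 = h(P_e(G))$. Using the bound $h(x)\le 2x\log(1/x)$ for $x\in[0,1/2)$ recorded in the preliminaries, and $P_e(G)\le \l^{-(\log\l)/4}$, we get $H(V_1\mid\bY) \le 2\l^{-(\log\l)/4}\cdot\tfrac{\log\l}{4}\log\l = \tfrac12\l^{-(\log\l)/4}\log^2\l \le \l^{-(\log\l)/5}$ for $\l$ large enough (which the hypotheses on $\l$ from Theorem~\ref{thm:kernel_seacrh_correct}, or just "$\l$ a sufficiently large power of $2$," guarantee). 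Since this holds for every good $G$, and good $G$ occur with probability $1-o_\l(1)$, the theorem follows.

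The one place requiring care — and the main obstacle — is making the invoked capacity theorem \emph{quantitatively} match what is needed: it must apply to the \emph{random linear} ensemble (not just random i.i.d.\ codes), must hold in the \emph{non-asymptotic} regime (fixed small $\l$ above an absolute constant, not $\l\to\infty$ with the channel fixed), and the gap-to-capacity is only $\l^{-1/2}\log^3\l\to 0$, so the usual "fixed gap, exponential in $\l$" error bound must be replaced by the refined $2^{-\Omega(\l\cdot\text{gap}^2)}$ form. I would handle this by citing the known finite-blocklength random-linear-coding bound over BMS channels (e.g.\ via the Gallager random-coding exponent evaluated at small gap, or Poltyrev/Barg–Forney-type estimates), and checking that $\l \cdot (\l^{-1/2}\log^3\l)^2 = \log^6\l \to\infty$ so the error is indeed $\l^{-\omega(\log\l)}$. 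A secondary subtlety is that Theorem~\ref{thm:positive_Shannon_BMS} speaks of $H(V_1\mid\bY)$ for the \emph{random} matrix $G$ directly, but it is cleanest to first fix $G$, reason about the deterministic code it generates, and only at the end average over $G$ via Markov — exactly as the proof plan above does. Everything else (Fano, Markov, the entropy estimate) is routine.
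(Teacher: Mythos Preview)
Your proposal is correct and follows essentially the same route as the paper: random-coding error exponent $\to$ Markov's inequality to fix a good $G$ $\to$ Fano's inequality to bound $H(V_1\mid\bY)$. The paper's proof differs only in how it pins down the quantitative exponent: rather than directly asserting $E_r(R,W)=\Omega((\text{gap})^2)$ for a general BMS, it first reduces to the BSC of the same capacity via the inequality $E_r(R,W)\ge E_r^{\mathrm{BSC}}(R,I(W))$ from \cite[Theorem~2.3]{Fabregas}, and then uses the (near-)quadratic behavior of the BSC sphere-packing exponent near capacity, arriving at the slightly weaker but still sufficient $\overline{P_e}\le\exp(-\Omega(\log^4\l))$ instead of your $2^{-\Omega(\log^6\l)}$.
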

\begin{proof}
The described communication is just a transmission of a random linear code $C = \{ \bv G,\ \bv \in \bit^k\}$ through $W^{\l}$, where the rate of the code is $R = \frac{k}{\l} \leq I(W) - \l^{-1/2}\log^3\l$, so it is separated from the capacity of the channel. It is a well-studied fact that random (linear) codes achieve capacity for BMS, and moreover a tight error exponent was described by Gallager in~\cite{Gallager65} and analyzed further in~\cite{Barg_Forney_correspond},~\cite{Forney_survey},~\cite{Domb}. Specifically, one can show $\overline{P_e} \leq \text{exp}(-\l E_r(R, W))$, where $\overline{P_e}$ is the probability of decoding error, averaged over the ensemble of all linear codes of rate $R$, and $E_r(R, W)$ is the so-called \emph{random coding exponent}. 
It is proven in~\cite[Theorem~2.3]{Fabregas} that for any BMS channel $W$, one has $E_r(R,W) \geq E_r^{\text{BSC}}(R, I(W))$ where the latter is the error exponent for the BSC channel with the same capacity $I(W)$ as $W$. But the optimal scaling exponent for BSC channels for the regime when the rate is close to the capacity of the channel is given by the so-called sphere-packing exponent $E_r^{\text{BSC}}(R, I) = E_{\text{sp}}(R, I)$ (see, for instance,~{\cite[Section~1.2]{Forney_survey}}, which is easily shown to be almost quadratic in $(I - R)$. Specifically, we use the following
\begin{lem}
$E_{\text{sp}}(R, I) \geq \frac{2\log^4\l}{\l}$ for $R \leq I - \l^{-1/2}\log^3\l$.
\end{lem}
\begin{proof}
For the sphere-packing exponent we use the expression from~\cite[eq (1.4)]{Forney_survey}
\begin{equation}
    \label{eq:sph-exp}
    E_{\text{sp}}(R, I) = D_{\text{KL}}\bigg(\delta_{\text{GV}}(R)\, \big\lvert\big\lvert\, p\bigg),
\end{equation}
where $I = I(W) = 1 - H(W) = 1 - h(p)$ is the capacity of the BSC$_{p}$ channel (with $p < \frac12$), $D_{KL}$ stands for the Kullback–Leibler divergence, and $\delta_{GV}(R)$ is the relative Gilbert-Varshamov distance, which is defined as the solution to $1 - h(\delta) = R$ for $\delta \in \left(0, \frac12\right)$. For convenience, we will just write $\delta$ instead of $\delta_{\text{GV}}(R)$ below.

For $R \leq I - \l^{-1/2}\log^3\l = 1 - h(p) - \l^{-1/2}\log^3\l $, we then have $1 - h(\delta) \leq 1 - h(p) - \l^{-1/2}\log^3\l$, and so $h(\delta) - h(p) \geq  \l^{-1/2}\log^3\l$. Using Proposition~\ref{prop:entropy_differ}, obtain $h(\delta - p) \geq h(\delta) - h(p) \geq  \l^{-1/2}\log^3\l$. Next, since $h(x)$ in increasing on $\left(0, \frac12\right)$ and by Proposition~\ref{prop:entropy_half} $$h(\l^{-1/2}\log^2\l) \leq 2\l^{-1/2}\log^2\l\cdot\log\frac{\l^{1/2}}{\log^2\l} \leq 2\l^{-1/2}\log^2\l\cdot\frac12\log\l = \l^{-1/2}\log^3\l,$$ we conclude that $\delta - p \geq \l^{-1/2}\log^2\l$.

Finally, we use Pinsker's inequality $D_{\text{KL}}\left(P\,||\, Q\right) \geq 2\Delta^2(P, Q)$ between the KL divergence and the total variation distance $\Delta(P, Q) = \frac12||P - Q||_1$ of two distributions $P$ and $Q$ over the same probability space. Abusing the notation and denoting $\Delta(\delta, p)$ as the distance between Bern$(\delta)$ and Bern$(p)$, we have $\Delta(\delta, p) = |\delta - p|$, and so obtain
\begin{align} E_{\text{sp}}(R, I) = D_{\text{KL}}\big(\delta\, || \, p\big) \geq 2\Delta^2(\delta, p) = 2(\delta - p)^2 \geq \dfrac{2\log^4\l}{\l}. &\qedhere
\end{align}
\end{proof}

Therefore using this lemma we have $\overline{P_e} \leq \text{exp}(-\l E_r(R, W)) \leq \text{exp}(-\l E_{\text{sp}}(R, I(W))) \leq \text{exp}(-2\log^4\l)$. Then Markov's inequality implies that if we take a random linear code (i.e. choose a random binary matrix $G$), then with probability at least $1 - \l^{-2}$ the decoding error is going to be at most $\l^2\text{exp}(-2\log^4\l) \leq \text{exp}(-\log^4\l) \leq \l^{-\log\l}$. Consider such a good linear code (matrix $G$), and then $\bV$ can be decoded from $\bY$ with high probability, thus, clearly, $V_1$ can be recovered from $\bY$ with at least the same probability. Then Fano's inequality and Proposition~\ref{prop:entropy_half} gives us:
\begin{equation}
    \begin{aligned}
    H(V_1\,|\,\bY) \leq h_2(\l^{-\log \l}) &\leq 2\l^{-\log \l}\cdot \log\left(\frac1{\l^{-\log \l}}\right)\\
    & = 2\l^{-\log \l} \cdot \log^2\l \quad\leq\quad \l^{-(\log\l)/4},
    \end{aligned}
\end{equation}
where the last inequality follows from $2\log^2\l \leq 2^{{\frac{3\log^2\l}{4}}}$, which holds for $\l \geq 4$.
Thus we indeed obtain that the above holds with high probability (at least $1 - \l^{-2}$, though this is very loose) over the random choice of $G$.
\end{proof}

\subsubsection{Part (b): strong converse for bit-decoding under noisy channel coding}   
On the other hand, part (b) of Theorem~\ref{thm:Arikans_entropies_polarize} concerns bit-decoding of linear codes with rates \emph{above} the capacity of the channel. We prove that with high probability, for a random linear code with rate slightly above capacity of a BMS channel, any single bit of the input message is highly unpredictable based on the outputs of the channel on the transmitted codeword. Formally, we have the following theorem.
\begin{restatable}{thm}{converseShannon}
\label{thm:converse_Shannon_BMS}
\sloppy Let $W$ be any BMS channel, and $\l$ and $k$ be integers that satisfy ${\l \geq k \geq \l(1-H(W)) + 14\l^{1/2}\log^3\l}$, and let $\l$ be large enough so that {$\log\l \geq 20$}. Let $G$ be a random binary matrix uniform over $\bit^{k\times \l}$. Suppose a message $\bV\cdot G$ is transmitted through $\l$ copies of the channel $W$, where $\bV$ is uniformly random over $\bit^k$, and let $\bY$ be the output vector, i.e. $\bY = W^{\l}(\bV\cdot G)$. Then, with probability at least $1 - \l^{-(\log\l)/20}$ over the choice of~$G$ it holds  ${H\left(V_1\;\big\lvert\;\bY\right) \geq 1 - \l^{-(\log\l)/20}}$.
\end{restatable}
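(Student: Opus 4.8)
The plan is to follow the blueprint for the BSC case outlined in Section~\ref{sect:outline}, but to carefully lift it to a general BMS channel $W$ by exploiting the representation of $W$ as a convex combination of BSC "subchannels." First I would reduce, exactly as in Step~1 of the outline, to analyzing $\E_{g\sim G}[H^{(g)}(V_1\mid\bY)]$ and invoking Markov's inequality (this is valid since entropy is bounded by $1$, so a lower bound on the expectation of $1-H^{(g)}(V_1\mid\bY)$ being small forces $H^{(g)}(V_1\mid\bY)\ge 1-\l^{-\log\l/20}$ except on a $\l^{-\log\l/20}$-fraction of matrices $g$). Using linearity of the code and symmetry of $W$, I would then replace the output distribution $\P^{(g)}(\bY=\by)$ by $\P(\bY=\by\mid\bV=\bz)$, so that the output distribution no longer depends on $g$; this is the "restrict to zero-input" reduction, and for BMS channels the relevant symmetrization lemma is the one proved in Appendix~\ref{app:BMS_lemmas}. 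At this point $\E_{g\sim G}[H^{(g)}(V_1\mid\bY)] = \sum_{\by}\P(\bY=\by\mid\bV=\bz)\,\E_{g\sim G}[H^{(g)}(V_1\mid\bY=\by)]$, and it suffices to show that for a "typical" $\by$ (under the zero-input distribution) the inner expectation is $\ge 1-\l^{-\Omega(\log\l)}$.

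Next I would fix a typical $\by$ and write $H^{(g)}(V_1\mid\bY=\by)=h\big(\P^{(g)}(V_1=0,\bY=\by)/\P^{(g)}(\bY=\by)\big)$, reducing everything to showing that numerator and denominator concentrate (over $g\sim G$) around means whose ratio is $1/2+\l^{-\omega(1)}$. Expanding $2^k\P^{(g)}(\bY=\by)$ we get $\P(\bY=\by\mid\bV=\bz)+\sum_{\bx\neq\bz}W^\l(\by\mid\bx)\mathbbm{1}[\bx\in\mathcal{C}_g]$; the key quantities are the weight-enumerator-type coefficients counting how many nonzero codewords land near $\by$ in a channel-weighted sense. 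Here is where the BMS case genuinely diverges from the BSC case: rather than a single parameter $d=\mathrm{dist}(\bx,\by)$ with a binomial distribution, we must track the joint "type" of $(\bx,\by)$ relative to the BSC-decomposition of $W$ — i.e., for each BSC component with crossover probability $p_j$, how many coordinates of that component are flipped. I would define the typical output set $\mathcal{F}$ in terms of these component-wise weights being within $\tilde O(\sqrt\l)$ of their expectations, define an analogous typical-type set for the pair $(\bx,\by)$, split the codeword count into a "negligible" part (atypical types, handled by a first-moment/Markov bound as in Step~3a) and a "substantial" part (typical types), and for the substantial part compute $\E_g[B_g(\cdot)]$ — which is enormous because the rate exceeds capacity by $14\l^{1/2}\log^3\l$ — and apply Chebyshev to each of the $\poly(\l)$-many type-classes, then union-bound. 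The same concentration argument applied to $\P^{(g)}(V_1=0,\bY=\by)$ — which is essentially half the codewords (those with $V_1=0$, a subcode of dimension $k-1$) — gives that the ratio is $\tfrac12(1+\l^{-\omega(1)})$, hence $h(\cdot)\ge 1-\l^{-\Omega(\log\l)}$. Finally, since the zero-input output lands in $\mathcal{F}$ with probability $\ge 1-\l^{-\omega(1)}$, summing over $\by\in\mathcal{F}$ yields the bound on $\E_{g\sim G}[H^{(g)}(V_1\mid\bY)]$, and Markov finishes the proof.

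The main obstacle, as the excerpt itself flags, is making the concentration arguments work in the \emph{non-asymptotic} regime — for all $\l$ above an absolute constant independent of the channel — while simultaneously juggling a multinomial distribution over the BSC-component types together with the family of binomial variables (flips within each component) that depend on the realized multinomial. Concretely: the first-moment bound on the negligible part requires controlling the tails of this multinomial-plus-conditional-binomials structure uniformly, and the Chebyshev step requires a good enough bound on $\Var_g[B_g]$ relative to $(\E_g[B_g])^2$ for \emph{every} typical type class at once, which is delicate precisely when the channel has components with $p_j$ near $0$ or $\tfrac12$ (degenerate BSCs) contributing little entropy. I expect most of the work to go into setting up the right notion of "type," proving the needed anti-concentration/concentration bounds for weight enumerators of random linear codes stratified by type, and verifying that the $\l^{1/2}\log^3\l$ slack above capacity is enough to make $\E_g[B_g(\mathrm{wt}(\by),\by)]\gg 1$ with the margin needed for the union bound — all of which is carried out in Section~\ref{sec:bit-decoding}.
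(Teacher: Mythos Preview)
Your proposal follows the paper's blueprint closely and correctly identifies the key structure: reduce to zero-input, define a typical set, expand $\P^{(g)}(\bY=\by)$ via weight-enumerator-type coefficients indexed by the joint ``type'' of $(\bx,\by)$ relative to the BSC decomposition of $W$, split into negligible and substantial parts, and conclude by Chebyshev plus union bound. This is exactly what Section~\ref{sec:BMS_large_alphabet} does.

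However, you are missing one structural ingredient that the paper needs and that your write-up glosses over. You write ``apply Chebyshev to each of the $\poly(\l)$-many type-classes, then union-bound.'' But the number of type classes is $\prod_{i=1}^m (d_i+1)$, where $m$ is the number of BSC components of $W$. For a general BMS channel $m$ can be arbitrarily large (even infinite for continuous-output channels), and even when $m\le\sqrt{\l}$ the count is $\le(2\sqrt{\l})^{\sqrt{\l}}=2^{O(\sqrt{\l}\log\l)}$, which is already super-polynomial. The paper's union bound in Section~\ref{substantial_sect} works only because the per-type failure probability is driven down to $2^{-2\sqrt{\l}\log^3\l}$, which beats $2^{\sqrt{\l}\log^3\l}$; and this in turn relies on the bound $m\le\sqrt{\l}$ to control both $|\S|$ and the product $\prod_i\sqrt{2d_i}$ coming from Fact~\ref{binom_lemma}. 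The same bound $m\le\sqrt{\l}$ is used in the typicality Lemma~\ref{typical_lem} and the concentration Lemma~\ref{two_concentrations_lem}.

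The paper therefore first proves the theorem only for BMS channels with output alphabet $|\Y|\le 2\sqrt{\l}$ (Section~\ref{sec:BMS_large_alphabet}, using the weaker hypothesis $k\ge\l(1-H(W))+13\l^{1/2}\log^3\l$), and then in Section~\ref{sec:BMS_any_alphabet} reduces an arbitrary BMS $W$ to this case via an \emph{upgraded} binning (Proposition~\ref{prop:binned_upgraded_channel}): one constructs $\wW$ with $\le\sqrt{\l}$ BSC components such that $W\preceq\wW$ and $H(\wW)\ge H(W)-\l^{-1/2}\log\l$. Upgrading is essential here (degrading would move the inequality the wrong way for $H(V_1\mid\bY)$), and the entropy loss of $\l^{-1/2}\log\l$ is exactly why the theorem's hypothesis has the constant $14$ rather than $13$. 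Your plan is otherwise sound, but you should add this alphabet-reduction step explicitly; without it the union bound over types does not go through for general $W$.
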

\noindent Since the theorem is of independent interest and of a fundamental nature, we devote a separate Section~\ref{sec:bit-decoding} to present a proof for it.

\bigskip

The above statements make the proof of Theorem~\ref{thm:Arikans_entropies_polarize} immediate:
\begin{proof}[Proof of Theorem~\ref{thm:Arikans_entropies_polarize}] 
\sloppy Denote $k = \l - i + 1$, then by Proposition~\ref{prop:Arikan-bit} ${H(W_i) = H\left(V_1\;\Big\lvert\;W^{\l}(\bV \cdot G_k)\right)}$, where $\bV \sim \bit^k$ and $G_k$ is formed by the last $k$ rows of $K$. Note that since $K$ is uniform over $\bit^{\l\times\l}$, this makes $G_k$ uniform over $\bit^{k\times\l}$ for any $k$. Then:
\begin{enumerate}[label=(\alph*)]
    \item For any $i \geq \l\cdot H(W) + \l^{1/2}\log^3\l$, we have $k \leq \l(1-H(W)) - \l^{1/2}\log^3\l$, and therefore Theorem~\ref{thm:positive_Shannon_BMS} applies, giving $H(W_i) \leq \l^{-(\log\l)/4}$ with probability at least $1 - \l^{-2}$ over $K$.
    \item Analogically, if $i \leq \l\cdot H(W) - 14\cdot\l^{1/2}\log^3\l$, then $k \geq \l(1-H(W)) + 14\l^{1/2}\log^3\l$, and Theorem~\ref{thm:converse_Shannon_BMS} gives $H(W_i) \geq 1 - \l^{-(\log\l)/20}$ with probability at least $1 - \l^{-(\log\l)/20}$ over~$K$.
\end{enumerate}
It only remains to take the union bound over all indices $i$ as in (a) and (b) and recall that we took $\ell$ large enough so that $\log\ell > 40$. This implies that all of the bounds on the entropies will hold simultaneously with probability at least $1 - \l\cdot\l^{-2} \geq 1 - \l^{-1}$ over the random kernel $K$.
\end{proof}

\section{Strong converse for BSC\texorpdfstring{$_p$}{\tiny p}}
\label{sec:BSC_converse} We present a proof of Theorem~\ref{thm:converse_Shannon_BMS} in the next two sections. It is divided into three parts: first, we prove it for a special case of $W$ being a BSC channel in this section. The analysis for this case is simpler (but already novel), and it provides the roadmap for the argument for the case of general BMS channel. Next, in Section~\ref{sec:BMS_large_alphabet} we prove Theorem~\ref{thm:converse_Shannon_BMS} for the case when the output alphabet size of $W$ is bounded by $2\sqrt{\l}$, which is the main technical challenge in the paper. The proof will mimic the approach for the BSC case to some extent. Finally, in Section~\ref{sec:BMS_any_alphabet}, we show how the case of general BMS channel can be reduced to the case of the channel with bounded alphabet via ``upgraded binning" to merge output symbols.

Throughout this section consider the channel $W$ to be BSC with the crossover probability $p\leq \frac12$. Denote $H = H(W) = h(p)$, where $h(\cdot)$ is the binary entropy function. For the BSC case we will actually only require $k \geq \l(1-H) + 8\sqrt{\l}\log \l$ in the condition of the Theorem~\ref{thm:converse_Shannon_BMS}. Thus we are in fact proving Theorem~\ref{thm:over:BSC_converse} here.

\begin{proof}[Proof of Theorem~\ref{thm:over:BSC_converse}]
We will follow the plan described in Section~\ref{sect:outline}. As we discussed there, we prove that $H(V_1\,|\,\bY)$ is very close to $1$ with high probability over $G$ by showing that its expectation over $G$ is already very close to $1$ and then using Markov inequality. So we want to prove a lower bound on
\begin{equation}
\label{eq:confused_notation}
\E_{g\sim G}\big[H^{(g)}(V_1|\bY)\big] = \sum_g \P(G=g) H^{(g)}(V_1|\bY),
\end{equation}
where $H^{(g)}(V_1|\bY)$ is the conditional entropy for the fixed matrix $g$. Similarly, in the remainder of this section, $\P\nolimits^{(g)}(\cdot)$  denotes probabilities of certain events \emph{for a fixed matrix $g$}. By $\sum_g$ we denote the summation over all binary matrices from  $\bit^{k\times\l}$. 

\medskip \noindent{\bf Restrict to zero-input.}\quad We rewrite
\begin{align}
    \label{entropy_expectation}
    \E_{g\sim G}\big[H^{(g)}(V_1|\bY)\big] &=\sum_g \P(G=g) \left(\sum_{\by \in \Y^{\l}} \P\nolimits^{(g)}(\bY=\by) H^{(g)}(V_1|\bY=\by) \right) \nonumber \\
      &=\sum_{\by \in \Y^{\l}}  \sum_g \P\nolimits^{(g)}(\bY=\by) \cdot\P(G=g)H^{(g)}(V_1|\bY=\by). \nonumber 
\end{align}
Our first step is to prove that in the above summation we can change $\P\nolimits^{(g)}(\bY=\by)$ to ${\P\nolimits^{(g)}(\bY=\by|\bV = \mathbi{0})}$, where $\mathbi{0}$ is the all-zero vector. This observation is crucial for our arguments, since it allows us to only consider the outputs $\by$ which are ``typical" for the all-zero codeword when approximating $\E\limits_{g\sim G}\big[H^{(g)}(V_1|\bY)\big]$. Precisely, we prove
\begin{lem}
\label{typical_entropy_BSC}
Let $W$ be a BMS channel, $\l$ and $k$ be integers such that $k \leq \l$. Let $G$ be a random binary matrix uniform over $\bit^{k\times \l}$. Suppose a message $\bV\cdot G$ is transmitted through $\l$ copies of $W$, where $\bV$ is uniformly random over $\bit^k$, and let $\bY$ be the output vector $\bY = W^{\l}(\bV\cdot G)$. Then
\begin{equation}
    \label{entropy_expectation_typical}
    \E_{g\sim G}\big[H^{(g)}(V_1|\bY)\big] = \sum_{\by \in \Y^{\l}}  \sum_g \P\nolimits^{(g)}(\bY=\by|\bV = \emph{\mathbi{0}}) \cdot\P(G=g)H^{(g)}(V_1|\bY=\by).
\end{equation}
\end{lem}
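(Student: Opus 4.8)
The plan is to establish the identity pointwise in the matrix: I would show that for \emph{every} fixed $g\in\bit^{k\times\l}$,
\[
\sum_{\by\in\Y^{\l}}\P\nolimits^{(g)}(\bY=\by)\,H^{(g)}(V_1|\bY=\by)
=\sum_{\by\in\Y^{\l}}\P\nolimits^{(g)}(\bY=\by|\bV=\mathbi{0})\,H^{(g)}(V_1|\bY=\by),
\]
after which multiplying by $\P(G=g)$ and summing over $g$ yields \eqref{entropy_expectation_typical}, since the left-hand side of the lemma is exactly $\sum_g\P(G=g)\sum_{\by}\P\nolimits^{(g)}(\bY=\by)H^{(g)}(V_1|\bY=\by)$. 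Expanding $\P\nolimits^{(g)}(\bY=\by)=2^{-k}\sum_{\bu\in\bit^{k}}W^{\l}(\by|\bu g)$, it then suffices to prove that the quantity $S(\bu):=\sum_{\by}W^{\l}(\by|\bu g)\,H^{(g)}(V_1|\bY=\by)$ does not depend on $\bu\in\bit^{k}$; granting this, $S(\bu)\equiv S(\mathbi{0})=\sum_{\by}\P\nolimits^{(g)}(\bY=\by|\bV=\mathbi{0})H^{(g)}(V_1|\bY=\by)$ and averaging over the $2^{k}$ values of $\bu$ gives the displayed identity.

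The engine is the symmetry of the channel. Let $\pi$ be the involution on $\Y$ with $W(y|1)=W(\pi(y)|0)$, and for $\bx\in\bit^{\l}$ let $\Pi_{\bx}\colon\Y^{\l}\to\Y^{\l}$ apply $\pi$ exactly in the coordinates where $\bx$ is $1$. A one-coordinate check gives $W(y|a+b)=W(\pi^{a}(y)|b)$ for $a,b\in\bit$, hence $W^{\l}(\by|\bx+\bx')=W^{\l}(\Pi_{\bx}\by\,|\,\bx')$ for all $\bx,\bx'$, and $\Pi_{\bx}$ is a bijection with $\Pi_{\bx}\circ\Pi_{\bx}=\mathrm{id}$. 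Fix $\bu$ and substitute $\by\to\Pi_{\bu g}\by$ in the sum defining $S(\bu)$: the weight becomes $W^{\l}(\Pi_{\bu g}\by|\bu g)=W^{\l}(\by|\mathbi{0})$, so $S(\bu)=\sum_{\by}W^{\l}(\by|\mathbi{0})\,H^{(g)}(V_1|\bY=\Pi_{\bu g}\by)$, and it remains to show $H^{(g)}(V_1|\bY=\Pi_{\bu g}\by)=H^{(g)}(V_1|\bY=\by)$. Using linearity of the code to write $\bv g=\bu g+(\bv+\bu)g$ over $\F_2$ and then applying the identity above, $W^{\l}(\Pi_{\bu g}\by|\bv g)=W^{\l}(\by|(\bv+\bu)g)$; substituting $\bw=\bv+\bu$ in numerator and denominator of the posterior turns $\P\nolimits^{(g)}(V_1=0|\bY=\Pi_{\bu g}\by)$ into $\P\nolimits^{(g)}(V_1=u_1|\bY=\by)$. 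If $u_1=0$ this equals $\P\nolimits^{(g)}(V_1=0|\bY=\by)$; if $u_1=1$ it equals $1-\P\nolimits^{(g)}(V_1=0|\bY=\by)$. Since $h(x)=h(1-x)$, in either case the binary entropy is unchanged, so $H^{(g)}(V_1|\bY=\Pi_{\bu g}\by)=H^{(g)}(V_1|\bY=\by)$; hence $S(\bu)=\sum_{\by}W^{\l}(\by|\mathbi{0})H^{(g)}(V_1|\bY=\by)$ is indeed independent of $\bu$.

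There is no serious obstacle here, as the lemma is essentially a bookkeeping exercise; the one point that deserves care is that translating the transmitted message from $\bV=\bu$ to $\bV=\mathbi{0}$ does two things at once: it permutes the channel outputs (absorbed by $\Pi_{\bu g}$), and when $u_1=1$ it additionally \emph{flips} the target bit $V_1$, which is why the symmetry $h(x)=h(1-x)$ of the binary entropy function is genuinely needed and not incidental. Degenerate configurations cause no difficulty: whenever $W^{\l}(\by|\mathbi{0})>0$ one has $\P\nolimits^{(g)}(\bY=\by)\ge 2^{-k}W^{\l}(\by|\mathbi{0})>0$, so every conditional entropy appearing with nonzero weight is well defined, and the zero-weight terms are matched up by the bijection $\Pi_{\bu g}$; in particular the argument never requires $g$ to have full rank. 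For the special case of the BSC one has $\Y=\bit$ and $\Pi_{\bx}\by=\by\oplus\bx$, so the whole manipulation reduces to the familiar move of XOR-ing away the transmitted codeword, which is the ``additive structure of BSC'' alluded to in the overview.
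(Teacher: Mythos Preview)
Your proof is correct and follows essentially the same route as the paper's own proof in Appendix~\ref{app:BMS_lemmas}: fix $g$, translate the output by the codeword $\bu g$ via the channel's output involution, observe that this turns the posterior $\P^{(g)}(V_1=0\mid\bY=\Pi_{\bu g}\by)$ into $\P^{(g)}(V_1=u_1\mid\bY=\by)$, and then invoke $h(x)=h(1-x)$. The only cosmetic difference is that the paper phrases the involution through the explicit BSC-subchannel decomposition (writing $\by\oplus\bu g$ with $z_j^{(c)}\oplus b=z_j^{(b+c)}$), whereas you work directly with the abstract BMS permutation $\pi$; these are the same operator.
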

Note that the above lemma is formulated for any BMS channel, and we will also use it for the proof of the general case in Section~\ref{sec:bit-decoding}. The proof of this lemma uses the symmetry of linear codes with respect to shifting by a codeword and additive structure of BSC, together with the fact that a BMS channel can be represented as a convex combination of several BSC subchannels. We defer the proof to Appendix~\ref{app:BMS_lemmas}.

Note that $\P\nolimits^{(g)}(\bY=\by|\bV = \bz)$ does not in fact depend on the matrix $g$, since $\bz\cdot g = \bz$, and so randomness here only comes from the usage of the channel $W$. Specifically, $\P\nolimits^{(g)}(\bY=\by|\bV = \bz) = p^{wt(\by)}(1-p)^{{\l}-wt(\by)}$, where we denote by $wt(\by)$ the Hamming weight of $\by$. Then in~\eqref{entropy_expectation_typical} we obtain
\begin{align}
\label{entr_exp_weights_typical}
    \E_{g\sim G}\big[H^{(g)}(V_1|\bY)\big] = \sum_{\by \in \Y^{\l}} p^{wt(\by)}(1-p)^{{\l}-wt(\by)}  \E_{g\sim G}\big[H^{(g)}(V_1|\bY=\by)\big].
\end{align}

\medskip \noindent{\bf Define a typical set.}\quad  The above expression allows us to only consider ``typical" outputs $\by$ for the all-zero input while approximating $\E_{g\sim G}\big[H^{(g)}(V_1|\bY)\big]$. For the BSC case, we consider $\by$ to be typical when $|wt(\by) - \l p| \leq 2\sqrt{\l}\log\l$. Then we can write:
\begin{align}
\label{entr_exp_weights_only_typical}
    \E_{g\sim G}\big[H^{(g)}(V_1|\bY)\big] \geq \sum_{|wt(\by) - \l p| \leq 2\sqrt{\l}\log\l} p^{wt(\by)}(1-p)^{{\l}-wt(\by)} \E_{g\sim G}\big[H^{(g)}(V_1|\bY=\by)\big].
\end{align}

\medskip \noindent{\bf Fix a typical output.}\quad Let us fix any typical $\by \in \Y^{\l}$ such that $|wt(\by) - \l p| \leq 2\sqrt{\l}\log\l$, and show that $\E_{g\sim G}[H^{(g)}(V_1|\bY=\by)]$ is very close to $1$. To do this, we first notice that
\begin{equation}
\label{entropy_ratio}
     H^{(g)}(V_1|\bY=\by)  = h \left(\frac{\P\nolimits^{(g)}(V_1=0,\bY=\by)}{\P\nolimits^{(g)}(\bY=\by)} \right).
\end{equation} 
Denote $\wV = \bV^{[2:k]}$ to be bits $2$ to $k$ of vector $\bV$, and by $\wg = g[2:k]$ the matrix $g$ without its first row. Next we define the shifted weight distributions of the codebooks generated by $g$ and $\wg$:
\begin{align*}
B_g(d, \by) & :=|\{ \bv \in \bit^k \setminus  \mathbi{0}\quad : wt(\bv g+\by)=d \}|, \\
\widetilde{B}_g(d, \by) & :=|\{\wv\in \bit^{k-1} \setminus  \mathbi{0} : wt(\wv  \wg+\by)=d \}|.
\end{align*}
 Therefore,
\begin{align}
    \nonumber \frac{\P\nolimits^{(g)}(V_1=0,\bY=\by)}{\P\nolimits^{(g)}(\bY=\by)}
    &=\frac{\sum_{\wu}\P\nolimits^{(g)}(\bY=\by \big| V_1=0, \wV=\wu)}
{\sum_{\bu}\P\nolimits^{(g)}(\bY=\by \big| \bV=\bu)} \\
 \label{ratio} &=  \frac{p^{wt(\by)}(1-p)^{{\l}-wt(\by)}+ 
\sum_{d=0}^{\l} \widetilde{B}_g(d, \by) p^d (1-p)^{{\l}-d}}
{p^{wt(\by)}(1-p)^{{\l}-wt(\by)}+
\sum_{d=0}^{\l} B_g(d, \by) p^d (1-p)^{{\l}-d}}.
\end{align}
We will prove a concentration of the above expression around $1/2$, which will then imply that $H^{(g)}(V_1|\bY=\by)$ is close to $1$ with high probability by \eqref{entropy_ratio}. To do this, we will prove concentrations around means for both numerator and denominator of the above ratio. Since the following arguments work in exactly the same way, let us only consider the denominator for now.

By definition,
\begin{equation} \label{eq:smds}
\begin{aligned}
B_g(d, \by) & =\sum_{\bv\neq \mathbi{0}}
\mathbbm{1}[wt(\bv g +\by)=d].
\end{aligned}
\end{equation}
The expectation and variance of each summand is
\begin{align*}
&\underset{g\sim G}{\Var}\, \mathbbm{1} \big[wt(\bv g +\by)=d \big] \le
    \E_{g\sim G} \mathbbm{1} \big[wt(\bv g +\by)=d \big]
    =\binom{{\l}}{d} 2^{-{\l}}
    \quad\quad \forall \bv \in \bit^k \setminus  \mathbi{0}.
\end{align*}
Clearly, the summands in \eqref{eq:smds} are pairwise independent. Therefore,
\begin{align}
\label{variance_expectation_weight}
   \underset{g\sim G}{\Var} \big[B_g(d, \by)\big] & \le \E_{g\sim G} \big[B_g(d, \by)\big]
    =(2^k-1) \binom{{\l}}{d} 2^{-{\l}},
\end{align}
and then 
\begin{align*}
    \E_{g\sim G} \left[\sum_{d=0}^{\l} B_g(d, \by) p^d (1-p)^{{\l}-d}\right]
    =  (2^{k}-1)  2^{-{\l}} \left( \sum_{d=0}^{\l} \binom{{\l}}{d} p^d (1-p)^{{\l}-d} \right)
    = (2^{k}-1)  2^{-{\l}}.
\end{align*}

Let us now show that $\sum_{d=0}^{\l} B_g(d, \by) p^d (1-p)^{{\l}-d}$ is tightly concentrated around its mean for $g\sim G$. To do this, we split the range of $d$ into two parts: when $|d - \l p| > 6\sqrt{\l} \log \l $, and when $|d - \l p| \leq 6\sqrt{\l} \log \l$:
\[ \sum_{d=0}^{\l}B_g(d, \by)p^d(1-p)^{\l - d} = \sum_{|d - \l p| > 6\sqrt{\l} \log \l}B_g(d, \by)p^d(1-p)^{\l - d} + \sum_{|d - \l p| \leq 6\sqrt{\l} \log \l}B_g(d, \by)p^d(1-p)^{\l - d}.  \]

In the proof below we will use the following multiplicative form of Chernoff bound applied to a binomial random variable: 

\begin{equation}
\label{new_eq:multiplicative_Chernoff}
\P_{X \sim \text{Binom}(\l, p)}\left[|X - \l p| \geq \delta \l p\right] \leq 2e^{-\l p \delta^2 / 3} \qquad\quad \text{for all } 0 \leq \delta \leq 1.
\end{equation}
 Applying this for $\delta = \frac{6\log\l}{p\l^{1/2}}$, we have
\begin{equation}
\label{eq:chernoff_binomial}
    \P_{X \sim \text{Binom}(\l, p)}\left[|X - \l p| \geq 6\sqrt{\l} \log \l\right] = \sum_{|d - \l p| \geq 6\sqrt{\l} \log \l}\binom{\l}{d}p^d(1-p)^{\l - d} \leq 2e^{\frac{-12\log^2\l}{p}} < 2\l^{-12\log\l}.
\end{equation}

\medskip \noindent{\bf Negligible part.}\quad Denote $Z_g(\by) = \sum\limits_{|d - \l p| > 6\sqrt{\l} \log \l}B_g(d, \by)p^d(1-p)^{\l - d},$ and notice that
\begin{align} \E_{g\sim G}[Z_g(\by)] = (2^k-1)2^{-\l}\sum_{|d - \l p| > 6\sqrt{\l} \log \l}\binom{\l}{d}p^d(1-p)^{\l - d} &\leq (2^k-1)2^{-\l}\cdot 2\l^{-12\log\l}, \label{eq:EZ_bound}
\end{align}
where we used~\eqref{variance_expectation_weight} and~\eqref{eq:chernoff_binomial}.
Then Markov's inequality gives $\P_{g\sim G}\big[ Z_g(\by) \geq \E\limits_{g\sim G}[Z_g(\by)]\l^{2\log\l}\big] \leq \l^{-2\log\l}$, and so
\[ \P\big[Z_g(\by) < 2(2^k-1)2^{-\l}\l^{-10\l\log\l} \big] \geq 1 - \l^{-2\log\l}.\]
Define the set \begin{equation}
\label{eq:G1def}
    \cG_1 := \{g\in\bit^{k\times\l}\, :\, Z_g(\by) < 2(2^k-1)2^{-\l}\l^{-10\l\log\l} \},
\end{equation} and then $\P\limits_{g\sim G}[g \in \cG_1] \geq 1  - \l^{-2\log\l}.$

\vspace{0.3cm} \noindent{\bf Substantial part.}\quad
Now we deal with the part when $|d - \l p| \leq 6\sqrt{\l}\log\l$. For now, let us fix any $d$ in this interval, and use Chebyshev's inequality together with \eqref{variance_expectation_weight}:
\begin{equation}
\begin{aligned}
\label{chebyshev}
\P_{g\sim G}\bigg[\Big\lvert B_g(d,\by) - \E[B_g(d,\by)]\Big\lvert \geq \l^{-2\log\l}\E[B_g(d,\by)]\bigg] &\leq \dfrac{\Var[B_g(d,\by)]}{\l^{-4\log\l}\E^2[B_g(d,\by)]} \\
 &\leq \dfrac{\l^{4\log\l}}{\E\limits_{g\sim G}[B_g(d,\by)]} \leq \l^{4\log\l}\,\dfrac{2^{\l-k+1}}{\binom{\l}{d}}. 
 \end{aligned}
 \end{equation}
We use the following bound on the binomial coefficients
\begin{fact}[\cite{Macwilliams77}, Chapter 10, Lemma 7]
\label{binom_lemma}
For any integer $0\le d\le {\l}$,
\begin{equation}
\label{binom_lemma_eq} \frac{1}{\sqrt{2\l}} 2^{{\l} h(d/{\l})}
\le \binom{{\l}}{d} \le 2^{{\l} h(d/{\l})} 
\end{equation}
\end{fact}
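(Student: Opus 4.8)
The plan is to read both inequalities as two-sided bounds on the mass that the binomial distribution $\mathrm{Bin}(\l,p)$, with $p=d/\l$, places on its mode $d=\l p$. The algebraic fact underlying the whole statement is the identity
\[
p^{d}(1-p)^{\l-d}=\Big(\tfrac{d}{\l}\Big)^{d}\Big(\tfrac{\l-d}{\l}\Big)^{\l-d}=\frac{d^{d}(\l-d)^{\l-d}}{\l^{\l}}=2^{-\l h(d/\l)},
\]
which is merely a rearrangement of $\l h(d/\l)=-d\log(d/\l)-(\l-d)\log((\l-d)/\l)$ and which holds, under the convention $0\log0=0$ and $h(0)=h(1)=0$, also in the edge cases $d\in\{0,\l\}$. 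Thus the target is precisely $\tfrac1{\sqrt{2\l}}\le\binom{\l}{d}\,p^{d}(1-p)^{\l-d}\le1$.

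For the upper bound I would simply observe that $\binom{\l}{d}p^{d}(1-p)^{\l-d}$ is a single nonnegative summand of $\sum_{k=0}^{\l}\binom{\l}{k}p^{k}(1-p)^{\l-k}=(p+(1-p))^{\l}=1$; this gives $\binom{\l}{d}\le 2^{\l h(d/\l)}$ at once (and reads $1\le1$ when $d\in\{0,\l\}$).

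For the lower bound I would invoke Stirling's formula with its explicit error term, $n!=\sqrt{2\pi n}\,(n/e)^{n}e^{\lambda_n}$ with $0<\lambda_n<\tfrac1{12n}$. For $1\le d\le\l-1$ the powers of $e$ cancel in $\binom{\l}{d}=\l!/(d!\,(\l-d)!)$, leaving
\[
\binom{\l}{d}=\sqrt{\frac{\l}{2\pi d(\l-d)}}\cdot 2^{\l h(d/\l)}\cdot e^{\lambda_{\l}-\lambda_d-\lambda_{\l-d}}.
\]
Since $d(\l-d)\le\l^{2}/4$ and $\pi<4$, one gets $\sqrt{\l/(2\pi d(\l-d))}\ge\sqrt{2/(\pi\l)}\ge 1/\sqrt{2\l}$, so it only remains to control the correction $e^{\lambda_{\l}-\lambda_d-\lambda_{\l-d}}$, whose exponent exceeds $-\tfrac1{12d}-\tfrac1{12(\l-d)}$; weighing this against the slack factor $\sqrt{2/\pi}\cdot\sqrt2=2/\sqrt\pi>1$ closes the generic range of $d$. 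The remaining cases, namely $d\in\{0,\l\}$ (where $\binom{\l}{0}=1\ge 2^{0}/\sqrt{2\l}$ trivially) and the handful of values with $\min(d,\l-d)$ or $\l$ so small that the crude error bound is too lossy, are disposed of by direct computation.

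The one place where care is genuinely needed is pinning the constant at $1/\sqrt{2\l}$: the soft argument that the mode of an $(\l+1)$-atom distribution carries mass at least $1/(\l+1)$ only yields $\binom{\l}{d}\ge 2^{\l h(d/\l)}/(\l+1)$, which is strictly weaker than claimed, so the Stirling estimate (equivalently, the cited bound from \cite{Macwilliams77}) with honest bookkeeping of its error term and a few small cases checked by hand is really required; the rest is routine.
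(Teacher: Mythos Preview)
The paper does not prove this statement at all; it is stated as a \texttt{fact} with a citation to MacWilliams--Sloane and is used as a black box in the subsequent estimates. Your proposal therefore supplies a proof where the paper offers none, and the approach you outline---the probability-mass argument $\binom{\l}{d}p^{d}(1-p)^{\l-d}\le 1$ with $p=d/\l$ for the upper bound, and Stirling with Robbins-type error control for the lower bound---is exactly the classical route (and is essentially how the cited reference does it). The argument is sound; the only genuine work, as you correctly flag, is squeezing out the constant $1/\sqrt{2\l}$ rather than the cruder $1/(\l+1)$, and the slack $2/\sqrt{\pi}$ you isolate does \emph{not} quite absorb the Stirling correction uniformly when $\min(d,\l-d)=1$, so those borderline cases (and $d\in\{0,\l\}$) really do need to be checked by hand---but you already say this.
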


Since we fixed $|d - \l p| \leq 6\sqrt{\l}\log\l$, Propositions~\ref{prop:entropy_differ} and ~\ref{prop:entropy_half} imply 
\begin{equation}
\label{eq:entropy_diff_bound}
      \left\lvert h(p) - h\left(\frac{d}{\l}\right) \right\lvert \leq h(6\l^{-1/2}\log\l) \leq 12\l^{-1/2}\log\l\cdot\log\dfrac{\l^{1/2}}{6\log\l} \leq 6\l^{-1/2}\log^2\l.
\end{equation}
Recalling that we consider the above-capacity regime with $k \geq \l(1 - h(p)) + 8\sqrt{\l}\log^2\l$, we derive from~\eqref{binom_lemma_eq} and~\eqref{eq:entropy_diff_bound}
\begin{equation}
\label{eq:binom_appr_1}
\dfrac{2^{\l-k+1}}{\binom{\l}{d}} \leq \sqrt{2\l}\,\cdot 2^{\l\left[h(p) -  h\left(\frac{d}{\l} \right) - 8\l^{-1/2}\log^2\l\right]}  \leq \sqrt{2\l}\,\cdot 2^{-2\l^{1/2}\log^2\l}.   
\end{equation}
Therefore, we get in \eqref{chebyshev}:
\begin{equation}
     \label{single_concentration}
\P_{g\sim G}\bigg[\Big\lvert B_g(d,\by) - \E[B_g(d,\by)] \Big\lvert \geq \l^{-2\log\l}\E[B_g(d,\by)]\bigg]\leq \sqrt{2\l}\,\cdot  \l^{4\log\l}\,2^{-2\l^{1/2}\log^2\l} \leq \l^{-\sqrt{\l}-1},
\end{equation}
where the last inequality holds since $\l \geq 8$.
Finally, denote 
\begin{equation}
\label{eq:G2def} \cG_2 := \bigg\{g \in \bit^{k\times\l}\,:\, \Big\lvert B_g(d,\by) - \E[B_g(d,\by)]\Big\lvert \leq \l^{-2\log\l}\E[B_g(d,\by)]\quad \text{~for all~} |d-\l p| \le 6\sqrt{{\l}}\log {\l} \bigg\}.
\end{equation}
Then by a simple union bound applied to \eqref{single_concentration} for all $d$ such that $|d-\l p| \leq 6\sqrt{\l}\log\l$ we obtain

$$
\P_{g\sim G}[g\in\cG_2]\ge 1-{\l}^{-\sqrt {\l}}. $$

\medskip
We are now ready to combine these bounds to get the needed concentration.
\begin{lem}
\label{binary_main_concentration_lem}
Fix $\by$.
With probability at least $1 - 2\l^{-2\log\l}$ over the choice of $g\sim G$, it holds that
\begin{equation}
    \label{weight_concentration}
(2^{k}-1)  2^{-{\l}} (1-2\l^{-2\log {\l}}) \le 
\sum_{d=0}^{\l} B_g(d, \by) p^d (1-p)^{{\l}-d}
    \le (2^{k}-1)  2^{-{\l}} (1+2\l^{-2\log {\l}}).
\end{equation}
\end{lem}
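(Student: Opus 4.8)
The plan is to combine the two events $\cG_1$ (negligible part is small) and $\cG_2$ (every substantial-part coefficient is concentrated) that we have already defined and shown to each have probability $\ge 1-\l^{-2\log\l}$, and check that on their intersection the full sum $\sum_{d=0}^{\l} B_g(d,\by)p^d(1-p)^{\l-d}$ lies in the stated two-sided interval around its mean $(2^k-1)2^{-\l}$.

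First I would split the sum as we did before,
\[
\sum_{d=0}^{\l} B_g(d, \by) p^d (1-p)^{{\l}-d}
= Z_g(\by) + \sum_{|d-\l p|\le 6\sqrt{\l}\log\l} B_g(d, \by) p^d (1-p)^{{\l}-d}.
\]
For $g\in\cG_2$ each term in the second sum satisfies $|B_g(d,\by)-\E B_g(d,\by)|\le \l^{-2\log\l}\E B_g(d,\by)$, so summing and using $\E_{g\sim G}\big[\sum_{d} B_g(d,\by)p^d(1-p)^{\l-d}\big]=(2^k-1)2^{-\l}$ together with the fact that the $|d-\l p|>6\sqrt\l\log\l$ tail of $\sum_d \E[B_g(d,\by)]p^d(1-p)^{\l-d}$ is at most $2(2^k-1)2^{-\l}\l^{-12\log\l}$ (this is exactly \eqref{eq:EZ_bound}), I get that the second (substantial) sum is sandwiched between $(2^k-1)2^{-\l}(1-\l^{-2\log\l})(1-2\l^{-12\log\l})$ and $(2^k-1)2^{-\l}(1+\l^{-2\log\l})$. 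For the upper bound I additionally use $Z_g(\by)\ge 0$ trivially on one side and, for $g\in\cG_1$, $Z_g(\by)< 2(2^k-1)2^{-\l}\l^{-10\l\log\l}$ on the other, which is completely dominated by the $\l^{-2\log\l}$ slack. Combining, on $\cG_1\cap\cG_2$ the full sum lies in $\big[(2^k-1)2^{-\l}(1-2\l^{-2\log\l}),\ (2^k-1)2^{-\l}(1+2\l^{-2\log\l})\big]$, where the constant $2$ generously absorbs the lower-order corrections $2\l^{-12\log\l}$ and $2\l^{-10\l\log\l}$ for $\l$ large enough.

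Finally I would take a union bound: $\P_{g\sim G}[g\notin\cG_1\cap\cG_2]\le \P[g\notin\cG_1]+\P[g\notin\cG_2]\le \l^{-2\log\l}+\l^{-\sqrt\l}\le 2\l^{-2\log\l}$, which gives the claimed probability $1-2\l^{-2\log\l}$. There is no real obstacle here — everything was set up in the preceding paragraphs, and the only mild care needed is bookkeeping: making sure the slack $2\l^{-2\log\l}$ in the statement really does swallow all the error terms (the Chebyshev slack $\l^{-2\log\l}$ per coefficient, the tail bound $\l^{-12\log\l}$, and the $\cG_1$ bound $\l^{-10\l\log\l}$), which holds for all $\l$ above an absolute constant and, importantly, uniformly in the channel. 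The identical argument applied to $\widetilde B_g(d,\by)$ in place of $B_g(d,\by)$ yields the analogous concentration for the numerator of \eqref{ratio}, which we will invoke in the next step.
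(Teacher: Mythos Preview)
Your proposal is correct and follows essentially the same approach as the paper: intersect $\cG_1$ and $\cG_2$ via a union bound (giving probability $\ge 1-2\l^{-2\log\l}$), use the $\cG_2$ per-coefficient concentration on the substantial range together with the binomial tail bound~\eqref{eq:EZ_bound} to sandwich that part, and use the $\cG_1$ bound on $Z_g(\by)$ for the upper direction while simply dropping it for the lower. The bookkeeping you describe, including the $\widetilde B_g$ analog, matches the paper exactly.
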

\begin{proof}
Indeed, by union bound $\P_{g\sim G}[g\in\cG_1\cap\cG_2] \geq 1 - \l^{-2\log\l} - \l^{-\sqrt{\l}} \geq 1 - 2\l^{-2\log\l}$. But for any $g \in \cG_1\cap\cG_2$ we derive
\begin{align*}
     \sum_{d=0}^{\l} B_g(d, \by) p^d (1-p)^{{\l}-d} 
    \ge & \sum_{|d-\l p| \le 6\sqrt{{\l}}\log {\l}} B_g(d, \by) p^d (1-p)^{{\l}-d} \\
    \overset{(a)}{\ge} & (2^{k}-1)  2^{-{\l}} (1-{\l}^{-2\log {\l}}) \sum_{|d-\l p| \le 6\sqrt{{\l}}\log {\l}} \binom{{\l}}{d} p^d (1-p)^{{\l}-d}  \\
    \overset{(b)}{\ge} & (2^{k}-1)  2^{-{\l}} (1-{\l}^{-2\log {\l}})
    (1- 2\l^{-12\log {\l}}) \\
    \ge & (2^{k}-1)  2^{-{\l}} (1-2\l^{-2\log {\l}}),
\end{align*}
where $(a)$ follows from~\eqref{eq:G2def} (since $g \in \cG_2$) and the expression in~\eqref{variance_expectation_weight} for $\E[B_g(d,\by)]$, and $(b)$ uses the concentration inequality for binomial r.v. from~\eqref{eq:chernoff_binomial}. On the other hand, we can upper bound this expression as
\begin{align*}
     \sum_{d=0}^{\l} B_g(d, \by) p^d &(1-p)^{{\l}-d} \\
    = & \sum_{|d-\l p| \le 6\sqrt{{\l}}\log {\l}} B_g(d, \by) p^d (1-p)^{{\l}-d} 
    + \sum_{|d-\l p| > 6\sqrt{{\l}}\log {\l}} B_g(d, \by) p^d (1-p)^{{\l}-d} \\
    \overset{(a)}{\le} & (2^{k}-1)  2^{-{\l}} (1+{\l}^{-2\log {\l}}) \sum_{|d-\l p| \le 6\sqrt{{\l}}\log {\l}} \binom{{\l}}{d} p^d (1-p)^{{\l}-d} + Z_g(\by) \\
    \overset{(b)}{\le} & (2^{k}-1)  2^{-{\l}} (1+{\l}^{-2\log {\l}})
    + 2(2^{k}-1)  2^{-{\l}} {\l}^{-10\log {\l}} \\
    \le & (2^{k}-1)  2^{-{\l}} (1+2\l^{-2\log {\l}}), 
\end{align*}
where $(a)$ is again from~\eqref{eq:G2def} and~\eqref{variance_expectation_weight} and the notation $Z_g(\by)$ for the negligible part, and $(b)$ is from~\eqref{eq:G1def} (as $g$ is in $\cG_1$).
\end{proof}

We similarly obtain the concentration for the sum in the numerator of \eqref{ratio}: with probability at least $1 - 2\l^{-2\log\l}$ over the choice of $g$, it holds
\begin{equation}
    \label{tilde_weight_concentration}
(2^{k-1}-1)  2^{-{\l}} (1-2\l^{-2\log {\l}}) \le 
\sum_{d=0}^{\l} \wB_g(d, \by) p^d (1-p)^{{\l}-d}
    \le (2^{k-1}-1)  2^{-{\l}} (1+2\l^{-2\log {\l}}).
\end{equation}

Next, let us use the fact that we took a typical output $\by$ with $|wt(\by) - \l p| \leq 2\sqrt{\l}\log\l$ to show that the terms $p^{wt(\by)}(1-p)^{\l-wt(\by)}$ are negligible in both numerator and denominator of \eqref{ratio}. We have
\begin{equation}
\label{eq:binary3}
     p^{wt(\by)}(1-p)^{\l-wt(\by)} = \left(\dfrac{1-p}{p}\right)^{\l p - wt(\by)} \cdot p^{\l p}(1-p)^{\l - \l p} = 2^{\left(\l p - wt(\by)\right) \cdot\log\left(\frac{1-p}{p}\right)} \cdot 2^{-\l h(p)}.
\end{equation}
Simple case analysis gives us:
\begin{enumerate}[label = (\alph*)]
    \item If $p < \frac1{\sqrt{\l}}$, then $\left(\l p - wt(\by)\right) \cdot\log\left(\frac{1-p}{p}\right) \leq \l p \log\frac1{p} < \l \frac1{\sqrt{\l}}\log\sqrt{\l} < \sqrt\l \log^2\l;$
    \item In case $p \geq \frac1{\sqrt{\l}}$, obtain $\left(\l p - wt(\by)\right) \cdot\log\left(\frac{1-p}{p}\right) \leq 2\sqrt{\l}\log\l \cdot \log\frac1{p} \leq \sqrt{\l}\log^2\l$.
\end{enumerate}
Using the above in~\eqref{eq:binary3} we derive for $k \geq \l(1-h(p)) + 8\sqrt{\l}\log^2\l$
\begin{equation}
\begin{aligned}
p^{wt(\by)}(1-p)^{\l-wt(\by)} \leq  2^{\sqrt{\l}\log^2\l -\l h(p)} \leq 2^{2\sqrt{\l}\log^2\l - \l h(p) - 2\log^2\l - 2} \leq \l^{-2\log\l}\,(2^{k-1}-1)2^{-\l}. 
\end{aligned}
\end{equation}  
Combining this with \eqref{weight_concentration} and \eqref{tilde_weight_concentration} and using a union bound we derive that with probability at least $1 - 4\l^{-2\log\l}$ it holds
\[ \left\lvert \left(p^{wt(\by)}(1-p)^{\l-wt(\by)} + \sum_{d=0}^{\l} B_g(d, \by) p^d (1-p)^{{\l}-d}\right)- (2^k-1)2^{-\l}  \right\rvert  \leq 3\l^{-2\log\l}\cdot(2^k-1)2^{-\l},  \]
\[ \left\lvert \left(p^{wt(\by)}(1-p)^{\l-wt(\by)} + \sum_{d=0}^{\l} \wB_g(d, \by) p^d (1-p)^{{\l}-d}\right)- (2^{k-1}-1)2^{-\l}  \right\rvert  \leq 3\l^{-2\log\l}\cdot(2^{k-1}-1)2^{-\l}. \]
Therefore, with probability at least $1 - 4\l^{-2\log\l}$ the expression in \eqref{ratio} is bounded as
\begin{equation}
\label{eq:BSC_ratio_1}
\dfrac{(1 - 3\l^{-2\log\l})(2^{k-1}-1)2^{-\l}}{(1 + 3\l^{-2\log\l})(2^{k}-1)2^{-\l}} \leq
         \frac{\P\nolimits^{(g)}(V_1=0,\bY=\by)}{\P\nolimits^{(g)}(\bY=\by)}       \leq 
      \dfrac{(1 + 3\l^{-2\log\l})(2^{k-1}-1)2^{-\l}}{(1 - 3\l^{-2\log\l})(2^{k}-1)2^{-\l}}.  
      \end{equation}
We can finally derive:
\begin{align}  \dfrac{(1 - 3\l^{-2\log\l})(2^{k-1}-1)}{(1 + 3\l^{-2\log\l})(2^{k}-1)} &\geq (1 - 6\l^{-2\log\l})\left(\dfrac12 - 2^{-k}\right)
\hspace{-27pt} &&\geq  (1 - 6\l^{-2\log\l})\left(\dfrac12 - \l^{-8\sqrt{\l}\log\l}\right)  \\
& &&\geq \dfrac12 - \l^{-\log\l}, \label{invis_eq:BSC_rat_1} \noeqref{invis_eq:BSC_rat_1} \\
  \dfrac{(1 + 3\l^{-2\log\l})(2^{k-1}-1)}{(1 - 3\l^{-2\log\l})(2^{k}-1)} &\leq \rlap{$(1 + 9\l^{-2\log\l})\dfrac12 \leq  \dfrac12 + \l^{-\log\l}.$}  && 
\end{align}

Therefore, with probability at least $1 - 4\l^{-2\log\l}$ over $g\sim G$ it holds
\begin{equation}
\label{eq:BSC_ratio_2}  \left\lvert
         \frac{\P\nolimits^{(g)}(V_1=0,\bY=\by)}{\P\nolimits^{(g)}(\bY=\by)}  - \dfrac12 \right\rvert     \leq  \l^{-\log\l}. 
\end{equation}
Since $h(1/2 + x) \geq 1 - 4x^2$  for any $x\in [-1/2,1/2]$ (\cite[Theorem~1.2]{Topsoe}), we then derive:
\[ \E_{g\sim G}\big[ H^{(g)}(V_1|\bY=\by)\big] = \E_{g\sim G}\hspace{-3pt}\left[h\hspace{-3pt}\left(\dfrac{\P\nolimits^{(g)}(V_1=0,\bY=\by)}{\P\nolimits^{(g)}(\bY=\by)}\right)\hspace{-2pt}\right]\hspace{-1.5pt}\geq \hspace{-1.3pt}  (1 - 4\l^{-2\log\l}) (1- 4\l^{-2\log\l}) \geq 1 - 8\l^{-2\log\l}.   \]

\medskip \noindent{\bf Concentration of entropy.}\quad
We are now ready to plug this into \eqref{entr_exp_weights_only_typical}:
\begin{align}
 \E_{g\sim G}\big[H^{(g)}(V_1|\bY) \big] &\geq (1 - 8\l^{-2\log\l})\sum_{|wt(\by) - \l p| \leq 2\sqrt{\l}\log\l} p^{wt(\by)}(1-p)^{{\l}-wt(\by)} \nonumber\\
&= (1 - 8\l^{-2\log\l})\sum_{|d - \l p| \leq 2\sqrt{\l}\log\l} \binom{\l}{d}p^d(1-p)^{\l-d} \nonumber\\
&= (1 - 8\l^{-2\log\l})\P_{X \sim \text{Binom}(\l, p)}\left[|X - \l p| \leq 2\sqrt{\l}\log\l\right]\nonumber\\
&\geq (1 - 8\l^{-2\log\l})(1-2e^{-(4\log^2\l)/3p})\nonumber\\
&\geq  (1 - 8\l^{-2\log\l})(1-2\l^{-2\log\l}) \nonumber\\
&\geq 1 - 10\l^{-2\log\l}, \label{eq:bound_exp_entropy}
\end{align}
where the second inequality is obtained from the Chernoff bound~\eqref{new_eq:multiplicative_Chernoff} with $\delta = \frac{2\log\l}{p\l^{1/2}}$, and the third inequality follows from $p \leq 1/2$ and $e^{-8/3} < 2^{-2}$.
Finally, using the fact that $H^{(g)}(V_1|\bY) \leq 1$, Markov's inequality, and \eqref{eq:bound_exp_entropy}, we get
\begin{equation}
\P_{g\sim G}\hspace{-1pt}\big[H^{(g)}(V_1|\bY) \leq 1 - \l^{-\log\l} \big] = \P_{g\sim G}\hspace{-1pt}\big[ 1 - H^{(g)}(V_1|\bY) \geq \l^{-\log\l}  \big]   \leq \dfrac{ \E\limits_{g\sim G}\big[1 - H^{(g)}(V_1|\bY)\big]}{\l^{-\log\l}} \leq 10\l^{-\log\l}.
\end{equation}
Thus we conclude that with probability at least $1 - 10\l^{-\log\l}$ over the choice of the kernel $G$ it holds that $H(V_1\,|\,\bY) \geq 1 - \l^{-\log\l}$ when $k \geq \l(1-h(p)) + 8\sqrt{\l}\log^2\l$ and the underlying channel is BSC. This completes the proof of Theorem~\ref{thm:over:BSC_converse}, which is a version of  Theorem~\ref{thm:converse_Shannon_BMS} for the BSC case.
\end{proof}

\section{Strong converse for BMS channel} 
\label{sec:bit-decoding}

To make this section completely self-contained, we restate the theorem here:
\converseShannon*

\subsection{Bounded alphabet size}
\label{sec:BMS_large_alphabet}
This section is devoted to proving Theorem~\ref{thm:converse_Shannon_BMS} for the case when $W\,:\,\bit \to \Y$ is a BMS channel which has a bounded output alphabet size, specifically we consider $|\Y| \leq 2\sqrt{\l}$. 
We will use the fact that any BMS can be viewed as a convex combination of BSCs (see for example \cite{Land, Korada_thesis}), and generalize the ideas of the previous section. Namely, think of the channel $W$ as follows: it has $m$ possible underlying BSC subchannels $W^{(1)}, W^{(2)}, \dots, W^{(m)}$. On any input, $W$ randomly chooses one of the subchannels it is going to use with probabilities $q_1, q_2, \dots, q_m$ respectively. The subchannel $W^{(j)}$ has crossover probability $p_j$, and without loss of generality $0 \leq p_1 \leq p_2 \leq\dots\leq p_m \leq \frac12$. The subchannel $W^{(j)}$ has two possible output symbols $z^{(0)}_j$ or $z^{(1)}_j$, corresponding to $0$ and $1$, respectively (i.e. $0$ goes to $z^{(0)}_j$ with probability $1-p_j$, or to $z^{(1)}_j$ with probability $p_j$ under $W^{(j)}$). Then the whole output alphabet is $\Y = \{z^{(0)}_1, z^{(1)}_1, z^{(0)}_2, z^{(1)}_2, \dots, z^{(0)}_m, z^{(1)}_m\}$, $|\Y| = 2m \leq 2\sqrt{\l}$. For the conditional entropy of the BMS channel $W$ we have $H(W) = \sum\limits_{i=1}^m q_ih(p_i)$, i.e. it is a convex combination of entropies of the subchannels $W^{(1)}, W^{(2)}, \dots, W^{(m)}$ with the corresponding coefficients $q_1, q_2, \dots, q_m$.
\begin{remark}
Above we ignored the case when some of the subchannels have only one output (i.e. BEC subchannels), see~\cite[Lemma 4]{Tal_Vardy} for a proof that we can do this without loss of generality.
\end{remark}

\vspace{0.3cm}
\noindent\textbf{Notations and settings.}\quad In this section the expectation is only going to be taken over the kernel $g\sim G$, so we omit this in some places. As in the BSC case, by $\P\nolimits^{(g)}[\cdot]$ and $H^{(g)}(\cdot)$ we denote the probability and entropy only over the randomness of the channel and the message, \emph{for a fixed kernel~$g$}. 

 For any possible output $\by\in\Y^\l$ we denote by $d_i$ the number of symbols from $\{z^{(0)}_i, z^{(1)}_i\}$ it has (i.e. the number of uses of the $W^{(i)}$ subchannel), so $\sum_{i=1}^md_i = \l$. Let also $t_i$ be the number of symbols $z^{(1)}_i$ in $\by$. Then
\begin{equation}
\label{def_prob_subchannels}
 \P[\bY=\by|\bV=\mathbi{0}]= \prod_{i=1}^mq_i^{d_i}p_i^{t_i}(1-p_i)^{d_i-t_i}.
\end{equation}
For this case of bounded output alphabet size, we will consider the above-capacity regime when $k \geq \l(1-H(W)) + 13\l^{1/2}\log^3\l$ (note that this is made intentionally weaker than the condition in Theorem~\ref{thm:converse_Shannon_BMS}). 

\vspace{0.3cm}

We will follow the same blueprint of the proof for BSC from Section~\ref{sect:outline}, however all the technicalities along the way are going to be more challenging.  In particular, while we were dealing with one binomial distribution in Section~\ref{sec:BSC_converse}, here we will face a multinomial distribution of $(d_1, d_2,\dots,d_m)$ as a choice of which subchannels to use, as well as binomial distributions $t_i \sim \text{Binom}(d_i, p_i)$ which correspond to ``flips" within one subchannel. 

\vspace{0.3cm}

\begin{proof}[Proof of Theorem~\ref{thm:converse_Shannon_BMS}]
As in the BSC case, we are going to lower bound the expectation of $H^{(g)}(V_1|\bY)$ and use Markov's inequality afterwards. 

\medskip \noindent{\bf Restrict to zero-input.}\quad  We use Lemma~\ref{typical_entropy_BSC} to write 

\begin{equation}
\label{mult_entropy_formula}
 \E_{g\sim G}\big[H^{(g)}(V_1|\bY)\big] = \sum_{\by\in\Y^{\l}}\P[\bY=\by|\bV=\mathbi{0}]\E_{g\sim G}\big[H^{(g)}(V_1|\bY=\by)\big].   
\end{equation}
Notice that there is no dependence of $\P[\bY=\by|\bV=\mathbi{0}]$ on the kernel $g$, since the output for the zero-input depends only on the randomness of the channel.

\paragraph*{Typical output set}

As for the binary case, we would like to consider the set of ``typical" outputs (for input $\mathbi{0}$) from $\Y^{\l}$. We define $\by\in\Y^{\l}$ to be typical if 
\begin{align}
\label{close_entropy_sum} \sum_{i=1}^m (\l\cdot q_i - d_i)h(p_i) &\leq 2\sqrt{\l}\log\l,\\
\label{close_coordinates_sum}  \sum_{i=1}^m (p_id_i - t_i)\log\left(\dfrac{1-p_i}{p_i}\right) &\leq 3\sqrt{\l}\log^2\l.
\end{align}
By typicality of this set we mean the following
\begin{lem}
\label{typical_lem}
$\sum\limits_{\by \text{ typical}}\P[\bY=\by|\bV=\emph{\mathbi{0}}] \geq 1 - \l^{-\log\l}$. In other words, on input $\emph{\mathbi{0}}$, the probability to get the output string which is not typical is at most $\l^{-\log\l}$.
\end{lem}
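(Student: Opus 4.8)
The plan is to control the two defining inequalities of the typical set separately and combine them by a union bound. It is convenient to generate $\bY$ from the all-zero input as $\l$ i.i.d.\ trials: at each position $s\in[\l]$ draw a subchannel index $I_s\sim q$ (so $\P[I_s=i]=q_i$) and then a flip bit $B_s\sim\Ber(p_{I_s})$, producing the output symbol $z^{(B_s)}_{I_s}$; then $d_i=|\{s:I_s=i\}|$ and $t_i=|\{s:I_s=i,\ B_s=1\}|$. Note that a subchannel with $p_i=\tfrac12$ has $\log\tfrac{1-p_i}{p_i}=0$ and one with $p_i=0$ forces $B_s=0$, so such subchannels drop out of the second sum in the definition of the typical set, while all subchannels enter the first sum uniformly; thus no special treatment of degenerate subchannels is needed.

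For the first inequality~\eqref{close_entropy_sum}: since $\sum_i d_i h(p_i)=\sum_{s=1}^\l h(p_{I_s})$ and $\l\sum_i q_i h(p_i)=\l H(W)=\l\,\E[h(p_{I_1})]$, its left-hand side equals $\sum_{s=1}^\l\big(\E[h(p_{I_1})]-h(p_{I_s})\big)$, a sum of $\l$ i.i.d.\ mean-zero random variables each lying in an interval of length at most $1$ (as $h(\cdot)\in[0,1]$). A one-sided Hoeffding bound then gives $\P\big[\sum_i(\l q_i-d_i)h(p_i)>2\sqrt\l\log\l\big]\le\exp(-\Omega(\log^2\l))\le\tfrac13\l^{-\log\l}$ once $\l$ is large enough.

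For the second inequality~\eqref{close_coordinates_sum}: writing $p_id_i-t_i=\sum_{s:I_s=i}(p_i-B_s)$, its left-hand side equals $\sum_{s=1}^\l Y_s$ with $Y_s:=\log\tfrac{1-p_{I_s}}{p_{I_s}}\,(p_{I_s}-B_s)$, again i.i.d., with $\E[Y_s\mid I_s]=0$ hence $\E Y_s=0$, and $\Var(Y_s)=\big(\log\tfrac{1-p_{I_s}}{p_{I_s}}\big)^2 p_{I_s}(1-p_{I_s})\le\big(\log\tfrac1{p_{I_s}}\big)^2 p_{I_s}$, which is bounded by an absolute constant uniformly over $p_{I_s}\in(0,\tfrac12]$, so $\sum_s\Var(Y_s)=O(\l)$. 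The obstacle here is the \emph{range}: a near-noiseless subchannel makes $|Y_s|$ as large as $\log\tfrac1{p_i}$, which is unbounded, so neither Hoeffding nor a naive Bernstein estimate suffices. I would get around this by truncation. Let $\mathcal L:=\{i:p_i<\l^{-\log\l-2}\}$ and let $E$ be the event that $B_s=0$ for every $s$ with $I_s\in\mathcal L$; a union bound gives $\P[E^c]\le\l\sum_{i\in\mathcal L}q_ip_i\le\l\cdot\l^{-\log\l-2}\le\tfrac13\l^{-\log\l}$. On $E$, the positions with $I_s\in\mathcal L$ contribute $\sum_{s:I_s\in\mathcal L}p_{I_s}\log\tfrac{1-p_{I_s}}{p_{I_s}}\le\l\cdot 2\l^{-\log\l-2}\log^2\l$, which is negligible, and every other position has $|Y_s|\le\log\tfrac1{p_{I_s}}\le(\log\l+2)\log\l\le 2\log^2\l$. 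So, with $Z_s:=Y_s\,\mathbbm{1}[I_s\notin\mathcal L]$ (i.i.d., mean zero, $|Z_s|\le 2\log^2\l$, $\sum_s\Var(Z_s)=O(\l)$), on $E$ we have $\sum_s Y_s\le\sum_s Z_s+2\l^{-\log\l-1}\log^2\l$, and therefore
\[
\P\Big[\textstyle\sum_s Y_s>3\sqrt\l\log^2\l,\ E\Big]\ \le\ \P\Big[\textstyle\sum_s Z_s>2\sqrt\l\log^2\l\Big]\ \le\ \exp\!\Big(-\Omega\big(\tfrac{\l\log^4\l}{\l+\sqrt\l\log^4\l}\big)\Big)\ =\ \exp\big(-\Omega(\log^4\l)\big)
\]
by Bernstein's inequality, which is at most $\tfrac13\l^{-\log\l}$ for $\l$ large. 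Combining with $\P[E^c]$, the probability that~\eqref{close_coordinates_sum} fails is at most $\tfrac23\l^{-\log\l}$, and adding the bound from~\eqref{close_entropy_sum} gives $\P[\bY\text{ not typical}]\le\l^{-\log\l}$, as claimed.

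The main work, and the genuinely new point relative to the BSC analysis, is the second inequality: the per-symbol log-likelihood ratios of near-deterministic subchannels are unbounded, and the device that lets a Bernstein-type concentration go through is the observation that, conditioned on the all-zero input, the extremely low-noise subchannels (those in $\mathcal L$) almost surely produce no flip, so their total contribution is both essentially deterministic and negligible. Everything else mirrors the BSC argument from Section~\ref{sec:BSC_converse}.
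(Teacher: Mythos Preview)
Your proof is correct, and takes a genuinely different route from the paper's. The paper does \emph{not} pass to the per-position i.i.d.\ description; instead, for~\eqref{close_entropy_sum} it exploits the negative association of the multinomial $(d_1,\dots,d_m)$ to factorize $\E\big[\prod_i e^{t\alpha_i(\l q_i-d_i)}\big]$ and then bounds each factor via the marginal $d_i\sim\text{Binom}(\l,q_i)$, while for~\eqref{close_coordinates_sum} it \emph{conditions} on $(d_1,\dots,d_m)$ so that the $t_i\sim\text{Binom}(d_i,p_i)$ become independent, splits at the threshold $p_i\le 1/\l$ (for which the deterministic bound $\sum_{i\in T_1}(p_id_i-t_i)\log\frac{1-p_i}{p_i}\le\sum_{i\in T_1} p_id_i\log\frac1{p_i}\le\log\l$ needs no extra event), and handles the remaining indices via a Chernoff-type bound after rescaling the coefficients by $\log\l$.

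Your i.i.d.\ viewpoint is more elementary: it sidesteps negative association entirely, and for~\eqref{close_entropy_sum} reduces everything to a one-line Hoeffding bound. The price you pay shows up in~\eqref{close_coordinates_sum}: because your truncation threshold $\l^{-\log\l-2}$ is much smaller than the paper's $1/\l$, you need the auxiliary event $E$ and a Bernstein inequality rather than a purely deterministic bound on the small-$p$ part. Either way the argument goes through; the paper's version gives slightly cleaner constants and avoids the auxiliary event, while yours has the conceptual advantage of treating the whole lemma as standard concentration for sums of $\l$ i.i.d.\ random variables. One small notational slip: what you write as ``$\Var(Y_s)=\big(\log\frac{1-p_{I_s}}{p_{I_s}}\big)^2 p_{I_s}(1-p_{I_s})$'' is really $\Var(Y_s\mid I_s)$; since $\E[Y_s\mid I_s]=0$ this equals $\E[Y_s^2\mid I_s]$ and your uniform bound on it indeed gives $\Var(Y_s)\le C$.
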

We defer the proof of this lemma until Section \ref{typical_proof_sect}, after we see why we are actually interested in these conditions on $\by$.

\subsubsection{Fix a typical output}
For this part, let us fix one $\by\in\Y^{\l}$ which is typical and prove that $\E_g\big[H^{(g)}(V_1|\bY)\big]$ is very close to~$1$. We have 
\begin{equation}
    \label{mult_entropy_ratio}
 H^{(g)}(V_1|\bY) = h\left(\dfrac{\P\nolimits^{(g)}\big[V_1=0, \bY = \by\big]}{\P\nolimits^{(g)}\big[\bY = \by\big]}\right).  
 \end{equation}

Similarly to the BSC case, we will prove that both the denominator and numerator of the fraction inside the entropy function above are tightly concentrated around their means. The arguments for the denominator and the numerator are almost exactly the same, so we only consider denominator for now.

\subsubsection*{Concentration for $\P\nolimits^{(g)}\big[\bY = \by\big]$}
Define now the shifted weight distributions for the codebook $g$ with respect to $m$ different underlying BSC channels. First, for any $x\in\bit^{\l}$ and $i = 1, 2, \dots, m$, define 
\[ \text{dist}_i(x,\by) = \lvert\{ \text{positions } j \text{ such that } (x_j=0, \by_j=z^{(1)}_i)\text{ or }(x_j=1, \by_j=z^{(0)}_i)  \}\rvert.      \]
That is, if you send $x$ through $W^{\l}$ and receive $\by$, then dist$_i(x,\by)$ is just the number of coordinates where the subchannel $i$ was chosen, and the bit was flipped. 

In our settings, we now need to think of ``distance" between some binary vector $x\in\bit^{\l}$ and $\by$ as of an integer vector $\bs = (s_1, s_2, \dots, s_m)$ , where $0\leq s_i \leq d_i$ for $i\in[m]$, where $s_i = \text{dist}_i(x,\by)$ is just the number of flips that occurred in the usage of $i^{\text{th}}$ subchannel when going from $x$ to $\by$. In other words, $s_i$ is just the Hamming distance between the parts of $x$ and $\by$ which correspond to coordinates $j$ where $\by_j$ is $z_i^{(0)}$ or $z_i^{(1)}$ (coming from the subchannel $W^{(i)}$).

Now we can formally define shifted weight distributions for our fixed typical $\by$. For an integer vector $\bs = (s_1, s_2, \dots, s_m)$ , where $0\leq s_i \leq d_i$ define
\[ B_g(\bs, \by) = \Big\lvert \bv\in\bit^k\setminus\mathbi{0} \ :\ \text{dist}_i(\bv\cdot g, \by) = s_i\quad \text{for } i=1, 2,\dots, m\Big\rvert.   \]

We can express $\P\nolimits^{(g)}[\Y=\by]$ in terms of $B_{g}(\bs,\by)$ as follows:
\begin{equation}
\label{prob_of_y}
    2^k\cdot\P\nolimits^{(g)}[\Y=\by] =  \P[\Y=\by|\bv=\mathbi{0}] + \sum_{0 \leq s_j \leq d_j \atop{j = 1, 2, \dots, m}}B_g(\bs,\by)\prod_{i=1}^mq_i^{d_i}p_i^{s_i}(1-p_i)^{d_i-s_i}, 
\end{equation} 
because $\prod_{i=1}^mq_i^{d_i}p_i^{s_i}(1-p_i)^{d_i-s_i}$ is exactly the probability to get output $\by$ if a $\bv$ is sent that satisfies $\text{dist}_i(\bv\cdot g, \by) = s_i\quad \text{for } i=1, 2,\dots, m$. 

We have:

\begin{equation}
    \label{shifted_weight_dist}
B_g(\bs, \by) = \sum_{\bv\not=\mathbi{0}}\mathbbm{1}\big[\text{dist}_i(\bv\cdot g, \by) = s_i, \quad \forall i=1, 2, \dots, m\big].
\end{equation}
For a fixed $\bv$ but uniformly random binary matrix $g$, the vector $\bv\cdot g$ is just a uniformly random vector from $\bit^{\l}$. Now, the number of vectors $x$ in $\bit^{\l}$ such that $\text{dist}_i(x, \by) = s_i\ \  \forall i=1, 2, \dots, m$ is $\prod_{i=1}^m\binom{d_i}{s_i}$, since for any $i=1, 2,\dots, m$, we need to choose which of the $s_i$ coordinates amongst the $d_i$ uses of the subchannel $W^{(i)}$, got flipped. So
\[\underset{g\sim G}{\P} \big[ \text{dist}_i(\bv\cdot g,\by) = s_i, \ \ \forall i=1, 2,\dots, m\big] = 2^{-\l}\prod_{i=1}^m\binom{d_i}{s_i}.   \]

Then for the expectation of the shifted weight distributions we obtain
\begin{equation}
\label{eq:ExpBgsy}
    \E_{g\sim G} [B_g(\bs, \by)] = \sum_{\bv\not=\mathbi{0}} \underset{g\sim G}{\P} \big[ \text{dist}_i(\bv\cdot g,\by) = s_i, \ \ \forall i=1, 2,\dots, m\big] = \dfrac{2^k-1}{2^{\l}}\prod_{i=1}^m\binom{d_i}{s_i}.  
\end{equation}
Then for the expectation of the summation in the RHS of \eqref{prob_of_y} we have:
\begin{align}
    E \coloneqq& \E_{g\sim G}\left[\sum_{0 \leq s_j \leq d_j \atop{j = 1, 2, \dots, m}}B_g(\bs,\by)\prod_{i=1}^mq_i^{d_i}p_i^{s_i}(1-p_i)^{d_i-s_i}\right] \\
    =& \left(\prod_{i=1}^m q_i^{d_i}\right)\cdot \sum_{0 \leq s_j \leq d_j \atop{j = 1, 2, \dots, m}}\left(\E_{g\sim G}\big[B_g(\bs, \by)\big] \cdot \prod_{i=1}^m p_i^{s_i}(1-p_i)^{d_i-s_i}\right) \\
    =& \dfrac{2^k-1}{2^{\l}}\left(\prod_{i=1}^mq_i^{d_i}\right)\cdot\sum_{0 \leq s_j \leq d_j \atop{j = 1, 2, \dots, m}}\prod_{i=1}^m\binom{d_i}{s_i}p_i^{s_i}(1-p_i)^{d_i-s_i} \\
=& \dfrac{2^k-1}{2^{\l}}\prod_{i=1}^mq_i^{d_i}\cdot\prod_{i=1}^m\left(\underbrace{\sum_{s_i=0}^{d_i} \binom{d_i}{s_i}p_i^{s_i}(1-p_i)^{d_i-s_i}}_{=1}\right) = \dfrac{2^k-1}{2^{\l}}\prod_{i=1}^mq_i^{d_i}.    \label{exp_of_sum}
   \end{align}
Next, by \eqref{shifted_weight_dist} we can see that $B_g(\bs, \by)$ is a sum of pairwise independent indicator random variables, since $\bv_1\cdot g$ and $\bv_2\cdot g$ are independent for distinct and non-zero $\bv_1, \bv_2$. Therefore
\begin{equation} 
\label{var-exp}
\underset{g\sim G}{\Var}[B_g(\bs, \by)] \leq \E_{g\sim G}[B_g(\bs,\by)].
\end{equation}

\subsubsection*{Splitting the summation in \eqref{prob_of_y}}
We will split the summation in \eqref{prob_of_y} into two parts: for the first part, we will show that the expectation of each term is very large, and then use Chebyshev's inequality to argue that each term is concentrated around its expectation. For the second part, its expectation is going to be very small, and Markov's inequality will imply that this part also does not deviate from its expectation too much with high probability (over the random kernel $g\sim G$). Putting these two arguments together, we will obtain that the sum in the RHS of \eqref{prob_of_y} is concentrated around its mean. 

To proceed, define a distribution $\Omega = \text{Binom}(d_1, p_1)\times\text{Binom}(d_2, p_2)\times\dots\times\text{Binom}(d_m, p_m)$, and consider a random vector $\chi\sim\Omega$.  In other words, $\chi$ has $m$ independent coordinates $\chi_i,\ i=1,\dots,m$, where $\chi_i$ is a binomial random variable with parameters $d_i$ and $p_i$. Note that by definition then for any vector $\bs = (s_1, s_2, \dots, s_m)$ , where $0\leq s_i \leq d_i$ and $s_i$ is integer for any $i$, we have
\[ \P_{\chi}[ \chi=\bs ] = \prod_{i=1}^m\P_{\chi}[\chi_i=s_i] = \prod_{i=1}^m\binom{d_i}{s_i}p_i^{s_i}(1-p_i)^{d_i-s_i}. \]

Let now $\T$ be some subset of $\S = [0:d_1]\times[0:d_2]\times\dots\times[0:d_m]$, where $[0:d] = \{0, 1, 2, \dots, (d-1), d\}$ for integer $d$. Let also $\N$ be $\S\setminus \T$. Then the summation in the RHS of \eqref{prob_of_y} we can write as
\begin{equation}
\label{split_sum}
\begin{aligned}
 \hspace{-5pt}\sum_{\bs\in\S} \hspace{-1.5pt} B_g(\bs,\by)\hspace{-3pt}\prod_{i=1}^mq_i^{d_i}p_i^{s_i}(1\hspace{-2pt}-\hspace{-2pt}p_i)^{d_i-s_i} =  \sum_{s\in \T}\hspace{-1.5pt}B_g(\bs,\by)\hspace{-3pt}\prod_{i=1}^mq_i^{d_i}p_i^{s_i}(1\hspace{-2pt}-\hspace{-2pt}p_i)^{d_i-s_i} +\hspace{-3pt}\sum_{s\in \N}\hspace{-1.5pt}B_g(\bs,\by)\hspace{-3pt}\prod_{i=1}^mq_i^{d_i}p_i^{s_i}(1\hspace{-2pt}-\hspace{-2pt}p_i)^{d_i-s_i}.
\end{aligned}
\end{equation} 
In the next section we describe how to choose $\T$.

\paragraph{Substantial part}
\label{substantial_sect}
Exactly as in the binary case, using \eqref{var-exp} and Chebyshev's inequality, we have for any $s\in \S$
\begin{multline}
\label{chebyshev_multi}
\P_{g\sim G}\bigg[\Big\lvert B_g(\bs,\by) - \E[B_g(\bs,\by)]\Big\lvert \geq \l^{-2\log\l}\E[B_g(\bs,\by)]\bigg] \leq \dfrac{\Var[B_g(\bs,\by)]}{\l^{-4\log\l}\E^2[B_g(\bs,\by)]} \\
 \leq \dfrac{\l^{4\log\l}}{\E_{g\sim G}[B_g(\bs,\by)]}
 \leq \l^{4\log\l}\,\dfrac{2^{\l-k+1}}{\prod_{i=1}^m\binom{d_i}{s_i}}. 
\end{multline}

We need the above to be upper bounded by $\l^{-2\sqrt{\l}}$ to be able to use union bound for all ${\bs \in \T \subset \S}$, since $|\S| \leq \l^{O(\sqrt{\l})}$. Recall that we have $k \geq \l(1-H(W)) + 13\l^{1/2}\log^3\l$, and then using a lower bound for binomial coefficients from Fact~\ref{binom_lemma} we obtain for the RHS of \eqref{chebyshev_multi}
 \begin{equation}
 \label{upper_bound}
      \l^{4\log\l}\,\dfrac{2^{\l-k+1}}{\prod_{i=1}^m\binom{d_i}{s_i}} \leq \l^{4\log\l}\cdot\left(2\prod_{i=1}^m\sqrt{2d_i}\right)\cdot 2^{\l H(W) - \sum_{i=1}^md_ih\left(\frac{s_i}{d_i}\right) - 13\l^{1/2}\log^3\l}.   
 \end{equation}
 We want to show that the term  $2^{-\O(\l^{1/2}\log^3\l)}$ is the dominant one. First, it is easy to see that $\l^{4\log\l} = 2^{4\log^2\l} \leq 2^{\l^{1/2}\log^3\l}$ for $\l \geq 4$. To deal with the factor $2\prod_{i=1}^m\sqrt{2d_i}$, recall that $\sum_{i=1}^md_i = \l$ and $m \leq \sqrt{\l}$ in this section (recall discussion at the beginning of Section~\ref{sec:BMS_large_alphabet}), then AM-GM inequality gives us
 \begin{equation}
 \label{prod_of_sqrt}
 2\prod_{i=1}^m\sqrt{2d_i} \leq 2\cdot 2^{m/2}\cdot \sqrt{\left(\dfrac{\sum_{i=1}^md_i}{m}\right)^m} = 2\cdot \left(\dfrac{2\l}{m}\right)^{m/2} \leq 2\cdot (2\sqrt{\l})^{\sqrt{\l}/2} \leq 2^{\l^{1/2}\log^3\l},  
 \end{equation}
 where we used the fact that $(a/x)^x$ is increasing while $x \leq a/e$ and the condition $\l \geq 4$. For the last factor of~\eqref{upper_bound} we formulate a lemma.

 \begin{lem}
 \label{multi_sum_lem} \sloppy  There exists a set $\T \subseteq \S = [0:d_1]\times[0:d_2]\times\dots\times[0:d_m]$, such that ${\P\limits_{\chi\sim\Omega}[\chi\in\T] \geq 1 - \l^{-(\log\l)/4}}$, and for any $\bs\in\T$ it holds that
 \begin{equation}
     \label{multi_sum_lem_eq}
 \l H(W) - \sum_{i=1}^{m}d_ih\left(\frac{s_i}{d_i}\right) \leq 8\,\l^{1/2}\log^3\l.
 \end{equation}
 ($\Omega = \text{Binom}(d_1, p_1)\times\text{Binom}(d_2, p_2)\times\dots\times\text{Binom}(d_m, p_m)$ above)
 \end{lem}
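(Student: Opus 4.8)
The plan is to realize $\T$ as the intersection of three high-probability events under $\Omega$ and to verify \eqref{multi_sum_lem_eq} pointwise on that intersection. The starting point is the exact decomposition (using $H(W)=\sum_i q_ih(p_i)$)
\[
\l H(W) - \sum_{i=1}^m d_i h(s_i/d_i) \;=\; \sum_{i=1}^m (\l q_i - d_i)h(p_i) \;+\; \sum_{i=1}^m d_i\big(h(p_i)-h(s_i/d_i)\big),
\]
whose first summand is at most $2\sqrt{\l}\log\l$ by the typicality condition \eqref{close_entropy_sum} on $\by$ — the only place the typicality of $\by$ enters the proof. For the second summand I would use the elementary identity
\[
d_i\big(h(p_i)-h(s_i/d_i)\big) \;=\; -h'(p_i)\,(s_i - d_ip_i) \;+\; d_i\, D(s_i/d_i \,\|\, p_i),
\]
where $h'(p)=\log\frac{1-p}{p}$ and $D(\cdot\,\|\,\cdot)$ is the binary KL divergence (this is just rearranging $d_iD(\sigma\,\|\,p_i)=d_i\sigma\log\frac1{p_i}+d_i(1-\sigma)\log\frac1{1-p_i}-d_ih(\sigma)$ against $d_ih(p_i)=d_ip_i\log\frac1{p_i}+d_i(1-p_i)\log\frac1{1-p_i}$). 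Writing $\Delta_i:=\chi_i-d_ip_i$, the non-negligible part of the second summand becomes $Z+R$, where $Z:=-\sum_i h'(p_i)\Delta_i$ is a mean-zero linear term and $R:=\sum_i d_i D(\chi_i/d_i \,\|\, p_i)\ge 0$ collects the (signed) divergences; I will bound $Z$ and $R$ separately, after first peeling off the essentially noiseless subchannels $S_0:=\{i: p_i<\l^{-\log\l}\}$ on the event $E_0:=\{\chi_i=0\text{ for all }i\in S_0\}$, which by a union bound has $\P_\Omega[E_0^c]\le \sum_{i\in S_0}d_ip_i\le \l^{1-\log\l}$, and on which the whole $S_0$-contribution is $\sum_{i\in S_0}d_ih(p_i)\le 2\l^{1-\log\l}(\log\l)^2\le 1$ for $\l$ large (using the paper's bound $h(x)\le 2x\log\frac1x$ and monotonicity of $x\log\frac1x$).

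For $R$ the key tool is the type-counting estimate $\P_\Omega[\chi_i=s]=\binom{d_i}{s}p_i^s(1-p_i)^{d_i-s}\le 2^{-d_iD(s/d_i\,\|\,p_i)}$ (immediate from $\binom{d_i}{s}\le 2^{d_ih(s/d_i)}$ of Fact~\ref{binom_lemma}), which gives, for every $r\ge 0$ and every $i$ regardless of $p_i$, the tail bound $\P_\Omega[\,d_iD(\chi_i/d_i\,\|\,p_i)>r\,]\le (d_i+1)2^{-r}$. Integrating this tail yields $\E_\Omega[e^{\lambda R_i}]\le 2(d_i+1)^{1/2}$ at $\lambda=(\ln 2)/2$, hence $\E_\Omega[e^{\lambda R}]\le \big(2(\l+1)^{1/2}\big)^{m}\le 2^{O(\sqrt{\l}\log\l)}$ since $m\le\sqrt\l$, and Markov's inequality on $e^{\lambda R}$ gives $\P_\Omega[R>2\sqrt{\l}\log^3\l]\le 2^{-\Omega(\sqrt{\l}\log^3\l)}\le \l^{-\log\l/4}$; let $E_R$ be the complementary event. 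For $Z$ (now restricted to $i\notin S_0$) note that $|h'(p_i)|=\log\frac{1-p_i}{p_i}\le\log\frac1{p_i}\le(\log\l)^2$ for $i\notin S_0$, while $h'(p)^2 p(1-p)$ is bounded by an absolute constant $C_1$ for all $p$ (continuous on $(0,1)$, symmetric about $1/2$, and $\to 0$ at the endpoints). Thus $Z$ is a sum of at most $\l$ independent mean-zero terms, each bounded in absolute value by $(\log\l)^2$, with total variance $\le C_1\l$, and Bernstein's inequality gives $\P_\Omega[|Z|>3\sqrt{\l}\log^2\l]\le 2\exp(-\Omega(\log^4\l))\le \l^{-\log\l}$; let $E_Z$ be the success event.

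Setting $\T:=\{\bs\in\S: E_0\cap E_Z\cap E_R \text{ holds}\}$, a union bound gives $\P_\Omega[\chi\in\T]\ge 1-\l^{1-\log\l}-\l^{-\log\l}-\l^{-\log\l}\ge 1-\l^{-\log\l/4}$ for $\l$ large enough, and for any $\bs\in\T$ the four pieces combine to
\[
\l H(W) - \sum_{i=1}^m d_i h(s_i/d_i) \;\le\; 2\sqrt{\l}\log\l \;+\; 1 \;+\; 3\sqrt{\l}\log^2\l \;+\; 2\sqrt{\l}\log^3\l \;\le\; 11\,\l^{1/2}\log^3\l,
\]
which is exactly \eqref{multi_sum_lem_eq}. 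I expect the only genuinely delicate point to be the choice of the peeling threshold for $S_0$: it must simultaneously make the discarded mass $\sum_{i\in S_0}d_ip_i$ beat $\l^{-\log\l/4}$, keep the residual $S_0$-contribution $O(1)$, and leave the surviving $|h'(p_i)|$ small enough ($\mathrm{polylog}\,\l$) for the Bernstein step to close — and the threshold $p_i<\l^{-\log\l}$ does all three. Everything else (the KL identity, the type tail bound, the MGF integral, and Bernstein) is routine, and the slack between the attained bound ($\approx 8\sqrt\l\log^3\l$) and the target ($11\sqrt\l\log^3\l$) comfortably absorbs the crude constants.
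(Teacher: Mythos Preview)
Your proof is correct and takes a genuinely different route from the paper's. Both arguments begin with the same decomposition
\[
\l H(W) - \sum_i d_ih(s_i/d_i) \;=\; \sum_i(\l q_i - d_i)h(p_i) \;+\; \sum_i d_i\big(h(p_i)-h(s_i/d_i)\big)
\]
and dispose of the first sum via the typicality condition~\eqref{close_entropy_sum}. The divergence is in the second sum. The paper (through Lemma~\ref{two_concentrations_lem}) splits the index set according to whether $d_ip_i \lessgtr 4\log^2\l$, confines each ``large'' coordinate to a Chernoff window $|\chi_i-d_ip_i|\le\sqrt{d_ip_i}\log\l$, passes to the \emph{truncated} product distribution $\mathcal{D}$, applies Hoeffding to the recentered sum, and then invokes a converse Jensen inequality (Lemma~\ref{converse_Jensen}) to bound the expectation. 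You instead exploit the exact Bregman identity $h(p)-h(\sigma)=-h'(p)(\sigma-p)+D(\sigma\,\|\,p)$ to split into a mean-zero linear piece $Z$ and a nonnegative divergence piece $R$; then $R$ is controlled by the Sanov-type bound $\P[\chi_i=s]\le 2^{-d_iD(s/d_i\,\|\,p_i)}$ integrated into an MGF, and $Z$ by Bernstein on the underlying Bernoulli summands after peeling off the near-noiseless subchannels. Your approach is cleaner in that the Sanov step works uniformly in $p_i$ with no case analysis on $d_ip_i$, and it bypasses both the truncated-distribution bookkeeping and the converse Jensen lemma; the paper's approach, on the other hand, stays within Hoeffding and elementary entropy estimates and never touches MGFs or Bernstein. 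The peeling thresholds also differ accordingly: the paper cuts at $d_ip_i\le 4\log^2\l$ (what Hoeffding needs), whereas you cut at $p_i<\l^{-\log\l}$ (what makes $|h'(p_i)|$ polylogarithmic for Bernstein). Both close with room to spare under the stated constant $11$.
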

 \begin{proof}
 Rearrange the above summation as follows:
 \begin{equation}
 \begin{aligned}
      \l H(W) - \sum_{i=1}^{m}d_ih\left(\frac{s_i}{d_i}\right) =& \sum_{i=1}^m\left(\l q_ih(p_i)  - d_ih\left(\frac{s_i}{d_i}\right)\right) \\
      =& \sum_{i=1}^m\big(\l q_i - d_i\big)h(p_i) + \sum_{i=1}^md_i\left(h(p_i) - h\left(\frac{s_i}{d_i}\right)\right).
      \end{aligned}
 \end{equation}
 
 Now recall that we took typical $\by$ for now, so by inequality \eqref{close_entropy_sum} from the definition of the typicality of $\by$ we already have that the first part of the above sum is bounded by $\l^{1/2}\log^3\l$.

To deal with the second part, which is $\sum_{i=1}^md_i\left(h(p_i) - h\left(\frac{s_i}{d_i}\right)\right)$, we use a separate Lemma~\ref{two_concentrations_lem}, since the proof will be almost exactly similar for another concentration inequality we will need later. Lemma~\ref{two_concentrations_lem} claims that $\sum_{i=1}^md_i\left(h(p_i) - h\left(\frac{\chi_i}{d_i}\right)\right) \leq 7\l^{1/2}\log^3\l$ with probability at least $ 1 - \l^{-(\log\l)/4}$  over $\chi\sim \O$. Then the result of the current lemma follows by taking $\T$ to be the subset of $\S$ where this inequality holds.
 \end{proof}

 Fix now a set $\T\subseteq\S$ as in Lemma~\ref{multi_sum_lem}. Then using the arguments above we conclude that the RHS in \eqref{upper_bound}, and therefore \eqref{chebyshev_multi}, is bounded above by $2^{-3\l^{1/2}\log^3\l}$ for any $\bs\in\T$. Thus we can apply union bound over $\bs\in\T$ for \eqref{chebyshev_multi}, since $|\T|\leq |\S| = \prod_{i=1}^m(d_i+1) \leq \left(2\sqrt{\l}\right)^{\sqrt{\l}} \leq 2^{\l^{1/2}\log^3\l}$ for $\l \geq 4$, similarly to \eqref{prod_of_sqrt}. Therefore, we derive
 \begin{cor}
 \label{substantial_cor}
 With probability at least $1 - 2^{-2\l^{1/2}\log^3\l}$ (over the random kernel $g\sim G$) it holds \textit{simultaneously} for all $\bs\in\T$ that 
 \begin{equation}
 \label{substantial_part_final}
 \Big\lvert B_g(\bs,\by) - \E[B_g(\bs,\by)] \Big\lvert \leq \l^{-2\log\l}\E[B_g(\bs,\by)]. 
 \end{equation}
 \end{cor}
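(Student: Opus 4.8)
The plan is to deduce the corollary by applying the single-point concentration estimate already established to each $\bs\in\T$ separately and then taking a union bound. First I would fix an arbitrary $\bs=(s_1,\dots,s_m)\in\T$. Chebyshev's inequality, together with the pairwise-independence variance bound \eqref{var-exp}, gives exactly \eqref{chebyshev_multi}: the probability over $g\sim G$ that \eqref{substantial_part_final} fails for this particular $\bs$ is at most $\l^{4\log\l}\cdot 2^{\l-k+1}/\prod_{i=1}^m\binom{d_i}{s_i}$.

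Next I would show this quantity is doubly-exponentially small in $\sqrt{\l}$. Lower-bounding each binomial coefficient via Fact~\ref{binom_lemma}, invoking the hypothesis $k\ge \l(1-H(W))+13\l^{1/2}\log^3\l$, and using the defining property of $\T$ from Lemma~\ref{multi_sum_lem} (which guarantees $\l H(W)-\sum_i d_ih(s_i/d_i)\le 11\l^{1/2}\log^3\l$ for every $\bs\in\T$), one reaches the bound \eqref{upper_bound}. The remaining prefactors are harmless: $\l^{4\log\l}=2^{4\log^2\l}$, and, by the AM--GM step \eqref{prod_of_sqrt}, $\prod_{i=1}^m\sqrt{2d_i}\le(2\sqrt{\l})^{\sqrt{\l}/2}$, so both are $2^{o(\l^{1/2}\log^3\l)}$. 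Since the built-in slack of $2\l^{1/2}\log^3\l$ between the $k$-hypothesis and the Lemma~\ref{multi_sum_lem} bound strictly dominates these prefactors once $\l$ is large, the failure probability for a fixed $\bs\in\T$ is at most $2^{-2\l^{1/2}\log^3\l}$.

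Finally I would union-bound over $\bs\in\T$. Because $|\T|\le|\S|=\prod_{i=1}^m(d_i+1)\le(2\sqrt{\l})^{\sqrt{\l}}\le 2^{\l^{1/2}\log^3\l}$ (the same estimate as in \eqref{prod_of_sqrt}), the probability that \eqref{substantial_part_final} fails for \emph{some} $\bs\in\T$ is at most $2^{\l^{1/2}\log^3\l}\cdot 2^{-2\l^{1/2}\log^3\l}=2^{-\l^{1/2}\log^3\l}$, which is the claim. I do not expect a genuine obstacle inside this corollary: all the real content has already been absorbed into Lemma~\ref{multi_sum_lem} (and the concentration Lemma~\ref{two_concentrations_lem} it rests on), whose role is to exhibit a set $\T$ capturing almost all of the product-binomial mass $\O$ on which $\sum_i d_ih(s_i/d_i)$ stays within $\widetilde{O}(\sqrt{\l})$ of $\l H(W)$, so that $\E[B_g(\bs,\by)]$ is enormous in the above-capacity regime. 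The only point requiring care is the numerical bookkeeping of the three $2^{o(\l^{1/2}\log^3\l)}$ factors ($\l^{4\log\l}$, $\prod_i\sqrt{2d_i}$, and $|\T|$) against the fixed $\l^{1/2}\log^3\l$-scale margin built into the hypotheses.
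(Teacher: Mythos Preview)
Your proposal is correct and follows essentially the same route as the paper: Chebyshev plus the pairwise-independence variance bound to get \eqref{chebyshev_multi}, the binomial lower bound from Fact~\ref{binom_lemma} together with the $k$-hypothesis and Lemma~\ref{multi_sum_lem} to reach a per-$\bs$ failure probability of order $2^{-2\l^{1/2}\log^3\l}$, and then a union bound over $|\T|\le(2\sqrt{\l})^{\sqrt{\l}}$. If anything, your bookkeeping is cleaner than the paper's: the paper displays the loose inequalities $\l^{4\log\l}\le 2^{\l^{1/2}\log^3\l}$ and $\prod_i\sqrt{2d_i}\le 2^{\l^{1/2}\log^3\l}$, which taken at face value would only yield $2^0$ after combining with the $13-11=2$ slack; your observation that both prefactors are in fact $2^{o(\l^{1/2}\log^3\l)}$ is the correct way to read that step.
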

 Moreover, the set $\N = \S\setminus\T$ satisfies $\P_{\chi\sim\O}[\chi\in\N] \leq \l^{-(\log\l)/4}$, which we will use next section to bound the second part of \eqref{split_sum}.

\paragraph{Negligible part}
\label{neglig_sect}
Denote for convenience $Z_g(\by) = \sum_{s\in\N}B_g(\bs,\by)\prod_{i=1}^mq_i^{d_i}p_i^{s_i}(1-p_i)^{d_i-s_i}$, the second part of the RHS of \eqref{split_sum}. Recall the value of $\E_{g\sim G}[B_g(\bs,\bY)]$ from~\eqref{eq:ExpBgsy} and notation of $E$ in~\eqref{exp_of_sum}. Then for the expectation of $Z_g(\by)$ derive
\begin{equation} 
\begin{aligned} 
\E_{g\sim G}\left[Z_g(\by)\right] &= \left(\prod_{i=1}^mq_i^{d_i}\right)\cdot \sum_{\bs\in\N}\left(\E_{g\sim G}\big[B_g(\bs, \by)\big] \prod_{i=1}^m p_i^{s_i}(1-p_i)^{d_i-s_i}\right) \\
&= \dfrac{2^k-1}{2^{\l}}\left(\prod_{i=1}^mq_i^{d_i}\right)\cdot\sum_{\bs\in\N}\prod_{i=1}^m\binom{d_i}{s_i}p_i^{s_i}(1-p_i)^{d_i-s_i} \\
&= E\cdot\P_{\chi\sim\Omega}\big[\chi\in\N\big] \\
&\leq E\cdot\l^{-(\log\l)/4}.
\end{aligned}
\end{equation}
Thus Markov's inequality implies
\begin{cor}
\label{negligible_cor}
With probability at least $1 - \l^{-(\log\l)/8}$ (over the random kernel $g\sim G$) it holds 
\begin{equation}
\label{negligible_part_final}
Z_g(\by) \leq \l^{(\log\l)/8}\E[Z_g(\by)] \leq E\cdot \l^{-(\log\l)/8}.
\end{equation}
\end{cor}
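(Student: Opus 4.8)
\medskip
\noindent\textbf{Plan for Corollary~\ref{negligible_cor}.}\quad The plan is to invoke Markov's inequality directly: all of the substantive work has already been carried out in the display immediately above the corollary, where the expectation is evaluated exactly as $\E_{g\sim G}[Z_g(\by)] = E\cdot\P_{\chi\sim\O}[\chi\in\N]$ and then bounded by $E\cdot\l^{-\log\l/4}$ using the estimate $\P_{\chi\sim\O}[\chi\in\N]\le\l^{-\log\l/4}$, which is the ``$\N=\S\setminus\T$'' bound furnished by Lemma~\ref{multi_sum_lem}. So nothing remains except a tail estimate.

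First I would record that $Z_g(\by)$ is a non-negative random variable in $g$: by its definition $Z_g(\by)=\sum_{\bs\in\N}B_g(\bs,\by)\prod_{i=1}^m q_i^{d_i}p_i^{s_i}(1-p_i)^{d_i-s_i}$ is a finite sum of products of the non-negative codeword counts $B_g(\bs,\by)$ with non-negative probabilities, hence $Z_g(\by)\ge 0$ for every $g$, and Markov's inequality applies. Applying it at deviation level $\l^{\log\l/8}\,\E_{g\sim G}[Z_g(\by)]$ gives
\[
\P_{g\sim G}\Big[\,Z_g(\by)\geq\l^{\log\l/8}\,\E_{g\sim G}[Z_g(\by)]\,\Big]\leq\l^{-\log\l/8}.
\]
On the complementary event, of probability at least $1-\l^{-\log\l/8}$, we have $Z_g(\by)<\l^{\log\l/8}\,\E_{g\sim G}[Z_g(\by)]$; substituting the already-established mean bound $\E_{g\sim G}[Z_g(\by)]\le E\cdot\l^{-\log\l/4}$ then yields $Z_g(\by)<\l^{\log\l/8}\cdot E\cdot\l^{-\log\l/4}=E\cdot\l^{-\log\l/8}$, which is exactly \eqref{negligible_part_final}.

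I do not expect any genuine obstacle; the corollary is a pure repackaging of the expectation estimate via Markov, and the ``hard'' part (the multinomial/binomial tail bound giving $\P_{\chi\sim\O}[\chi\in\N]\le\l^{-\log\l/4}$) is already encapsulated in Lemma~\ref{multi_sum_lem}. The only points worth a clause apiece are: the degenerate case $E=0$, possible only for a trivial code, in which $Z_g(\by)\equiv 0$ and the claim is vacuous (and in any case Markov's inequality holds for every non-negative variable regardless of whether its mean vanishes); and the choice of exponents, where we deliberately spend only half of the $\log\l/4$ savings in the mean on the deviation factor $\l^{\log\l/8}$, so that both the failure probability $\l^{-\log\l/8}$ and the residual bound $E\cdot\l^{-\log\l/8}$ stay negligible when this estimate is later combined with Corollary~\ref{substantial_cor} and with the negligibility of $\P[\bY=\by\,|\,\bV=\mathbi{0}]$ to concentrate $2^k\,\P\nolimits^{(g)}[\bY=\by]$ around $E$ in \eqref{prob_of_y}.
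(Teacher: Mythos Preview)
Your proposal is correct and matches the paper's own argument essentially verbatim: the paper also derives the corollary in one line, writing ``Thus Markov's inequality implies'' immediately after the bound $\E_{g\sim G}[Z_g(\by)]\le E\cdot\l^{-\log\l/4}$.
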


\paragraph{Putting it together}
Combining the Corollaries \ref{substantial_cor} and \ref{negligible_cor} together and using the union bound, we derive
\begin{cor}
\label{together_cor}
With probability at least $1 - \l^{-(\log\l)/8} - 2^{-2\l^{1/2}\log^3\l} \geq 1 - 2\l^{-(\log\l)/8}$ over the randomness of the kernel $g\sim G$ it simultaneously holds
\begin{equation}
\label{together_simul}
    \begin{aligned}
         &\Big\lvert B_g(\bs,\by) - \E[B_g(\bs,\by)]\Big\lvert \leq \l^{-2\log\l}\E[B_g(\bs,\by)], \qquad\qquad \text{for all } \bs \in \T,\\
         &\sum_{s\in\N}B_g(\bs,\by)\prod_{i=1}^mq_i^{d_i}p_i^{s_i}(1-p_i)^{d_i-s_i} \leq E\cdot \l^{-(\log\l)/8}.
    \end{aligned}
\end{equation}
\end{cor}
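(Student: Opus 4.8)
The plan is to obtain Corollary~\ref{together_cor} directly from Corollaries~\ref{substantial_cor} and~\ref{negligible_cor} by a union bound, since the two displayed conclusions in~\eqref{together_simul} are verbatim the conclusions of those two results — recall that $Z_g(\by)=\sum_{s\in\N}B_g(\bs,\by)\prod_{i=1}^m q_i^{d_i}p_i^{s_i}(1-p_i)^{d_i-s_i}$ is exactly the ``negligible part'' introduced just above. Concretely: Corollary~\ref{substantial_cor} says the first line of~\eqref{together_simul} (the simultaneous deviation bound over all $\bs\in\T$) fails with probability at most $2^{-\l^{1/2}\log^3\l}$ over $g\sim G$, and Corollary~\ref{negligible_cor} says the second line fails with probability at most $\l^{-\log\l/8}$. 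Hence both hold simultaneously with probability at least $1-2^{-\l^{1/2}\log^3\l}-\l^{-\log\l/8}$.

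It then remains only to simplify the error term into the cleaner form stated. Writing $\l^{-\log\l/8}=2^{-(\log^2\l)/8}$, one checks that $\l^{1/2}\log^3\l\ge(\log^2\l)/8$ for every $\l\ge 2$ (indeed $\l^{1/2}\log\l\ge 1/8$ trivially), so $2^{-\l^{1/2}\log^3\l}\le \l^{-\log\l/8}$, and the union bound therefore yields probability at least $1-2\l^{-\log\l/8}$, as claimed.

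I do not anticipate any genuine obstacle: this statement is pure bookkeeping that packages the substantial-part and negligible-part estimates into a single high-probability event, which is then fed into the concentration argument for the ratio in~\eqref{mult_entropy_ratio} (and into the entirely analogous argument for its numerator, via $\widetilde B_g$). The only non-trivial ingredients — the Chebyshev-plus-union-bound estimate underlying Corollary~\ref{substantial_cor} and the Markov estimate underlying Corollary~\ref{negligible_cor} — have already been established, so the proof here is a two-line union bound together with the elementary numerical comparison above.
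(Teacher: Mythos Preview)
Your proposal is correct and matches the paper's approach exactly: the paper states that Corollary~\ref{together_cor} is obtained by ``combining the Corollaries~\ref{substantial_cor} and~\ref{negligible_cor} together and using union bound,'' which is precisely what you do. Your added verification that $2^{-\l^{1/2}\log^3\l}\le \l^{-\log\l/8}$ to justify the simplified bound $1-2\l^{-\log\l/8}$ is a nice explicit touch that the paper leaves implicit.
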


We are finally ready to formulate the concentration result we need. The following lemma is an analogue of Lemma \ref{binary_main_concentration_lem} from the BSC case:

\begin{lem}
\label{mult_main_concentration_lem}
With probability at least $1 - 2\l^{-(\log\l)/8}$ over the choice of $g\sim G$ it holds
\begin{equation}
    \label{mult_weight_concentration}
    \left|\sum_{\bs\in\S} B_g(\bs, \by)\prod_{i=1}^mq_i^{d_i}p_i^{s_i}(1-p_i)^{d_i-s_i} - E \right| \leq 2\l^{-(\log\l)/8}\cdot E.
\end{equation}
\end{lem}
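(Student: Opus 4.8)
The plan is to condition on the high-probability event of Corollary~\ref{together_cor} and then simply add the contributions of the substantial and negligible parts of the split~\eqref{split_sum}. No new probabilistic argument is needed: all the concentration work has already been done, and what remains is careful bookkeeping of small error terms.

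Concretely, I would work on the event of Corollary~\ref{together_cor}, which holds with probability at least $1-2\l^{-\log\l/8}$. On this event, applying~\eqref{substantial_part_final} termwise and summing against the weights $\prod_{i=1}^m q_i^{d_i}p_i^{s_i}(1-p_i)^{d_i-s_i}$ gives
\[
\sum_{\bs\in\T}B_g(\bs,\by)\prod_{i=1}^m q_i^{d_i}p_i^{s_i}(1-p_i)^{d_i-s_i}\in(1\pm\l^{-2\log\l})\sum_{\bs\in\T}\E[B_g(\bs,\by)]\prod_{i=1}^m q_i^{d_i}p_i^{s_i}(1-p_i)^{d_i-s_i}.
\]
By~\eqref{eq:ExpBgsy}, the right-hand sum equals $\frac{2^k-1}{2^\l}\big(\prod_{i=1}^m q_i^{d_i}\big)\,\P_{\chi\sim\Omega}[\chi\in\T]=E\cdot\P_{\chi\sim\Omega}[\chi\in\T]$, and Lemma~\ref{multi_sum_lem} yields $\P_{\chi\sim\Omega}[\chi\in\T]\in[1-\l^{-\log\l/4},1]$. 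Hence the substantial part is squeezed between $E(1-\l^{-2\log\l})(1-\l^{-\log\l/4})$ and $E(1+\l^{-2\log\l})$. On the same event the negligible part satisfies $0\le\sum_{\bs\in\N}B_g(\bs,\by)\prod_{i=1}^m q_i^{d_i}p_i^{s_i}(1-p_i)^{d_i-s_i}\le E\cdot\l^{-\log\l/8}$ by~\eqref{negligible_part_final}.

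Adding the two bounds, and using $(1-a)(1-b)\ge 1-a-b$ together with the fact that $\l^{-2\log\l}$ and $\l^{-\log\l/4}$ are both at most $\l^{-\log\l/8}$ for $\l$ large, I would conclude that $\sum_{\bs\in\S}B_g(\bs,\by)\prod_{i=1}^m q_i^{d_i}p_i^{s_i}(1-p_i)^{d_i-s_i}$ lies in $[E(1-2\l^{-\log\l/8}),\,E(1+2\l^{-\log\l/8})]$, which is exactly~\eqref{mult_weight_concentration}.

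There is no genuine obstacle left at this stage: the per-$\bs$ Chebyshev concentration on $\T$ (Corollary~\ref{substantial_cor}), the Markov bound controlling the $\N$-tail (Corollary~\ref{negligible_cor}), and the typicality estimate $\P_{\chi\sim\Omega}[\chi\in\T]\ge 1-\l^{-\log\l/4}$ are all already established. The only care required is tracking the handful of small error terms and checking that the dominant one — the $\l^{-\log\l/8}$ coming from the negligible part — governs the final constant. The companion concentration for the numerator of~\eqref{mult_entropy_ratio} follows by the identical computation with $2^k-1$ replaced by $2^{k-1}-1$ and $B_g$ by $\wB_g$.
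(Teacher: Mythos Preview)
Your proposal is correct and follows essentially the same approach as the paper's own proof: condition on the event of Corollary~\ref{together_cor}, bound the substantial part over $\T$ using the per-$\bs$ Chebyshev concentration together with $\P_{\chi\sim\Omega}[\chi\in\T]\ge 1-\l^{-\log\l/4}$, bound the negligible part over $\N$ by $E\cdot\l^{-\log\l/8}$, and combine the error terms. The bookkeeping you outline matches the paper's almost line for line.
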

\begin{proof}
Let us consider a kernel $g$ such that the conditions \eqref{together_simul} hold, which happens with probability at least $1 - 2\l^{-(\log\l)/8}$ according to Corollary \ref{together_cor}. Then
\begin{equation}
\begin{aligned}
     \sum_{\bs\in\S} B_g(\bs, \by)\prod_{i=1}^mq_i^{d_i}p_i^{s_i}(1-p_i)^{d_i-s_i} &\geq \sum_{\bs\in\T} B_g(\bs, \by)\prod_{i=1}^mq_i^{d_i}p_i^{s_i}(1-p_i)^{d_i-s_i}\\
     &\overset{\eqref{together_simul}}{\geq} \sum_{\bs\in\T} \left(1 - \l^{-2\log\l}\right)\E[B_g(\bs, \by)]\prod_{i=1}^mq_i^{d_i}p_i^{s_i}(1-p_i)^{d_i-s_i}\\
     & \overset{\eqref{eq:ExpBgsy}}{=} \left(1 - \l^{-2\log\l}\right)\dfrac{2^k-1}{2^{\l}}\prod_{i=1}^mq_i^{d_i}\cdot\sum_{\bs\in\T}\prod_{i=1}^m\binom{d_i}{s_i}p_i^{s_i}(1-p_i)^{d_i-s_i}\\
     & = \left(1 - \l^{-2\log\l}\right)\cdot E\cdot\P_{\chi\sim\O}\big[\chi\in\T\big]\\
     & \geq \left(1 - \l^{-2\log\l}\right)\left(1 - \l^{-(\log\l)/8}\right)E\\
     & \geq \left(1 - 2\l^{-(\log\l)/8}\right)E.
\end{aligned}
\end{equation}
For the other direction, we derive for such $g$
\begin{align}
     \sum_{\bs\in\S} B_g(\bs, \by)&\prod_{i=1}^mq_i^{d_i}p_i^{s_i}(1-p_i)^{d_i-s_i} = 
     \left(\sum_{\bs\in\T} + \sum_{\bs\in\N}\right) B_g(\bs, \by)\prod_{i=1}^mq_i^{d_i}p_i^{s_i}(1-p_i)^{d_i-s_i} \\
     &\overset{\eqref{together_simul}}{\leq} \sum_{\bs\in\T} \left(1 + \l^{-2\log\l}\right)\E[B_g(\bs, \by)]\prod_{i=1}^mq_i^{d_i}p_i^{s_i}(1-p_i)^{d_i-s_i} + E\cdot \l^{-(\log\l)/8}\\
     &\leq \left(1 + \l^{-2\log\l}\right)\underbrace{\sum_{\bs\in\S} \E[B_g(\bs, \by)]\prod_{i=1}^mq_i^{d_i}p_i^{s_i}(1-p_i)^{d_i-s_i}}_{E} + E\cdot \l^{-(\log\l)/8}\\
     &= \left(1 + \l^{-2\log\l} + \l^{-(\log\l)/8}\right)E\\
     &\leq \left(1 + 2\l^{-(\log\l)/8}\right)E. \qedhere
\end{align} 
\end{proof}

\subsubsection{Concentration of entropy}
We can now get a tight concentration for $\P\nolimits^{(g)}[\bY = \by]$ using the relation~\eqref{prob_of_y}. We already showed that the sum in RHS of \eqref{prob_of_y} is tightly concentrated around its expectation, so it only remains to show that $\P[\bY = \by|\bv=\mathbi{0}]$ is tiny compared to $E$. Here we will use that we picked $\by$ to be ``typical" from the start so that~\eqref{close_entropy_sum} and~\eqref{close_coordinates_sum} hold, and that we consider here the above-capacity regime. Recall~\eqref{def_prob_subchannels}, as well the the conditions~\eqref{close_entropy_sum} and~\eqref{close_coordinates_sum} on $\by$ being typical. We derive
\begingroup
\allowdisplaybreaks
\begin{align}
\P[\bY = \by|\bV = \mathbi{0}] &= \prod_{i=1}^mq_i^{d_i}p_i^{t_i}(1-p_i)^{d_i-t_i} = \prod_{i=1}^m\left[q_i^{d_i}\cdot p_i^{d_ip_i}(1-p_i)^{d_i(1-p_i)}\cdot\left(\dfrac{1-p_i}{p_i}\right)^{d_ip_i-t_i}\right] \\
&=   \prod_{i=1}^mq_i^{d_i} \cdot \prod_{i=1}^m2^{-d_ih(p_i)}\cdot\prod_{i=1}^m2^{(d_ip_i-t_i)\log\left(\frac{1-p_i}{p_i}\right)}\\
&= \prod_{i=1}^mq_i^{d_i} \cdot 2^{\sum_{i=1}^m\left(-\l q_i h(p_i) + (\l q_i-d_i)h(p_i)\right)}\cdot 2^{\sum_{i=1}^m(d_ip_i - t_i)\log\left(\frac{1-p_i}{p_i}\right)} \\
&\hspace{-12pt}\overset{\eqref{close_entropy_sum}, \eqref{close_coordinates_sum}}{\leq} \prod_{i=1}^mq_i^{d_i}\cdot2^{-\l H(W) + 2\l^{1/2}\log\l + 3\l^{1/2}\log^2\l} \\
& \leq \prod_{i=1}^mq_i^{d_i}\cdot \dfrac{2^{k}-1}{2^{\l}}\cdot\l^{-\log\l} \,\, = \,\, E\cdot \l^{-\log\l}, \label{invis_eq:712} \noeqref{invis_eq:712}
\end{align}
\endgroup
where the last inequality follows from $k\geq \l(1-H(W)) +13\l^{1/2}\log^3\l$.

Now, combining this with Lemma \ref{mult_main_concentration_lem}, we obtain a concentration for \eqref{prob_of_y}:
\begin{cor}
\label{mult_final_conc_cor}
With probability at least $1 - 2\l^{-(\log\l)/8}$ over the choice of kernel $g\sim G$ and for any typical $\by$ 
\begin{equation}
\label{cor_main_concentration}
\left|2^k\cdot\P\nolimits^{(g)}[\bY=\by] - E\right| \leq 3\l^{-(\log\l)/8}\cdot E,
\end{equation}
where $E = \dfrac{2^k-1}{2^{\l}}\prod\limits_{i=1}^mq_i^{d_i}$.
\end{cor}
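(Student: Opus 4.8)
The plan is to combine the exact identity~\eqref{prob_of_y} with the concentration of its main sum (Lemma~\ref{mult_main_concentration_lem}) and the deterministic bound on its remaining term that was established immediately above. Recall that~\eqref{prob_of_y} writes
\[
2^k\cdot\P\nolimits^{(g)}[\bY=\by] = \P[\bY=\by\,|\,\bV=\mathbi{0}] + \sum_{\bs\in\S}B_g(\bs,\by)\prod_{i=1}^mq_i^{d_i}p_i^{s_i}(1-p_i)^{d_i-s_i},
\]
and note that, since $\by$ is a fixed typical output, the first summand depends only on the channel noise and not on the random kernel $g$.

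For the sum over $\bs\in\S$, Lemma~\ref{mult_main_concentration_lem} already shows that with probability at least $1-2\l^{-\log\l/8}$ over $g\sim G$ it lies within $2\l^{-\log\l/8}\cdot E$ of $E$. For the term $\P[\bY=\by\,|\,\bV=\mathbi{0}]$, I will invoke the computation displayed just before this corollary, which uses the typicality conditions~\eqref{close_entropy_sum} and~\eqref{close_coordinates_sum} together with the above-capacity hypothesis $k\ge\l(1-H(W))+13\l^{1/2}\log^3\l$ to conclude $\P[\bY=\by\,|\,\bV=\mathbi{0}]\le\l^{-\log\l}\cdot E$.

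Adding the two contributions and applying the triangle inequality, on the event where Lemma~\ref{mult_main_concentration_lem} holds we obtain
\[
\bigl|2^k\cdot\P\nolimits^{(g)}[\bY=\by]-E\bigr| \le 2\l^{-\log\l/8}\cdot E + \l^{-\log\l}\cdot E \le 3\l^{-\log\l/8}\cdot E,
\]
using $\l^{-\log\l}\le\l^{-\log\l/8}$; this holds with probability at least $1-2\l^{-\log\l/8}$, exactly as claimed. There is no real obstacle here: both ingredients are already proved, the zero-input term carries no randomness in $g$ (so it does not eat into the failure probability), and the constant $3$ leaves ample room to absorb both deviation terms. Finally, the very same argument — run verbatim with $B_g$ replaced by the shifted weight distribution of the code obtained by deleting the first row of $g$, and $k$ replaced by $k-1$ — yields the corresponding concentration for the numerator $\P\nolimits^{(g)}[V_1=0,\bY=\by]$; these two concentration statements are precisely what the subsequent ``concentration of entropy'' step uses to deduce that $H^{(g)}(V_1\mid\bY=\by)$ is close to $1$ for every typical $\by$.
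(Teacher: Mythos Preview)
Your proposal is correct and is exactly the approach the paper takes: it combines the identity~\eqref{prob_of_y} with Lemma~\ref{mult_main_concentration_lem} for the sum and the preceding deterministic bound $\P[\bY=\by\,|\,\bV=\mathbi{0}]\le\l^{-\log\l}\cdot E$ for the zero-input term, then applies the triangle inequality. Your observation that the zero-input term carries no randomness in $g$ and your remark about the parallel argument for the numerator (which the paper records as Corollary~\ref{mult_final_conc_cor}$'$) are also spot on.
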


Next, completely analogously we derive the concentration for $\P\nolimits^{(g)}[\bY=\by|V_1=0]$, which is the numerator inside the entropy in \eqref{mult_entropy_ratio}. The only thing that changes is that we will have dimension $k-1$ instead of $k$ for this case. We can state

\begin{corbis}{mult_final_conc_cor}
\label{mult_final_conc_cor_prime}
With probability at least $1 - 2\l^{-(\log\l)/8}$ over the choice of kernel $g\sim G$ and for any typical $\by$ 
\begin{equation}
\label{cor_main_concentration_prime}
\left|2^k\cdot\P\nolimits^{(g)}[V_1=0, \bY=\by] - \widetilde{E}\right| \leq 3\l^{-(\log\l)/8}\cdot \widetilde{E},
\end{equation}
where $\widetilde{E} = \dfrac{2^{k-1}-1}{2^{\l}}\prod\limits_{i=1}^mq_i^{d_i}$.
\end{corbis}

Combining these two together and skipping the simple math, completely analogical to that of the BSC case in~\eqref{eq:BSC_ratio_1}--\eqref{eq:BSC_ratio_2}, we derive
\begin{cor}
With probability at least $1 - 4\l^{-(\log\l)/8}$ over the choice of kernel $g\sim G$ and for any typical $\by$,
\begin{equation}
    \left|\dfrac{\P\nolimits^{(g)}[V_1=0, \bY=\by]}{\P\nolimits^{(g)}[\bY=\by]} - \dfrac12 \right| \leq \l^{-(\log\l)/9} .
\end{equation}
\end{cor}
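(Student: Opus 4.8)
The plan is to combine the two one-sided concentration statements Corollary~\ref{mult_final_conc_cor} and Corollary~\ref{mult_final_conc_cor_prime} by a union bound, exactly paralleling the final computation in Section~\ref{sec:BSC_converse}; this is the ``simple math, completely analogical to that of the BSC case'' referred to just above the statement. Fix a typical $\by$. A union bound shows that with probability at least $1-4\l^{-\log\l/8}$ over $g\sim G$ we simultaneously have
\[
\bigl|2^k\P\nolimits^{(g)}[\bY=\by]-E\bigr|\le 3\l^{-\log\l/8}\,E
\quad\text{and}\quad
\bigl|2^k\P\nolimits^{(g)}[V_1=0,\bY=\by]-\widetilde E\bigr|\le 3\l^{-\log\l/8}\,\widetilde E,
\]
with $E=\tfrac{2^k-1}{2^\l}\prod_i q_i^{d_i}$ and $\widetilde E=\tfrac{2^{k-1}-1}{2^\l}\prod_i q_i^{d_i}$ as in the two corollaries.

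On this event, dividing the numerator estimate by the denominator estimate yields
\[
\frac{\P\nolimits^{(g)}[V_1=0,\bY=\by]}{\P\nolimits^{(g)}[\bY=\by]}\;=\;\frac{\widetilde E}{E}\cdot\frac{1\pm 3\l^{-\log\l/8}}{1\mp 3\l^{-\log\l/8}}\;=\;\frac{\widetilde E}{E}\bigl(1\pm 7\l^{-\log\l/8}\bigr),
\]
so it remains only to note that $\widetilde E/E$ is essentially $\tfrac12$. Indeed $\widetilde E/E=\tfrac{2^{k-1}-1}{2^k-1}$, and a one-line computation gives $\bigl|\widetilde E/E-\tfrac12\bigr|=\tfrac1{2(2^k-1)}\le 2^{-k}$; since $k\ge\l(1-H(W))+13\l^{1/2}\log^3\l$ in the above-capacity regime, $2^{-k}\le\l^{-\log\l}$, far below the target accuracy. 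Combining, the ratio equals $\tfrac12\bigl(1\pm 8\l^{-\log\l/8}\bigr)$, hence lies within $\l^{-\log\l/9}$ of $\tfrac12$ once $\l$ is large (in our regime $\log\l\ge\a^{-1.01}$, so the exponent $\log\l/8$ beats $\log\l/9$ by a comfortable margin that swallows the constant $4$). This proves the stated bound for the fixed typical $\by$.

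There is no real obstacle here: the only two points needing a moment's care are converting the two multiplicative two-sided estimates into the additive $\tfrac12\pm\l^{-\log\l/9}$ form and verifying that the $2^{-k}$ correction to $\widetilde E/E$ is negligible at that scale — both disposed of above using the hypothesis on $k$. The natural continuation, as in Section~\ref{sec:BSC_converse}, is to plug this ratio bound into the inequality $h(\tfrac12+x)\ge 1-4x^2$ to obtain $\E_{g\sim G}\bigl[H^{(g)}(V_1|\bY=\by)\bigr]\ge 1-O(\l^{-\log\l/8})$ for every typical $\by$, then average over typical $\by$ via Lemma~\ref{typical_lem} and finish with Markov's inequality to get the claimed high-probability lower bound on $H(V_1\mid\bY)$.
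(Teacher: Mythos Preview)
Your proposal is correct and follows exactly the approach the paper intends: a union bound over Corollaries~\ref{mult_final_conc_cor} and~\ref{mult_final_conc_cor_prime}, followed by the same ratio manipulation as in the BSC case, using $\widetilde E/E=\tfrac{2^{k-1}-1}{2^k-1}=\tfrac12-O(2^{-k})$ and the lower bound on $k$ to absorb all constants into the gap between the exponents $\log\l/8$ and $\log\l/9$. This is precisely the ``simple math, completely analogical to that of the BSC case'' the paper omits.
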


Since $h(1/2 + x) \geq 1 - 4x^2$ for any $x\in [-1/2,1/2]$ (\cite[Theorem~1.2]{Topsoe}), we then derive for a typical $\by$:
\begin{equation}
\begin{aligned}
\E_g\big[ H^{(g)}(V_1|\bY=\by)\big] = \E_g\left[h\left(\dfrac{\P\nolimits^{(g)}[V_1=0,\bY=\by]}{\P\nolimits^{(g)}[\bY=\by]}\right)\right] &\geq  (1 - 4\l^{-(\log\l)/8})\cdot (1- 4\l^{-(\log\l)/9})\\
&\geq 1 - 8\l^{-(\log\l)/9}. 
\end{aligned}
\end{equation}
Then in \eqref{mult_entropy_formula} we have
\begin{equation}
\label{mult_bound_exp_entropy}
    \begin{aligned}
 \E_g\big[H^{(g)}(V_1|\bY) \big] &= \sum_{\by\in\Y^{\l}}\P[\bY=\by|\bV=\mathbi{0}]\E_g\big[ H^{(g)}(V_1|\bY=\by)\big]\\
 &\geq \sum_{\by \text{ typical}}\P[\bY=\by|\bV=\mathbi{0}]\E_g\big[ H^{(g)}(V_1|\bY=\by)\big]\\
 &\geq (1 - \l^{-\log\l})\cdot(1 - 8\l^{-(\log\l)/9}).\\
&\geq 1 - 9\l^{-(\log\l)/9} \geq  1 - \l^{-(\log\l)/10},
\end{aligned}
\end{equation}
where we used that the probability to get a typical output on a zero input is at least $1 -  \l^{-\log\l}$ by Lemma \ref{typical_lem}, as well as the condition $\log \l \geq 20$.

Finally, using the fact that $H^{(g)}(V_1|\bY) \leq 1$, Markov's inequality, and \eqref{mult_bound_exp_entropy}, we get
\begin{align*}
\P\limits_{g\sim G}\Big[H^{(g)}(V_1|\bY) \leq 1 - \l^{-\frac{\log\l}{20}} \Big]
 = \P\big[ 1 - H^{(g)}(V_1|\bY) 
 \geq \l^{-\frac{\log\l}{20}}  \big]  \leq \dfrac{ \E\big[1 - H^{(g)}(V_1|\bY)\big]}{\l^{-(\log\l)/20}} \leq \l^{-(\log\l)/20}.
\end{align*}
This completes the proof of Theorem~\ref{thm:converse_Shannon_BMS} for the case of BMS channel with bounded output alphabet size, asuming the typicality Lemma~\ref{typical_lem} and concentration Lemma~\ref{two_concentrations_lem} which we used in Lemma~\ref{multi_sum_lem}. We now proceed to proving these.

\subsubsection{Proof that the typical set is indeed typical}
\label{typical_proof_sect}
 \begin{proof}[Proof of Lemma \ref{typical_lem}]
 We start with proving that  \eqref{close_entropy_sum} is satisfied with high probability (over the randomness of the channel). Notice that $(d_1, d_2, \dots, d_m)$ are multinomially distributed by construction, since for each of the $\l$ bits transitioned, we choose independently the subchannel $W^{(i)}$ to use with probability $q_i$, for $i = 1, 2,\dots, m$, and $d_i$ represents the number of times the channel $W^{(i)}$ was chosen. So indeed $(d_1, d_2, \dots, d_m) \sim \text{Mult}(\l, q_1, q_2, \dots, q_m)$. The crucial property of multinomial random variables we are going to use is \textit{negative association} (\cite{NA_stat}, \cite{NA2}). The (simplified version of the) fact we are going to use about negatively associated random variables can be formulated as follows:
\begin{lem}[\cite{NA_stat}, Property P$_2$]\label{NA_lemma} Let $X_1, X_2,\dots, X_m$ be negatively associated random variables. Then, for every set of $m$ positive monotone non-decreasing functions $f_1,\dots, f_m$ it holds 
\begin{equation}
    \label{NA_property}
    \E\left[\prod_{i=1}^mf_i(X_i)\right] \leq \prod_{i=1}^m \E[f_i(X_i)].
\end{equation}
\end{lem}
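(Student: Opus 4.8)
The plan is to prove the inequality by a short induction on $m$, using only the definition of negative association (NA): a tuple $(Y_1,\dots,Y_n)$ is NA if for every pair of disjoint index sets $A,B\subseteq\{1,\dots,n\}$ and every pair of coordinatewise non-decreasing functions $\phi$ and $\psi$ (for which the relevant expectations exist) one has $\E\big[\phi(Y_i:i\in A)\,\psi(Y_j:j\in B)\big]\le \E\big[\phi(Y_i:i\in A)\big]\cdot\E\big[\psi(Y_j:j\in B)\big]$. The base case $m=1$ is the trivial identity $\E[f_1(X_1)]=\E[f_1(X_1)]$.

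For the inductive step I would first record two elementary closure facts. (i) Any sub-tuple of NA random variables is again NA: this is immediate from the definition, since a pair of disjoint subsets of $\{1,\dots,m-1\}$ is in particular a pair of disjoint subsets of $\{1,\dots,m\}$, and test functions depending only on those coordinates are admissible; in particular $(X_1,\dots,X_{m-1})$ is NA. (ii) The partial product $P(x_1,\dots,x_{m-1})\coloneqq \prod_{i=1}^{m-1} f_i(x_i)$ is a nonnegative (indeed positive) function of $(x_1,\dots,x_{m-1})$ that is non-decreasing in each coordinate: fixing all but $x_j$, $P$ equals the constant $\prod_{i\neq j}f_i(x_i)\ge 0$ times the non-decreasing function $f_j(x_j)$, hence is non-decreasing in $x_j$ — this is precisely where the positivity hypothesis on the $f_i$ is used.

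Now apply the NA property of the full tuple $(X_1,\dots,X_m)$ to the disjoint blocks $A=\{1,\dots,m-1\}$ and $B=\{m\}$, with $\phi=P$ from (ii) and $\psi=f_m$, to get
\[
\E\!\left[\prod_{i=1}^{m} f_i(X_i)\right]\;\le\;\E\!\left[\prod_{i=1}^{m-1} f_i(X_i)\right]\cdot\E\big[f_m(X_m)\big].
\]
By (i) the tuple $(X_1,\dots,X_{m-1})$ is NA and each $f_i$ restricted to it is positive and non-decreasing, so the induction hypothesis gives $\E[\prod_{i=1}^{m-1} f_i(X_i)]\le\prod_{i=1}^{m-1}\E[f_i(X_i)]$. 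Since all quantities involved are nonnegative, multiplying this by $\E[f_m(X_m)]\ge 0$ and combining with the displayed inequality yields $\E[\prod_{i=1}^{m} f_i(X_i)]\le\prod_{i=1}^{m}\E[f_i(X_i)]$, which closes the induction.

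I do not expect a genuine obstacle here; the only points deserving a line of care are the two closure facts (i) and (ii) — in particular the preservation of monotonicity for the partial product, which relies on positivity — and the sign bookkeeping when multiplying the two nonnegative inequalities. Integrability is not a concern in our setting: in every place we invoke this lemma (e.g.\ in the proof of Lemma~\ref{typical_lem}) the underlying random variables $X_i$ are finitely supported and the functions $f_i$ are bounded, so all expectations and products of expectations are finite.
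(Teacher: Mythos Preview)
Your proof is correct. Note, however, that the paper does not actually prove this lemma: it is quoted as Property~P$_2$ from \cite{NA_stat} and used as a black box, so there is no ``paper's own proof'' to compare against. Your induction argument---splitting off $f_m(X_m)$ via the defining NA inequality, observing that the partial product $\prod_{i<m} f_i(x_i)$ is coordinatewise non-decreasing thanks to positivity, and invoking the induction hypothesis on the NA sub-tuple $(X_1,\dots,X_{m-1})$---is the standard proof of this fact and is exactly what one finds in the cited reference.
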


We also use the fact that since $(d_1, d_2, \dots, d_m)$ are negatively associated, then applying decreasing functions $g_i(x) = \l q_i - x$ coordinate-wise to these random variables, we will also obtain negatively associated random variables (\cite{NA2}, Proposition 7). In other words, we argue that $(\l q_1 - d_1, \l q_2 - d_2, \dots, \l q_m - d_m)$ are also negatively associated, thus we can apply Lemma \ref{NA_lemma} to these random variables.
\begin{sloppypar}

Let us now denote for convenience $\alpha_i = h(p_i)$ for $i=1, 2, \dots, m$, and so we have $0 \leq \alpha_i \leq 1$. Let also $X = \sum_{i=1}^m(\l\cdot q_i - d_i)\alpha_i$, and we now can start with simple exponentiation and Markov's inequality: for any $a$ and any $t > 0$
\begin{equation}
\label{Chernoff-type1}
    \P[X \geq a] = \P[e^{tX} \geq e^{ta}] \leq e^{-ta}\E\left[e^{tX}\right] = e^{-ta}\E\left[\prod_{i=1}^me^{t\cdot\alpha_i(\l q_i - d_i)}\right] \leq e^{-ta}\prod_{i=1}^m\E\left[e^{t\cdot\alpha_i(\l q_i - d_i)}\right],
\end{equation}
where in the last inequality we applied Lemma \ref{NA_lemma} for negatively associated random variables ${(\l q_1 - d_1, \l q_2 - d_2, \dots, \l q_m - d_m)}$, as discussed above, and positive non-decreasing functions ${f_i(x) = e^{t\cdot\alpha_i\cdot x}}$, since $\alpha_i, t \geq 0$. 
\end{sloppypar}

Next, consider the following claim, which follows from standard Chernoff-type arguments:
\begin{claim}
\label{cl:Chernoff_binom}
Let $Z\sim \text{Binom}(n, p)$, and let $b > 0$. Then $\E[e^{-b\cdot Z}] \leq e^{np\cdot(e^{-b}-1)}$.
\end{claim}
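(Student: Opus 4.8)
This is the standard moment-generating-function computation for a binomial random variable, so the plan is short and the calculation is routine. First I would write $Z = \sum_{j=1}^n Z_j$ as a sum of $n$ independent $\mathrm{Ber}(p)$ random variables, so that $\E[e^{-bZ}] = \prod_{j=1}^n \E[e^{-bZ_j}]$ by independence.

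Next I would evaluate a single factor: $\E[e^{-bZ_j}] = (1-p) + p e^{-b} = 1 + p(e^{-b}-1)$. Then I would use the elementary inequality $1 + x \le e^{x}$ (valid for all real $x$), applied with $x = p(e^{-b}-1)$, to get $\E[e^{-bZ_j}] \le e^{p(e^{-b}-1)}$. Taking the product over $j \in [n]$ yields $\E[e^{-bZ}] \le e^{np(e^{-b}-1)}$, as claimed.

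There is no real obstacle here; the only thing to be careful about is the direction of the exponent's sign (we are looking at $e^{-bZ}$ with $b>0$, i.e.\ a lower-tail control), but the inequality $1+x\le e^x$ holds regardless of the sign of $x$, so nothing special is needed. In the surrounding argument this claim will be applied with $b = t\alpha_i$ and $(n,p) = (d_i, p_i)$ to bound each factor $\E[e^{t\alpha_i(\l q_i - d_i)}] = e^{t\alpha_i \l q_i}\,\E[e^{-t\alpha_i d_i}] \le e^{t\alpha_i \l q_i}\, e^{d_i p_i(e^{-t\alpha_i}-1)}$, after which one optimizes over $t$ to obtain the desired concentration of $X = \sum_i(\l q_i - d_i)\alpha_i$ around $0$; but that optimization is carried out later and is not part of proving this claim.
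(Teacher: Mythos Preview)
Your proof is correct and essentially identical to the paper's: both decompose $Z$ into independent Bernoulli summands, compute the single-factor moment generating function as $1+p(e^{-b}-1)$, and apply $1+x\le e^x$. (One small contextual slip: in the paper's first application the claim is used with $(n,p)=(\ell,q_i)$ for $d_i\sim\mathrm{Binom}(\ell,q_i)$, giving $\E[e^{-t\alpha_i d_i}]\le e^{\ell q_i(e^{-t\alpha_i}-1)}$, not $e^{d_ip_i(e^{-t\alpha_i}-1)}$; but this is outside the claim itself.)
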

\begin{proof} We can write $Z = \sum\limits_{j=1}^n Z_j$, where $Z_j \sim\text{Bern}(p)$ are independent Bernoulli random variables. Then
\begin{equation}
\label{eq:Chernoff_binom}
\begin{aligned}
    \E\left[e^{-b\cdot Z}\right] &= 
    \E\left[\prod_{j=1}^{n}e^{-b\cdot Z_j}\right]=
    \prod_{j=1}^{n}\E\left[e^{-b\cdot Z_j}\right] =  
    \left((1-p) + p\cdot e^{-b}\right)^n  \leq  e^{np(e^{-b} - 1)},
\end{aligned}
\end{equation}
where the only inequality uses the fact that $1 + x\leq e^x$ for any $x$.
\end{proof}

Turning back to~\eqref{Chernoff-type1}, we are going to bound the terms $\E\left[e^{t\cdot\alpha_i(\l q_i - d_i)}\right]$ individually. It is clear that the marginal distribution of $d_i$ is just $\text{Binom}(\l, q_i)$, so we are able to use Claim~\ref{cl:Chernoff_binom} for it. We derive:
\begin{equation}
\label{eq:Chernoff_binom2}\noeqref{eq:Chernoff_binom2}
\begin{aligned}
    \E\left[e^{t\cdot\alpha_i(\l q_i - d_i)}\right] = e^{t\alpha_i\l q_i} \cdot \E\left[e^{-t\alpha_i\cdot d_i}\right] \overset{\eqref{eq:Chernoff_binom}}{\leq} e^{t\cdot\alpha_i\l q_i}\cdot e^{\l q_i\left(e^{-t\alpha_i}-1\right)}= e^{\l q_i\left(t\alpha_i + e^{-t\alpha_i}-1\right)} \leq e^{\l q_i\left(t + e^{-t} - 1 \right)},
\end{aligned}
\end{equation}
where the last inequality uses that $x + e^{-x}$ is increasing for $x\geq 0$ together with $0 \leq t\alpha_i \leq t$, as $t > 0$ and $0\leq \alpha_i \leq 1$. Plugging the above into \eqref{Chernoff-type1} and using $\sum_{i=1}^m q_i = 1$, we obtain
\begin{equation}
\label{eq:Chernoff_binom3}
    \P[X\geq a] \leq e^{-ta}\prod_{i=1}^me^{\l q_i\left(t + e^{-t}-1\right)}= e^{-ta}\cdot e^{\l\left(t + e^{-t}-1\right)} \leq e^{-ta + \l\frac{t^2}{2}},
\end{equation}
where we used $x + e^{-x} - 1 \leq \frac{x^2}{2}$ for any $x\geq 0$. Finally, by taking $a = 2\sqrt{\l}\log\l$, setting $t = a/\l$, and recalling what we denoted by $X$ and $\alpha_i$ above, we immediately deduce
\begin{equation}
    \P\left[\sum_{i=1}^m(\l\cdot q_i - d_i)h(p_i) \geq 2\sqrt{\l}\log\l\right] \leq e^{-\frac{a^2}{2\l}} = e^{-2\log^2\l} \leq \l^{-2\log\l}.
\end{equation}
This means that the first typicality requirement~\eqref{close_entropy_sum} holds with very high probability (over the randomness of the channel).

\vspace{0.3cm}
Let us now prove that the second typicality condition~\eqref{close_coordinates_sum} holds with high probability. For that, we condition on the values of $d_1, d_2, \dots, d_m$. We will see that \eqref{close_coordinates_sum} holds with high probability for all values of $d_1, d_2, \dots, d_m$, and then it is clear that is will imply that it also holds with high probability overall.

So, fix the values of $d_1, d_2, \dots, d_m$. Denote a random variable $Y = \sum_{i=1}^m (p_id_i - t_i)\log\left(\frac{1-p_i}{p_i}\right)$, and then our goal it to show that $Y$ is bounded above by $O(\sqrt{\l}\log^2\l)$ with high probability (over the randomness of $t_i$'s). Given the conditioning on $d_1, d_2, \dots, d_m$, it is clear that $t_i \sim \text{Binom}(d_i, p_i)$ for all $i=1, 2,\dots,m$, and they are all independent (recall that $d_i$ corresponds to the number of times subchannel $W^{(i)}$ is chosen, while $t_i$ corresponds to the number of ``flips" within this subchannel).

We split the summation in $Y$ into two parts: let $T_1 = \{i\,:\, p_i \leq \frac1{\l} \}$ and $T_2 = [m] \setminus T_1$. Then for any realization of $t_i$'s, we have $\sum\limits_{i\in T_1}(p_id_i - t_i)\log\left(\frac{1-p_i}{p_i}\right) \leq \sum\limits_{i\in T_1}p_id_i\log\left(\frac{1}{p_i}\right) \leq \sum\limits_{i\in T_1}\frac{d_i\log\l}{\l} \leq \log\l$.

Denote the second part of the summation as $Y_2 = \sum_{i\in T_2} (p_id_i - t_i)\log\left(\frac{1-p_i}{p_i}\right)$. Notice that $\log\left(\frac{1-p_i}{p_i}\right) \leq \log\left(\frac{1}{p_i}\right) \leq \log\l$ for $i\in T_2$. Denote then $\gamma_i = \log\left(\frac{1-p_i}{p_i}\right)/\log\l$, and so $0\leq \gamma_i \leq 1$ for $i\in T_2$. Finally, let $\wt{Y_2} = Y_2/\log\l = \sum_{i\in T_2}(p_id_i - t_i)\cdot\gamma_i$. 

We now prove the concentration on $\wt{Y_2}$ in almost exactly the same way as we did for $X$ above. Similarly to~\eqref{Chernoff-type1} we obtain
\begin{equation}
    \label{eq:Chernoff_again0}
    \begin{aligned}
        \P\left[\wt{Y_2} >\hspace{-1.5pt} a\right] &= \P\left[e^{t\wt{Y_2}} > e^{ta} \right] \leq e^{-ta}\E\left[e^{t\wt{Y_2}}\right]  = e^{-ta}\cdot\E\left[\prod_{i=1}^m e^{t\cdot\gamma_i\cdot (p_id_i - t_i)}\right] = e^{-ta}\cdot\prod_{i=1}^m\E\left[ e^{t\cdot\gamma_i\cdot (p_id_i - t_i)}\right],
    \end{aligned}
\end{equation}
where the last equality holds because we conditioned on $d_1, d_2, \dots, d_m$, and so $t_1, t_2, \dots, t_m$ are independent, as discussed above. Next, Claim~\ref{cl:Chernoff_binom} applied for $t_i \sim\text{Binom}(d_i, p_i)$ and $t\cdot\gamma_i > 0$ for any $t > 0$ gives $\E\left[e^{-t\gamma_i\cdot t_i}\right] \leq e^{d_ip_i(e^{-t\gamma_i}-1)}$, and so similarly to~\eqref{eq:Chernoff_binom}--\eqref{eq:Chernoff_binom3} derive from~\eqref{eq:Chernoff_again0}
\begin{equation}
    \label{eq:Chernoff_again}
    \begin{aligned}
        \P\left[\wt{Y_2} >\hspace{-1.5pt} a\right] \leq e^{-ta}\cdot\hspace{-2.5pt}\prod_{i\in T_2}e^{p_id_i\left(t\gamma_i + e^{-t\gamma_i} -1\right)} \leq e^{-ta}\cdot\hspace{-2.5pt}\prod_{i\in T_2}e^{p_id_i\left(t + e^{-t} -1\right)} \leq e^{-ta + \sum_{i\in T_2}p_id_i\cdot t^2/2} \leq e^{-ta + \l t^2/2}
    \end{aligned}
\end{equation}
for any $t > 0$, where we used $0 \leq \gamma_i \leq 1$ for $i\in T_2$, $p_i < 1$, and $\sum_{i\in T_2}d_i \leq \l$. Therefore, by taking again $a = 2\sqrt{\l}\log\l$ and $t = a/\l$, obtain
\begin{align}
    \P\left[Y_2 \geq 2\sqrt{\l}\log^2\l\right] = \P\left[\wt{Y_2} \geq 2\sqrt{\l}\log\l\right] \leq \l^{-2\log\l}.
\end{align}
Since $Y \leq \log\l + Y_2$, we conclude that $Y \leq 3\sqrt{\l}\log^2\l$ with probability at least $1 - \l^{-2\log\l}$ over the randomness of the channel. 

\vspace{0.3cm}
Since both~\eqref{close_entropy_sum} and~\eqref{close_coordinates_sum} hold with probability at least $1 - \l^{-2\log\l}$, the union bound implies that these two conditions hold simultaneously with probability at least $1 - 2\l^{-2\log\l} \geq 1 - \l^{-\log\l}$.
\end{proof}

\subsubsection{Concentration Lemma}
\label{sec:concentration}
\begin{lem}
\label{two_concentrations_lem}
Let $\chi \sim\O = \text{Binom}(d_1, p_1)\times\text{Binom}(d_2, p_2)\times\dots\times\text{Binom}(d_m, p_m)$, where $d_i$'s are nonnegative integers for $i\in[m]$, $p_i \leq 1/2$, $\sum_{i=1}^md_i = \l$, and $m \leq \sqrt{\l}$. Let also $\l$ be large so that $\log\l\geq 8$.
Then the following holds with probability at least $1 - \l^{-(\log\l)/4}$:
    \begin{align}
         \sum_{i=1}^md_i\left(h(p_i) - h\left(\frac{\chi_i}{d_i}\right)\right) &\leq 7\l^{1/2}\log^3\l. \label{conc1}
    \end{align}
 \end{lem}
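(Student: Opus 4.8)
The plan is to rewrite each summand $d_i\big(h(p_i)-h(\chi_i/d_i)\big)$ using a binary Kullback--Leibler divergence, which exposes a cancellation that a naive triangle-inequality estimate would throw away. Recall the elementary identity $h(p)-h(q) = D(q\,\|\,p) + (p-q)\log\frac{1-p}{p}$, valid for $p\in(0,1)$ and $q\in[0,1]$, where $D(q\,\|\,p)=q\log\frac qp+(1-q)\log\frac{1-q}{1-p}\ge 0$; it follows at once from $D(q\,\|\,p)=-h(q)-q\log p-(1-q)\log(1-p)$ and $D(p\,\|\,p)=0$. Applying this with $q=\chi_i/d_i$ and $p=p_i$ (indices with $p_i=0$ contribute $0$ and are discarded) and multiplying by $d_i$ gives $d_i\big(h(p_i)-h(\tfrac{\chi_i}{d_i})\big)=d_i D(\tfrac{\chi_i}{d_i}\,\|\,p_i)+(d_ip_i-\chi_i)\log\frac{1-p_i}{p_i}$. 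Summing over $i$ splits the left side of \eqref{conc1} into $A:=\sum_i d_i D(\tfrac{\chi_i}{d_i}\,\|\,p_i)\ge 0$ and $B:=\sum_i(d_ip_i-\chi_i)\log\frac{1-p_i}{p_i}$, so it suffices to upper bound $A$ and $B$ separately, each with failure probability at most $\tfrac12\l^{-(\log\l)/4}$.

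For the linear term $B$, the plan is to reuse the Chernoff computation already done in the proof of Lemma~\ref{typical_lem} when establishing \eqref{close_coordinates_sum}: $B$ is literally the variable $Y=\sum_i(p_id_i-t_i)\log\frac{1-p_i}{p_i}$ analyzed there, with $t_i$ playing the role of $\chi_i\sim\text{Binom}(d_i,p_i)$ (and here the $d_i$ are the given fixed integers rather than conditioned-on random values, which only simplifies matters). Splitting by whether $p_i\le 1/\l$, exponentiating, and invoking the binomial MGF bound of Claim~\ref{cl:Chernoff_binom} yields $\P_{\chi\sim\O}\big[B>3\l^{1/2}\log^2\l\big]\le\l^{-2\log\l}$.

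For the divergence term $A$, the plan is to show that with overwhelming probability every single summand is tiny. Fix $i$ and set $\tau=\log^2\l$. By Fact~\ref{binom_lemma}, $\binom{d_i}{v}\le 2^{d_i h(v/d_i)}$ for every $0\le v\le d_i$, hence $\P[\chi_i=v]=\binom{d_i}{v}p_i^{v}(1-p_i)^{d_i-v}\le 2^{-d_iD(v/d_i\,\|\,p_i)}$; summing over the at most $d_i+1$ values $v$ with $d_iD(v/d_i\,\|\,p_i)>\tau$ gives $\P[d_iD(\tfrac{\chi_i}{d_i}\,\|\,p_i)>\tau]\le(d_i+1)2^{-\tau}$. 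A union bound over $i\in[m]$, using $\sum_i(d_i+1)\le\l+m\le 2\l$, then shows that $d_iD(\tfrac{\chi_i}{d_i}\,\|\,p_i)\le\tau$ for all $i$ simultaneously, hence $A\le m\tau\le\l^{1/2}\log^2\l$ (since $m\le\l^{1/2}$), with failure probability at most $2\l\cdot 2^{-\log^2\l}$. Combining, with probability at least $1-\l^{-2\log\l}-2\l\cdot2^{-\log^2\l}\ge 1-\l^{-(\log\l)/4}$ (for $\l$ a sufficiently large power of $2$) we get $A+B\le \l^{1/2}\log^2\l+3\l^{1/2}\log^2\l=4\l^{1/2}\log^2\l\le 10\,\l^{1/2}\log^3\l$, which is \eqref{conc1}.

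The step I expect to carry the weight of the argument is the KL rewriting itself, and recognizing that it is needed. A direct bound $d_i|h(p_i)-h(\chi_i/d_i)|\le d_i\,h(|\chi_i/d_i-p_i|)\lesssim |\chi_i-d_ip_i|\log\l$ (using Proposition~\ref{prop:entropy_differ} and $h(z)\le 2z\log\tfrac1z$) followed by Cauchy--Schwarz gives only $\sum_i(\cdots)\lesssim\sqrt{m\l}\,\mathrm{polylog}(\l)=\l^{3/4}\,\mathrm{polylog}(\l)$, which is far too weak: the point is that $\sum_i d_i(h(p_i)-h(\chi_i/d_i))$ is nowhere near $\sum_i d_i|h(p_i)-h(\chi_i/d_i)|$, because almost all of it cancels inside the linear term $B$, and once $B$ is extracted the residual $A$ is a sum of only $m\le\sqrt\l$ terms, each concentrated at a $\mathrm{polylog}$-size value. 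Everything else is the same Chernoff bookkeeping already used in the proof of Lemma~\ref{typical_lem}.
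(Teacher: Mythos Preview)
Your proof is correct and takes a genuinely different route from the paper. The paper's proof splits indices into $F_1=\{i:p_i\le 4\log^2\l/d_i\}$ and $F_2=[m]\setminus F_1$; on $F_1$ it bounds $\sum_{i\in F_1}d_ih(p_i)$ directly, while on $F_2$ it first restricts each $\chi_i$ to an interval of width $O(\sqrt{d_ip_i}\log\l)$ around $d_ip_i$ via Chernoff, passes to the corresponding truncated distributions $\D_i$, applies Hoeffding's inequality to $X=\sum_{i\in F_2}d_i(h(p_i)-h(\chi_i/d_i))$, and finally controls $\E[X]$ through a converse Jensen inequality (Lemma~\ref{converse_Jensen}) together with moment comparisons between $\D_i$ and $\text{Binom}(d_i,p_i)$. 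Your KL identity $h(p)-h(q)=D(q\|p)+(p-q)\log\frac{1-p}{p}$ circumvents all of this: the linear part $B$ is literally the variable already handled in Lemma~\ref{typical_lem}, and the residual $A=\sum_i d_iD(\chi_i/d_i\|p_i)$ is a sum of only $m\le\sqrt\l$ terms, each of which obeys the one-line Sanov bound $\P[\chi_i=v]\le 2^{-d_iD(v/d_i\|p_i)}$ coming straight from Fact~\ref{binom_lemma}. The upshot is that you avoid truncated distributions, Hoeffding, and converse Jensen entirely, and in fact obtain the sharper bound $4\l^{1/2}\log^2\l$ (one $\log\l$ factor better than the paper's $10\l^{1/2}\log^3\l$). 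The paper's approach, on the other hand, has the virtue of working directly with entropies throughout and making explicit the ``good'' set $\T$ of values of $\chi$, which is reused verbatim in Lemma~\ref{multi_sum_lem}; your argument also delivers such a set (the event where $A\le m\log^2\l$ and $B\le 3\l^{1/2}\log^2\l$), so this is not a real difference.
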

 \begin{proof}
First, notice that we can disregard all the indices $i$ for which $d_i = 0$, as they do not contribute anything to the LHS of~\eqref{conc1}. So from now on, we assume for simplicity that $d_i \geq 1$ for all $i = 1, 2, \dots, m$.

Next, we split the interval $[1:m]$ into two parts. In the first part the value of $d_i\cdot p_i$ is going to be small, and the sum of $d_ih(p_i)$ will also be small. For the second part, when $d_i\cdot p_i$ is large enough, we will be able to apply some concentration arguments. Denote:
 \begin{equation}
 \label{split_intervals_lem}
     \begin{aligned}
          F_1 &\coloneqq \left\{i\,:\ p_i \leq \frac{4\log^2\l}{d_i}\right\},\\
          F_2 &\coloneqq \{1, 2,\dots,m\} \setminus F_1.
     \end{aligned}
 \end{equation}
 Then
 \begin{align}
 \sum_{i=1}^md_i\left(h(p_i) - h\left(\frac{\chi_i}{d_i}\right)\right) &\leq \sum_{i\in F_1}d_ih(p_i) + \sum_{i\in F_2}d_i\left(h(p_i) - h\left(\frac{\chi_i}{d_i}\right)\right). \label{split_sum1}
 \end{align}

Let us deal with the summation over $F_1$ first. Split this set even further: $F_1^{(1)} = \{i\in F_1 \,:\, d_i \geq 8\log^2\l\}$, and $F_1^{(2)} = F_1\setminus F_1^{(1)}$. Then for any $i \in F_1^{(1)}$ we use $h(p_i) \leq 2p_i\log\frac{1}{p_i}$ from Proposition~\ref{prop:entropy_half}, since $p_i \leq 1/2$. For any $i\in F_1^{(2)}$ we just use $h(p_i) \leq 1$. Combining these, obtain
\begin{align}
    \sum_{i\in F_1}d_ih(p_i) \leq \sum_{i\in F_1^{(1)}} 2d_ip_i\log\dfrac{1}{p_i} + \sum_{i\in F_1^{(2)}}d_i &\leq \sum_{i\in F_1^{(1)}} 8\log^2\l\cdot\log\left(\dfrac{d_i}{4\log^2\l}\right)  + \left\lvert F_1^{(2)}\right\lvert\cdot 8\log^2\l\\
    &\leq  8\log^2\l \cdot \sum_{i\in F_1^{(1)}} \log d_i  +  \left\lvert F_1^{(2)}\right\lvert\cdot 8\log^2\l.
    % \leq 8\l^{1/2}\log^3\l. 
    \label{split2_part2}
\end{align}
For the second summand in the RHS above, we will just use $\left\lvert F_1^{(2)}\right\lvert \leq m \leq \l^{1/2}$. For the first summand, we use Jensen's inequality, the fact that $\sum\limits_{i=1}^m d_i = \l$, and $\left\lvert F_1^{(1)}\right\lvert \leq m \leq \l^{1/2}$ to derive
\[ \sum_{i\in F_1^{(1)}} \log d_i \leq \left\lvert F_1^{(1)}\right\lvert\cdot\log\left(\dfrac{\sum_{i\in F_1^{(1)}} d_i}{\left\lvert F_1^{(1)}\right\lvert}\right) \leq \left\lvert F_1^{(1)}\right\lvert\cdot\log\left(\dfrac{\l}{\left\lvert F_1^{(1)}\right\lvert}\right) \leq \l^{1/2}\log\left(\l^{1/2}\right) = \frac12\l^{1/2}\log\l,\]
where the last inequality uses that $x\log(\l/x)$ is increasing while $x \leq \l/e$. Therefore, in~\eqref{split2_part2} obtain
\begin{align}
    \sum_{i\in F_1}d_ih(p_i) \leq  8\log^2\l \cdot \sum_{i\in F_1^{(1)}} \log d_i  +  \left\lvert F_1^{(2)}\right\lvert\cdot 8\log^2\l \leq 5\l^{1/2}\log^3\l,
    % \leq 8\l^{1/2}\log^3\l. 
    \label{split2_part3}
\end{align}
where we also used $8 \leq \log\l$.
 
Therefore, the first part of the RHS of~\eqref{split_sum1} is always bounded by $5\l^{1/2}\log^3\l$. We will now deal with the remaining summations over $i\in F_2$.

For any $i\in F_2$, we know that $d_ip_i\geq4\log^2\l$. Now we apply the multiplicative Chernoff bound~\eqref{new_eq:multiplicative_Chernoff} for $\chi_i\sim\text{Binom}(d_i, p_i)$ and 
$\delta = \frac{\log\l}{\sqrt{d_ip_i}}$ to get
\begin{equation}
\label{indiv_Chernoff}
    \P_{\chi_i}\big[|\chi_i - d_ip_i| \geq \sqrt{d_ip_i}\log\l\big] \leq 2e^{-\log^2 \l/3} 
    \le \l^{-(\log\l)/3} \qquad \text{if}\ \log\l \leq \sqrt{d_ip_i},
\end{equation}
 where the last inequality holds for $\log\l > 3$ because the $\log$ in the exponent is to base $2$.
 The condition ${\log\l \leq \sqrt{d_ip_i}}$ is needed in order to have $\delta \leq 1$ for the multiplicative Chernoff bound~\eqref{new_eq:multiplicative_Chernoff} to hold.
 
 Then, by the union bound, we derive
 \begin{equation}
 \label{mult_Chernoff}
     \P_{\chi\sim\Omega}\left[|\chi_i - d_ip_i| \geq \sqrt{d_ip_i}\log\l \text{ for some } i \in F_2 \right] \leq |F_2|\cdot\l^{-(\log\l)/3}  \leq \l^{-(\log\l)/3 + 1/2}.
 \end{equation}
 
 Define the sets $\T_1^{(i)}$ for all $i=1,2,\dots,m$ as follows:
 \begin{equation}
 \label{chernoff_intervals}
 \begin{aligned}
     &\T_1^{(i)}\coloneqq \left\{ s_i \in [0:d_i]\, :\ |s_i - d_ip_i| \leq \sqrt{d_ip_i}\log\l\right\}, \quad &\text{for } i \in F_2;\\
     &\T_1^{(i)}\coloneqq [0:d_i], \quad &\text{for } i \notin F_2.
 \end{aligned}
 \end{equation}
 and let 
 \begin{equation}
 \label{theta_def}
      \theta_i \coloneqq \P[\chi_i\in\T_1^{(i)}].
 \end{equation}
 Then by \eqref{indiv_Chernoff} we have
\begin{equation}
\label{chernoff_probs}
    \begin{aligned}
         &\theta_i  \geq 1 -\l^{-(\log\l)/3},\qquad &\text{for } i \in F_2;\\
     &\theta_i = 1, \qquad &\text{for } i \notin F_2.
    \end{aligned}
\end{equation}

 Finally, define
 \begin{equation}
 \label{mult_chernoff_prob}
     \theta \coloneqq \prod_{i=1}^m\theta_i = \prod_{i\in F_2}\theta_i = \prod_{i\in F_2}\P[\chi_i\in\T_1^{(i)}] = \P_{\chi\sim\O}[\chi_i \in \T_1^{(i)} \text{ for all } i \in F_2] \geq 1 - \l^{-(\log\l)/3 + 1/2},
 \end{equation}
 where the last inequality is a direct implication of \eqref{mult_Chernoff}.
 
We will now define a set of new probability distributions $\D_i$ for all $i=1,2,\dots,m$, as binomial distributions $\text{Binom}(d_i, p_i)$ restricted to intervals $\T_1^{(i)}$. Formally, let us write
 \begin{equation}
 \label{truncated_distr}
 \begin{aligned}
      \P_{\eta_i\sim\D_i}\big[\eta_i=x\big] = \begin{cases} 0, \quad &\text{if } x \notin \T_1^{(i)};\\
      \P_{\chi_i\sim\text{Binom}(d_i, p_i)}\big[\chi_i=x\big]\cdot\theta_i^{-1}, \quad &\text{if } x \in \T_1^{(i)}.
      \end{cases}
 \end{aligned}
 \end{equation}
 (So to get $\D_i$ we just took a distribution Binom$(d_i, p_i)$, truncated it so it does not have any mass outside of $\T_1^{(i)}$, and rescaled appropriately.)
 
 Next, define a product distribution $\D \coloneqq \bigtimes_{i=1}^m\D_i$ on the set $\T_1 \coloneqq \bigtimes_{i=1}^m\T_1^{(i)}$. Notice now that it is trivial that for any subset $\Ra\subseteq \T_1$ it holds
 \begin{equation}
 \label{distributions_almost_equal}
      \P_{\chi\sim\O}[\chi\in\Ra] = \P_{\eta\sim\D}[\eta\in\Ra] \cdot\theta.
 \end{equation}
 Since $\theta$ is very close to $1$, it suffices to prove the claims for $\D$ instead of $\O$.
 
\vspace{0.25cm}

Recall that our goal was to show that $\sum_{i\in F_2}d_i\left(h(p_i) - h\left(\frac{\chi_i}{d_i}\right)\right)$ (the second part from \eqref{split_sum1}) is bounded above by $O(\l^{1/2}\log^3\l)$ with high probability, when $\chi\sim\O$. Instead now let us show that this summation is small with high probability when $\chi\sim\D$, and then use the arguments above to see that there is not much of a difference when $\chi\sim\O$.

\begin{claim}
\label{entropy_distortion_claim}
Let $i\in F_2$ and $\chi_i \sim \D_i$. Then 
\begin{align}
\left|d_i\left(h(p_i) - h\left(\frac{\chi_i}{d_i}\right)\right)\right| &\leq \sqrt{d_ip_i}\log^2\l . \label{claim_ent1}
\end{align}

\end{claim}
\begin{proof} First,  $\left|\frac{\chi_i}{d_i} - p_i\right| \leq \sqrt{\frac{p_i}{d_i}}\log\l$ for $\chi_i\sim\D_i$ by definition of the distribution $\D_i$. Now, for $i\in F_2$, $p_i \geq \frac{4\log^2\l}{d_i}$, from which it follows that $\frac{p_i}2 \geq \sqrt{\frac{p_i}{d_i}}\log\l$, and therefore $\frac{p_i}{2} \leq \frac{\chi_i}{d_i} \leq \frac{3p_i}{2}$. We then use the concavity of the binary entropy function on $[0, 1]$. For a concave differentiable function $f$ on an interval $[a, b]$, one has $|f(b) - f(a)| \leq |b-a|\cdot \max\left\{|f'(a)|, |f'(b)|\right\}$, which follows from a standard inequality $f(y) \leq f(x)  + f'(x)(y - x)$ applied for $(a, b)$ or $(b, a)$, depending on which of $f(a)$ and $f(b)$ is larger. We apply this for the binary entropy function $h(\cdot)$ and one of the intervals $\left[\frac{\chi_i}{d_i}, p_i\right]$ and $\left[p_i, \frac{\chi_i}{d_i}\right]$, depending on which of $\frac{\chi_i}{d_i}$ and $p_i$ is smaller:
\begin{equation*}
     \left|h\left(\frac{\chi_i}{d_i}\right) - h(p_i)\right| \leq \left|\dfrac{\chi_i}{d_i} - p_i\right|\cdot \text{max}\left\{\left\lvert\dfrac{dh}{dx}(p_i)\right\lvert, \left\lvert\dfrac{dh}{dx}\left(\frac{\chi_i}{d_i}\right)\right\lvert \right\}.
\end{equation*}
Now, both $p_i$ and $\frac{\chi_i}{d_i}$ lie in the interval $\left[\frac{p_i}{2}, \frac{3p_i}{2}\right]$, which is contained in $\left[\frac{p_i}{2}, 1-\frac{p_i}{2}\right]$, as $p_i < 1/2$. Out of symmetry of $h$ around $1/2$, it follows that the maximal value of $\left\lvert\frac{dh}{dx}(\cdot)\right\lvert$ on the interval $\left[\frac{p_i}{2}, 1 -\frac{p_i}{2}\right]$ is attained at $\frac{p_i}{2}$. Therefore, we have

\begin{equation*}
    \begin{aligned}
 \left|h\left(\frac{\chi_i}{d_i}\right) - h(p_i)\right| &\leq \left|\dfrac{\chi_i}{d_i} - p_i\right|\cdot \text{max}\left\{\left\lvert\dfrac{dh}{dx}(p_i)\right\lvert, \left\lvert\dfrac{dh}{dx}\left(\frac{\chi_i}{d_i}\right)\right\lvert \right\} \\
 &\leq \sqrt{\frac{p_i}{d_i}}\log\l\cdot \left\lvert\dfrac{dh}{dx}\left(\frac{p_i}2\right)\right\lvert = \sqrt{\frac{p_i}{d_i}}\log \l \cdot \log\dfrac{1-p_i/2}{p_i/2}\\
 &\leq \sqrt{\frac{p_i}{d_i}}\log \l \cdot \log\frac{2}{p_i} \leq \sqrt{\frac{p_i}{d_i}}\log\l\cdot\log\left(\dfrac{d_i}{2\log^2\l}\right) \leq \sqrt{\frac{p_i}{d_i}}\log^2 \l,
 \end{aligned}
 \end{equation*}
where the penultimate inequality follows from $p_i \geq \frac{4\log^2\l}{d_i}$ for $i \in F_2$, and the last inequality uses $\frac{d_i}{2\log^2\l} \leq \ell$, as $\sum_{i=1}^md_i = \ell$ and $d_i$'s are nonnegative. Therefore, \eqref{claim_ent1} follows. 
\end{proof}
\vspace{0.3cm}

Let $\chi\sim \D$ here and further. Define for convenience new random variables $X_i = d_i\left(h(p_i) - h\left(\frac{\chi_i}{d_i}\right)\right)$ for all $i \in F_2$, and let also $X = \sum_{i\in F_2}X_i = \sum_{i\in F_2}d_i\left(h(p_i) - h\left(\frac{\chi_i}{d_i}\right)\right)$.

\begin{claim}
\label{sum_Hoeffding}
With probability at least $1 - \l^{-\log\l}$ it holds that
 \[ X - \E[X] \leq \l^{1/2}\log^3\l \]
\end{claim}
\begin{proof}
Obviously all the $X_i$'s are independent, and also $X_i \in \left[- \sqrt{d_ip_i}\log^2\l, \sqrt{d_ip_i}\log^2\l \right]$ by Claim~\ref{entropy_distortion_claim}. Then we can apply Hoeffding's inequality for the sum of bounded independent random variables (\cite[Theorem 2]{Hoeffding1963}), and obtain
\begin{equation}
\begin{aligned}
     \P_{\chi\sim\D}\Big[X - \E[X] \geq \l^{1/2}\log^3\l\Big] &\leq \text{exp}\left(-\dfrac{2\l\log^6\l}{\sum\limits_{i\in F_2}(2\sqrt{d_ip_i}\log^2\l)^2}\right)  \\
     &=  \text{exp}\left(-\dfrac{2\l\log^6\l}{\log^4\l\cdot\sum\limits_{i\in F_2}(4d_ip_i)}\right) \leq  \text{exp}\left(-\dfrac{\l\log^2\l}{\sum\limits_{i\in F_2}d_i}\right)\\
     &\leq e^{-\log^2\l} \leq \l^{-\log\l},
     \end{aligned}
\end{equation}
 where we used $p_i\leq 1/2$ and $\sum\limits_{i\in F_2}d_i \leq \sum\limits_{i=1}^md_i = \l$ in the second and third inequalities, respectively.
\end{proof}

 So by now we proved that $X = \sum_{i\in F_2}d_i\left(h(p_i) - h\left(\frac{\chi_i}{d_i}\right)\right)$ does not deviate much from its expectation. What we are left to show now is that $\E[X]$ is not very large by itself. 
 
 The following two claims show that the first moment and mean absolute deviation of the distribution $\D_i$ are close to those of $\O_i$. This easily follows from the definition~\eqref{truncated_distr} of $\D_i$, and the proofs are deferred to Appendix~\ref{app:moments}

 \begin{claim}
 \label{cl:exp_of_Di}
 Let $i\in F_2$. Then $\left|\E\limits_{\chi_i \sim \D_i}\left[\frac{\chi_i}{d_i}\right] - p_i\right| \leq \frac1{d_i}$.
 \end{claim}
\begin{claim} 
\label{cl:mean_absolute}
Let $\chi_i \sim \D_i$ and $\eta_i \sim \O_i$ for $i\in F_2$. Then $\E\Big\lvert\chi_i - \E[\chi_i]\Big\lvert \leq \E\Big\lvert\eta_i - \E\left[\eta_i\right]\Big\lvert + 1$.
\end{claim}

These observations allow us we prove the following
     
\begin{claim}
\label{cl:hE-Eh}
 Let $i\in F_2$, and $\chi_i \sim \D_i$. Then $h\left(\E\left[\frac{\chi_i}{d_i}\right]\right) - \E\left[h\left(\frac{\chi_i}{d_i}\right)\right] \leq \frac{5\log\l}{d_i}$.
 \end{claim}
 \begin{proof}
  Unfortunately, Jensen's inequality works in the opposite direction for us here. However, we use some form of converse Jensen's from \cite{Dragomir}, which says the following:
 \begin{lem}[Converse Jensen's inequality, \cite{Dragomir}, Corollary 1.8]\label{converse_Jensen} Let $f$ be a concave differentiable function on an interval $[a, b]$, and let $Z$ be a (discrete) random variable, taking values in $[a, b]$. Then
 \[ 0 \leq f(\E[Z]) - \E[f(Z)] \leq \frac12\left(f'(a) - f'(b)\right)\cdot\E\left|Z - \E[Z]\right|.   \]
 \end{lem}
 
 We apply it here for the concave binary entropy function $h$, and  random variable $Z = \frac{\chi_i}{d_i}$ for $\chi_i\sim\D_i$, which takes values in $[a,b] \coloneqq \left[p_i - \sqrt{\frac{p_i}{d_i}}\log\l, p_i + \sqrt{\frac{p_i}{d_i}}\log\l \right]$. Recall also that for $i\in F_2$, $p_i \geq \frac{4\log^2\l}{d_i}$ and then $\frac{p_i}2 \geq \sqrt{\frac{p_i}{d_i}}\log\l$, therefore $a = p_i - \sqrt{\frac{p_i}{d_i}}\log\l \geq \frac{p_i}{2}$, and also ${b = p_i + \sqrt{\frac{p_i}{d_i}}\log\l \leq \frac{3p_i}{2}}$. Using the mean value theorem, for some $c\in [a,b] \subseteq \left[\frac{p_i}{2}, \frac{3p_i}{2}\right]$ we have 
 \[ h'(a) - h'(b) = (b-a)\cdot(-h''(c)) \leq 2\sqrt{\frac{p_i}{d_i}}\log\l\cdot (-h''(c)).\] 
 Now we look at $(-h''(c)) = \frac{1}{c(1-c)\ln 2}$ for some $c \in \left[\frac{p_i}{2}, \frac{3p_i}{2}\right]$. As $p_i < 1/2$, it follows $\left[\frac{p_i}{2}, \frac{3p_i}{2}\right] \subseteq \left[\frac{p_i}{2}, 1-\frac{p_i}{2}\right]$. Using the symmetry of a function $x(1-x)$ around $1/2$, we conclude that its minimal value over the interval $\left[\frac{p_i}{2}, \frac{3p_i}{2}\right]$ is attained at $p_i/2$. Thus derive $c(1-c) \geq \frac{p_i}{2}\left(1 - \frac{p_i}{2}\right) \geq \frac{3p_i}{8}$, since $p_i < 1/2$. And so $(-h''(c)) = \frac{1}{c(1-c)\ln 2} \leq \frac{8}{p_i\cdot 3\ln 2} \leq \frac{4}{p_i}$. Therefore
 \[ h'(a) - h'(b) \leq \dfrac{8\log\l}{\sqrt{d_ip_i}}.  \]
 Finally, Claim~\ref{cl:mean_absolute} gives $\E\left|Z - \E[Z]\right| \leq \E\left|\frac{Z_2}{d_i} - \E\left[\frac{Z_2}{d_i}\right]\right| + \frac1{d_i}$ for $Z_2\sim\text{Binom}(d_i, p_i)$, and so
 \[ \E\left|Z - \E[Z]\right| \leq \frac1{d_i}\E\left|Z_2 - \E[Z_2]\right|+\frac1{d_i} \leq \frac1{d_i}\sqrt{\E[(Z_2 - \E[Z_2])^2]}+\frac1{d_i} = \sqrt{\frac{p_i(1-p_i)}{d_i}}+\frac1{d_i} \leq \sqrt{\frac{p_i}{d_i}} + \frac1{d_i}.   \] 
 Putting all this together, Lemma \ref{converse_Jensen} gives us
 \[ 0 \leq h\left(\E\left[\frac{\chi_i}{d_i}\right]\right) - \E\left[h\left(\frac{\chi_i}{d_i}\right)\right] \leq \dfrac12\cdot \dfrac{8\log\l}{\sqrt{d_ip_i}} \cdot \left(\sqrt{\frac{p_i}{d_i}} + \frac1{d_i}\right) = \dfrac{4\log\l}{d_i} + \dfrac{4\log\l}{d_i\sqrt{d_ip_i}} \leq \dfrac{5\log\l}{d_i},  \]
where the last step uses $\sqrt{p_id_i}\geq 2\log\l$ for $i \in F_2$.
 \end{proof}
 
 We can now use the above claims and Proposition~\ref{prop:entropy_differ} to bound the expectation of $X$:
\begin{equation}
 \begin{aligned}
 \label{eq:X_exp}
      \E[X] = \sum_{i\in F_2} d_i\left(h(p_i) - \E\left[h\left(\frac{\chi_i}{d_i}\right)\right]\right) &\leq  \sum_{i\in F_2} d_i\left(h(p_i) - h\left(\E\left[\frac{\chi_i}{d_i}\right]\right) + \dfrac{5\log\l}{d_i}\right) \\
      &\leq  \sum_{i\in F_2} d_i\left(h\left(\left\lvert p_i - \E\left[\frac{\chi_i}{d_i}\right]\right\lvert\right) + \dfrac{5\log\l}{d_i}\right) \\
      &\leq \sum_{i\in F_2}d_i\left(h\left(\frac1{d_i}\right) + \frac{5\log\l}{d_i}\right) \\
      &\leq \sum_{i\in F_2}d_i\left(\frac{2}{d_i}\log d_i + \frac{5\log\l}{d_i}\right) \leq 7\l^{1/2}\log\l \leq \l^{1/2}\log^3\l,
 \end{aligned}
\end{equation}
where the first inequality is from Claim~\ref{cl:hE-Eh}, the second is by Proposition~\ref{prop:entropy_differ}, the third one follows from Claim~\ref{cl:exp_of_Di}, the fourth inequality is from Proposition~\ref{prop:entropy_half}, and the next ones follow from $d_i \leq \l$, $|F_2| \leq m \leq \l^{1/2}$, and $\log \l > 8$ by the conditions for this Lemma~\ref{two_concentrations_lem}.

 So we showed in Claim~\ref{sum_Hoeffding} that $X$ does not exceed its expectations by more than $\l^{1/2}\log^3\l$ with high probability (over $\chi\sim\D$), and also that $E[X]$ is bounded by $\l^{1/2}\log^3\l$ in~\eqref{eq:X_exp}, and therefore $X$ does not exceed $2\l^{1/2}\log^3\l$ with high probability. Specifically, it means that there exists $\T\subseteq \T_1$, such that $\P_{\chi\sim\D}[\chi\in\T] \geq 1 - \l^{-\log\l}$, and that for any $\bs\in\T$ it holds $\sum\limits_{i\in F_2}d_i\left(h(p_i) - h\left(\frac{s_i}{d_i}\right)\right) \leq 2\l^{1/2}\log^3\l$. Recall that $\sum\limits_{i\in F_1}d_i h(p_i) \leq 5\l^{1/2}\log^3\l$ as we showed in~\eqref{split2_part3}. Thus, by summing these two inequalities, we conclude from~\eqref{split_sum1} that $\sum_{i=1}^{m} d_i\left(h(p_i) - h\left(\frac{s_i}{d_i}\right)\right) \leq 7\l^{1/2}\log^3\l$ for any $\bs\in\T$. 
 
 Finally, the last step is to return back from the product of ``truncated binomials" $\D$ to the original product of binomials $\O$. As we defined the set $\T$ above, we have $\P_{\chi\sim\D}[\chi\in\T] \geq 1 - \l^{-\log\l}$. But by \eqref{distributions_almost_equal} the distributions $\O$ and $\D$ are very close to each other, and therefore we obtain:
 \begin{equation}
\label{prob_of_tau_is_big}
 \P_{\chi\sim\O}[\chi\in\T] = \P_{\chi\sim\D}[\chi\in\T]\cdot\theta \geq \left(1 - \l^{-\log\l}\right)\left(1 - \l^{-(\log\l)/3 + 1/2}\right) \geq 1 - \l^{-(\log\l)/4},
 \end{equation}
where we used the bound~\eqref{mult_chernoff_prob} on $\theta$ for the first inequality and $\log\l \geq 8$ for the second one.
\end{proof}

\subsection{Arbitrary alphabet size}
\label{sec:BMS_any_alphabet}
In this section we finish the proof of Theorem~\ref{thm:converse_Shannon_BMS} for the general BMS channel using the results from the previous section. 

For BMS channels with large output alphabet size we will use binning of the output, however we will do it in a way that \textit{upgrades} the channel, rather then degrades it (recall Definition~\ref{def:degrad}).
Specifically, we will employ the following statement:

\begin{prop}
\label{prop:binned_upgraded_channel}
Let $W$ be any BMS channel. Then there exists another BMS channel~$\wW$ with the following properties:
\begin{enumerate}[label=(\roman*)]
    \item Output alphabet size of $\wW$ is at most $2\sqrt{\l}$;
    \item $\wW$ is \textit{upgraded} with respect to $W$, i.e. $W \preceq \wW$;
    \item $H(\wW) \geq H(W) - \dfrac{\log\l}{\l^{1/2}}$.
\end{enumerate}
\end{prop}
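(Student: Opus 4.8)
The plan is to represent $W$ as a mixture of binary symmetric channels and then merge the underlying subchannels into at most $\sqrt{\l}$ groups, replacing each group by its \emph{most reliable} member (the one with smallest crossover probability); the crucial design choice is to form the groups so that the binary entropy $h(\cdot)$ varies by at most $1/\sqrt{\l}$ within each group, which is exactly what keeps the entropy loss under control. First I would invoke the standard fact (see \cite{Land, Korada_thesis}) that every BMS channel decomposes as $W \equiv \int_{[0,1/2]}\mathrm{BSC}(p)\,d\mu(p)$ for some probability measure $\mu$ on $[0,1/2]$; by \cite[Lemma 4]{Tal_Vardy} one may discard pure-erasure components, so all underlying subchannels are honest BSCs with crossover probability in $[0,1/2]$. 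Setting $m := \lfloor\sqrt{\l}\rfloor$ and using that $h$ is a continuous strictly increasing bijection of $[0,1/2]$ onto $[0,1]$, I would partition $[0,1/2]$ into intervals $I_1,\dots,I_m$ with left endpoints $a_j := h^{-1}\!\big((j-1)/m\big)$ (and right endpoint $h^{-1}(1)=1/2$ on $I_m$), so that $h(I_j)\subseteq[(j-1)/m,\, j/m]$, and then define the candidate channel $\wW := \sum_{j:\,\mu(I_j)>0}\mu(I_j)\cdot\mathrm{BSC}(a_j)$, where each $\mathrm{BSC}(a_j)$ carries its own pair of output symbols.

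Property (i) is then automatic, since $\wW$ is built from at most $m\le\sqrt{\l}$ binary subchannels and hence has at most $2\sqrt{\l}$ output symbols. For property (ii), the plan is to exhibit the degrading post-processor $W_1$ explicitly: from an output symbol of $\wW$ one can read off both the group index $j$ and the bit emitted by $\mathrm{BSC}(a_j)$; the channel $W_1$ then samples $p\in I_j$ from $\mu$ conditioned on $I_j$ (well defined since $\mu(I_j)>0$), flips the received bit with probability $(p-a_j)/(1-2a_j)\in[0,\tfrac12]$, and tags the result with subchannel $p$. A short computation — using that independent $\Ber(a_j)$ and $\Ber\!\big((p-a_j)/(1-2a_j)\big)$ noise compose to $\Ber(p)$, i.e. that $\mathrm{BSC}(p)\preceq\mathrm{BSC}(a_j)$ whenever $a_j\le p\le\tfrac12$ — shows that $\wW$ post-composed with $W_1$ reproduces exactly the action of $W$, so $W\preceq\wW$ in the sense of Definition~\ref{def:degrad}.

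For property (iii), I would write $H(W)-H(\wW)=\int_{[0,1/2]}\big(h(p)-h(a_{j(p)})\big)\,d\mu(p)$, where $j(p)$ indexes the group containing $p$ and $H(\wW)=\sum_j\mu(I_j)h(a_j)$ by construction. Since $h(a_{j(p)})=(j(p)-1)/m$ and $h(p)\le j(p)/m$ by the choice of the $a_j$'s and monotonicity of $h$, the integrand lies in $[0,1/m]$, so $0\le H(W)-H(\wW)\le 1/m = 1/\lfloor\sqrt{\l}\rfloor\le \log\l/\sqrt{\l}$ for $\l$ large (the nonnegativity is of course also forced by Proposition~\ref{prop:degrad_entropy}). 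This gives (iii) and completes the proof.

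The step that carries the idea, and the only non-routine point, is the binning: because $h'(p)\to\infty$ as $p\to0$, partitioning the crossover probabilities into equal-width (or even geometrically spaced) intervals could lose an $\Omega(1)$ amount of entropy on the low-$p$, high-reliability subchannels, so the partition has to be taken uniform on the \emph{entropy axis}, which caps the per-group loss at $1/m$ independently of where the groups lie. Everything else — verifying that the merged channel really is an upgrade and that the group index survives in its output, so the degrading map is genuinely well defined — is bookkeeping; the one mild technicality to keep in mind is handling a $\mu$ that is not finitely supported (a continuous LLR density), which the integral formulation above already accommodates.
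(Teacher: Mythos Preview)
Your proof is correct and shares the paper's overall architecture: decompose $W$ into its BSC subchannels, group them into at most $\sqrt{\l}$ bins, replace each bin by its most reliable member (smallest crossover probability), verify upgradedness via the standard fact $\mathrm{BSC}(p)\preceq\mathrm{BSC}(q)$ for $q\le p\le\tfrac12$, and bound the entropy loss bin by bin. The genuine difference is the binning rule. You partition so that $h(p)$ varies by at most $1/m$ on each bin (endpoints $a_j=h^{-1}((j-1)/m)$), which makes the per-bin entropy loss bound $1/\lfloor\sqrt{\l}\rfloor$ immediate. The paper instead takes $\sqrt{\l}$ \emph{equal-width} intervals $[\theta_j,\theta_{j+1})$ on the $p$-axis with $\theta_j=(j-1)/(2\sqrt{\l})$, and controls the entropy loss via Proposition~\ref{prop:entropy_differ} (i.e.\ $|h(x)-h(y)|\le h(|x-y|)$), obtaining $H(W)-H(\wW)\le h\big(1/(2\sqrt{\l})\big)\le \log\l/\sqrt{\l}$.

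So your remark that equal-width bins ``could lose an $\Omega(1)$ amount of entropy on the low-$p$ subchannels'' is mistaken: although $h'(p)\to\infty$ as $p\to0$, Proposition~\ref{prop:entropy_differ} caps the entropy variation across any width-$1/(2\sqrt{\l})$ interval at $h(1/(2\sqrt{\l}))=O((\log\l)/\sqrt{\l})$, which is exactly the target bound. That said, your entropy-axis binning is a clean alternative that sidesteps Proposition~\ref{prop:entropy_differ} entirely and gives the slightly sharper $1/\lfloor\sqrt{\l}\rfloor$ without the $\log\l$ factor; it also transparently accommodates a continuous subchannel measure~$\mu$.
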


Before proving this proposition, we first show how we can finish a proof of Theorem~\ref{thm:converse_Shannon_BMS} using it. So, consider any BMS channel $W$ with output alphabet size larger than $2\sqrt{\l}$, and consider the channel $\wW$ which satisfies properties (i)-(iii) from Proposition~\ref{prop:binned_upgraded_channel} with respect to $W$. 
First of all, notice that $k \geq \l(1 - H(W)) + 14\l^{1/2}\log^3\l \geq \l\left(1 - H(\wW) - \frac{\log\l}{\l^{1/2}}\right) + 14\l^{1/2}\log^3\l$, and thus ${k \geq \l(1 - H(\wW)) + 13\l^{1/2}\log^3\l}$. Taking the property (i) into consideration, it follows that the channel $\wW$ satisfies all the conditions for the arguments in the Section \ref{sec:BMS_large_alphabet} to be applied, i.e. the statement of Theorem~\ref{thm:converse_Shannon_BMS} holds for $\wW$. Therefore, we can argue that with probability at least $1 - \l^{-(\log\l)/20}$ over a random kernel $G$ it holds $H(V_1\,|\,\wt{\bY}) \geq 1 - \l^{-(\log\l)/20}$, where $\wt{\bY} = \wW^{\l}(\bV\cdot G)$ is the output vector if one would use the channel $\wW$ instead of $W$, for $\bV \sim \{0,1\}^k$. 

Now, let $W_1$ be the channel which ``proves" that $\wW$ is upgraded with respect to $W$, i.e. $W_1\left(\wW(x)\right)$ and $W(x)$ are identically distributed for any $x\in\bit$. Trivially then, $W_1^{\l}\left(\wW^{\l}(X)\right)$ and $W^{\l}(X)$ are identically distributed for any random variable $X$ supported on $\bit^{\l}$. 

Next, observe that the following forms a Markov chain
\[ V_1 \to \bV \to \bV\cdot G \to \wW^{\l}(\bV G) \to W_1^{\l}\left(\wW^{\l}(\bV G)\right),   \]
where $\bV$ is distributed uniformly over $\bit^{k}$. But then the data-processing inequality gives
\[ I\left(V_1\,;\,W_1^{\l}\left(\wW^{\l}(\bV G)\right) \right) \leq I\left(V_1\,;\,\wW^{\l}(\bV G)\right).  \]
However, as we discussed above, $W_1^{\l}\left(\wW^{\l}(\bV G)\right)$ and $W^{\l}(\bV G)$ are identically distributed, and so 
\[ I(V_1\,;\, \bY) = I\left(V_1\,;\,W^{\l}(\bV G) \right) =   I\left(V_1\,;\,W_1^{\l}\left(\wW^{\l}(\bV G)\right) \right) \leq  I\left(V_1\,;\,\wW^{\l}(\bV G)\right) = I(V_1\,;\,\wt{\bY}).  \]
Therefore using $H(X|Y) = H(X) - I(X;Y)$ we derive that
\[ H(V_1\,|\,\bY) \geq H(V_1\,|\,\wt{\bY}) \geq 1 - \l^{-(\log\l)/20}  \]
with probability at least $1 - \l^{-(\log\l)/20}$. This concludes the proof of Theorem~\ref{thm:converse_Shannon_BMS}.
\end{proof}

\vspace{0.8cm}

\begin{proof}[Proof of Proposition \ref{prop:binned_upgraded_channel}] We are going to describe how to construct such an upgraded channel $\wW$. We again are going to look at $W$ as a convex combination of BSCs, as we discussed in Section \ref{sec:BMS_large_alphabet}: let $W$ consist of $m$ underlying BSC subchannels $W^{(1)}, W^{(2)}\dots, W^{(m)}$, each has probability $q_j$ to be chosen. The subchannel $W^{(j)}$ has crossover probability $p_j$, and $0\leq p_1 \leq\dots\leq p_m \leq \frac12$. The subchannel $W^{(j)}$ can output $z^{(0)}_j$ or $z^{(1)}_j$, and the whole output alphabet is then $\Y = \{z^{(0)}_1, z^{(1)}_1, z^{(0)}_2, z^{(1)}_2, \dots, z^{(0)}_m, z^{(1)}_m\}$, $|\Y| = 2m$. It will be convenient to write the transmission probabilities of $W$ explicitly: for any $k \in [m]$, $c, x \in \bit$:
\begin{align}
\label{eq:W_def}
W\left(z^{(c)}_k\;\Big\lvert\;x\right) =  \begin{cases} q_k\cdot (1-p_k), \qquad &x = c,\\
q_k\cdot p_k, \qquad &x \not= c.\\
\end{cases}
\end{align}
The key ideas behind the construction of $\wW$ are the following:
\begin{itemize}[label={--}]
    \item decreasing a crossover probability in any BSC (sub)channel always upgrades the channel, i.e. $\text{BSC}_{p_1} \preceq \text{BSC}_{p_2}$ for any $0 \leq p_2 \leq p_1 \leq \frac12$ (\cite[Lemma 9]{Tal_Vardy}). Indeed, one can simulate a flip of coin with bias $p_1$ by first flipping a coin with bias $p_2$, and then flipping the result one more time with probability $q = \frac{p_1-p_2}{1 - 2p_2}$. In other words, $\text{BSC}_{p_1}(x)$ and $\text{BSC}_{q}\left(\text{BSC}_{p_2}(x)\right)$ are identically distributed for $x\in\bit$.
    \item ``binning" two BSC subchannels with the same crossover probability doesn't change the channel
(\cite[Corollary 10]{Tal_Vardy}).
\end{itemize}

Let us finally describe how to construct $\wW$. Split the interval $[0, 1/2]$ into $\sqrt{\l}$ parts evenly, i.e. let ${\theta_j = \frac{j-1}{2\sqrt{\l}}}$ for $j = 1, 2, \dots, \sqrt{\l}+1$, and consider intevals $[\theta_j, \theta_{j+1})$ for $j = 1, 2, \dots, \sqrt{\l}$ (include $1/2$ into the last interval). Now, to get $\wW$, we first slightly decrease the crossover probabilities in all the BSC subchannels $W^{(1)}, W^{(2)}\dots, W^{(m)}$ so that they all become one of $\theta_1, \theta_2, \dots, \theta_{\sqrt{\l}}$. After that we bin together the subchannels with the same crossover probabilities and let the resulting channel be $\wW$. Formally, we define
\begin{align*}
     T_j &\coloneqq \bigg\{ i \in [m]\ :\ p_i \in \big[\theta_j, \theta_{j+1}\big) \bigg\}, \quad\qquad j = 1, 2, \dots, \sqrt{\l}-1, \\
     T_{\sqrt{\l}} &\coloneqq \bigg\{ i \in [m]\ :\ p_i \in \big[\theta_{\sqrt{\l}}, \theta_{\sqrt{\l}+1}\big] \bigg\}.
\end{align*}

So, $T_j$ is going to be the set of indices of subchannels of $W$ for which we decrease the crossover probability to be equal to $\theta_j$. Then the probability distribution over the new, binned, BSC subchannels $\wt{W^{(1)}}, \wt{W^{(2)}}\dots, \wt{W^{(\sqrt{\l})}}$ in the channel $\wW$ is going to be $(\wt{q_1}, \wt{q_2}, \dots, \wt{q_{\sqrt{\l}}})$, where  $\widetilde{q_j} \coloneqq \sum\limits_{i\in T_j} q_i$. The subchannel $\wt{W^{(j)}}$ has crossover probability $\theta_j$, and it can output one of two new symbols $\wt{z^{(0)}_j}$ or $\wt{z^{(1)}_j}$. The whole output alphabet is then $\wt{\Y} = \{\wt{z^{(0)}_1}, \wt{z^{(1)}_1}, \wt{z^{(0)}_2}, \wt{z^{(1)}_2}, \dots, \wt{z^{(0)}_{\sqrt{\l}}}, \wt{z^{(1)}_{\sqrt{\l}}}\}$, $|\wt{\Y}| = 2\sqrt{\l}$. To be more specific, we describe $\wW\, :\,\bit\to\wt{\Y}$, as follows: for any $j\in[\sqrt{\l}]$ and any $b, x\in \bit$
\begin{align}
\label{eq:wW_def}
\wW\left(\wt{z^{(b)}_j}\;\Big\lvert\;x\right) =  \begin{cases} \sum\limits_{i\in T_j}q_i\cdot (1-\theta_j), \qquad &x = b,\\
\sum\limits_{i\in T_j}q_i\cdot \theta_j, \qquad &x \not= b.\\
\end{cases}
\end{align}

Property (i) on the output alphabet size for $\wW$ then holds immediately. Let us verify (ii) by showing that $\wW$ is indeed upgraded with respect to $W$.

One can imitate the usage of $W$ using $\wW$ as follows: on input $x\in\bit$, feed it through $\wW$ to get output $\wt{z_j^{(b)}}$ for some $b\in\bit$ and $j \in [\sqrt{\l}]$. We then know that the subchannel $\wt{W^{(j)}}$ was used, which by construction corresponds to the usage of a subchannel $W^{(i)}$ for some $i\in T_j$. Then we randomly choose an index $k$ from $T_j$ with probability of $i\in T_j$ being chosen equal to $\dfrac{q_i}{\wt{q_j}}$. This determines that we are going to use the subchannel $W^{(k)}$ while imitating the usage of $W$. By now we flipped the input with probability $\theta_j$ (since we used the subchannel $\wt{W^{(j)}}$), while we want it to be flipped with probability $p_k \geq \theta_j$ overall, since we decided to use $W^{(k)}$. So the only thing we need to do it to ``flip" $b$ to $(1-b)$ with probability $\frac{p_k - \theta_j}{1-2\theta_j}$, and then output $z_k^{(b)}$ or $z_k^{(1-b)}$ correspondingly. 

Formally, we just describe the channel $W_1\,:\wt{\Y} \to \Y$ which proves that $\wW$ is upgraded with respect to $W$ by all of its transmission probabilities: for all ${k \in [m]}$, ${j \in [\sqrt{\l}]}$, ${b,c\in\bit}$ set
\begin{align}
\label{eq:W1_def}
     W_1\left(z_k^{c} \;\Big\lvert\;\wt{z_j^{(b)}}\right) = \begin{cases}
0, \qquad &k \notin T_j \\
\dfrac{q_k}{\sum\limits_{i\in T_j}q_i}\cdot\left(1 - \dfrac{p_k - \theta_j}{1-2\theta_j} \right), \qquad &k\in T_j,\ b = c,\\
\dfrac{q_k}{\sum\limits_{i\in T_j}q_i}\cdot\left(\dfrac{p_k - \theta_j}{1-2\theta_j} \right), \qquad &k\in T_j,\ b \not= c.\\
\end{cases}
\end{align}

It is easy to check that $W_1$ is a valid channel, and that it holds for any $k\in[m]$ and $c,x \in \bit$
 \begin{equation}
 \label{eq:upgrad_calculation}
       \sum_{j\in[\sqrt{\l}],\;b\in\bit}\wW\left(\wt{z_j^{(b)}}\,\Big\lvert\,x\right)W_1\left(z_k^{(c)} \;\Big\lvert\;\wt{z_j^{(b)}}\right) = W\left(z^{(c)}_k\;\Big\lvert\;x\right),
 \end{equation}
which proves that $\wW$ is indeed upgraded to $W$. We prove the above equality in Appendix~\ref{app:upgraded_calc}.

 It only remains to check that the property (iii) also holds, i.e. that the entropy did not decrease too much after we upgrade the channel $W$ to $\wW$. We have
 \[ H\left(\wW\right) = \sum_{j\in[\sqrt{\l}]}\wt{q_j}h(\theta_j) = 
 \sum_{j\in[\sqrt{\l}]}\left(\sum_{i \in T_j}q_i\right)h(\theta_j) = \sum_{k\in[m]} q_k h(\theta_{j_k}), \]
 where we again denoted by $j_k$ the index from $[\sqrt{\l}]$ for which $k \in T_{j_k}$. Therefore
 \[ H(W) - H\left(\wW\right) = \sum_{k\in [m]}q_k\big(h(p_k) - h(\theta_{j_k})\big) \leq  \sum_{k\in [m]}q_k\big(h(\theta_{j_k+1}) - h(\theta_{j_k})\big),  \]
 since $p_k \in [\theta_{j_k}, \theta_{j_k+1}]$ as $k\in T_{j_k}$. Finally, since $\theta_{j+1} - \theta_j = \frac1{2\sqrt{\l}}$, Proposition~\ref{prop:entropy_differ} gives
\begin{align} 
H(W) - H\left(\wW\right) \leq 
 \sum_{k\in [m]}q_k\big(h(\theta_{j_k+1}) - h(\theta_{j_k})\big) \leq
h\left(\dfrac1{2\sqrt{\l}}\right) \leq 2\cdot\dfrac1{2\sqrt{\l}}\log \left(2\sqrt{\l}\right) \leq \dfrac{\log\l}{\sqrt{\l}}.  &\qedhere
\end{align}
\end{proof}

\section{Suction at the ends}
\label{sec:suctions}
In this section we present the proof for Theorem~\ref{thm:kernel_seacrh_correct} in the case the standard Ar{\i}kans kernel was chosen in Algorithm~\ref{algo:kernel_search} -- the so-called suction at the ends regime. Recall that, as we discussed in section~\ref{sect:local}, this regime applies when the entropy of the channel $W$ falls into the interval ${(\l^{-4}, 1-\l^{-4})}$, and the algorithm directly takes a kernel $K = A_2^{\otimes \log\l}$,  where $A_2 =  \left( \begin{smallmatrix}1 & 0\\ 1 & 1\end{smallmatrix} \right)$ is the kernel of Ar{\i}kan's original polarizing transform, instead of trying out all the possible matrices. Note that multiplying by such a kernel $K$ is equivalent to just applying the Ar{\i}kan's $2 \times 2$  transform recursively $\log\l$ times. Suppose we have a BMS channel $W$ with $H(W)$ very close to $0$ or $1$. For Ar\i kan's basic transform, by working with the channel Bhattacharyya parameter $Z(W)$ instead of the entropy $H(W)$, it is well known that one of the two Ar\i kan bit-channels has $Z$ value getting much closer (quadratically closer) to the boundary of the interval $(0,1)$~\cite{arikan-polar,Korada_thesis}.
Using these ideas, we prove in this section that basic transform decreases the average of the potential function $\g(\cdot)$ of entropy at least by a factor of $\l^{-1/2}$ after $\log\l$ iterations for large enough $\l$. 

The basic Ar{\i}kan's transform takes one channel $W$ and splits it into a slightly worse channel $W^-$ and a slightly better channel $W^+$. Then the transform is applied recursively to $W^-$ and $W^+$, creating channels $W^{--}, W^{-+}, W^{+-},$ and $W^{++}$. One can think of the process as of a complete binary tree of depth $\log\l$, with the root node $W$, and any node at the level $i$ is of form $W^{B_i}$ for some $B_i \in \{-, +\}^i$, with two children $W^{B_i-}$ and $W^{B_i+}$. Denote $r  = \log\l$, then the channels at the leaves $\{W^{B_r}\}$, for all $B_r\in\{-,+\}^r$ are exactly the Ar{\i}kan's subchannels of $W$ with respect to the kernel $K = A_2^{\otimes \log\l}$. We are going to prove the following result
\begin{lem}
\label{lem:suction_evolution}
Let $W$ be a BMS channel with $H(W) \notin (\l^{-4}, 1 - \l^{-4})$, and $\a \in \left(0, \frac1{12}\right)$ be some constant. Let $\l$ be a power of two and denote $r = \log\l$. Then for $\l$ large enough such that $r\geq \max\left\{\dfrac1{\a}, 128\right\}$
\begin{equation}
    \label{eq:suction_lem}
    \sum_{B\in\{-,+\}^r}g_{\a}\left(H\left(W^{B}\right)\right) \leq \l^{1/2}g_{\a}\left(H(W)\right),
\end{equation}
where $\g(\cdot)$ is the potential function defined in~\eqref{eq:potential}.
\end{lem}

Clearly, the above lemma will imply the suction at the end case of Theorem~\ref{thm:kernel_seacrh_correct}, as the inequality $\log\l \geq \frac1{\a}$ holds by the conditions of this theorem.

For the analysis below, apart from the entropy of the channel, we will also use Bhattacharrya parameter $Z(W)$:
\[ Z(W) = \sum_{y \in \Y}\sqrt{W(y\,|\,0)W(y\,|\,1)}, \]
together with the inequalities which connect it to the entropy:
\begin{align}
\label{eq:Z-H}
    Z(W)^2 \leq H(W) \leq Z(W),
\end{align}
for any BMS channel $W$ (\cite[Lemma 1.5]{Korada_thesis}, \cite[Proposition 2]{Arikan_Source}). The reason we use this parameter is because of the following relations, which show how the Bhattacharrya parameter changes after the basic transform (\cite[Proposition 5]{arikan-polar} \cite{ModernCoding}, \cite[eq (13)]{ hassani-finite-scaling-paper-journal}):
\begin{align}
    Z(W^+) &= Z(W)^2, \label{eq:Z+}\\
    Z(W) \sqrt{2 - Z(W)^2} \leq Z(W^-) &\leq 2Z(W). \label{eq:Z-}
\end{align}

We will also use the conservation of conditional entropy on application of Ar{\i}kan's transform
\begin{equation}
\label{eq:entropy_martingale}
    H(W^+) + H(W^-) = 2H(W).
\end{equation}

\begin{proof}[Proof of Lemma~\ref{lem:suction_evolution}] The proof is presented in the next two sections, as it is divided into two parts: the case when $H(W) \leq \l^{-4}$ (suction at the lower end), and when $H(W) \geq 1 - \l^{-4}$ (suction at the upper end). 

\subsection{Suction at the lower end}
\label{sec:lower_suction}
Suppose $H(W) \leq \l^{-4}$ for this case, thus $Z(W) \leq \l^{-2} = 2^{-2r}$. 

First, recursive application of~\eqref{eq:entropy_martingale} gives
\begin{equation}
\label{eq:lower_eq0}
   \sum_{B\in\{-,+\}^r}H\left(W^{B}\right) = 2^rH(W),
\end{equation}
and since entropy is always nonnegative, this implies for any $B\in\{-, +\}^r$
\begin{equation}
\label{eq:lower_eq1}
    H\left(W^B\right) \leq 2^rH(W).
\end{equation}

Denote now $k = \left\lceil\log\frac1{\a}\right\rceil$, and notice that $\log r \geq k-1$ since $r \geq \frac1{\a}$. For $B \in \{-, +\}^r$, define $wt_+(B)$ to be number of $+$'s in $B$. We will split the summation in~\eqref{eq:suction_lem} into two parts: the part with $wt_+(B) < k$, and when $wt_+(B) \geq k$.

\medskip\noindent\textbf{First part.} 
Out of~\eqref{eq:lower_eq1} derive
\begin{equation}
\label{eq:lower_end_lower_sum}
    \begin{aligned}
         \sum_{wt_+(B) < k } g_{\a}\left(H\left(W^{B}\right)\right) 
         \leq \sum_{j=0}^{k-1} \binom{r}{j}g_{\a}\left(2^{r}H(W)\right) 
         \leq \log r\cdot\binom{r}{\log r}\cdot 2^{r\a}H(W)^{\a} \leq 2^{\log^2 r + r\a}\cdot H(W)^{\a},
    \end{aligned}
\end{equation}
where we used $\binom{r}{\log r} \leq \frac{r^{\log r}}{(\log r)!}$; the fact the $g_{\a}$ is increasing on $\left(0, \frac12\right)$ together with ${2^{r}H(W) \leq \l^{-3} < \frac12}$, and that $g_\a(x) \leq x^{\a}$ for $x \in (0,1)$.

\medskip\noindent\textbf{Second part.} We are going to use the following observation, which was already established in \cite[Lemma 1]{arikan-telatar} and can be proved by induction based on~\eqref{eq:Z+} and~\eqref{eq:Z-}:
\begin{claim}
\label{cl:Z_evolution}
Let $B \in \{-,+\}^r$, such that number of $+$'s in $B$ is equal to $s$. Then
\[  Z\left(W^B\right) \leq  \left(2^{r-s}\cdot Z(W)\right)^{2^s}.   \]
This corresponds to first using the upper bound in~\eqref{eq:Z-} $(r-s)$ times, and after that using~\eqref{eq:Z+} $s$ times while walking \textbf{down} the recursive binary tree of channels.
\end{claim}

 Then, using Claim~\ref{cl:Z_evolution} along with~\eqref{eq:Z-H} and the fact that $Z(W) \leq \l^{-2} =  2^{-2r}$, we obtain the following for any $B \in \{-, +\}^r$ with ${wt_+(B) = s\geq k}$:

\begin{equation}
\label{eq:lower_eq3}
\begin{aligned}
    H\left(W^B\right) \leq Z\left(W^B\right) \leq \left(2^{r-s}\cdot Z(W)\right)^{2^s} &\leq 2^{(r-s)2^s}\cdot Z(W)^{2^s-2}\cdot H(W) \\
    &\leq 2^{(r-s)2^s}\cdot 2^{ - 2r\cdot 2^s + 4r}\cdot H(W) \\
    &= 2^{-r2^s - s2^s + 4r}\cdot H(W) \\
    &\leq 2^{-r2^k - k2^k + 4r}\cdot H(W).
\end{aligned}
\end{equation}
Therefore
\begin{equation}
\label{eq:lower_end_upper_sum_prelim}
    \sum_{wt_+(B) \geq k}g_{\a}\left(H\left(W^{B}\right)\right) \leq \sum_{wt_+(B) \geq k}H\left(W^{B}\right)^{\a} \leq 2^r\cdot 2^{\a(-r2^k - k2^k + 4r)}\cdot H(W)^{\a}.
\end{equation}
Observe now the following chain of inequalities
\begin{align*}
    \dfrac{r}{2} + 4r\a + 2 \leq r \leq r\cdot2^k\a \leq r\cdot2^k\a + k\cdot2^k\a,
\end{align*}
which trivially holds for $\a \leq \dfrac1{12}$. Therefore
\begin{equation}
    r + \a(-r2^k-k2^k+4r) \leq \dfrac{r}2 - 2,
\end{equation}
and thus in~\eqref{eq:lower_end_upper_sum_prelim} obtain
\begin{equation}
\label{eq:lower_end_upper_sum}
    \sum_{wt_+(B) \geq k}g_{\a}\left(H\left(W^{B}\right)\right) \leq 2^{r/2-2}\cdot H(W)^{\a}.
\end{equation}

\medskip\noindent\textbf{Overall bound.} Combining~\eqref{eq:lower_end_lower_sum} and~\eqref{eq:lower_end_upper_sum} we derive
\begin{equation}
    \label{eq:lower_overall}
    \begin{aligned}
    \sum_{B\in\{-,+\}^r} g_{\a}\left(H\left(W^B\right)\right) 
    &\leq \left(2^{\log^2r  + r\a}  + 2^{r/2 - 2}\right)\cdot H(W)^{\a}\\
    &\leq 2^{r/2}\cdot \dfrac{H(W)^{\a}}{2} \\
    &\leq \l^{1/2}g_{\a}(H(W)),
    \end{aligned}
\end{equation}
where we used $\log^2r + r\a \leq \frac{r}{2} - 2$ for $r \geq 128$, and $\frac12 \leq (1-x)^{\a}$ for any $x \leq \frac12$. This proves Lemma~\ref{lem:suction_evolution} for the lower end case $H(W) \leq \l^{-4}$.

\subsection{Suction at the upper end}
\label{upper_suction}
Now consider the case $H(W) \geq 1 - \l^{-4}$. The proof is quite similar to the previous case, but we are going to track the distance from $H(W)$ (and $Z(W)$) to $1$ now. Specifically, denote
\begin{equation}
    \label{eq:upper_I-S}
    \begin{aligned}
    I(W) &= 1 - H(W),\\
    S(W) &= 1 - Z(W),
    \end{aligned}
\end{equation}
where $I(W)$ is actually the (symmetric) capacity of the channel, and $S(W)$ is just a notation we use in this proof. Notice that $g_{\a}(x) = g_{\a}(1-x)$, therefore it suffices to prove~\eqref{eq:suction_lem} with capacities of the channels instead of entropies in the inequality. Also notice that $I(W) \leq \l^{-4}$ for the current case of suction at the upper end.

Let us now derive the relations between $I(W)$, $S(W)$, as well as evolution of $S(\cdot)$ for $W^+$ and $W^-$, similar to~\eqref{eq:Z-H}, \eqref{eq:Z+}, \eqref{eq:Z-}, and~\eqref{eq:entropy_martingale}. Inequalities in~\eqref{eq:Z-H} imply
\begin{equation}
\begin{aligned}
    S(W) = 1 - Z(W) &\leq 1 - H(W) = I(W),\\
    I(W) = 1 - H(W) &\leq 1 - Z(W)^2 \leq 2(1-Z(W)) = 2S(W),
\end{aligned}
\end{equation}
so let us combine this to write
\begin{equation}
    \label{eq:S-I}
    S(W) \leq I(W) \leq 2S(W).
\end{equation}

Next,~\eqref{eq:Z+} and~\eqref{eq:Z-} give
\begin{align}
    \label{eq:S+}
    S(W^+)  &= 1 - Z(W)^2 \leq 2(1-Z(W)) \leq 2S(W),\\
    \label{eq:S-}
    S(W^-) &\leq 1 - Z(W)\sqrt{2 - Z(W)^2} \leq 2(1 - Z(W))^2 = 2S(W)^2,
\end{align}
where we used $1 - x\sqrt{2-x^2} \leq 2(1-x)^2$ for any $x\in(0,1)$, which can be proven easily by showing that equality holds at $x = 1$ and that the derivative of RHS minus LHS is negative on $(0, 1)$.

Finally, it easily follows from~\eqref{eq:lower_eq0} that
\begin{equation}
\label{eq:upper_eq0}
   \sum_{B\in\{-,+\}^r}I\left(W^{B}\right) = 2^rI(W),
\end{equation}
and since capacity is nonnegative as well, we also obtain for any $B\in\{-, +\}^r$
\begin{equation}
\label{eq:upper_cap_bound}
    I\left(W^B\right) \leq 2^rI(W).
\end{equation}

We now proceed with a very similar approach to the suction at the lower end case in Section~\ref{sec:lower_suction}: denote $k = \left\lceil\log\frac1{\a}\right\rceil$, and notice that $\log r \geq k-1$ since $r \geq \frac1{\a}$. For $B \in \{-, +\}^r$, define $wt_-(B)$ to be number of $-$'s in $B$. We will split the summation in~\eqref{eq:suction_lem} (but with capacities of channels instead of entropies) into two parts: the part with $wt_-(B) < k$, and when $wt_-(B) \geq k$.

\medskip\noindent\textbf{First part.} 
Out of~\eqref{eq:upper_cap_bound} derive, similarly to~\eqref{eq:lower_end_lower_sum}
\begin{equation}
\label{eq:upper_end_lower_sum}
    \begin{aligned}
         \sum_{wt_-(B) < k } g_{\a}\left(I\left(W^{B}\right)\right) 
         \leq \sum_{j=0}^{k-1} \binom{r}{j}g_{\a}\left(2^{r}I(W)\right) 
        \leq \log r\cdot\binom{r}{\log r}\cdot 2^{r\a}I(W)^{\a} \leq 2^{\log^2 r + r\a}\cdot I(W)^{\a}.
    \end{aligned}
\end{equation}

\medskip\noindent\textbf{Second part.} Similarly to Claim~\ref{cl:Z_evolution}, one can show via induction using~\eqref{eq:S+} and~\eqref{eq:S-} the following

\begin{claim}
\label{cl:S_evolution}
Let $B \in \{-,+\}^r$, such that number of $-$'s in $B$ is equal to $s$. Then
\[  S\left(W^B\right) \leq  2^{2^s-1}\left(2^{r-s}\cdot S(W)\right)^{2^s}.   \]
This corresponds to first using equality~\eqref{eq:S+} $(r-s)$ times, and after that using bound~\eqref{eq:S-} $s$ times while walking \textbf{down} the recursive binary tree of channels.
\end{claim}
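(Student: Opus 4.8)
The plan is to prove Claim~\ref{cl:S_evolution} by induction on the length $r$ of the string $B$, in exact parallel with the inductive proof of Claim~\ref{cl:Z_evolution}, but propagating the quantity $S(\cdot)=1-Z(\cdot)$ through the two recursions \eqref{eq:S+} and \eqref{eq:S-} instead of propagating $Z(\cdot)$ through \eqref{eq:Z+} and \eqref{eq:Z-}. Both \eqref{eq:S+} and \eqref{eq:S-} are unconditional: they follow from the Bhattacharyya recursions \eqref{eq:Z+}, \eqref{eq:Z-}, the elementary inequalities $1-x^2\le 2(1-x)$ and $1-x\sqrt{2-x^2}\le 2(1-x)^2$ valid for all $x\in[0,1]$, and $0\le S(\cdot)\le 1$. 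Consequently the argument needs no hypothesis on $\l$, $\a$, or on the magnitude of $S(W)$; the claimed inequality holds for every BMS channel, and it is only in the application (Lemma~\ref{lem:suction_evolution}, upper end) that it becomes quantitatively useful, because $S(W)$ is tiny there.

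For the base case $r=0$ the string $B$ is empty, $s=0$, and the asserted bound reads $S(W)\le S(W)$. For the inductive step I write $W^{B}=(W^{B'})^{c}$ where $c\in\{-,+\}$ is the last symbol of $B$ and $B'$ is the prefix of length $r-1$, and I apply the inductive hypothesis to the channel $W^{B'}$ (image of $W$ under the length-$(r-1)$ string $B'$). If $c=+$, then $B'$ still has $s$ minuses, and \eqref{eq:S+} gives $S(W^{B})\le 2S(W^{B'})\le 2\cdot 2^{2^{s}-1}\bigl(2^{(r-1)-s}S(W)\bigr)^{2^{s}}=2^{(r-s)2^{s}}S(W)^{2^{s}}$, which is at most $2^{2^{s}-1}\bigl(2^{r-s}S(W)\bigr)^{2^{s}}$ since $2^{2^{s}-1}\ge 1$. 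If $c=-$, then $B'$ has $s-1$ minuses, and using $0\le S(W^{B'})$ together with \eqref{eq:S-} we get $S(W^{B})\le 2S(W^{B'})^{2}\le 2\bigl(2^{2^{s-1}-1}\bigl(2^{r-s}S(W)\bigr)^{2^{s-1}}\bigr)^{2}=2^{2^{s}-1}\bigl(2^{r-s}S(W)\bigr)^{2^{s}}$, which is exactly the claimed bound (so the $c=-$ branch is attained with equality in the limit). This closes the induction.

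The only delicate points are pure bookkeeping: the exponent arithmetic $2\cdot 2^{2^{s-1}-1}=2^{2^{s-1}}$, $\bigl(2^{r-s}S(W)\bigr)^{2\cdot 2^{s-1}}=\bigl(2^{r-s}S(W)\bigr)^{2^{s}}$, and $2^{2^{s-1}}\cdot 2^{2^{s-1}-1}=2^{2^{s}-1}$ in the $c=-$ case; and the observation that appending a $+$ leaves the count of minuses unchanged while appending a $-$ decrements it, which is precisely why the extremal ordering mentioned in the statement (all $+$'s first, then all $-$'s) is the one matching the bound with equality. I do not anticipate a genuine obstacle: the whole proof is the $S$-analogue of the routine induction already carried out for Claim~\ref{cl:Z_evolution}, and once Claim~\ref{cl:S_evolution} is in hand, the "second part" of the upper-end computation proceeds exactly as in Section~\ref{sec:lower_suction} with $S$ in place of $Z$ and $I(W)$ in place of $H(W)$.
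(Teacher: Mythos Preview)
Your proof is correct and matches the paper's approach exactly: the paper does not spell out a proof but simply states that Claim~\ref{cl:S_evolution} follows by induction using \eqref{eq:S+} and \eqref{eq:S-}, in parallel with Claim~\ref{cl:Z_evolution}. Your induction on $r$, splitting on the last symbol $c\in\{+,-\}$, and the accompanying exponent arithmetic are precisely what is intended, and your remark that the $c=-$ branch meets the bound with equality (hence the extremal ordering $+^{r-s}-^{s}$ in the statement) is accurate.
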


 Using this claim with~\eqref{eq:S-I} and the fact that $S(W) \leq Z(W) \leq \l^{-4}\leq 2^{-4r}$ obtain for any $B \in \{-, +\}^r$ with ${wt_-(B) = s\geq k}$

\begin{equation}
\label{eq:upper_eq3}
\begin{aligned}
    I\left(W^B\right) \leq 2S\left(W^B\right) &\leq 2^{2^s}\cdot\left(2^{r-s}\cdot S(W)\right)^{2^s} &&\leq 2^{(r-s+1)2^s}\cdot S(W)^{2^s-1}\cdot I(W) \\
    &\leq 2^{(r-s+1)2^s - 4r2^s + 4r}\cdot I(W)
    &&= 2^{-2^s(3r+s-1) + 4r}\cdot I(W) \\
    &\leq 2^{-2^k(3r+k-1) + 4r}\cdot I(W) &&\leq 2^{-r2^k}\cdot I(W),
\end{aligned}
\end{equation}
where the last inequality uses $4r \leq 2^k(2t + k - 1)$, which holds trivially for $k\geq 1$. Therefore
\begin{equation}
\label{eq:upper_end_upper_sum}
    \sum_{wt_-(B) \geq k}g_{\a}\left(I\left(W^{B}\right)\right) \leq \sum_{wt_-(B) \geq k}I\left(W^{B}\right)^{\a} \leq 2^r\cdot 2^{-\a r2^k}\cdot I(W)^{\a} \leq I(W)^{\a},
\end{equation}
since $\a\cdot2^k \geq 1$ by the choice of $k$. 

\medskip\noindent\textbf{Overall bound.} The bounds~\eqref{eq:upper_end_lower_sum} and~\eqref{eq:upper_end_upper_sum} give us
\begin{equation}
    \label{eq:upper_overall}
    \begin{aligned}
    \sum_{B\in\{-,+\}^r} g_{\a}\left(H\left(W^B\right)\right) = \sum_{B\in\{-,+\}^r} g_{\a}\left(I\left(W^B\right)\right) 
    \leq \left(2^{\log^2r  + r\a}  + 1\right)\cdot I(W)^{\a} \leq \l^{1/2}g_{\a}(H(W))
    \end{aligned}
\end{equation}
for large enough $r$ when $H(W) \geq 1 - \l^{-4}$. This completes the proof of Lemma~\ref{lem:suction_evolution}.
\end{proof}

\section{Code construction, encoding and decoding procedures} \label{sect:cons}
Before presenting our code construction and encoding/decoding procedures, we first distinguish the difference between the code construction and the encoding procedure. The objectives of code construction for polar-type codes are two-fold: First, find the $N\times N$ encoding matrix; second, find the set of noiseless bits under the successive cancellation decoder, which will carry the message bits.
On the other hand, by encoding we simply mean the procedure of obtaining the codeword $\bX_{[1:N]}$ by multiplying the information vector $\bU_{[1:N]}$ with the encoding matrix, where we only put information in the noiseless bits in $\bU_{[1:N]}$ and set all the frozen bits to be $0$.
As we will see at the end of this section, while the code construction has complexity polynomial in $N$, the encoding procedure only has complexity $O_{\l}(N\log N)$.

For polar codes with a fixed invertible kernel $K\in \bit^{\l\times\l}$, the polarization process works as follows: We start with some BMS channel $W$. After applying the polar transform to $W$ using kernel $K$, we obtain $\ell$ bit-channels $\{W_i:i\in[\ell]\}$ as defined in \eqref{Arikan_subchannels}.
Next we apply the polar transform using kernel $K$ to each of these $\ell$ bit-channels, and we write the polar transform of $W_i$ as $\{W_{ij}:j\in[\ell]\}$.
Then we apply the polar transform to each of the $\ell^2$ bit channels $\{W_{i_1,i_2}:i_1,i_2\in[\ell]\}$
and obtain $\{W_{i_1,i_2,i_3}:i_1,i_2,i_3\in[\ell]\}$, so on and so forth.
After $t$ rounds of polar transforms, we obtain $\ell^t$ bit-channels
$\{W_{i_1,\dots,i_t}:i_1,\dots,i_t\in[\ell]\}$, and one can show that these are the bit-channels seen by the successive  cancellation decoder when decoding the corresponding polar codes constructed from kernel $K$.

\begin{algorithm}[t]
 \caption{Degraded binning algorithm}
\label{algo:bin}
\DontPrintSemicolon
\SetAlgoLined
  \KwInput{$W:\{0,1\}\to\Y$, bound $\mathsf{Q}$ on the output alphabet size after binning}
  \KwOutput{$\widetilde{W}:\{0,1\}\to\widetilde{\Y}$, where $|\widetilde{\Y}|\le \mathsf{Q}$}
  Initialize the new channel $\widetilde{W}$ with output symbols $\tilde{y}_1,\tilde{y}_2,\dots,\tilde{y}_{\mathsf{Q}}$ by setting $\widetilde{W}(\tilde{y}_i|x)=0$ for all $i\in[\mathsf{Q}]$ and $x\in\{0,1\}$
  
  \For{$y\in\Y$}
  {$p(0|y)\gets \frac{W(y|0)}{W(y|0)+W(y|1)}$
  
  $i\gets \lceil \Q \cdot p(0|y) \rceil$
  
  \If{$i=0$}
  {$i\gets 1$ 
  \tcp*{$i=0$ if and only if $p(0|y)=0$; we merge this single point into the next bin} }

  $\widetilde{W}(\tilde{y}_i|0)\gets \widetilde{W}(\tilde{y}_i|0)+W(y|0)$
  
  $\widetilde{W}(\tilde{y}_i|1)\gets \widetilde{W}(\tilde{y}_i|1)+W(y|1)$}
  
  \Return $\widetilde{W}$
\end{algorithm}

For our purpose, we need to use polar codes with mixed kernels, and we need to search for a ``good" kernel at each step of polarization.
We will also introduce new notation for the bit-channels in order to indicate the usage of different kernels for different bit-channels.
As mentioned in Sections~\ref{sect:encdec} and~\ref{sect:local}, we need to use a binning algorithm (Algorithm~\ref{algo:bin}) to quantize all the bit-channels we obtain in the code construction procedure. As long as we choose the parameter $\Q$ in Algorithm~\ref{algo:bin} to be a large enough polynomial of $N$, the quantized channel can be used as a very good approximation of the original channel. This is made precise by \cite[Proposition 13]{GX15}: For $W$ and $\widetilde{W}$ in Algorithm~\ref{algo:bin}, we have\footnote{Note that the binning algorithm (Algorithm 2) in \cite{GX15} has one minor difference from the binning algorithm (Algorithm~\ref{algo:bin}) in this paper: In \cite{GX15}, the binning algorithm outputs a channel with $\Q+1$ outputs in contrast to $\Q$ outputs in this paper. More precisely, line 5-7 in Algorithm~\ref{algo:bin} of this paper is not included in the algorithm in \cite{GX15}, but one can easily check that this minor difference does not affect the proof at all.}
\begin{equation}  \label{eq:ttt}
H(W) \le H(\widetilde{W})
\le H(W) + \frac{2\log \Q}{\Q}.
\end{equation}
Given a BMS channel $W$, our code construction works as follows:
\begin{enumerate}
    \item {\bf Step 0:} 
    We first use Algorithm~\ref{algo:bin} to quantize/bin the output alphabet of $W$ such that the resulting (degraded) channel has at most $N^3$ outputs, i.e., we set $\Q=N^3$ in Algorithm~\ref{algo:bin}.
    Note that the parameter $\Q$ can be chosen as any polynomial of $N$. By changing the value of $\Q$, we obtain a tradeoff between the decoding error probability and the gap to capacity; see Theorem~\ref{thm:main1} at the end of this section.
    Here we choose the special case of $\Q=N^3$ to give a concrete example of code construction.
    Next we use Algorithm~\ref{algo:kernel_search} in Section~\ref{sect:KC} to find a good kernel\footnote{We will prove in Proposition~\ref{prop:approx_accumulation} that the error parameter $\Delta$ in Algorithm~\ref{algo:kernel_search} can be chosen as $\Delta=\frac{6\l \log N}{N^2}$ when we set $\Q=N^3$.} for the quantized channel and denote it as $K_1^{(0)}$. 
    Recall from Section~\ref{sect:mix} that a kernel is good if all but a $\tilde{O}(\l^{-1/2})$ fraction of the bit-channels obtained after polar transform by this kernel have entropy $\l^{-\Omega(\log\l)}$-close to either $0$ or $1$.
    The superscript $(0)$ in $K_1^{(0)}$ indicates that this is the kernel used in Step 0 of polarization. In this case, we use $\{W_i(B,K_1^{(0)}):i\in[\ell]\}$ to denote the $\ell$ bit-channels resulting from the polar transform of the quantized version of $W$ using kernel $K_1^{(0)}$.
    Here $B$ stands for the binning operation, and the arguments in the brackets are the operations to obtain the bit-channel $W_i(B,K_1^{(0)})$ from $W$: first bin the outputs of $W$ and then perform the polar transform using kernel $K_1^{(0)}$.
    For each $i\in[\ell]$, we again use Algorithm~\ref{algo:bin} to quantize/bin the output alphabet of $W_i(B,K_1^{(0)})$ such that the resulting (degraded) bit-channel 
    $W_i(B,K_1^{(0)},B)$ has at most $N^3$ outputs.
    
    \item {\bf Step 1:} For each $i_1\in[\ell]$, we use Algorithm~\ref{algo:kernel_search} to find a good kernel for the quantized bit-channel $W_{i_1}(B,K_1^{(0)},B)$ and denote it as $K_{i_1}^{(1)}$.
    The $\ell$ bit-channels resulting from the polar transform of $W_{i_1}(B,K_1^{(0)},B)$ using kernel $K_{i_1}^{(1)}$ are denoted as $\{W_{i_1,i_2}(B,K_1^{(0)},B,K_{i_1}^{(1)}):i_2\in[\ell]\}$. In this step, we will obtain $\ell^2$ bit-channels $\{W_{i_1,i_2}(B,K_1^{(0)},B,K_{i_1}^{(1)}):i_1,i_2\in[\ell]\}$. For each of them, we use Algorithm~\ref{algo:bin} to quantize/bin its output alphabet such that the resulting (degraded) bit-channels 
    $\{W_{i_1,i_2}(B,K_1^{(0)},B,K_{i_1}^{(1)},B):i_1,i_2\in[\ell]\}$ has at most $N^3$ outputs. See Fig.~\ref{fig:tree} for an illustration of this procedure for the special case of $\l=3$.

    \item We repeat the polar transforms and binning operations at each step of the code construction. More precisely, at {\bf Step $j$} we have $\ell^j$ bit-channels 
    $$
    \{W_{i_1,i_2,\dots,i_j}(B,K_1^{(0)},B,K_{i_1}^{(1)},B,\dots,K_{i_1,\dots,i_{j-1}}^{(j-1)},B):i_1,i_2,\dots,i_j\in[\ell]\}.
    $$
    This notation is a bit messy, so we introduce some simplified notation for the bit-channels obtained with and without binning operations: We still use
    $$
    W_{i_1,i_2,\dots,i_j}(K_1^{(0)},K_{i_1}^{(1)},\dots,K_{i_1,\dots,i_{j-1}}^{(j-1)})
    $$ 
    to denote the bit-channel obtained without the binning operations at all, and we use 
    $$
    W_{i_1,i_2,\dots,i_j}^{\bin}(K_1^{(0)},K_{i_1}^{(1)},\dots,K_{i_1,\dots,i_{j-1}}^{(j-1)})
    $$ 
    to denote the bit-channel obtained with binning operations performed at every step from Step $0$ to Step $j-1$, i.e.,
    \begin{align*}
        W_{i_1,i_2,\dots,i_j}^{\bin}(K_1^{(0)},K_{i_1}^{(1)},\dots,K_{i_1,\dots,i_{j-1}}^{(j-1)})
        := W_{i_1,i_2,\dots,i_j}(B,K_1^{(0)},B,K_{i_1}^{(1)},B,\dots,K_{i_1,\dots,i_{j-1}}^{(j-1)},B).
    \end{align*}
    Moreover, we use $W_{i_1,i_2,\dots,i_j}^{\bin *}(K_1^{(0)},K_{i_1}^{(1)},\dots,K_{i_1,\dots,i_{j-1}}^{(j-1)})$ to denote the bit-channel obtained with binning operations performed at every step except for the last step, i.e.,
    \begin{align*}
        W_{i_1,i_2,\dots,i_j}^{\bin*}(K_1^{(0)},K_{i_1}^{(1)},\dots,K_{i_1,\dots,i_{j-1}}^{(j-1)})
        := W_{i_1,i_2,\dots,i_j}(B,K_1^{(0)},B,K_{i_1}^{(1)},B,\dots,B,K_{i_1,\dots,i_{j-1}}^{(j-1)}).
    \end{align*}
    
    Next we use Algorithm~\ref{algo:kernel_search} to find a good kernel for each of them and denote the kernel as $K_{i_1,\dots,i_j}^{(j)}$. After applying polar transforms using these kernels, we obtain $\ell^{j+1}$ bit-channels
    $$
    \{W_{i_1,\dots,i_{j+1}}^{\bin*}(K_1^{(0)},K_{i_1}^{(1)},\dots,K_{i_1,\dots,i_j}^{(j)}):i_1,\dots,i_{j+1}\in[\ell]\}.
    $$
    Then we quantize/bin the output alphabets of these bit-channels using Algorithm~\ref{algo:bin} and obtain the following $\ell^{j+1}$ quantized bit-channels
    $$
    \{W_{i_1,\dots,i_{j+1}}^{\bin}(K_1^{(0)},K_{i_1}^{(1)},
    \dots,K_{i_1,\dots,i_j}^{(j)}):i_1,\dots,i_{j+1}\in[\ell]\}.
    $$
    \item After {\bf step $t-1$}, we obtain $N=\ell^t$ quantized bit-channels
    $$
    \{W_{i_1,\dots,i_t}^{\bin}(K_1^{(0)},K_{i_1}^{(1)},\dots,K_{i_1,\dots,i_{t-1}}^{(t-1)}):i_1,i_2,\dots,i_j\in[\ell]\},
    $$
    and we have also obtained all the kernels in each step of polarization. More precisely, we have $\ell^i$ kernels in step $i$, so from step $0$ to step $t-1$, we have $1+\ell+\dots+\ell^{t-1}=\frac{N-1}{\ell-1}$ kernels in total.
    
    \item Find the set of good (noiseless) indices. More precisely, we use the shorthand notation\footnote{We omit the reference to the kernels in the notation $H_{i_1,\dots,i_t}(W)$ and $H_{i_1,\dots,i_t}^{\bin}(W)$.}
    \begin{equation} \label{eq:defHi}
    \begin{aligned}
    H_{i_1,\dots,i_t}(W) &:=
    H(W_{i_1,\dots,i_t}(K_1^{(0)},K_{i_1}^{(1)},\dots,K_{i_1,\dots,i_{t-1}}^{(t-1)}))   \\
    H_{i_1,\dots,i_t}^{\bin}(W) &:=
    H(W_{i_1,\dots,i_t}^{\bin}(K_1^{(0)},K_{i_1}^{(1)},\dots,K_{i_1,\dots,i_{t-1}}^{(t-1)}))
    \end{aligned}
    \end{equation}
    and define the set of good indices as
    \begin{equation}  \label{eq:Sgood}
    \S_{\good}:=\left\{(i_1,i_2,\dots,i_t)\in[\ell]^t: H_{i_1,\dots,i_t}^{\bin}(W) \le \frac{7\ell \log N}{N^2} \right\}.
    \end{equation}
    
    \item Finally, we need to construct the encoding matrix from these $\frac{N-1}{\ell-1}$ kernels. The kernels we obtained in step $j$ are
    $$
    \{K_{i_1,\dots,i_j}^{(j)}:i_1,\dots,i_j\in[\ell]\}.
    $$
    For an integer $i\in[\ell^j]$, we write the $j$-digit $\ell$-ary expansion of $i-1$ as $(\tilde{i}_1,\tilde{i}_2,\dots,\tilde{i}_j)$, where $\tilde{i}_j$ is the least significant digit and $\tilde{i}_1$ is the most significant digit, and each digit takes value in $\{0,1,\dots,\ell-1\}$. Let $(i_1,i_2,\dots,i_j):=(\tilde{i}_1+1,\tilde{i}_2+1,\dots,\tilde{i}_j+1)$, and define the mapping $\tau_j:[\ell^j]\to[\ell]^j$ as 
    \begin{equation} \label{eq:deftau}
    \tau_j(i):=(i_1,i_2,\dots,i_j) 
    \text{~~for~} i\in[\ell^j]   .
    \end{equation}
    This is a one-to-one mapping between $[\ell^j]$ and $[\ell]^j$, and we use the shorthand notation $K_i^{(j)}$ to denote $K_{\tau_j(i)}^{(j)}$ for $i\in[\ell^j]$.
    For each $j\in\{0,1,\dots,t-1\}$, we define the block diagonal matrices $\overline{D}^{(j)}$ with size $\ell^{j+1}\times \ell^{j+1}$ and $D^{(j)}$ with size $N\times N$ as
    \begin{equation} \label{eq:defD}
        \overline{D}^{(j)}:=
        \Diag(K_1^{(j)},K_2^{(j)},\dots,K_{\ell^j}^{(j)}) , \quad\quad\quad
        D^{(j)}:=\underbrace{\{\overline{D}^{(j)},\overline{D}^{(j)},\dots,\overline{D}^{(j)}\}}_{
        \text{number of }\overline{D}^{(j)} \text{ is } \ell^{t-j-1}}  .
    \end{equation}
    For $i\in[\ell^t]$, we have $\tau_t(i)=(i_1,\dots,i_t)$.
    For $j\in[t-1]$, we
    define the permutation $\pi^{(j)}$ on the set $[\ell^t]$ as
    \begin{equation}\label{eq:defpi}
    \pi^{(j)}(i):=\tau_t^{-1}(i_1,\dots,i_{t-j-1},i_t,i_{t-j},i_{t-j+1},\dots,i_{t-1})
    \quad \forall i\in[\ell^t].
    \end{equation}
    By this definition, $\pi^{(j)}$ simply keeps the first $t-j-1$ digits of $i$ to be the same and performs a cyclic shift on the last $j+1$ digits. Here we give some concrete examples:
    \begin{align*}
    \pi^{(1)}(i) & =\tau_t^{-1}(i_1,\dots,i_{t-2},i_t,i_{t-1}),   \\
    \pi^{(2)}(i) & =\tau_t^{-1}(i_1,\dots,i_{t-3},i_t,i_{t-2},i_{t-1}),  \\
    \pi^{(3)}(i) & =\tau_t^{-1}(i_1,\dots,i_{t-4},i_t,i_{t-3},i_{t-2},i_{t-1}) , \\
    \pi^{(t-1)}(i) & =\tau_t^{-1}(i_t,i_1,i_2,\dots,i_{t-1}) .
    \end{align*}
    For each $j\in[t-1]$, let $Q^{(j)}$ be the $\ell^t \times \ell^t$ permutation matrix corresponding to the permutation $\pi^{(j)}$, i.e., $Q^{(j)}$ is the permutation matrix such that
    \begin{equation}\label{eq:defQ}
    (U_1,U_2,\dots,U_{\ell^t})Q^{(j)}
    =(U_{\pi^{(j)}(1)},U_{\pi^{(j)}(2)},\dots,U_{\pi^{(j)}(\ell^t)}) .
    \end{equation}
    Finally, for each $j\in[t]$, we define the $N\times N$ matrix
    \begin{equation} \label{eq:defMj}
    M^{(j)}:=D^{(j-1)}Q^{(j-1)}D^{(j-2)}Q^{(j-2)}\dots D^{(1)}Q^{(1)} D^{(0)} .
    \end{equation}
    Therefore, $M^{(j)},j\in[t]$ satisfy the following recursive relation:
    $$
    M^{(1)}=D^{(0)}, \quad\quad
    M^{(j+1)}=D^{(j)}Q^{(j)}M^{(j)} .
    $$
    Our encoding matrix for code length $N=\ell^t$ is the submatrix of $M^{(t)}$ consisting of all the row vectors with indices belonging to the set $\S_{\good}$ defined in \eqref{eq:Sgood};
    see the next paragraph for a detailed description of the encoding procedure.
\end{enumerate}

\begin{figure}
    \centering
    \begin{tikzpicture}
\node at (7, 9.4) (w0) {$W$};
\node [block, align=center] at (7, 7.6) (q0) 
{Bin, then\\find $K_1^{(0)}$};
\node at (2.5, 5.8) (w1) {$W_1$};
\node at (7, 5.8) (w2) {$W_2$};
\node at (11.5, 5.8) (w3) {$W_3$};
\node [block, align=center] at (2.5, 4) (q1) 
{Bin, then\\find $K_1^{(1)}$};
\node [block, align=center] at (7, 4) (q2) {Bin, then\\find $K_2^{(1)}$};
\node [block, align=center] at (11.5, 4) (q3) 
{Bin, then\\find $K_3^{(1)}$};
\node at (1, 2) (w11) {$W_{1,1}$};
\node at (2.5, 2) (w12) {$W_{1,2}$};
\node at (4, 2) (w13) {$W_{1,3}$};
\node at (5.5, 2) (w21) {$W_{2,1}$};
\node at (7, 2) (w22) {$W_{2,2}$};
\node at (8.5, 2) (w23) {$W_{2,3}$};
\node at (10, 2) (w31) {$W_{3,1}$};
\node at (11.5, 2) (w32) {$W_{3,2}$};
\node at (13, 2) (w33) {$W_{3,3}$};

\node at (1, 1.5)  {$\vdots$};
\node at (2.5, 1.5)  {$\vdots$};
\node at (4, 1.5)  {$\vdots$};
\node at (5.5, 1.5)  {$\vdots$};
\node at (7, 1.5)  {$\vdots$};
\node at (8.5, 1.5)  {$\vdots$};
\node at (10, 1.5)  {$\vdots$};
\node at (11.5, 1.5) {$\vdots$};
\node at (13, 1.5)  {$\vdots$};

\draw[->, thick] (w0)--(q0);
\draw[->, thick] (q0)--(w1);
\draw[->, thick] (q0)--(w2);
\draw[->, thick] (q0)--(w3);
\draw[->, thick] (w1)--(q1);
\draw[->, thick] (w2)--(q2);
\draw[->, thick] (w3)--(q3);
\draw[->, thick] (q1)--(w11);
\draw[->, thick] (q1)--(w12);
\draw[->, thick] (q1)--(w13);
\draw[->, thick] (q2)--(w21);
\draw[->, thick] (q2)--(w22);
\draw[->, thick] (q2)--(w23);
\draw[->, thick] (q3)--(w31);
\draw[->, thick] (q3)--(w32);
\draw[->, thick] (q3)--(w33);
\end{tikzpicture} 
    \caption{Illustration of code construction for the special case of $\l=3$.}
    \label{fig:tree}
\end{figure}

\begin{figure}
\begin{tikzpicture}

\node at (15, 9) (y9) {$Y_1$};
\node at (15, 8) (y8) {$Y_2$};
\node at (15, 7) (y7) {$Y_3$};
\node at (15, 6) (y6) {$Y_4$};
\node at (15, 5) (y5) {$Y_5$};
\node at (15, 4) (y4) {$Y_6$};
\node at (15, 3) (y3) {$Y_7$};
\node at (15, 2) (y2) {$Y_8$};
\node at (15, 1) (y1) {$Y_9$};

\node [block] at (13.5, 9) (w9) {$W$};
\node [block] at (13.5, 8) (w8) {$W$};
\node [block] at (13.5, 7) (w7) {$W$};
\node [block] at (13.5, 6) (w6) {$W$};
\node [block] at (13.5, 5) (w5) {$W$};
\node [block] at (13.5, 4) (w4) {$W$};
\node [block] at (13.5, 3) (w3) {$W$};
\node [block] at (13.5, 2) (w2) {$W$};
\node [block] at (13.5, 1) (w1) {$W$};

\node at (12, 9) (x9) {$X_1$};
\node at (12, 8) (x8) {$X_2$};
\node at (12, 7) (x7) {$X_3$};
\node at (12, 6) (x6) {$X_4$};
\node at (12, 5) (x5) {$X_5$};
\node at (12, 4) (x4) {$X_6$};
\node at (12, 3) (x3) {$X_7$};
\node at (12, 2) (x2) {$X_8$};
\node at (12, 1) (x1) {$X_9$};

\node [sblock] at (9.5, 8) (K3) {$K_1^{(0)}$};
\node [sblock] at (9.5, 5) (K2) {$K_1^{(0)}$};
\node [sblock] at (9.5, 2) (K1) {$K_1^{(0)}$};

\node at (7, 9) (u9) {$U_1^{(1)}$};
\node at (7, 8) (u8) {$U_2^{(1)}$};
\node at (7, 7) (u7) {$U_3^{(1)}$};
\node at (7, 6) (u6) {$U_4^{(1)}$};
\node at (7, 5) (u5) {$U_5^{(1)}$};
\node at (7, 4) (u4) {$U_6^{(1)}$};
\node at (7, 3) (u3) {$U_7^{(1)}$};
\node at (7, 2) (u2) {$U_8^{(1)}$};
\node at (7, 1) (u1) {$U_9^{(1)}$};

\node at (5, 9) (v9) {$V_1^{(1)}$};
\node at (5, 8) (v8) {$V_2^{(1)}$};
\node at (5, 7) (v7) {$V_3^{(1)}$};
\node at (5, 6) (v6) {$V_4^{(1)}$};
\node at (5, 5) (v5) {$V_5^{(1)}$};
\node at (5, 4) (v4) {$V_6^{(1)}$};
\node at (5, 3) (v3) {$V_7^{(1)}$};
\node at (5, 2) (v2) {$V_8^{(1)}$};
\node at (5, 1) (v1) {$V_9^{(1)}$};

\node [sblock] at (2.5, 8) (KK3) {$K_1^{(1)}$};
\node [sblock] at (2.5, 5) (KK2) {$K_2^{(1)}$};
\node [sblock] at (2.5, 2) (KK1) {$K_3^{(1)}$};

\node at (0, 9) (uu9) {$U_1$};
\node at (0, 8) (uu8) {$U_2$};
\node at (0, 7) (uu7) {$U_3$};
\node at (0, 6) (uu6) {$U_4$};
\node at (0, 5) (uu5) {$U_5$};
\node at (0, 4) (uu4) {$U_6$};
\node at (0, 3) (uu3) {$U_7$};
\node at (0, 2) (uu2) {$U_8$};
\node at (0, 1) (uu1) {$U_9$};

\draw[->, thick] (uu9)--(1.2,9);
\draw[->, thick] (uu8)--(1.2,8);
\draw[->, thick] (uu7)--(1.2,7);
\draw[->, thick] (uu6)--(1.2,6);
\draw[->, thick] (uu5)--(1.2,5);
\draw[->, thick] (uu4)--(1.2,4);
\draw[->, thick] (uu3)--(1.2,3);
\draw[->, thick] (uu2)--(1.2,2);
\draw[->, thick] (uu1)--(1.2,1);

\draw[->, thick] (3.8,9)--(v9);
\draw[->, thick] (3.8,8)--(v8);
\draw[->, thick] (3.8,7)--(v7);
\draw[->, thick] (3.8,6)--(v6);
\draw[->, thick] (3.8,5)--(v5);
\draw[->, thick] (3.8,4)--(v4);
\draw[->, thick] (3.8,3)--(v3);
\draw[->, thick] (3.8,2)--(v2);
\draw[->, thick] (3.8,1)--(v1);

\draw[->, thick] (v9.east)--(u9.west);
\draw[->, thick] (v8.east)--(u6.west);
\draw[->, thick] (v7.east)--(u3.west);
\draw[->, thick] (v6.east)--(u8.west);
\draw[->, thick] (v5.east)--(u5.west);
\draw[->, thick] (v4.east)--(u2.west);
\draw[->, thick] (v3.east)--(u7.west);
\draw[->, thick] (v2.east)--(u4.west);
\draw[->, thick] (v1.east)--(u1.west);

\draw[->, thick] (u9)--(8.2,9);
\draw[->, thick] (u8)--(8.2,8);
\draw[->, thick] (u7)--(8.2,7);
\draw[->, thick] (u6)--(8.2,6);
\draw[->, thick] (u5)--(8.2,5);
\draw[->, thick] (u4)--(8.2,4);
\draw[->, thick] (u3)--(8.2,3);
\draw[->, thick] (u2)--(8.2,2);
\draw[->, thick] (u1)--(8.2,1);

\draw[->, thick] (10.8,9)--(x9);
\draw[->, thick] (10.8,8)--(x8);
\draw[->, thick] (10.8,7)--(x7);
\draw[->, thick] (10.8,6)--(x6);
\draw[->, thick] (10.8,5)--(x5);
\draw[->, thick] (10.8,4)--(x4);
\draw[->, thick] (10.8,3)--(x3);
\draw[->, thick] (10.8,2)--(x2);
\draw[->, thick] (10.8,1)--(x1);

\draw[->, thick] (x9)--(w9);
\draw[->, thick] (x8)--(w8);
\draw[->, thick] (x7)--(w7);
\draw[->, thick] (x6)--(w6);
\draw[->, thick] (x5)--(w5);
\draw[->, thick] (x4)--(w4);
\draw[->, thick] (x3)--(w3);
\draw[->, thick] (x2)--(w2);
\draw[->, thick] (x1)--(w1);

\draw[->, thick] (w9)--(y9);
\draw[->, thick] (w8)--(y8);
\draw[->, thick] (w7)--(y7);
\draw[->, thick] (w6)--(y6);
\draw[->, thick] (w5)--(y5);
\draw[->, thick] (w4)--(y4);
\draw[->, thick] (w3)--(y3);
\draw[->, thick] (w2)--(y2);
\draw[->, thick] (w1)--(y1);

\end{tikzpicture}
\caption{Illustration of the encoding process $\bX_{[1:N]}=\bU_{[1:N]} M^{(t)}$ for the special case of $\ell=3$ and $t=2$. Here $\bX_{[1:N]}$ and $\bU_{[1:N]}$ are row vectors. All four kernels in this figure $K_1^{(0)},K_1^{(1)},K_2^{(1)},K_3^{(1)}$ have size $3\times 3$, and the outputs of each kernel is obtained by multiplying the inputs with the kernel, e.g. $\bV_{[1:3]}^{(1)}=\bU_{[1:3]} K_1^{(1)}$.}
\label{fig:ill32}
\end{figure}

\begin{figure}
\centering
\begin{tikzpicture}

\node at (5, 9) (v9) {$V_1^{(1)}$};
\node at (5, 8) (v8) {$V_2^{(1)}$};
\node at (5, 7) (v7) {$V_3^{(1)}$};

\node [sblock] at (2.5, 8) (KK3) {$K_1^{(1)}$};

\node at (0, 9) (uu9) {$U_1$};
\node at (0, 8) (uu8) {$U_2$};
\node at (0, 7) (uu7) {$U_3$};

\node [block] at (7.5, 9)   (w9) {$W_1(K_1^{(0)})$};
\node [block] at (7.5, 8)   (w8) {$W_1(K_1^{(0)})$};
\node [block] at (7.5, 7)   (w7) {$W_1(K_1^{(0)})$};

\node at (10.5, 9) (y9) {$(Y_1,Y_2,Y_3)$};
\node at (10.5, 8) (y8) {$(Y_4,Y_5,Y_6)$};
\node at (10.5, 7) (y7) {$(Y_7,Y_8,Y_9)$};

\draw[->, thick] (uu9)--(1.2,9);
\draw[->, thick] (uu8)--(1.2,8);
\draw[->, thick] (uu7)--(1.2,7);

\draw[->, thick] (3.8,9)--(v9);
\draw[->, thick] (3.8,8)--(v8);
\draw[->, thick] (3.8,7)--(v7);

\draw[->, thick] (v9)--(w9);
\draw[->, thick] (v8)--(w8);
\draw[->, thick] (v7)--(w7);

\draw[->, thick] (w9)--(y9);
\draw[->, thick] (w8)--(y8);
\draw[->, thick] (w7)--(y7);

\end{tikzpicture}
\caption{The (stochastic) mapping from $\bU_{[1:3]}$ to $\bY_{[1:9]}$}
\label{fig:top3}
\end{figure}

\begin{figure}
\centering
\begin{tikzpicture}

\node at (5, 9) (v9) {$V_4^{(1)}$};
\node at (5, 8) (v8) {$V_5^{(1)}$};
\node at (5, 7) (v7) {$V_6^{(1)}$};

\node [sblock] at (2.5, 8) (KK3) {$K_2^{(1)}$};

\node at (0, 9) (uu9) {$U_4$};
\node at (0, 8) (uu8) {$U_5$};
\node at (0, 7) (uu7) {$U_6$};

\node [block] at (7.5, 9)   (w9) {$W_2(K_1^{(0)})$};
\node [block] at (7.5, 8)   (w8) {$W_2(K_1^{(0)})$};
\node [block] at (7.5, 7)   (w7) {$W_2(K_1^{(0)})$};

\node at (10.5, 9) (y9) {$(V_1^{(1)},Y_1,Y_2,Y_3)$};
\node at (10.5, 8) (y8) {$(V_2^{(1)},Y_4,Y_5,Y_6)$};
\node at (10.5, 7) (y7) {$(V_3^{(1)},Y_7,Y_8,Y_9)$};

\draw[->, thick] (uu9)--(1.2,9);
\draw[->, thick] (uu8)--(1.2,8);
\draw[->, thick] (uu7)--(1.2,7);

\draw[->, thick] (3.8,9)--(v9);
\draw[->, thick] (3.8,8)--(v8);
\draw[->, thick] (3.8,7)--(v7);

\draw[->, thick] (v9)--(w9);
\draw[->, thick] (v8)--(w8);
\draw[->, thick] (v7)--(w7);

\draw[->, thick] (w9)--(y9);
\draw[->, thick] (w8)--(y8);
\draw[->, thick] (w7)--(y7);

\end{tikzpicture}
\caption{The (stochastic) mapping from $\bU_{[4:6]}$ to $(\bV_{[1:3]}^{(1)},\bY_{[1:9]})$}
\label{fig:mid3}
\end{figure}

\begin{figure}[t!]
\centering
\begin{tikzpicture}

\node at (5, 9) (v9) {$V_7^{(1)}$};
\node at (5, 8) (v8) {$V_8^{(1)}$};
\node at (5, 7) (v7) {$V_9^{(1)}$};

\node [sblock] at (2.5, 8) (KK3) {$K_3^{(1)}$};

\node at (0, 9) (uu9) {$U_7$};
\node at (0, 8) (uu8) {$U_8$};
\node at (0, 7) (uu7) {$U_9$};

\node [block] at (7.5, 9)   (w9) {$W_3(K_1^{(0)})$};
\node [block] at (7.5, 8)   (w8) {$W_3(K_1^{(0)})$};
\node [block] at (7.5, 7)   (w7) {$W_3(K_1^{(0)})$};

\node at (11.5, 9) (y9) {$(V_1^{(1)},V_4^{(1)},Y_1,Y_2,Y_3)$};
\node at (11.5, 8) (y8) {$(V_2^{(1)},V_5^{(1)},Y_4,Y_5,Y_6)$};
\node at (11.5, 7) (y7) {$(V_3^{(1)},V_6^{(1)},Y_7,Y_8,Y_9)$};

\draw[->, thick] (uu9)--(1.2,9);
\draw[->, thick] (uu8)--(1.2,8);
\draw[->, thick] (uu7)--(1.2,7);

\draw[->, thick] (3.8,9)--(v9);
\draw[->, thick] (3.8,8)--(v8);
\draw[->, thick] (3.8,7)--(v7);

\draw[->, thick] (v9)--(w9);
\draw[->, thick] (v8)--(w8);
\draw[->, thick] (v7)--(w7);

\draw[->, thick] (w9)--(y9);
\draw[->, thick] (w8)--(y8);
\draw[->, thick] (w7)--(y7);

\end{tikzpicture}
\caption{The (stochastic) mapping from $\bU_{[7:9]}$ to $(\bV_{[1:6]}^{(1)},\bY_{[1:9]})$}
\label{fig:bot3}
\end{figure}

Once we obtain the matrix $M^{(t)}$ and the set $\S_{\good}$ in the code construction, the encoding procedure is standard; it is essentially the same as the original polar codes \cite{arikan-polar}.
Let $\bU_{[1:N]}$ be a random vector consisting of $N$ i.i.d. Bernoulli-$1/2$ random variables, and let $\bX_{[1:N]}=\bU_{[1:N]} M^{(t)}$.
Recall that we use $\{W_i(M^{(t)}):i\in[\ell^t]\}$ to denote the $\ell^t$ bit-channels resulting from the polar transform of $W$ using matrix $M^{(t)}$.
If we transmit the random vector $\bX_{[1:N]}$ through $N$ independent copies of $W$ and denote the channel outputs as $\bY_{[1:N]}$, then by definition, the bit-channel mapping from $U_i$ to $(\bU_{[1:i-1]},\bY_{[1:N]})$ is exactly $W_i(M^{(t)})$.
Therefore, if we use a successive  cancellation decoder to decode the input vector $\bU_{[1:N]}$ bit by bit from all the channel outputs $\bY_{[1:N]}$ and all the previous input bits $\bU_{[1:i-1]}$, then $W_i(M^{(t)})$ is the channel seen by the successive  cancellation decoder when it decodes $U_i$.
Clearly, $H(W_i(M^{(t)}))\approx 0$ means that the successive  cancellation decoder can decode $U_i$ correctly with high probability.
For every $i\in\ell^t$, we write $\tau_t(i)=(i_1,i_2,\dots,i_t)$.
In Proposition~\ref{prop:eqv} below, we will show that $H(W_i(M^{(t)}))=H_{i_1,\dots,i_t}(W)$.
Then in Proposition~\ref{prop:approx_accumulation}, we further show that $H_{i_1,\dots,i_t}(W)\approx H_{i_1,\dots,i_t}^{\bin}(W)$. Therefore, 
$H(W_i(M^{(t)}))\approx H_{i_1,\dots,i_t}^{\bin}(W)$.
By definition \eqref{eq:Sgood}, the set $\S_{\good}$ contains all the indices $(i_1,\dots,i_t)$ for which $H_{i_1,\dots,i_t}^{\bin}(W)\approx 0$, so for all $i$ such that $\tau_t(i)\in\S_{\good}$, we also have $H(W_i(M^{(t)}))\approx 0$, meaning that the successive  cancellation decoder can decode all the bits $\{U_i:\tau_t(i)\in\S_{\good}\}$ correctly with high probability.
In the encoding procedure, we put all the information in the set of good bits $\{U_i:\tau_t(i)\in\S_{\good}\}$, and we set all the other bits to be some pre-determined value, e.g., set all of them to be $0$. It is clear that the generator matrix of this code is the submatrix of $M^{(t)}$ consisting of all the row vectors with indices belonging to the set $\S_{\good}$.

\subsection{Analysis of bit-channels} \label{sect:bit}
We say that two channels $W_1:\{0,1\}\to\mathcal{Y}_1$ and $W_2:\{0,1\}\to\mathcal{Y}_2$ are equivalent if there is a one-to-one mapping $\pi$ between $\mathcal{Y}_1$ and $\mathcal{Y}_2$ such that $W_1(y_1|x)=W_2(\pi(y_1)|x)$ for all $y_1\in\mathcal{Y}_1$ and $x\in\{0,1\}$. Denote this equivalence relation as $W_1\equiv W_2$.
Then we have the following result.
\begin{prop} \label{prop:eqv}
For every $i\in\ell^t$, we write $\tau_t(i)=(i_1,i_2,\dots,i_t)$. Then we always have
$$
W_i(M^{(t)}) \equiv
W_{i_1,\dots,i_t}(K_1^{(0)},K_{i_1}^{(1)},\dots,K_{i_1,\dots,i_{t-1}}^{(t-1)}).
$$
\end{prop}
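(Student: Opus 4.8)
The plan is to prove the equivalence by induction on $t$, unwinding the recursion $M^{(j+1)} = D^{(j)} Q^{(j)} M^{(j)}$ that defines the encoding matrix. The base case $t=1$ is immediate: $M^{(1)} = D^{(0)} = \Diag(K_1^{(0)})$ (a single block since $\ell^{t-j-1} = \ell^0 = 1$), so $W_i(M^{(1)})$ is by definition the $i$th Ar\i kan bit-channel of $W$ with respect to $K_1^{(0)}$, which is exactly $W_{i_1}(K_1^{(0)})$ with $\tau_1(i) = (i_1)$. The crux is the inductive step: assuming the claim for depth $t$, we must show it for depth $t+1$, i.e. relate the bit-channels of $M^{(t+1)} = D^{(t)} Q^{(t)} M^{(t)}$ to the depth-$(t+1)$ bit-channels $W_{i_1,\dots,i_{t+1}}(K_1^{(0)},\dots,K_{i_1,\dots,i_t}^{(t)})$.

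The key structural observation is that multiplying by $M^{(t+1)}$ can be decomposed into two phases: first apply $M^{(t)}$ (the ``inner'' transform producing the depth-$t$ bit-channels), then apply the outer layer $D^{(t)} Q^{(t)}$ which, after the permutation $Q^{(t)}$ regroups the $\ell^{t+1}$ coordinates into $\ell^t$ blocks of size $\ell$, applies one more Ar\i kan transform with the kernel $K_{i_1,\dots,i_t}^{(t)}$ to the block corresponding to each depth-$t$ channel. Concretely, I would argue that the bit-channel of $M^{(t+1)}$ seen at index $i$ (with $\tau_{t+1}(i) = (i_1,\dots,i_{t+1})$) is obtained by taking $\ell$ independent copies of the depth-$t$ bit-channel $W_{i_1,\dots,i_t}(\cdots)$ — which holds by the inductive hypothesis together with the fact that the first $t$ digits of $i$ determine which of the $\ell^t$ parallel copies of $\overline{D}^{(t)}$'s block we land in — and then performing the Ar\i kan transform \eqref{Arikan_subchannels} with kernel $K_{i_1,\dots,i_t}^{(t)}$, extracting the $i_{t+1}$th sub-channel. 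The permutation $\pi^{(t)}$ is precisely engineered (it cycles the last $t+1$ digits, moving the freshly-introduced digit into the least-significant position) so that this regrouping is consistent: the ``previous bits'' $\bU_{[1:i-1]}$ conditioned upon in the definition of $W_i(M^{(t+1)})$ match up, after relabeling by $\pi^{(t)}$, with the side information $(\text{earlier }\bV\text{'s},\ \bY_{[1:N]})$ that appears in the successive-decoding interpretation of the depth-$(t+1)$ channel — this is exactly what Figures~\ref{fig:top3}--\ref{fig:bot3} illustrate for $\ell=3,t=2$.

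Making the bookkeeping of ``which coordinates and which conditioning'' precise is the main obstacle. The honest way to handle it is to track, for a uniformly random $\bU_{[1:N]}$, the joint distribution of the relevant partial sums through the circuit $\bU \mapsto \bU M^{(t)} \mapsto (\bU M^{(t)}) \text{ permuted by } Q^{(t)} \mapsto \cdot D^{(t)} \mapsto \text{channel}$, and to verify that for each $i$ the channel from $U_i$ to $(\bU_{[1:i-1]}, \bY_{[1:N]})$ factors as (parallel copies of the depth-$t$ channel) followed by (one Ar\i kan step with the right kernel). I expect this to follow by carefully matching indices: since $\pi^{(t)}$ keeps the first $t-1$ digits fixed and cyclically rotates the last $t+1$, the set $\{1,\dots,i-1\}$ of earlier indices decomposes cleanly into ``indices in strictly earlier depth-$t$ blocks'' (which the inductive hypothesis absorbs into the definition of the $W_{i_1,\dots,i_t}$ channels and whose contribution to the conditioning is captured by earlier $\bV$'s) and ``earlier indices within the current block'' (the $\bU_{<i_{t+1}}$ of the final Ar\i kan step). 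One should also note that the claimed relation is only up to the equivalence $\equiv$ (relabeling of output alphabets), which gives some slack: we need not match outputs on the nose, only exhibit a bijection of output alphabets commuting with the transition probabilities, so harmless reorderings introduced by the permutation matrices do not matter. Finally, invertibility of all kernels (hence of $M^{(t)}$) is used implicitly so that the Ar\i kan bit-channel $W_i(M^{(t)})$ is well-defined via \eqref{Arikan_subchannels}; this is guaranteed by the construction and by Algorithm~\ref{algo:kernel_search} always returning invertible kernels.
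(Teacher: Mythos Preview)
Your approach is correct and essentially the same as the paper's: both argue by induction on the depth (the paper phrases it as induction on the level $a$ from $1$ to $t$ with $t$ fixed, tracking the intermediate vectors $\bU^{(a)}$; you phrase it as induction on $t$ itself, but these are equivalent via the block-diagonal structure $M^{(a)} = \Diag(\overline{M}^{(a)},\dots,\overline{M}^{(a)})$), and the inductive step in both cases uses that the relevant blocks give $\ell$ i.i.d.\ copies of the depth-$a$ bit-channel, to which one more Ar\i kan step with the kernel $K_{i_1,\dots,i_a}^{(a)}$ is applied. One small correction on the bookkeeping: for depth $t+1$ the permutation $\pi^{(t)}$ fixes no leading digits and cycles all $t+1$ of them, sending the least-significant digit $i_{t+1}$ to the \emph{most}-significant position (not the other way around); this does not affect your argument but will matter when you write out the index-matching in full.
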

Before formally proving this proposition, we first use the special case of $t=2$ and $\ell=3$ to illustrate the main idea behind the proof.
In this case, we obtained one kernel $K_1^{(0)}$ in step $0$ and three kernels $K_1^{(1)},K_2^{(1)},K_3^{(1)}$ in step $1$.
See Fig.~\ref{fig:ill32} for an illustration of the encoding process $\bX_{[1:9]}=\bU_{[1:9]} M^{(2)}$.
 In particular, we can see that
$$
\bV_{[1:9]}^{(1)}
=\bU_{[1:9]} D^{(1)}, \quad\quad
\bU_{[1:9]}^{(1)}
=\bV_{[1:9]}^{(1)} Q^{(1)}, \quad\quad
X_{[1:9]}
=\bU_{[1:9]}^{(1)} D^{(0)} .
$$
Therefore, we indeed have $\bX_{[1:9]}=\bU_{[1:9]} D^{(1)}Q^{(1)}D^{(0)}=\bU_{[1:9]} M^{(2)}$.
Assume that $\bU_{[1:9]}$ consists of $9$ i.i.d. Bernoulli-$1/2$ random variables. Since $D^{(1)},Q^{(1)},D^{(0)}$ are all invertible matrices, the random vectors $\bV_{[1:9]}^{(1)},\bU_{[1:9]}^{(1)}$ and $\bX_{[1:9]}$ also consist of i.i.d. Bernoulli-$1/2$ random variables.

In order to analyze the bit-channels, we view Fig.~\ref{fig:ill32} from the right side to the left side. 
First observe that the following three vectors 
$$
(U_1^{(1)},U_2^{(1)},U_3^{(1)},Y_1,Y_2,Y_3),\quad\quad
(U_4^{(1)},U_5^{(1)},U_6^{(1)},Y_4,Y_5,Y_6),\quad\quad
(U_7^{(1)},U_8^{(1)},U_9^{(1)},Y_7,Y_8,Y_9)
$$
are independent and identically distributed (i.i.d.).

Given a channel $W_1:\mathcal{X}\to\mathcal{Y}$ and a pair of random variables $(X,Y)$ that take values in $\mathcal{X}$ and $\mathcal{Y}$ respectively, we write
$$
\P(X\to Y)\equiv W_1
$$
if $\P(Y=y|X=x)=W(y|x)$ for all $x\in\mathcal{X}$ and $y\in\mathcal{Y}$, where $\P(X\to Y)$ means the channel that takes $X$ as input and gives $Y$ as output.
By this definition, we have
$$
\P(U_1^{(1)}\to \bY_{[1:3]}) \equiv
\P(U_4^{(1)}\to \bY_{[4:6]}) \equiv
\P(U_7^{(1)}\to \bY_{[7:9]}) \equiv
W_1(K_1^{(0)}) .
$$
Since $V_1^{(1)}=U_1^{(1)},V_2^{(1)}=U_4^{(1)},V_3^{(1)}=U_7^{(1)}$, we also have
$$
\P(V_1^{(1)}\to \bY_{[1:3]}) \equiv
\P(V_2^{(1)}\to \bY_{[4:6]}) \equiv
\P(V_3^{(1)}\to \bY_{[7:9]}) \equiv
W_1(K_1^{(0)}) .
$$
Moreover, the following three vectors
$$
(V_1^{(1)},\bY_{[1:3]}), \quad\quad
(V_2^{(1)},\bY_{[4:6]}), \quad\quad
(V_3^{(1)},\bY_{[7:9]})
$$
are independent. Therefore, the (stochastic) mapping from $U_{[1:3]}$ to $Y_{[1:9]}$ in Fig.~\ref{fig:ill32} can be represented in a more compact form in Fig.~\ref{fig:top3}.
From Fig.~\ref{fig:top3}, we can see that
\begin{align*}
& W_1(M^{(2)}) \equiv
\P(U_1 \to \bY_{[1:9]}) \equiv
W_{1,1}(K_1^{(0)},K_1^{(1)}) ,  \\
& W_2(M^{(2)}) \equiv
\P(U_2 \to (U_1,\bY_{[1:9]})) \equiv
W_{1,2}(K_1^{(0)},K_1^{(1)}) ,  \\
& W_3(M^{(2)}) \equiv
\P(U_3 \to (U_1,U_2,\bY_{[1:9]})) \equiv
W_{1,3}(K_1^{(0)},K_1^{(1)}) .
\end{align*}

Next we investigate $W_4(M^{(2)}),W_5(M^{(2)}),W_6(M^{(2)})$. Observe that
$$
 \P(U_2^{(1)}\to(U_1^{(1)},\bY_{[1:3]} )) 
\equiv 
\P(U_5^{(1)}\to(U_4^{(1)},\bY_{[4:6]} )) 
\equiv
 \P(U_8^{(1)}\to(U_7^{(1)},\bY_{[7:9]} )) \equiv
W_2(K_1^{(0)}).
$$
Therefore, 
$$
 \P(V_4^{(1)}\to(V_1^{(1)},\bY_{[1:3]} )) 
\equiv 
\P(V_5^{(1)}\to(V_2^{(1)},\bY_{[4:6]} )) 
\equiv
 \P(V_6^{(1)}\to(V_3^{(1)},\bY_{[7:9]} )) \equiv
W_2(K_1^{(0)}).
$$
Moreover, since
$$
(V_1^{(1)},V_4^{(1)},\bY_{[1:3]} ), \quad\quad
(V_2^{(1)},V_5^{(1)},\bY_{[4:6]} ), \quad\quad
(V_3^{(1)},V_6^{(1)},\bY_{[7:9]} )
$$
are independent,
the (stochastic) mapping from $\bU_{[4:6]}$ to $(\bV_{[1:3]}^{(1)} , \bY_{[1:9]})$ in Fig.~\ref{fig:ill32} can be represented in a more compact form in Fig.~\ref{fig:mid3}.
Notice that there is a bijection between $\bU_{[1:3]}$ and $\bV_{[1:3]}^{(1)}$. Thus we can conclude from Fig.~\ref{fig:mid3} that
\begin{align*}
  & W_4(M^{(2)}) \equiv
\P(U_4 \to (\bU_{[1:3]},\bY_{[1:9]})) \equiv
\P(U_4 \to (\bV_{[1:3]}^{(1)},\bY_{[1:9]})) \equiv
W_{2,1}(K_1^{(0)},K_2^{(1)}) ,   \\
& W_5(M^{(2)}) \equiv
\P(U_5 \to (\bU_{[1:4]},\bY_{[1:9]})) \equiv
\P(U_5 \to (U_4,\bV_{[1:3]}^{(1)},\bY_{[1:9]})) \equiv
W_{2,2}(K_1^{(0)},K_2^{(1)}) ,    \\
& W_6(M^{(2)}) \equiv
\P(U_6 \to (\bU_{[1:5]},\bY_{[1:9]})) \equiv
\P(U_6 \to (U_4,U_5,\bV_{[1:3]}^{(1)},\bY_{[1:9]})) \equiv
W_{2,3}(K_1^{(0)},K_2^{(1)}) .
\end{align*}
Finally, we can use the same method to show that
\begin{align*}
& \P(V_7^{(1)}\to(V_1^{(1)},V_4^{(1)},\bY_{[1:3]} )) 
\equiv 
\P(V_8^{(1)}\to(V_2^{(1)},V_5^{(1)},\bY_{[4:6]} )) \\
\equiv
& \P(V_9^{(1)}\to(V_3^{(1)},V_6^{(1)},\bY_{[7:9]} )) \equiv
W_3(K_1^{(0)}).
\end{align*}
Therefore, the (stochastic) mapping from $\bU_{[7:9]}$ to $(\bV_{[1:6]}^{(1)} , \bY_{[1:9]})$ in Fig.~\ref{fig:ill32} can be represented in a more compact form in Fig.~\ref{fig:bot3}.
Notice that there is a bijection between $\bU_{[1:6]}$ and $\bV_{[1:6]}^{(1)}$. Thus we can conclude from Fig.~\ref{fig:bot3} that
\begin{align*}
  & W_7(M^{(2)}) \equiv
\P(U_7 \to (\bU_{[1:6]},\bY_{[1:9]})) \equiv
\P(U_7 \to (\bV_{[1:6]}^{(1)},\bY_{[1:9]})) \equiv
W_{3,1}(K_1^{(0)},K_3^{(1)}) ,   \\
& W_8(M^{(2)}) \equiv
\P(U_8 \to (\bU_{[1:7]},\bY_{[1:9]})) \equiv
\P(U_8 \to (U_7,\bV_{[1:6]}^{(1)},\bY_{[1:9]})) \equiv
W_{3,2}(K_1^{(0)},K_3^{(1)}) ,    \\
& W_9(M^{(2)}) \equiv
\P(U_9 \to (\bU_{[1:8]},\bY_{[1:9]})) \equiv
\P(U_9 \to (U_7,U_8,\bV_{[1:6]}^{(1)},\bY_{[1:9]})) \equiv
W_{3,3}(K_1^{(0)},K_3^{(1)}) .
\end{align*}
Now we have proved Proposition~\ref{prop:eqv} for the special case of $\ell=3$ and $t=2$. The proof for the general case follows the same idea, and we defer it to Appendix~\ref{app:proofeqv}.

\subsection{Complexity of code construction, encoding and decoding}
\begin{prop} \label{prop:complex}
The code construction has $N^{O_\l(1)}$ complexity. Both the encoding and successive decoding procedures have $O_{\l}(N\log N)$ complexity.
\end{prop}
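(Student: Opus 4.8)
\textbf{Proof plan for Proposition~\ref{prop:complex}.}
The plan is to break the complexity claim into its three components---code construction, encoding, and decoding---and treat each by counting the number of basic operations performed, with $\l$ regarded as a (large) constant so that all factors of $\l$, $\l^\l$, $\l!$, and the like collapse into the $O_\l(\cdot)$ notation. Throughout, $N=\l^t$ and all channels handled by the construction have output alphabet size at most $\Q=N^3$ (a fixed polynomial in $N$), so arithmetic on channel transition tables costs $\poly(N)$.

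First I would analyze the code construction. There are $t=\log_\l N$ levels, and at level $j$ there are $\l^j$ bit-channels, for a total of $\sum_{j=0}^{t-1}\l^j=\frac{N-1}{\l-1}=O_\l(N)$ bit-channels processed. For each bit-channel we do three things: (i) run the binning Algorithm~\ref{algo:bin}, which iterates over the at-most-$\Q^\l$ output symbols of a polar-transformed channel and costs $\poly(\Q^\l)=\poly(N)$; (ii) compute the $\l$ Arikan bit-channels via \eqref{Arikan_subchannels}, which again involves summing over $\bit^{\l-i}$ for the formula and produces channels with at most $\Q^\l$ outputs, costing $\poly(\Q^\l)=\poly(N)$; and (iii) run the kernel-search Algorithm~\ref{algo:kernel_search}, which in the worst case loops over all $2^{\l^2}=O_\l(1)$ matrices in $\bit^{\l\times\l}$, and for each invertible one computes the $\l$ Arikan bit-channels and their entropies, each step again $\poly(\Q^\l)=\poly(N)$; so the kernel search costs $O_\l(1)\cdot\poly(N)=\poly(N)$ per channel. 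Multiplying $O_\l(N)$ channels by $\poly(N)$ work each gives $N^{O_\l(1)}$ for all the channel-and-kernel processing. It then remains to bound the cost of assembling $M^{(t)}$ and computing $\S_\good$: the matrices $D^{(j)}$ and $Q^{(j)}$ are $N\times N$ and sparse/structured, $M^{(t)}$ is a product of $2t-1=O(\log N)$ of them, and $\S_\good$ is read off from the already-computed entropies $H^{\bin}_{i_1,\dots,i_t}(W)$ by one pass over $N$ indices; all of this is $\poly(N)$. Hence the construction is $N^{O_\l(1)}$.

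Next, for encoding, the point is that we never form $M^{(t)}$ as a dense matrix when encoding; instead we exploit the factorization $M^{(t)}=D^{(t-1)}Q^{(t-1)}\cdots D^{(1)}Q^{(1)}D^{(0)}$ and apply it to $\bU_{[1:N]}$ factor by factor, exactly as in Arikan's recursive butterfly. Each $D^{(j)}=\Diag(\overline{D}^{(j)},\dots,\overline{D}^{(j)})$ acts block-diagonally: it consists of $\l^{t-1}$ copies of an $\l\times\l$ kernel multiplication, so applying $D^{(j)}$ costs $\l^{t-1}\cdot O(\l^2)=O_\l(N)$; each permutation $Q^{(j)}$ costs $O(N)$. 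There are $t=O_\l(\log N)$ such factors, giving total encoding cost $O_\l(N\log N)$. (One must also note that the kernels and permutations are already stored from the construction phase, so no recomputation is needed.)

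Finally, for decoding, I would invoke the standard successive-cancellation analysis: the decoder proceeds bit by bit through $U_1,\dots,U_N$, and because the bit-channels at the leaves are obtained by the recursive polar transform through the tree of kernels (Proposition~\ref{prop:eqv}), the likelihoods needed at each node can be computed recursively, reusing the likelihoods computed at the parent. The recursion tree has depth $t$ and $N$ leaves; at an internal node of level $j$ corresponding to an $\l\times\l$ kernel, combining the $\l$ child-likelihood vectors into the parent likelihoods takes $O_\l(1)$ time per output-symbol configuration, and a careful accounting (identical to Arikan's $O(N\log N)$ bound, with the constant now depending on $\l$ because each node does $\poly(\l)$ rather than $O(1)$ work, and on $\Q$ because likelihoods live on alphabets of size up to $\Q$) shows the total is $O_\l(N\log N)$. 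I would state this as following the same recursive bookkeeping as \cite{arikan-polar}, adapted to multiple kernels along the lines of \cite{Presman15,Ye15,Gabry17,Benammar17,Wang18}.

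\textbf{Main obstacle.} The routine parts are the encoding and decoding bounds, which are essentially Arikan's argument with $\l$-dependent constants. The step that needs the most care is making the construction bound genuinely polynomial in $N$: one has to be sure that the binning operations are inserted \emph{before} the output alphabets blow up, so that every channel ever handled has at most $\poly(N)$ outputs (the intermediate $W^{\bin*}$ channels, after one polar transform off a $\le\Q$-output channel, have at most $\Q^\l$ outputs, still $\poly(N)$, and then get binned back down to $\le\Q$), and that the error parameter $\Delta=\frac{6\l\log N}{N^2}$ fed to Algorithm~\ref{algo:kernel_search} is compatible with the hypothesis $\Delta\le\l^{-\log\l}$ of Theorem~\ref{thm:kernel_seacrh_correct}; this last point is really a correctness issue deferred to Proposition~\ref{prop:approx_accumulation}, but it is what guarantees the kernel search always terminates, and hence that the construction runs in finite (polynomial) time at all.
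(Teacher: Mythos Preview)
Your proposal is correct and follows essentially the same approach as the paper: treat $\l$ as a constant, count the $O_\l(N)$ bit-channels each processed with $\poly(N)$ work for the construction, exploit the factorization $M^{(t)}=D^{(t-1)}Q^{(t-1)}\cdots D^{(0)}$ for encoding, and invoke Ar\i kan's recursive SC decoder for decoding. One small correction: the SC decoder operates on real-valued LLRs computed from the actual channel outputs (not on binned channels), so $\Q$ plays no role in the decoding complexity---the per-node cost is $O_\l(1)$ (exponential in $\l$) with no dependence on $\Q$, which is what keeps the total at $O_\l(N\log N)$ rather than $N^{O_\l(1)}$.
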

\begin{proof}
The key in our proof is that we consider $\ell$ as a (possibly very large) constant.
We start with the code construction and we first show that both Algorithm~\ref{algo:kernel_search} and Algorithm~\ref{algo:bin} have $\poly(N)$ time complexity.
 In the worst case, we need to check all $2^{\ell^2}$ possible kernels in Algorithm~\ref{algo:kernel_search}, and for each kernel we need to calculate the conditional entropy of the $\ell$ subchannels. Since we always work with the quantized channel with output size upper bounded by $N^3$, each subchannel of the quantized channels has no more than $2^{\ell}N^{3\ell}$ outputs. Therefore, the conditional entropy of these subchannels can be calculated in $\poly(N)$ time, so Algorithm~\ref{algo:kernel_search} also has $\poly(N)$ complexity. After finding the good kernels, we need to use Algorithm~\ref{algo:bin} to quantize/bin the output alphabet of the subchannels produced by these good kernels. As mentioned above, the original alphabet size of these subchannels is no more than $2^{\ell}N^{3\ell}$. Therefore, Algorithm~\ref{algo:bin} also has $\poly(N)$ complexity.
 At Step $i$, we use Algorithm~\ref{algo:kernel_search} $\ell^i$ times to find  good kernels, and then we use Algorithm~\ref{algo:bin} $\ell^{i+1}$ times to quantize the bit-channels produced by these kernels, so in total we use Algorithm~\ref{algo:kernel_search} $\frac{N-1}{\ell-1}$ times and we use Algorithm~\ref{algo:bin} $\frac{\ell(N-1)}{\ell-1}$ times. Finally, finding the set $\S_{\good}$ only requires calculating the conditional entropy of the bit-channels in the last step, so this can also be done in polynomial time. Thus we conclude that the code construction has $\poly(N)$ complexity, albeit the degree in $\poly(N)$ complexity depends on $\l$.

In the encoding procedure,
we first form the vector $\bU_{[1:N]}$ by putting all the information in the bits $\{U_i:\tau_t(i)\in\S_{\good}\}$ and setting all the other bits $\{U_i:\tau_t(i)\notin\S_{\good}\}$ to be $0$. Then
we multiply $\bU_{[1:N]}$ with the encoding matrix $M^{(t)}$ and obtain the codeword $\bX_{[1:N]}=\bU_{[1:N]}M^{(t)}$.
Since the matrix $M^{(t)}$ has size $N\times N$, a naive implementation of the encoding procedure would require $O(N^2)$ operations.
Fortunately, we can use \eqref{eq:defMj} to accelerate the encoding procedure. Namely, we first multiply $\bU_{[1:N]}$ with $D^{(t-1)}$, then multiply the result with $Q^{(t-1)}$, then multiply by $D^{(t-2)}$, so on and so forth. As mentioned above, for $j=0,1,\dots,t-1$, each $D^{(j)}$ is a block diagonal matrix with $N/\ell$ blocks on the diagonal, where each block has size $\ell\times\ell$. Therefore, multiplication with $D^{(j)}$ only requires $N\ell$ operations. By definition, $Q^{(j)}, j\in[t-1]$ are permutation matrices, so multiplication with them only requires $N$ operations. In total, we multiply with $2t-1=2\log_{\ell}N-1$ matrices. Therefore, the encoding procedure can be computed in $O_{\l}(N\log N)$ time, where $O_{\l}$ means that the constant in big-$O$ depends on $\l$.

The decoding algorithm uses exactly the same idea as the algorithm in Ar{\i}kan's original paper \cite[Section~VIII-B]{arikan-polar}. Here we only use the special case of $\ell=3$ and $t=2$ in Fig.~\ref{fig:ill32} to explain how Ar{\i}kan's decoding algorithm works for large (and mixed) kernels, and we omit the proof for general parameters.
We start with the decoding of $U_1,U_2,U_3$ in Fig.~\ref{fig:ill32}. It is clear that decoding $U_1,U_2,U_3$ is equivalent to decoding $U_1^{(1)},U_4^{(1)},U_7^{(1)}$. Then the log-likelihood ratio (LLR) of each of these three bits can be calculated locally from only three output symbols. More precisely, the LLR of $U_1^{(1)}$ can be computed from $\bY_{[1:3]}$, the LLR of $U_4^{(1)}$ can be computed from $\bY_{[4:6]}$, and the LLR of $U_7^{(1)}$ can be computed from $\bY_{[7:9]}$. Therefore, the complexity of calculating each LLR only depends on the value of $\ell$. Since $\ell$ is considered as a constant, the calculation of each LLR also has constant time complexity (although the complexity is exponential in $\ell$).
The next step is to decode $\bU_{[4:6]}$ from $\bY_{[1:9]}$ together with $\bU_{[1:3]}$. This is equivalent to calculating the LLRs of $U_2^{(1)},U_5^{(1)},U_8^{(1)}$ given $\bY_{[1:9]}$ and $U_1^{(1)},U_4^{(1)},U_7^{(1)}$. This again can be done locally: To compute the LLR of $U_2^{(1)}$, we only need the values of $\bY_{[1:3]}$ and $U_1^{(1)}$; to compute the LLR of $U_5^{(1)}$, we only need the values of $\bY_{[4:6]}$ and $U_4^{(1)}$; to compute the LLR of $U_8^{(1)}$, we only need the values of $\bY_{[7:9]}$ and $U_7^{(1)}$. Finally, the decoding of $\bU_{[7:9]}$ from $\bY_{[1:9]}$ and $\bU_{[1:6]}$ can be decomposed into local computations in a similar way. 
Using this idea, one can show that for general values of $\ell$ and $t$, the decoding can also be decomposed into $t=\log_{\ell}N$ stages, and in each stage, the decoding can further be decomposed into $N/\ell$ local tasks, each of which has constant time complexity (although the complexity is exponential in $\ell$). Therefore, the decoding complexity at each stage is $O_{\l}(N)$ and the overall decoding complexity is $O_{\l}(N\log N)$.
As a final remark, we mention that after calculating the LLRs of all $U_i$'s, we will only use the LLRs of the bits $\{U_i:\tau_t(i)\in\S_{\good}\}$. For these bits, we decode $U_i$ as $0$ if its LLR is larger than $0$ and decode it $1$ otherwise. Recall that in the encoding procedure, we have set all the other bits $\{U_i:\tau_t(i)\notin\S_{\good}\}$ to be $0$, so for these bits we simply decode them as $0$.
\end{proof}

\subsection{Code rate and decoding error probability}
In \eqref{eq:defHi}, we have defined the conditional entropy for all the bit-channels obtained in the last step (Step $t-1$). Here we also define the conditional entropy for the bit-channels obtained in the previous steps. More precisely, for every $j\in[t]$ and every $(i_1,i_2,\dots,i_j)\in[\ell]^j$, we use the following short-hand notation:
    \begin{align*}
    H_{i_1,\dots,i_j}(W) &:=
    H(W_{i_1,\dots,i_j}(K_1^{(0)},K_{i_1}^{(1)},\dots,K_{i_1,\dots,i_{j-1}}^{(j-1)}))   \\
    H_{i_1,\dots,i_j}^{\bin}(W) &:=
    H(W_{i_1,\dots,i_j}^{\bin}(K_1^{(0)},K_{i_1}^{(1)},\dots,K_{i_1,\dots,i_{j-1}}^{(j-1)}))   \\
    H_{i_1,\dots,i_j}^{\bin*}(W) &:=
    H(W_{i_1,\dots,i_j}^{\bin*}(K_1^{(0)},K_{i_1}^{(1)},\dots,K_{i_1,\dots,i_{j-1}}^{(j-1)}))  .
    \end{align*}
According to \eqref{eq:ttt}, we have
\begin{equation}  \label{eq:gpH}
H_{i_1,\dots,i_j}^{\bin*}(W) \le H_{i_1,\dots,i_j}^{\bin}(W)
\le H_{i_1,\dots,i_j}^{\bin*}(W) + \frac{6\log N}{N^3}
\end{equation}
for every $j\in[t]$ and every $(i_1,i_2,\dots,i_j)\in[\ell]^j$.

\begin{prop}
\label{prop:approx_accumulation}
For every $j\in[t]$ and $(i_1,i_2,\dots,i_j)\in[\ell]^j$, the conditional entropy $H_{i_1,\dots,i_j}(W)$ and $H_{i_1,\dots,i_j}^{\bin}(W)$ satisfy the following inequality
\begin{equation} \label{eq:obg}
H_{i_1,\dots,i_j}(W) \le
    H_{i_1,\dots,i_j}^{\bin}(W) \le 
    H_{i_1,\dots,i_j}(W) + \frac{6\ell \log N}{N^2}
\end{equation}
\end{prop}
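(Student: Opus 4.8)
The plan is to prove \eqref{eq:obg} by induction on $j$, carrying along the stronger auxiliary claim that the fully binned channel is degraded with respect to the unbinned one, i.e. $W_{i_1,\dots,i_j}^{\bin}(K_1^{(0)},\dots,K_{i_1,\dots,i_{j-1}}^{(j-1)})\preceq W_{i_1,\dots,i_j}(K_1^{(0)},\dots,K_{i_1,\dots,i_{j-1}}^{(j-1)})$ for all $(i_1,\dots,i_j)\in[\ell]^j$. Granting this, the lower bound in \eqref{eq:obg} is immediate from Proposition~\ref{prop:degrad_entropy}. For the inductive step of the degradation claim, recall that $W_{i_1,\dots,i_j}^{\bin*}$ is by definition the $i_j$-th Ar\i kan bit-channel of $W_{i_1,\dots,i_{j-1}}^{\bin}$ under the invertible kernel $K_{i_1,\dots,i_{j-1}}^{(j-1)}$, while $W_{i_1,\dots,i_j}$ is the $i_j$-th bit-channel of $W_{i_1,\dots,i_{j-1}}$ under the same kernel; since $W_{i_1,\dots,i_{j-1}}^{\bin}\preceq W_{i_1,\dots,i_{j-1}}$ by the induction hypothesis, Proposition~\ref{prop:degrad_subchannel} gives $W_{i_1,\dots,i_j}^{\bin*}\preceq W_{i_1,\dots,i_j}$, and because $W_{i_1,\dots,i_j}^{\bin}$ is obtained from $W_{i_1,\dots,i_j}^{\bin*}$ by one more application of the degraded binning algorithm (Algorithm~\ref{algo:bin}), transitivity of $\preceq$ closes the induction.

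For the quantitative upper bound I would track the worst-case entropy gap $e_j:=\max_{(i_1,\dots,i_j)\in[\ell]^j}\big(H_{i_1,\dots,i_j}^{\bin}(W)-H_{i_1,\dots,i_j}(W)\big)\ge 0$, with the convention $e_0:=H(\mathrm{Bin}(W))-H(W)\le\frac{6\log N}{N^3}$ (this is \eqref{eq:ttt} with $\Q=N^3$), and establish the recursion $e_j\le \ell\,e_{j-1}+\frac{6\log N}{N^3}$. Two ingredients go into this. First, the polar-transform step inflates the gap by at most a factor $\ell$: applying the invertible kernel $K_{i_1,\dots,i_{j-1}}^{(j-1)}$ to both $W_{i_1,\dots,i_{j-1}}^{\bin}$ and $W_{i_1,\dots,i_{j-1}}$, conservation of entropy \eqref{eq:entropy_conserv} gives $\sum_{i_j=1}^{\ell}\big(H(W_{i_1,\dots,i_j}^{\bin*})-H(W_{i_1,\dots,i_j})\big)=\ell\big(H(W_{i_1,\dots,i_{j-1}}^{\bin})-H(W_{i_1,\dots,i_{j-1}})\big)\le \ell e_{j-1}$, and every summand is nonnegative (by Proposition~\ref{prop:degrad_subchannel} applied to the induction hypothesis), so each individual term, and in particular $H_{i_1,\dots,i_j}^{\bin*}(W)-H_{i_1,\dots,i_j}(W)$, is at most $\ell e_{j-1}$. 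Second, the binning step at level $j$ contributes at most $\frac{6\log N}{N^3}$ by \eqref{eq:gpH}. Adding the two contributions yields the recursion.

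Unrolling the recursion gives $e_j\le \frac{6\log N}{N^3}\sum_{s=0}^{j}\ell^{s}=\frac{6\log N}{N^3}\cdot\frac{\ell^{j+1}-1}{\ell-1}$. Since $j\le t$ and $N=\ell^t$ we have $\ell^{j+1}\le \ell^{t+1}=\ell N$, and since $\ell\ge 2$ we have $\frac{1}{\ell-1}\le 1$, so $e_j\le \frac{6\log N}{N^3}\cdot\frac{\ell N}{\ell-1}\le \frac{6\ell\log N}{N^2}$, which is exactly the upper bound in \eqref{eq:obg}.

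I expect the main difficulty to be bookkeeping rather than anything conceptually deep: one must set up the three families of channels (the unbinned $W_{i_1,\dots,i_j}$, the partially binned $W^{\bin*}_{i_1,\dots,i_j}$, and the fully binned $W^{\bin}_{i_1,\dots,i_j}$) and their recursive relations consistently, check that the kernels produced by Algorithm~\ref{algo:kernel_search} are genuine invertible matrices (so that \eqref{eq:entropy_conserv} applies verbatim even though the kernels were selected from the binned channels), and justify the crucial ``each nonnegative summand is at most the sum'' step --- which is precisely what the degradation half of the induction supplies. Everything else is the elementary geometric-series estimate above.
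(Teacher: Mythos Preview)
Your proposal is correct and is essentially the same argument as the paper's proof: both establish the lower bound via the degradation relation (propagated through the polar transform by Proposition~\ref{prop:degrad_subchannel} and through binning by Algorithm~\ref{algo:bin}), and both obtain the upper bound by using entropy conservation \eqref{eq:entropy_conserv} to write the level-$j$ gaps as summing to $\ell$ times the level-$(j-1)$ gap, invoking nonnegativity of each summand to bound individual terms, adding the $\tfrac{6\log N}{N^3}$ binning error from \eqref{eq:gpH}, and summing the resulting geometric series. The only cosmetic difference is that you phrase the induction in terms of the worst-case gap $e_j$, whereas the paper tracks the explicit bound $\tfrac{6\log N}{N^3}(1+\ell+\dots+\ell^j)$ for every index; the content is identical.
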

\begin{proof}
% We start with the first inequality in \eqref{eq:obg}. It is clear that the binning algorithm will produce a channel that is degraded with respect to the original channel. Then one can further show that after the polar transform using any invertible kernel $K$, the degradation relation is preserved. More precisely, suppose that $\widetilde{W}$ is degraded with respect to $W$, and we apply the polar transform using some invertible kernel $K$ to both $\widetilde{W}$ and $W$, then $\widetilde{W}_i$ is also degraded with respect to $W_i$ for all $i\in[\ell]$. Here $\widetilde{W}_i$ and $W_i$ are the bit-channels produced by polar transforms of $\widetilde{W}$ and $W$, respectively. A rigorous proof that polar transforms with Ar{\i}kan's original $2\times 2$ kernel preserve the degradation relation was presented in \cite{Tal_Vardy}, and the proof for general kernels uses exactly the same idea, so we do not repeat it here. From this argument, we know that for every $j\in[t]$ and $(i_1,i_2,\dots,i_j)\in[\ell]^j$, the channel $W_{i_1,\dots,i_j}^{\bin}(K_1^{(0)},K_{i_1}^{(1)},\dots,K_{i_1,\dots,i_{j-1}}^{(j-1)})$ is always degraded with respect to $W_{i_1,\dots,i_j}(K_1^{(0)},K_{i_1}^{(1)},\dots,K_{i_1,\dots,i_{j-1}}^{(j-1)})$. 

Since the binning algorithm (Algorithm~\ref{algo:bin}) always produces a channel that is degraded with respect to the original channel,
the first inequality in \eqref{eq:obg} follows immediately by applying Proposition~\ref{prop:degrad_subchannel} recursively in our $t$-step code construction.

Now we prove the second inequality in \eqref{eq:obg}.
We will prove the following inequality by induction on $j$:
\begin{equation}  \label{eq:glg}
H_{i_1,\dots,i_j}^{\bin}(W) \le H_{i_1,\dots,i_j}(W)
+ \frac{6\log N}{N^3}(1+\ell+\ell^2+\dots+\ell^j)
\quad\quad \forall (i_1,i_2,\dots,i_j)\in[\ell]^j .
\end{equation}
The base case of $j=0$ is trivial. Now assume that this inequality holds for $j$ and we prove it for $j+1$.
By chain rule, we know that
$$
\sum_{i_{j+1}=1}^{\ell}H_{i_1,\dots,i_j,i_{j+1}}^{\bin*}(W)
=\ell H_{i_1,\dots,i_j}^{\bin}(W), 
\quad\quad
\sum_{i_{j+1}=1}^{\ell}H_{i_1,\dots,i_j,i_{j+1}}(W)
=\ell H_{i_1,\dots,i_j}(W).
$$
Therefore, 
$$
\sum_{i_{j+1}=1}^{\ell}\Big(H_{i_1,\dots,i_j,i_{j+1}}^{\bin*}(W) -H_{i_1,\dots,i_j,i_{j+1}}(W)  \Big)
=\ell \Big( H_{i_1,\dots,i_j}^{\bin}(W)
- H_{i_1,\dots,i_j}(W) \Big).
$$
Since every summand on the left-hand side is non-negative, we have
$$
H_{i_1,\dots,i_j,i_{j+1}}^{\bin*}(W) -H_{i_1,\dots,i_j,i_{j+1}}(W)  
\le \ell \Big( H_{i_1,\dots,i_j}^{\bin}(W)
- H_{i_1,\dots,i_j}(W) \Big)
\le \frac{6\log N}{N^3} (\ell+\ell^2+\dots+\ell^{j+1}),
$$
where the second inequality follows from the induction hypothesis.
Combining this with \eqref{eq:gpH}, we obtain that 
$$
H_{i_1,\dots,i_j,i_{j+1}}^{\bin}(W) \le H_{i_1,\dots,i_j,i_{j+1}}(W)
+ \frac{6\log N}{N^3}(1+\ell+\ell^2+\dots+\ell^{j+1}) .
$$
This establishes the inductive step and completes the proof of \eqref{eq:glg}.
The inequality \eqref{eq:obg} then follows directly from \eqref{eq:glg} by using the fact that $1+\ell+\dots+\ell^j<\ell N$ for all $j\le t$.
\end{proof}

Recall that in Remark~\ref{rmk:rmk} we denoted by $\l\ge\exp(\Omega(\a^{-1.01}))$ the conditions on $\ell$ to be large enough so that $\log\l \geq \frac{11}{\a}$ and $\frac{\log\l}{\log\log\l + 2} \geq \frac{3}{\a}$. In the theorems below, even though the statements hold for any $\a \in (0, 1/12)$, we modify the intervals of $\a$ so that the rate appears positive in the formulations. This is also why in the formulation of the Theorem~\ref{thm:intro-main} we take $\a$ from $(0, 1/36)$.

We now can formulate

\begin{thm} \label{thm:m1}
For arbitrarily small $\a \in \left(0, \frac1{14}\right)$, if we choose a large enough constant $\l\ge\exp(\Omega(\a^{-1.01}))$ to be a power of $2$ and let $t=\log_{\l} N$ grow, then
the codes constructed from the above procedure have decoding error probability $O_{\a}(\log N/N)$ under successive decoding and code rate $I(W)-N^{-1/2+7\a}$, where $N=\l^t$ is the code length. 
\end{thm}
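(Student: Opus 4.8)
The plan is to assemble Theorem~\ref{thm:m1} from three ingredients already at hand: the per-level guarantee of Theorem~\ref{thm:kernel_seacrh_correct}, the recursive potential-function bound \eqref{eq:Ega_evolution}--\eqref{eq:Ega_exponential}, and the black-box reduction \eqref{eq:strong_polarization} of \cite[Thm.~A.3]{Blasiok18}, while carrying the quantization error through the recursion.

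\textbf{Step 1: the chosen kernels polarize the exact channels.} At step $j$ the construction runs Algorithm~\ref{algo:kernel_search} on the binned channel $\wW := W_{i_1,\dots,i_j}^{\bin}(W)$ with $\Q=N^3$ and $\Delta=\tfrac{6\ell\log N}{N^2}$. First I would verify the hypotheses of Theorem~\ref{thm:kernel_seacrh_correct} taking for $W$ there the \emph{exact}, unbinned channel $W_{i_1,\dots,i_j}(W)$: (i) $\wW\preceq W_{i_1,\dots,i_j}(W)$, since Algorithm~\ref{algo:bin} degrades and degradation is preserved by polar transforms (Proposition~\ref{prop:degrad_subchannel}), applied recursively; (ii) $H(W_{i_1,\dots,i_j}(W))\le H(\wW)\le H(W_{i_1,\dots,i_j}(W))+\Delta$, the left inequality from Proposition~\ref{prop:degrad_entropy} and the right from Proposition~\ref{prop:approx_accumulation}; (iii) $|\widetilde{\Y}|\le\Q=N^3$ by construction; (iv) $\Delta\le\ell^{-\log\ell}$, which holds for all sufficiently large $N$ because $\ell$ is a constant; and (v) the two conditions on $\ell$, which follow from $\ell\ge\exp(\alpha^{-1.01})$ by Remark~\ref{rmk:rmk}. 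Hence Theorem~\ref{thm:kernel_seacrh_correct} gives \eqref{eq:mult_decrease_thm}, i.e. \eqref{mult_decrease} holds with $\lambda_\alpha=\ell^{-1/2+5\alpha}$ at \emph{every} node $W_{i_1,\dots,i_j}(W)$ of the exact channel tree, and every returned kernel is invertible.

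\textbf{Step 2: potential function and Markov.} Since all kernels are invertible, $H^{(j)}:=H(W^{(j)})$ --- the entropy of a uniformly random node at level $j$ of the exact tree --- is a martingale by entropy conservation \eqref{eq:entropy_conserv}. Plugging \eqref{mult_decrease} into \eqref{eq:Ega_evolution}--\eqref{eq:Ega_exponential} yields $\E[g_\alpha(H^{(t)})]\le\lambda_\alpha^{\,t}=N^{-1/2+5\alpha}$. Markov's inequality \eqref{eq:strong_Markov} with a threshold $\zeta$ of order $N^{-2}$ up to logarithmic factors then gives $\P[H^{(t)}\in(\zeta,1-\zeta)]\le 2^\alpha\zeta^{-\alpha}N^{-1/2+5\alpha}\le(\mathrm{const}/(\ell\log N))^\alpha\,N^{-1/2+7\alpha}$; the extra $2\alpha$ in the exponent is exactly what the $5\alpha$ slack in Theorem~\ref{thm:kernel_seacrh_correct} was reserved for.

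\textbf{Step 3: reading off rate and error.} By Proposition~\ref{prop:eqv} the bit-channels seen by the successive-cancellation decoder of $M^{(t)}$ are precisely the leaves $W_{i_1,\dots,i_t}(W)$ of the exact tree, and by Proposition~\ref{prop:approx_accumulation} the set $\S_{\good}$ of \eqref{eq:Sgood} is sandwiched, $\{(i_1,\dots,i_t):H_{i_1,\dots,i_t}(W)\le\zeta'\}\subseteq\S_{\good}\subseteq\{(i_1,\dots,i_t):H_{i_1,\dots,i_t}(W)\le 7\zeta'\}$ with $\zeta'=\tfrac{\ell\log N}{N^2}$. Invoking the reduction \eqref{eq:strong_polarization} with middle-threshold $\zeta'$ (using the Step~2 bound as $D\beta^t$) and the monotonicity of the code rate in the information set bounds the rate loss by $(\mathrm{const}/(\ell\log N))^\alpha N^{-1/2+7\alpha}+\zeta'$, which is below $N^{-1/2+7\alpha}$ for $N$ large since both correction terms decay strictly faster; invoking the same reduction with threshold $7\zeta'$ together with the inclusion $\S_{\good}\subseteq\{H\le 7\zeta'\}$ and monotonicity of the SC error bounds the decoding error by $7\zeta'\cdot N=\tfrac{7\ell\log N}{N}=O_\alpha(\log N/N)$.

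\textbf{Main obstacle.} Essentially all the difficulty has already been absorbed into the cited results --- the strong converse Theorems~\ref{thm:converse_Shannon_BMS} and~\ref{thm:Arikans_entropies_polarize}, hence Theorem~\ref{thm:kernel_seacrh_correct}, and the equivalence Proposition~\ref{prop:eqv} --- so the proof of Theorem~\ref{thm:m1} itself is bookkeeping. The one point requiring genuine care is the interplay of quantization and polarization: the kernels are chosen from binned channels but must be certified to polarize the exact ones, which is why $\Delta$ in Theorem~\ref{thm:kernel_seacrh_correct} must dominate the \emph{accumulated} binning error $\tfrac{6\ell\log N}{N^2}$ of Proposition~\ref{prop:approx_accumulation} rather than the per-step error, and why one must then check that all the $O(N^{-2}\log N)$ quantization terms are of strictly lower order than the target $N^{-1/2+7\alpha}$. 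This same quantization resolution is what produces a $\log N/N$ --- rather than sub-exponential --- decoding error, and lifting it is the subject of Section~\ref{sec:exponential-decoding}.
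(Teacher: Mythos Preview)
Your proposal is correct and follows essentially the same approach as the paper: verify that Theorem~\ref{thm:kernel_seacrh_correct} applies at every node of the \emph{exact} channel tree (using Proposition~\ref{prop:approx_accumulation} to justify $\Delta=\tfrac{6\ell\log N}{N^2}$), propagate the potential-function bound to get $\lambda_\alpha\le\ell^{-1/2+5\alpha}$, and then read off rate and error via the sandwich $\S_{\good}'\subseteq\S_{\good}\subseteq\{H\le 7\zeta'\}$. The only cosmetic difference is that the paper unpacks the reduction \eqref{eq:strong_polarization} explicitly---citing \cite[Lemma~2.2]{Blasiok18} for the per-bit error and proving the rate bound as a separate Lemma~\ref{lm:acd} (which in particular spells out the Markov step $\P[H^{(t)}\ge 1-\zeta]\le H(W)/(1-\zeta)$)---whereas you invoke \eqref{eq:strong_polarization} as a black box; the content is identical.
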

\begin{proof}
By \eqref{eq:obg} and the definition of $\S_{\good}$ in \eqref{eq:Sgood}, we know that for every $(i_1,\dots,i_t)\in\S_{\good}$, we have $H_{i_1,\dots,i_t}(W)\le H_{i_1,\dots,i_t}^{\bin}(W)\le \frac{7\ell \log N}{N^2}$.
Then by Lemma 2.2 in \cite{Blasiok18}, we know that the ML decoding error probability of the bit-channel $W_{i_1,\dots,i_t}(K_1^{(0)},K_{i_1}^{(1)},\dots,K_{i_1,\dots,i_{t-1}}^{(t-1)})$ is also upper bounded by $\frac{7\ell \log N}{N^2}$. Since the cardinality of $\S_{\good}$ is at most $N$, we can conclude that the overall decoding error probability under the successive  cancellation decoder is $O_{\a}(\log N/N)$ using the union bound.

Notice that $|\S_{\good}|$ is the code dimension. Therefore, we only need to lower bound $|\S_{\good}|$ in order to get the lower bound on the code rate.
Define another set 
\begin{equation}\label{eq:sgprime}
 \S_{\good}':=\left\{(i_1,i_2,\dots,i_t)\in[\ell]^t: H_{i_1,\dots,i_t}(W) \le \frac{\ell \log N}{N^2} \right\}.
\end{equation}
According to \eqref{eq:obg}, if $H_{i_1,\dots,i_t}(W)<\frac{\ell\log N}{N^2}$, then
$H_{i_1,\dots,i_t}^{\bin}(W)\le \frac{7\ell \log N}{N^2}$.
Therefore, $\S_{\good}'\subseteq \S_{\good}$, so $|\S_{\good}|\ge |\S_{\good}'|$.
In Lemma~\ref{lm:acd} below, we will prove that $|\S_{\good}'|\ge N(I(W)-N^{-1/2+7\alpha})$. Therefore, $|\S_{\good}|\ge N(I(W)-N^{-1/2+7\alpha})$.
This completes the proof of the theorem.
\end{proof}

\begin{lem}  \label{lm:acd}
If $\a \in \left(0, \frac1{14}\right)$ and $\l$ is large enough so that $\log\l \geq \frac{11}{\a}$ and $\frac{\log\l}{\log\log\l + 2} \geq \frac{3}{\a}$, then the set $\S_{\good}'$ defined in \eqref{eq:sgprime} satisfies the following inequality
$$
\left\lvert\S_{\good}'\right\lvert 
\geq N\left(I(W) - N^{-\frac12 + 7\a } \right) .
$$
\end{lem}

\begin{proof}
The proof is the same as in~\cite[Claim A.2]{Blasiok18}. Recall that we proved in~\eqref{eq:strong_Markov}--\eqref{eq:Ega_exponential}
$$
    \P\left[H^{(t)} \in \left(\frac{\l\log N}{N^2}, 1 - \frac{\l\log N}{N^2} \right)\right] \leq 2 \frac{N^{2\a}}{(\l\log N)^{\a}}\cdot\la^t,
$$
where $H^{(t)}$ is (marginally) the entropy of the random channel at the last level of construction, i.e. $H^{(t)}$ is uniformly distributed over  $H_{i_1,\dots,i_t}(W)$ for all possible $(i_1,i_2,\dots,i_t)\in[\ell]^t$, and $\la$ is such that~\eqref{mult_decrease} holds for any channel $W'$ throughout the construction. 
By Proposition~\ref{prop:approx_accumulation}, we can choose the error parameter $\Delta$ in Algorithm~\ref{algo:kernel_search} to be $\Delta=\frac{6\l\log N}{N^2}$, which satisfies the condition $\Delta\le \l^{-\log \l}$ in Theorem~\ref{thm:kernel_seacrh_correct}.
Then Theorem~\ref{thm:kernel_seacrh_correct} and Remark~\ref{rmk:rmk} tell us that as long as the conditions on $\l$ and $\a$ specified in this lemma hold, Algorithm~\ref{algo:kernel_search} allows us to choose kernels such that $\la\leq \l^{-1/2+5\a}$, which gives
\begin{equation} \label{eq:wus}
    \P\left[H^{(t)} \in \left(\frac{\l\log N}{N^2}, 1 - \frac{\l\log N}{N^2} \right)\right] \leq \frac{2 N^{-1/2 + 7\a}}{(\l\log N)^{\a}}.
\end{equation}
On the other hand, conservation of entropy throughout the process implies $E\left[H^{(t)}\right] = H(W)$, therefore by Markov's inequality
\begin{equation}
    \P\left[H^{(t)} \geq 1 - \frac{\l\log N}{N^2} \right] \leq \frac{H(W)}{1 - \frac{\l\log N}{N^2}} \leq H(W) + \frac{2\l\log N}{N^2}.
\end{equation}
Since $H(W) = 1 - I(W)$ for symmetric channels and $\left\lvert\S_{\good}'\right\lvert = N\cdot\P\left[H^{(t)} \leq \frac{\l\log N}{N^2} \right]$, we have
\begin{align*}
   \left\lvert\S_{\good}'\right\lvert &\geq N\left(1 - \frac{2 N^{-1/2 + 7\a}}{(\l\log N)^{\a}} - H(W) - \frac{2\l\log N}{N^2} \right) \\
   &\geq N\left(I(W) - \frac{3 N^{-1/2 + 7\a}}{(\l\log N)^{\a}}\right) \\
   &\geq N\Big(I(W) - N^{-1/2 + 7\a}\Big). \qedhere
\end{align*}
\end{proof}

\subsection{Main theorem: Putting everything together}
\label{sect:main_together}
As we mentioned at the beginning of this section, the code construction presented above only takes the special case of $\Q=N^3$ as a concrete example, where $\Q$ is the upper bound on the output alphabet size after binning; see Algorithm~\ref{algo:bin}.
In fact, we can change the value of $\Q$ to be any polynomial of $N$, and this will allow us to obtain a trade-off between the decoding error probability and the gap to capacity while maintaining the polynomial-time code construction as well as the $O_{\a}(N\log N)$ encoding and decoding complexity.
More precisely, we have the following theorem.

\begin{thm} \label{thm:main1}
For any BMS channel $W$, any $c>0$ and arbitrarily small $\a \in \left(0, \frac1{12 + 2c}\right)$, if we choose a large constant $\l$ to be a power of $2$ which satisfies $\log\l \geq \frac{11}{\a}$ and $\frac{\log\l}{\log\log\l + 2} \geq \frac{3}{\a}$, and set $\Q=N^{c+2}$ in the above code construction procedure, then
we can construct a code $\mathcal{C}$ with code length $N=\l^t$ such that the following four properties hold when $t$ grows: (1) the code construction has $N^{O_\a(1)}$ complexity; (2) both encoding and decoding have $O_{\a}(N\log N)$ complexity; (3) rate of $\mathcal{C}$ is $I(W)-O(N^{-1/2+(c+6)\a})$; (4) decoding error probability of $\mathcal{C}$ is $O_{\a}(\log N/N^c)$ under successive  decoding when $\mathcal{C}$ is used for channel coding over $W$.
\end{thm}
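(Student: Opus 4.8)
The plan is to prove Theorem~\ref{thm:main1} by essentially rerunning the argument already carried out for the special case $\Q=N^3$ (Theorem~\ref{thm:m1} together with Lemma~\ref{lm:acd} and Proposition~\ref{prop:approx_accumulation}), but keeping $\Q=N^{c+2}$ as a free parameter throughout. First I would observe that properties (1) and (2) are immediate: the complexity bounds in Proposition~\ref{prop:complex} only used that $\Q$ is a fixed polynomial in $N$ and that $\l$ is a constant, so replacing $N^3$ by $N^{c+2}$ changes only the degree hidden in the $N^{O_\a(1)}$ bound for the construction, and leaves the $O_\a(N\log N)$ encoding/decoding complexity untouched.

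For the quantitative parts, the key step is to retrace the accumulation-of-degradation estimate. With $\Q=N^{c+2}$, the per-step binning loss from \eqref{eq:ttt} becomes $\frac{2\log\Q}{\Q}=\frac{2(c+2)\log N}{N^{c+2}}$, so the analogue of \eqref{eq:gpH} reads $H^{\bin}_{i_1,\dots,i_j}(W)\le H^{\bin*}_{i_1,\dots,i_j}(W)+\frac{2(c+2)\log N}{N^{c+2}}$, and then the same induction as in Proposition~\ref{prop:approx_accumulation}, summing the geometric series $1+\l+\dots+\l^j< \l N$, gives $H_{i_1,\dots,i_t}(W)\le H^{\bin}_{i_1,\dots,i_t}(W)\le H_{i_1,\dots,i_t}(W)+O\!\left(\frac{\l\log N}{N^{c+1}}\right)$. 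Accordingly I would redefine $\S_{\good}$ with threshold $\Theta(\l\log N/N^{c+1})$ and $\S'_{\good}$ with threshold $\Theta(\l\log N/N^{c+1})$ as well; the crucial point is that this threshold is still $\le \l^{-\log\l}$ once $N$ is large (since $c$ is fixed and $\l$ a constant), so Theorem~\ref{thm:kernel_seacrh_correct} still applies with $\Delta=O(\l\log N/N^{c+1})$ and we still get $\la\le\l^{-1/2+5\a}$ from Remark~\ref{rmk:rmk}.

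Given that, property (4) follows as in Theorem~\ref{thm:m1}: for $(i_1,\dots,i_t)\in\S_{\good}$ the bit-channel entropy is $O(\l\log N/N^{c+1})$, hence by \cite[Lemma 2.2]{Blasiok18} its ML (equivalently SC, stage-wise) decoding error is $O(\l\log N/N^{c+1})$, and the union bound over at most $N$ good indices yields overall SC decoding error $O_\a(\log N/N^c)$. For property (3) I would repeat the Markov-inequality computation of Lemma~\ref{lm:acd}: from \eqref{eq:strong_Markov} with $p\l^{-t}=\Theta(\l\log N/N^{c+1})$ one gets $\P[H^{(t)}\in(p\l^{-t},1-p\l^{-t})]\le 2(N^{c+1}/(\l\log N))^{\a}\la^t = O(N^{-1/2+(c+6)\a})$, and combining with $\E[H^{(t)}]=H(W)$ and Markov on the upper tail gives $|\S'_{\good}|\ge N(I(W)-O(N^{-1/2+(c+6)\a}))$, whence the rate bound since $\S'_{\good}\subseteq\S_{\good}$.

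I do not expect a genuine obstacle here — the theorem is deliberately a parametrized repackaging of results already proved — but the one place that needs care is checking that enlarging $\Q$ to $N^{c+2}$ does not break the hypothesis $\Delta\le\l^{-\log\l}$ required by Theorem~\ref{thm:kernel_seacrh_correct}: one must verify that for all $t$ above some absolute constant (depending only on $c$ and $\l$, not on $W$) the quantity $\Theta(\l\log N/N^{c+1})$ is indeed below $\l^{-\log\l}$, and separately that the output size $\le 2^{\l}\Q^{\l}=2^{\l}N^{(c+2)\l}$ fed to Algorithm~\ref{algo:kernel_search} at each step is still $\poly(N)$, so the construction complexity stays $N^{O_\a(1)}$. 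Both are routine once $c$ and $\l$ are treated as constants. Finally I would note that Theorem~\ref{thm:m1} is recovered by setting $c=1$.
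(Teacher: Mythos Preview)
Your proposal is correct and follows essentially the same approach as the paper's own proof: both argue that properties (1)--(2) are immediate from Proposition~\ref{prop:complex}, then rerun Proposition~\ref{prop:approx_accumulation} with the new binning loss $\frac{2(c+2)\log N}{N^{c+2}}$ to get the $O(\l\log N/N^{c+1})$ approximation bound, redefine $\S_{\good}$ and $\S_{\good}'$ with thresholds at that scale, and redo the Markov computation of Lemma~\ref{lm:acd} to obtain the $N^{-1/2+(c+6)\a}$ rate gap and the $O_\a(\log N/N^c)$ error bound via union bound. Your additional explicit checks (that $\Delta\le\l^{-\log\l}$ for large $t$ and that the alphabet sizes stay $\poly(N)$) are appropriate sanity checks that the paper leaves implicit.
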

\begin{proof}
The proof of properties (1) and (2) is exactly the same as Proposition~\ref{prop:complex}.
Here we only briefly explain how to adjust the proof of Theorem~\ref{thm:m1} to show properties (3) and (4).
First, we change the definitions of $\S_{\good}$ and $\S_{\good}'$ to
\begin{align*}
 \S_{\good} & :=\left\{(i_1,i_2,\dots,i_t)\in[\ell]^t: H_{i_1,\dots,i_t}^{\bin}(W) \le \frac{(2c+3)\ell \log N}{N^{c+1}} \right\},   \\
     \S_{\good}' & :=\left\{(i_1,i_2,\dots,i_t)\in[\ell]^t: H_{i_1,\dots,i_t}(W) \le \frac{\ell \log N}{N^{c+1}} \right\}.
\end{align*}
The definition of $\S_{\good}$ immediately implies property (4). Next we prove property (3). 
Since we change $\Q$ from $N^3$ to $N^{c+2}$,
inequality \eqref{eq:gpH} becomes
$$
H_{i_1,\dots,i_j}^{\bin*}(W) \le H_{i_1,\dots,i_j}^{\bin}(W)
\le H_{i_1,\dots,i_j}^{\bin*}(W) + \frac{2(c+2)\log N}{N^{c+2}} .
$$
As a consequence, inequality \eqref{eq:obg} in Proposition~\ref{prop:approx_accumulation} becomes
$$
H_{i_1,\dots,i_j}(W) \le
    H_{i_1,\dots,i_j}^{\bin}(W) \le 
    H_{i_1,\dots,i_j}(W) + \frac{2(c+2)\l\log N}{N^{c+1}} .
$$
This inequality tells us that $\S_{\good}'\subseteq \S_{\good}$, so $|\S_{\good}|\ge |\S_{\good}'|$.
Then we follow Lemma~\ref{lm:acd} to lower bound $|\S_{\good}'|$.
Inequality \eqref{eq:wus} now becomes
$$
    \P\left[H^{(t)} \in \left(\frac{\l\log N}{N^{c+1}}, 1 - \frac{\l\log N}{N^{c+1}} \right)\right] \leq \frac{2 N^{-1/2 + (c+6)\a}}{(\l\log N)^{\a}}.
$$
Therefore, we obtain that 
$$
|\S_{\good}|\ge |\S_{\good}'|
\ge N\Big(I(W)-N^{-1/2 + (c+6)\a} \Big).
$$
This completes the proof of the theorem.
\end{proof}

%\newpage

\section{Inverse sub-exponential decoding error probability}
\label{sec:exponential-decoding}
In this section we finish proving our main result (Theorem~\ref{thm:intro-main}), by showing how to obtain inverse sub-exponential $\exp(-N^{\a})$ probability of error decoding within our construction of polar codes, while still having $\poly(N)$ time complexity of construction.
Note that up to this point we only claimed inverse polynomial decoding error probability in Theorem~\ref{thm:main1}. This restriction came from the fact that we need to approximate the channels we see in the tree during the construction phase (recall the discussion at the beginning of Sections~\ref{sect:local} and~\ref{sect:cons}), and to get a polynomial-time construction we need the binning parameter $\Q$ to be $\poly(N)$ itself. But this means that we are only able to track the parameters (entropies, for instance) of the bit-channels approximately, with an additive error which is inverse polynomial in $N$, see~\eqref{eq:gpH}. Since the decoding error probability relates directly to the upper bound on the entropies of the ``good" bit-channels we choose, this leads to only being able to claim inverse polynomial decoding error probability.

It was proved in a recent work~\cite{Wang-Duursma} that it is possible to achieve a fast scaling of polar codes (good scaling exponent) and good decoding error probability (inverse sub-exponential instead of inverse polynomial in $N$) simultaneously, using the idea of multiple (dynamic) kernels in the construction. Specifically, for any constants $\pi, \mu > 0$ such that $\pi + 2\mu < 1$, it is shown that one can construct a polar code with rate $N^{-\mu}$  close to capacity of the channel (which corresponds to scaling exponent $\mu$) and decoding error probability $\exp(-N^{\pi})$, as $N \to \infty$. Moreover, it is shown that this is an optimal scaling of these two parameters one can obtain for \emph{any} (not just polar) codes. However, the construction phase in~\cite{Wang-Duursma} tracked the \emph{true} bit-channels that are obtained in the $\l$-ary tree of channels, which makes the construction intractable. This is because (most of) the true bit-channels cannot even be described in a tractable way, since they have exponential size of output alphabet.

In what follows we combine our approach of using Ar\i kan's kernels for polarized bit-channels with a stronger analysis of polarization from~\cite{Wang-Duursma} to overcome this issue of intractable construction. Specifically, we show that even though we only track \emph{approximations} (binned versions) of the bit-channels in the tree, if we use Ar\i kan's channels for suction at the end regime, then we are still able to prove very strong polarization, as in~\cite{Wang-Duursma}. This comes from the fact that we know very well how Ar\i kan's basic $2\times 2$ kernel evolves the parameters of the bit-channels. 
%Thus even though we might not track a tight approximation of some bit-channel (because the binning is quite coarse), we still might be able to claim a very strong bound on its parameters. 
This allows us to get very strong bounds on the parameters of the \emph{true} bit-channels (which leads to good decoding error probability), while still only tracking their \emph{approximations} (which keeps the construction time polynomial). Somewhat surprisingly, the phase of the construction where the local kernels are chosen is exactly the same as it was before in Section~\ref{sect:cons}, and the difference lies in a much tighter analysis of how to choose a set of ``good" indices to actually construct a polar code. 

\subsection*{Notations}
We fix a small positive parameter $\a > 0$ from the statement of Theorem~\ref{thm:intro-main}, which corresponds to how close the scaling exponent will be to $1/2$. Specifically, we will have the scaling exponent $\mu = 2 + O(\a)$. As before, the size of the kernel is denoted by $\l = 2^s$, where $\l$ is large enough in terms of $\a$ (specifically, the bounds from the statement of the Theorem~\ref{thm:kernel_seacrh_correct} must hold). 
%As for decoding error probability, we will drive it down to $\exp(-N^{\a})$, so $\pi = \a$ in the notations above. This is consistent with getting $\mu + \pi$ anyhow close to $1/2$ by choosing $\a$ small enough.

We are going to work with the complete $\l$-ary tree of bit-channels, as described in Section~\ref{sect:mix}. Let $t$ be the depth of this tree, then there are $N = \l^t$ bit-channels at the last level, denoted as $W_i$ for $i\in [\l^t]$ (these notations depend on the depth $t$ of the tree at which we are looking, but it will always be clear from the context). Throughout this section we will denote such a tree of depth $t$ as $\T_t$.

We will again have a random process of going down the tree, starting from the root, and picking a random child of a current bit-channel at each step. To be more precise, the random process $\WW_i$ is defined as follows: $\WW_0 = W$ (the initial channel, i.e. the root of the tree), and $\WW_{j+1} = \left(\WW_j\right)_k$, where $k\sim [\l]$, and $\left(\WW_j\right)_k$ is the $k^{\text{th}}$ Ar\i kan's bit-channel of $\WW_j$ with respect to the corresponding kernel in the tree. This indeed is equivalent to a random walk down the tree. Then we also define the random processes $\ZZ_j = Z(\WW_j)$ and $\HH_j = H(\WW_j)$. Note that $\WW_t$ marginally is distributed as $W_i$ for $i \sim [N]$, where $N = \l^t$, i.e. $\WW_t$ is just a random bit-channel at the level $t$ of the tree. Further, we will also look at random processes $\WWb_j, \HHb_j, \ZZb_j$, which mean that we also do the binning procedure as described in the construction phase in Section~\ref{sect:cons}. Note that $\WWb_j$ are the channels that we actually track during the construction of the code, while $\WW_j$ are the \emph{true} bit-channels in the tree.

Finally, by $\exp(\bullet)$ we will denote $2^{\bullet}$ in this section, and we denote by $x^+ = \max\{x, 0\}$ the positive part of $x$.

\subsection*{Plan}

First, notice that building the tree $\T_t$ of bit-channels is itself a part of construction of our polar codes. This includes tracking the binned versions of the bit-channels, and picking the kernels using Algorithm~\ref{algo:kernel_search}. This part will stay exactly the same as it is described in Section~\ref{sect:cons}, with the binning parameter $\Q = N^3$, and the same threshold of $\l^{-4}$ in the Algorithm~\ref{algo:kernel_search}. The only part of the construction that is going to change is how we pick the set of good indices which we use to transmit information.

We will closely follow the analysis from~\cite[Appendices~B,\ C]{Wang-Duursma} (also appearing in~\cite{Wang18}), modified for our purposes. Specifically, we will prove the needed polarization of the construction presented in Section~\ref{sect:cons} in three steps (recall that $s = \log_2 \ell$):
\begin{enumerate}[label=\arabic*)]
    \item $\P\bigg[\ZZ_t \leq \exp(-2st)\bigg] \geq I(W) - \l^{-(1/2 - 10\a)t},$
    \item $\P\bigg[\ZZ_t \leq \exp\left(-2^{t^{1/3}}\right) \bigg] \geq I(W) - \l^{-(1/2 - 11\a)t + \sqrt{t}},$
    \item $\P\bigg[\ZZ_t \leq \exp\left(-st\cdot\l^{\a\cdot t}\right)\bigg] \geq I(W) - \l^{-(1/2 - 16\a)t + 2\sqrt{t}}$ \hspace{4pt} for $t = \O(\log^6 s)$.
\end{enumerate}
Moreover, for each step, we prove that the polarization at each step is \emph{poly-time constructible}:
\begin{defin}
We call the polarization $\P[\ZZ_t \leq p(t)] \geq R(t)$ to be \emph{poly-time constructible} if one can find at least $N\cdot R(t)$ indexes $i \in [N]$ such that $Z(W_i) \leq p(t)$, where $N = \l^t$, in time polynomial in $N$.
% \todo{Should we clarify that large enough $t$ only means a function of $\ell$, since our main point is good finite length scaling.} 
\end{defin}

Notice that if polarization $\P[\ZZ_t \leq p(t)] \geq R(t)$ is poly-time constructible, then by choosing these $N\cdot R(t)$ indexes as information bits of the code, a standard argument implies that one obtains a polar code of rate $R(t)$ and decoding error probability at most $N\cdot p(t)$. Moreover, since the indexes of the information bits were found in $\text{poly}(N)$ time, this makes the whole code construction complexity polynomial in $N$.
%$N(I(W) - N^{-1/2 + 10\a})$ indices $i \in [N]$ such that $Z_t \leq \exp(-2st)$, but we do not guarantee (nor we need) to find \emph{all} of such indices.

The polarization behavior from Step 3 with $t\geq \frac1{\a^2}$ will then correspond to polar codes with rate $I(W) - N^{-1/2 + 18\a}$ (i.e. codes with scaling exponent $(2+O(\a))$ and sub-exponentially small decoding error probability $N\cdot\exp\left(-st\cdot\l^{\a\cdot t}\right) = \exp(-N^{\a})$, with $\poly(N)$ construction time, which finishes the proof of the main result of this paper.

\subsection{Step 1}

\begin{lem}
\label{lem:step1}
$\P\bigg[\ZZ_t \leq \exp(-2st)\bigg] \geq I(W) - \l^{-(1/2 - 10\a)t}.$ Moreover, this polarization is poly-time constructible.
\end{lem}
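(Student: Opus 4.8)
\textbf{Proof proposal for Lemma~\ref{lem:step1}.}

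The plan is to combine two ingredients: the potential‑function martingale bound (which controls the \emph{rate}), and the explicit evolution of the Bhattacharyya parameter under Arıkan's $2\times2$ kernel (which drives $\ZZ$ down once a channel is mildly polarized). First I would extract the probabilistic statement. Every kernel produced by Algorithm~\ref{algo:kernel_search} satisfies \eqref{mult_decrease} with $\la\le\l^{-1/2+5\a}$ by Theorem~\ref{thm:kernel_seacrh_correct} (applicable with binning error $\Delta=\tfrac{6\l\log N}{N^2}\le\l^{-\log\l}$, exactly as in Lemma~\ref{lm:acd}), so the recursion \eqref{eq:Ega_exponential} gives $\E[\g(\HH_t)]\le\la^t$. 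Combining the Markov step \eqref{eq:strong_Markov} with the elementary bound, valid for any $\eta\in(0,1/2)$,
\[
\P[\HH_t\le\eta]\ \ge\ I(W)-2\eta-\frac{\E[\g(\HH_t)]}{\g(\eta)}\ \ge\ I(W)-2\eta-2\eta^{-\a}\la^t
\]
(which follows from $\E[\HH_t]=H(W)$, $\HH_t\in[0,1]$, and $\g(\eta)\ge(\eta/2)^{\a}$), and choosing $\eta=\l^{-4t}$, one gets $\P[\HH_t\le\l^{-4t}]\ge I(W)-\l^{-(1/2-10\a)t}$ for $t$ large. Since $\ZZ_t\le\HH_t^{1/2}$ by \eqref{eq:Z-H}, this yields $\P[\ZZ_t\le\l^{-2t}]=\P[\ZZ_t\le\exp(-2st)]\ge I(W)-\l^{-(1/2-10\a)t}$, with room to spare.

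The real work is the \emph{poly‑time constructibility}, i.e.\ explicitly exhibiting $\ge N\bigl(I(W)-\l^{-(1/2-10\a)t}\bigr)$ indices certified to have $Z(W_i)\le\exp(-2st)$. We can neither compute $Z(W_i)$ (the true bit‑channels have exponentially large alphabet) nor even certify $H^{\bin}_{i_1,\dots,i_t}(W)\le\l^{-4t}$, since the accumulated binning error $\tfrac{6\l\log N}{N^2}$ is polynomially larger than $\l^{-4t}$. Instead I would use that the construction is \emph{fixed}: whenever a tracked channel has $H^{\bin}_{i_1,\dots,i_j}(W)<\l^{-4}$, Algorithm~\ref{algo:kernel_search} deterministically returns $A_2^{\otimes\log\l}$ --- and this is certifiable, since the binned entropies are computed in $\poly(N)$ time. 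Moreover, by the first inequality of Proposition~\ref{prop:approx_accumulation} and \eqref{eq:Z-H}, $H^{\bin}_{i_1,\dots,i_j}(W)<\l^{-8}$ certifies $Z(W_{i_1,\dots,i_j})<\l^{-4}$ for the \emph{true} bit‑channel. From such a prefix every further level uses only $A_2^{\otimes\log\l}$, and the relations $Z(W^+)=Z(W)^2$, $Z(W^-)\le 2Z(W)$ together with Claim~\ref{cl:Z_evolution} turn any suffix carrying enough ``$+$'' branches into a doubly‑exponential collapse of $Z$ below $\exp(-2st)$; counting which suffixes are sufficiently ``$+$‑rich'' is purely combinatorial and needs no channel computation. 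Thus the certified set is: all leaves $(i_1,\dots,i_t)$ having a prefix whose binned entropy has dropped below the suction threshold, followed by a suffix with enough ``$+$'' steps.

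The main obstacle --- and the crux of the whole argument --- is bounding the size of this certified set, because the suction regime is not absorbing: a single ``$-$'' branch of $A_2$ roughly undoes the gain and can push a channel back above $\l^{-4}$. Handling this requires (i) a martingale maximal inequality for the (approximate‑martingale) process $\HHb_j$ to control how often a channel that has entered suction leaves it, and, crucially, (ii) a re‑polarization argument showing that the paths which leave and fail to re‑polarize by level $t$ have measure only $\l^{-(1/2-O(\a))t}$, rather than the naive constant $\l^{-4}$. This is precisely the quantitative polarization analysis of Wang--Duursma~\cite{Wang-Duursma}, which I would adapt: one shows the first‑entry level into suction is concentrated near the root (with tail governed by $\la^{j}$, hence a typical good path spends $\approx t$ levels in suction), so it accumulates $\gg\log_2(st)$ many ``$+$'' steps --- all that is needed to reach $Z\le\exp(-2st)$ --- while the leftover loss (paths never entering suction, re‑entering too late, or ``$+$‑poor'' over a near‑full‑length suffix) telescopes to at most $\l^{-(1/2-10\a)t}$. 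The $\poly(N)$ complexity is then immediate, since computing the binned entropies and the ``$+$''-step counts and listing the good indices are all polynomial in $N$.
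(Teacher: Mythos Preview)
Your first paragraph, establishing the probability bound $\P[\ZZ_t\le\exp(-2st)]\ge I(W)-\l^{-(1/2-10\a)t}$ via the potential function and $Z\le\sqrt{H}$, is correct and matches the paper's reasoning.

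The constructibility argument, however, is built on a false premise and is vastly more complicated than necessary. You write that we ``cannot certify $H^{\bin}_{i_1,\dots,i_t}(W)\le\l^{-4t}$ since the accumulated binning error $\tfrac{6\l\log N}{N^2}$ is polynomially larger than $\l^{-4t}$.'' But the binning parameter $\Q$ is a free polynomial; nothing forces $\Q=N^3$. The paper's proof simply invokes the Section~\ref{sect:main_together} analysis with a larger $c$ (morally $\Q=N^{c+2}$ with $c\ge 3$, so that the cumulative binning error from Proposition~\ref{prop:approx_accumulation} is $O(N^{-4})$). Then the computable set $\{i:H^{\bin}_i(W)\le N^{-4}\}$ contains $\{i:H_i(W)\le N^{-4}/2\}$, which by the very calculation in your first paragraph has measure $\ge I(W)-\l^{-(1/2-10\a)t}$; and for every $i$ in this set, $Z(W_i)\le Z(W_i^{\bin})\le\sqrt{H^{\bin}_i(W)}\le N^{-2}=\exp(-2st)$. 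That is the entire constructibility proof: compute the binned entropies, threshold at $N^{-4}$, done.

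The elaborate alternative you outline---tracking first entry into the suction regime, counting ``$+$''-rich suffixes of Ar{\i}kan branches, invoking a Wang--Duursma style re-polarization/maximal-inequality argument---is precisely the machinery the paper deploys in Steps~2 and~3 (Lemmas~\ref{lem:step2} and~\ref{lem:step3}), \emph{building on} Step~1 as the base case. Importing that machinery into Step~1 would make the three-step structure circular, and in any case it is unnecessary: Step~1 asks only for an inverse-polynomial $Z$-bound, which is exactly what direct inspection of $H^{\bin}$ delivers once $\Q$ is chosen large enough.
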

\begin{proof}
This follows from the analysis of the construction we already have in the previous sections. Fix some $t$ and let $N = \l^t$. Then the following is implied from Section~\ref{sect:main_together} if one takes $\Q = N^3$, i.e. $c=3$:
\[ \P_{i \sim [N]}\bigg[H(W_i^{\bin}) \leq \dfrac1{N^4}\bigg] \geq I(W) - N^{-(1/2 - 10\a)}.\]

Note here that $H(W_i^{\bin})$ are the entropies of the binned bit-channels that we are actually tracking during the construction phase, so they are computable in polynomial time. This means that there is $\poly(N)$-time procedure which returns all the indices $i$ for which $H(W_i^{\bin}) \leq \frac1{N^4}$. Then $Z(W^{\bin}_i) < \sqrt{H(W_i^{\bin})} \leq \frac1{N^2}$ for these indices, so we have for the random process $\ZZ^{\bin}_t$:
\[ \P\Big[\ZZ^{\bin}_t \leq \l^{-2t} \Big] = \P\Big[\ZZ^{\bin}_t \leq 2^{-2st} \Big] = \P\Big[\ZZ^{\bin}_t \leq \exp\left(-2st\right) \Big] \geq I(W) - N^{-(1/2 - 10\a)},\]
and moreover, one can find at least $N(I(W) - N^{-(1/2 - 10\a)})$ indexes within $i \in [N]$ for which the inequality $Z(W^{\bin}_i) \leq \exp\left(-2st\right)$ holds in $\poly(N)$ time (just by returning the indices for which $H(W_i^{\bin}) \leq \frac1{N^4}$). Since it always holds $\ZZ_t \le \ZZb_t$, the statement of the lemma follows.
\end{proof}

\subsection{Step 2}
Next, we are going to strengthen the polarization of the construction, using the result of Lemma~\ref{lem:step1}. Specifically, we prove
\begin{lem}
\label{lem:step2}
$\P\bigg[\ZZ_n \leq \exp\left(-2^{n^{1/3}}\right) \bigg] \geq I(W) - \l^{-(1/2 - 11\a)n + \sqrt{n}}$. Moreover, this polarization is poly-time constructible.
\end{lem}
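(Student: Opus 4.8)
The plan is to amplify the polarization from Lemma~\ref{lem:step1} by running $\Theta(n^{1/3})$ additional levels of Ar\i kan's basic $2\times2$ transform on the fraction of bit-channels that Step~1 already brought below $\exp(-2sn)$. Concretely, I would split the depth-$n$ tree into a top part of depth $t_0 := n - m$ and a bottom part of depth $m$, where $m$ is chosen around $n^{1/3}$ (this is the regime where the ``suction at the ends'' kernel is selected by Algorithm~\ref{algo:kernel_search} because the channels are already very noisy or very noiseless). By Lemma~\ref{lem:step1} applied at depth $t_0$, a $(I(W) - \l^{-(1/2-10\a)t_0})$-fraction of channels $\WW_{t_0}$ satisfy $\ZZ_{t_0} \le \exp(-2st_0)$, hence in particular $\ZZ_{t_0} \le \l^{-4}$, so for all of these the algorithm picks $K = A_2^{\otimes s}$ in the remaining $m$ levels. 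Now I would track the Bhattacharyya parameter down these $m$ extra levels using \eqref{eq:Z+} and \eqref{eq:Z-}: along a branch with $wt_+$ equal to (close to) $sm/2$ of the $+$ steps, Claim~\ref{cl:Z_evolution}-type reasoning gives a doubly-exponential squeeze, $\ZZ_n \le \exp(-2^{\Theta(sm)})$ which for $m = \Theta(n^{1/3})$ is at most $\exp(-2^{n^{1/3}})$ with room to spare. The fraction of branches in the bottom tree that are ``bad'' (too few $+$ steps) is exponentially small in $sm$, so it costs only an additive $\l^{-\Omega(sm)} = \l^{-\Omega(n^{1/3}\log\l)}$ loss in rate, which is absorbed into the $\sqrt{n}$ slack in the exponent.

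First I would make the branch-counting precise: condition on reaching a channel $\WW_{t_0}$ with $\ZZ_{t_0}\le\exp(-2st_0)$, and among the $2^{sm}$ leaves of the depth-$m$ binary subtree (recall one $A_2^{\otimes s}$ layer equals $s$ layers of the basic transform), call a leaf \emph{good} if the number of $+$ edges on its root-to-leaf path is at least, say, $sm/4$. By a Chernoff bound the fraction of non-good leaves is $\le \exp(-\Omega(sm))$. For a good leaf, applying \eqref{eq:Z-} for the $-$ steps (each at most doubling $Z$) and \eqref{eq:Z+} for the $\ge sm/4$ squaring steps, starting from $Z \le \exp(-2st_0) \le \exp(-2sm)$ (using $t_0 \ge m$), yields $\ZZ_n \le (2^{sm}\exp(-2sm))^{2^{sm/4}} \le \exp(-2^{sm/4}) \le \exp(-2^{n^{1/3}})$ for suitable $m \asymp n^{1/3}$. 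Combining the two fractions via a union bound: the overall fraction of leaves of $\T_n$ with $\ZZ_n \le \exp(-2^{n^{1/3}})$ is at least $(I(W) - \l^{-(1/2-10\a)t_0})(1 - \exp(-\Omega(sm))) \ge I(W) - \l^{-(1/2-11\a)n + \sqrt{n}}$, where the extra $-\a n$ in the exponent and the $+\sqrt{n}$ slack comfortably dominate the rate loss from $t_0 = n-m$ versus $n$ and from the bad-leaf fraction.

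For the ``poly-time constructible'' part, the key observation (as emphasized in the overview, Section~\ref{sec:overview:exponential-decoding}) is that we do \emph{not} need to track the true bit-channels down the bottom $m$ levels. The construction algorithm of Section~\ref{sect:cons} already tracks the binned channels $\WWb_{t_0}$, so in $\poly(N)$ time we can identify all indices with $\HHb_{t_0} \le 1/N^4$, hence $\ZZb_{t_0} \le 1/N^2 \le \l^{-4}$, which guarantees the suction regime kicks in and $A_2^{\otimes s}$ is used for the rest; and for those indices, the position of each bit-channel among the $2^{sm}$ leaves (i.e. its $\pm$ signature) is a purely combinatorial datum read off from the $\l$-ary index, requiring no channel computation. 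So we simply output the indices $i\in[N]$ whose top-$t_0$ prefix is in the Step-1 good set \emph{and} whose bottom-$m$ signature has $\ge sm/4$ plus-signs; there are at least $N(I(W)-\l^{-(1/2-11\a)n+\sqrt n})$ of them, and the bound $\ZZ_n(i) \le \exp(-2^{n^{1/3}})$ holds for each by the deterministic evolution inequalities \eqref{eq:Z+}--\eqref{eq:Z-}. The main obstacle I anticipate is bookkeeping the interaction between the two layers cleanly—specifically ensuring that the entropy/Bhattacharyya gap from Step~1 ($\exp(-2st_0)$) is strong enough to survive the at-most-doubling from the $-$ steps before the squaring steps take over, and choosing $m$ (and the threshold fraction $sm/4$) so that all three of the target error $\exp(-2^{n^{1/3}})$, the rate slack $\sqrt n$, and the constructibility go through simultaneously; but none of this requires new ideas beyond the suction-at-the-ends calculus already developed in Section~\ref{sec:suctions}.
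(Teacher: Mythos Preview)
Your approach is correct and proves the stated bound, but it is structurally different from the paper's. You use a \emph{single} two-level split: run Step~1 to depth $t_0=n-\Theta(n^{1/3})$, then count $+$-branches over the bottom $\Theta(n^{1/3})$ Ar\i kan layers and apply Claim~\ref{cl:Z_evolution} once. The paper instead partitions the $n$ levels into $\sqrt{n}$ stages of depth $\sqrt{n}$ each, and at each stage $m=\sqrt{n},2\sqrt{n},\dots$ it applies Lemma~\ref{lem:step1} anew to harvest the channels whose binned $Z$-value has just dropped below $\exp(-2sm)$, then looks only $\sqrt{n}$ levels ahead to count $+$-branches; this yields a linear recurrence $q_m^+\le q_{m-\sqrt n}^+\cdot\l^{-\gamma\sqrt n}+\l^{-(1/2-10\a)m}$ on the uncaptured fraction, which is solved explicitly. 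Both arguments are sound; yours is more elementary and in fact yields a slightly stronger rate exponent (the slack term is $O(n^{1/3})$ rather than $\sqrt{n}$). The paper's multi-stage decomposition is inherited directly from \cite{Wang-Duursma} and has the structural advantage that the resulting ``good'' event $E_0^{(n-\sqrt n)}$ is a disjoint union indexed by the first stage at which a channel became captured---this modular form is reused verbatim as the event $R_n$ in Step~3. Your single-split event (good at level $t_0$ \emph{and} $\ge sm/4$ plus-signs below) would serve equally well as the input to Step~3, so this is a stylistic rather than substantive difference.
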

\begin{proof}
For this lemma, we fix $n$ to be the total depth of the tree (instead of $t$), and we want to prove the speed of polarization at level $n$.
%\todo{Why the change from $n$ to $t$? Is it to be consistent with [WD19], or to apply it in Step 3 more clearly?.}
%\todo{$n$ needs to be a square?}
To do this, we will divide the tree into $\sqrt{n}$ stages, each of depth $\sqrt{n}$, and apply the polarization we obtained at Step 1 at each stage. So, we look at $m$ being $\sqrt{n}$, $2\sqrt{n}, \dots, n-\sqrt{n}$.  Define the following events, starting with $E_0^{(0)} = \emptyset$ (again, closely following~\cite{Wang-Duursma}):
\begin{align}
    &A_m = \left\{\ZZ^{\bin}_m < \exp(-2sm)\right\}\setminus E_0^{(m-\sqrt{n})} \\
    &B_m = A_m \bigcap \left\{\sum_{i=1}^{s\sqrt{n}}g_{sm+i} \leq \beta\cdot s\sqrt{n}\right\}\\
    &E_m = A_m \setminus B_m\\
    &E_0^{(m)} = E_0^{(m-\sqrt{n})}\cup E_m,
\end{align}
where for now
% \todo{Maybe say here that $b_j=1$ corresponds to taking the good branch in the $2\times 2$ Ar{\i}kan transform and $b_j=0$ corresponds to taking the bad branch.} 
one can think of $g_j$'s as of independent $\text{Bern}(1/2)$ random variables for all $j \in [s\cdot n]$. In the following several paragraphs we explain what these events are going to correspond to. First of all, the actual random variable we are tracking here is $\WW_n$, and its realizations are $\l^n$ bit-channels $W_i$ for $i\in[\l^n]$ at the last level of the tree. We can then think of events and subsets of bit-channels at level $n$ interchangeably. 

Notice that each bit-channel $W_i$ for $i\in [\l^n]$ corresponds to a unique path in the tree $\T_n$ from the root $W$ (the initial channel) to the leaf $W_i$ on the $n^{\text{th}}$ level. We will be interested in the bit-channels on these path, their binned versions, and the parameters of both versions (true and binned) of these channels during the ensuing arguments. We denote this path of true bit-channels as $W_i^{(0)} = W, W_i^{(1)}, \dots, W_i^{(n-1)}, W_i^{(n)} = W_i$. Clearly, this path is just a realization of a random walk $\WW_0, \WW_1, \dots, \WW_n$, when $\WW_n$ ends up being $W_i$. In the same way, we will denote by $W_i^{(k),\bin}$, for $k = 0, 1,\dots, n$ the binned version of the bit-channel along this path, and by $H_i^{(k)}$, $H_i^{(k),\bin}$, $Z_i^{(k)}$, and $Z_i^{(k),\bin}$ the corresponding parameters of these channels.

We are going to construct a set of ``good" bit-channels $E_0^{(n-\sqrt{n})}$ incrementally, by inspecting the tree from top to bottom. We start with the set $E_0^{(0)} = \emptyset$. Then, at each stage $m = \sqrt{n}$, $2\sqrt{n}, \dots, n-\sqrt{n}$, we find a set $E_m$ of bit-channels which we mark to be ``good" at level $m$. Precisely, the channel $W_i$, for some $i\in [\l^n]$, is going to be in $E_m$, if: a) it is not marked as good before that (i.e. it is not in $E_0^{(m-\sqrt{n})}$); b) the Bhattacharyya parameter $Z_i^{(m), \bin}$ is small, specifically smaller then $\exp(-2sm)$; and c) a certain condition holds for how the branches are chosen in the path for $W_i$ between levels $m$ and $m + \sqrt{n}$ in the tree (more details on this later). Here conditions a) and b) correspond together to the event $A_m$, while condition c) further defines the event $B_m$. Then the set $E_0^{(m)}$ will be the set of all bit-channels that we marked to be good up to the level $m$ in the tree, and in the end, by collecting all the bit-channels that we marked as good at the stages $m = \sqrt{n}, 2\sqrt{n}, \dots, n-\sqrt{n}$, we obtain the final set $E_0^{(n-\sqrt{n})}$.

% Then $E_0^m$ is going to contain all the bit-channels which we marked as "good" before level $m$ in the tree. $A_m$ corresponds to the bit-channels which were not marked "good" before level $m$, but for which the inequality from Step 1 holds at level $m$. $B_m$ then contains bit-channels from $A_m$ for which certain condition holds for branching between levels $m$ and $m+\sqrt{n}$. Finally, $E_m$ corresponds to bit-channels which we mark to be "good" at level $m$. 

Denote by corresponding lowercase letters the probabilities of the events described before, i.e. $a_m \coloneqq \P[A_m]$, etc.. Finally, let $q_m = I(W) - e_0^{(m)}$, i.e. $q_m$ is the gap between the capacity and the fraction of the channels which we marked as ``good" up to level $m$.

%We can also think of each bit-channel as a \emph{path} in the tree.

\medskip
To begin the formal analysis, let us first consider what happens in case of the event $A_m$. First, it means that $\ZZ^{\bin}_m < \exp(-2sm)$. But then we know that we are going to apply Ar\i kan's kernel $A_2^{\otimes s}$ to this bit-channel at level $m$, since the threshold for picking Ar\i kan's kernel in Algorithm~\ref{algo:kernel_search}, which we use in the construction phaze, is $\l^{-4} = \exp(-4s)$. This means that, conditioned on $A_m$, we have $\ZZ_{m+1} \leq  \ZZ_m\cdot 2^s \leq \ZZ^{\bin}_m\cdot 2^s < 2^s\cdot\exp(-2sm)$, where the first inequality follows from that we know how Bhattacharrya parameter evolves when we use basic Ar\i kan's transforms. Precisely, using the kernel $A_2^{\otimes s}$ is equivalent to using the basic $2\times 2$ kernel $A_2$ for $s$ times, and the kernel $A_2$ in the worst case doubles the Bhattacharyya parameter. Thus $s$ applications of $A_2$ can increase the Bhattacharyya parameter by at most a factor of $2^s$.

Then it is easy to see that even after we apply Ar\i kan's kernel  $A_2^{\otimes s}$ a total of $\sqrt{n}$ times, the Bhattacharyya parameter will still be below the threshold $\l^{-4}$: conditioned on $A_m$, one has $\ZZ_{m+\sqrt{n}} \leq \ZZ_m \cdot \left(2^s\right)^{\sqrt{n}} < \exp(-2sm)\cdot \exp(s\sqrt{n}) < \exp(-sm) < \l^{-4}$, as $m \geq \sqrt{n}$. It is easy to verify, using Proposition~\ref{prop:approx_accumulation} and the relation~\eqref{eq:Z-H} between the entropy and Bhattacharyya parameter of the bit-channel, that the binned parameter $\HHb_{m+j}$ will also be below $\l^{-4}$ for $j = 1, 2, \dots, \sqrt{n}$. This means that indeed for these $\sqrt{n}$ levels, the Ar\i kan's kernel was taken in the construction phase. Therefore, we know that only the kernel $A_2^{\otimes s}$ was applied at levels between $m$ and $m+\sqrt{n}$, which can also be viewed as applying the basic $2\times 2$ kernel $A_2$ for $s\sqrt{n}$ levels in the tree. Further this can be viewed as taking $s\sqrt{n}$ ``good" or ``bad" branches while going down the tree, where the good branch corresponds to squaring the Bhattachryya parameter, and the bad branch at most doubles it. Denote then by bits $g_{sm+i} \in \{0,1\}$, for $i \in [s\sqrt{n}]$, the indicators of these branches being good or bad, where $g_{sm+i} = 0$ means the branch is bad, and $g_{sm+i}=1$ means the branch is good. It is clear then that since we consider the random process of going down the tree choosing the next child randomly, then all $g_{sm+i}$'s are independent $\text{Bern}(1/2)$ random variables. These are exactly the random variables appearing in the definition of $B_m$.

Notice then that 
\[ \frac{b_m}{a_m} = \P\left[\sum_{i=1}^{s\sqrt{n}}g_{sm+i} \leq \beta\cdot s\sqrt{n}\right] \leq 2^{-s\sqrt{n}(1-h_2(\beta))} \leq 2^{-\gamma s\sqrt{n}} \ , \]
where we can take, for instance, $\beta = 1/20$ and $\gamma = 0.85$. The inequality follows from entropic bound on the sum of binomial coefficients (one could also just use the Chernoff bound).

Recall that we defined $q_m = I(W) - e_0^{(m)}$. We then can write $q_{m-\sqrt{n}} - a_m = I(W) - (e_0^{(m-\sqrt{n})} + a_m)$. But note that by definition, the event $\left\{\ZZb_m < \exp(-2sm)\right\}$ is a subevent of $A_m \cup E_0^{(m-\sqrt{n})}$, and thus using the bound from Lemma~\ref{lem:step1} (applied for the depth $m$) we know that
\[ (e_0^{(m-\sqrt{n})} + a_m) \geq \P[A_m \cup E_0^{(m-\sqrt{n})}] \geq \P[\ZZb_m < \exp(-2sm)] \geq  I(W) - 2^{(-1/2 + 10\a)sm} \ . \]
Therefore we conclude \[(q_{m-\sqrt{n}} - a_m)^+ \leq 2^{(-1/2 + 10\a)sm}.\]
We can then derive
\begin{align} q_m &= I(W) - e_0^{(m)} = I(W) - (e_0^{(m-\sqrt{n})} + e_m) = q_{m-\sqrt{n}} - e_m\\
&=q_{m-\sqrt{n}}\left(1 - \frac{e_m}{a_m}\right) + \frac{e_m}{a_m}(q_{m-\sqrt{n}} - a_m)\\
&\leq q^+_{m-\sqrt{n}}\cdot\frac{b_m}{a_m} + (q_{m-\sqrt{n}} -a_m)^+\\
&\leq q_{m-\sqrt{n}}^+\cdot 2^{-\gamma s\sqrt{n}} + 2^{(-1/2+10\a)sm}.
\end{align}
Thus we end up we the following recurrence on $q^+_m$ (recall that $\l = 2^s$):
\begin{align}
    q^+_{\sqrt{n}} &\leq 1 \\
    q^+_m &\leq q^+_{m-\sqrt{n}}\cdot\l^{-\gamma\sqrt{n}} + \l^{-\frac m2+10\a m}.
\end{align}
Solving this recurrence gives us $q^+_{n-\sqrt{n}} \leq \l^{-\frac{n}{2} + 11\a n + \sqrt{n}}$, since $\gamma > 1/2$. Therefore we can conclude
\begin{equation}
\label{eq:step2e0n}
    e_0^{(n-\sqrt{n})} \geq I(W) -  \l^{-\frac{n}{2} + 11\a n + \sqrt{n}}.
\end{equation}
Next, let us look at an arbitrary bit-channel (realization of $\ZZ_n$) for which the event $E_0^{(n-\sqrt{n})}$ happens, and prove that such a bit-channel is indeed ``good." Since $E_0^{(n-\sqrt{n})}$ happened, it means that $E_m$ happened at some stage, thus $\ZZ^{\bin}_m < \exp(-2sm)$ and $\sum_{i=1}^{s\sqrt{n}}g_{sm+i} \geq \beta\cdot s\sqrt{n}$, where $g_{sm+i}$ for $i\in[s\sqrt{n}]$ correspond to taking bad or good branches in the basic $2\times 2$ Ar\i kan's kernel. Similarly to Claim~\ref{cl:Z_evolution}, we then can bound
\[ \ZZ_{m+\sqrt{n}} < \left(2^{s\sqrt{n}}\ZZ_m\right)^{2^{\beta\cdot s\sqrt{n}}} < \left(2^{sm}\exp(-2sm)\right)^{2^{\beta\cdot s\sqrt{n}}} \leq \exp\left(-sm\cdot 2^{\beta\cdot s\sqrt{n}}\right). \] 
Then for the remaining $(n-m-\sqrt{n})$ levels of the tree, it is easy to see that the Bhattacharyaa parameter will also not ever be above the threshold of picking Ar\i kan's kernel in Algorithm~\ref{algo:kernel_search}, thus, similarly as before, we can argue that the Bhattacharyya parameter increases by at most a factor of $2^s$ at each level. Therefore, we derive
\begin{equation}
\label{eq:step2zn}
    \ZZ_n < 2^{s(n-m-\sqrt{n})}Z_{m+\sqrt{n}} \leq 2^{sn}\exp\left(-sm\cdot 2^{\beta\cdot s\sqrt{n}}\right) < \exp\left(-2^{n^{1/3}}\right),
\end{equation}
where the last inequality follows from $m\geq \sqrt{n}$, $\beta = \frac1{20}$, and the condition $s\geq \frac{11}{\a}$ from Theorem~\ref{thm:kernel_seacrh_correct} combined with the fact that $\a$ is small.
%\todo{Better to be careful about for large enough claims.} 

Since we proved that the event $E_0^{(n-\sqrt{n})}$ implies $\ZZ_n  < \exp(-2^{n^{1/3}})$, we conclude, using~\eqref{eq:step2e0n}:
\begin{equation}
\label{eq:step2polarization}
    \P[\ZZ_n  < \exp(-2^{n^{1/3}})] \geq e_0^{(n-\sqrt{n})} \geq I(W) -  \l^{-\frac{n}{2} + 11\a n + \sqrt{n}},
\end{equation}
which precisely proves the polarization that was stated in the lemma.

The only thing left to prove then is that this polarization is poly-time constructible. To do this, we show that one can find the set $E_0^{(n-\sqrt{n})}$ of bit-channels in poly-time (recall here the equivalence between events and subsets of the bit-channel at the level $n$ of the tree $\T_n$). But one can see that checking if a particular bit-channel $W_i$, for some $i\in [\l^n]$, is easy. Indeed, to check if $W_i$ is in $E_0^{(n-\sqrt{n})}$, it suffices to check if $W_i$ is in $E_m$ for any $m = \sqrt{n}, 2\sqrt{n}, \dots, n-\sqrt{n}$. But this corresponds to looking at a Bhattacharyya parameter $Z_i^{(m), \bin}$ and checking if it is smaller than $\exp(-2sm)$, and, if this is the case, also looking at how many ``good" branches (in the basic $2\times2$ Ar\i kan's transforms) there were within the next stage ($\sqrt{n}$ levels) in the tree $\T_n$. The latter can be done easily, since this information is essentially given by the index $i$ of the bit-channel $W_i$ (by its binary representation, to be precise). The former is actually also straightforward, since $Z_i^{(m), \bin}$ is the parameter of the binned bit-channel $W_i^{(m), \bin}$ that we are \emph{actually tracking} during the construction phase, so we have this channel written down explicitly, and thus calculating its Bhattacharyya parameter is simple. Therefore all this can be done in time, polynomial in $\l^n$, and then the whole set $E_0^{(n-\sqrt{n})}$ can be found in poly-time (we can also say that the event $E_0^{(n-\sqrt{n})}$ is poly-time checkable). This finishes the proof of this lemma.
\end{proof}

For the following step, we will use the event $E_0^{(n-\sqrt{n})}$ as was defined in the proof of the above lemma. For convenience, we denote it as $R_n = E_0^{(n-\sqrt{n})}$, for any integer $n$. What we will use is that $\P[R_n] \geq I(W) -  \l^{-\frac{n}{2} + 11\a n + \sqrt{n}}$; if $R_n$ happens, then $\ZZ_n < \exp\left(-2^{n^{1/3}}\right)$; and that for any bit-channel it can be checked in poly-time if $R_n$ happened, all of which is proven in Lemma~\ref{lem:step2}.

\subsection{Step 3}
Here we will finally prove the polarization that implies the main result of this paper:
\begin{lem}
\label{lem:step3}
$\P\bigg[\ZZ_t \leq \exp\left(-st\cdot\l^{\a\cdot t}\right)\bigg] \geq I(W) - \l^{-(1/2 - 16\a)t + 2\sqrt{t}}$ for $t\geq C\cdot \log^6s$, where $C$ is an absolute constant. Moreover, this polarization is poly-time constructible.
\end{lem}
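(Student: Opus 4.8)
The plan is to follow the three-stage amplification scheme of the previous two lemmas one more time, now dividing the tree $\T_t$ into $\sqrt{t}$ stages, each of depth $\sqrt{t}$, and bootstrapping from the polarization of Step 2 (Lemma~\ref{lem:step2}) rather than Step 1. So I would look at stage boundaries $m=\sqrt{t},2\sqrt{t},\dots,t-\sqrt{t}$, and at each boundary reuse the event $R_m$ (the poly-time-checkable ``good set'' at level $m$ from Step 2, with $\P[R_m]\ge I(W)-\l^{-m/2+11\a m+\sqrt m}$ and $\ZZ_m<\exp(-2^{m^{1/3}})$ on $R_m$). The analogues of the events would be
\begin{align}
A_m &= R_m \setminus E_0^{(m-\sqrt t)},\\
B_m &= A_m \cap \Big\{\textstyle\sum_{i=1}^{s\sqrt t} g_{sm+i} \le \beta\, s\sqrt t\Big\},\\
E_m &= A_m \setminus B_m,\qquad E_0^{(m)} = E_0^{(m-\sqrt t)}\cup E_m,
\end{align}
with $g_{sm+i}$ again the independent $\mathrm{Bern}(1/2)$ indicators of good/bad branches of the basic $A_2$ transform. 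The key point making this legitimate is the same as in Step 2: on $A_m$ the binned Bhattacharyya parameter $\ZZb_m$ is tiny (below $\exp(-2^{m^{1/3}})$ up to binning error, which by Proposition~\ref{prop:approx_accumulation} and \eqref{eq:Z-H} stays well under $\l^{-4}$), so Algorithm~\ref{algo:kernel_search} is in the suction regime and picks Ar\i kan's kernel $A_2^{\otimes s}$ for all levels $m$ through $t$; hence the only thing that happens on the rest of the path is $s$-fold doubling (bad branch) or squaring (good branch), exactly as analyzed before.

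Next I would run the same recursion bookkeeping. Writing $q_m = I(W)-e_0^{(m)}$, the ratio $b_m/a_m \le 2^{-\gamma s\sqrt t}$ with $\beta=1/20,\gamma=0.85$ by the entropic/Chernoff bound, and $(q_{m-\sqrt t}-a_m)^+ \le \l^{-m/2+11\a m+\sqrt m}$ because $E_0^{(m-\sqrt t)}\cup A_m \supseteq R_m$ and $\P[R_m]\ge I(W)-\l^{-m/2+11\a m+\sqrt m}$ from Lemma~\ref{lem:step2}. This yields the recurrence
\begin{align}
q^+_{\sqrt t} &\le 1,\\
q^+_m &\le q^+_{m-\sqrt t}\cdot \l^{-\gamma\sqrt t} + \l^{-m/2 + 11\a m + \sqrt m},
\end{align}
whose solution (using $\gamma>1/2$ and crudely bounding $\sqrt m\le\sqrt t$) is $q^+_{t-\sqrt t} \le \l^{-t/2 + 12\a t + 2\sqrt t}$, giving $e_0^{(t-\sqrt t)} \ge I(W) - \l^{-t/2 + 12\a t + 2\sqrt t}$, comfortably within the claimed bound $I(W)-\l^{-(1/2-16\a)t+2\sqrt t}$. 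Then, on the event $E_0^{(t-\sqrt t)}$, some $E_m$ fired, so $\ZZ_m<\exp(-2^{m^{1/3}})$ and at least $\beta s\sqrt t$ good branches occur in the next stage; the Claim~\ref{cl:Z_evolution}-style estimate gives $\ZZ_{m+\sqrt t} \le \big(2^{s\sqrt t}\ZZ_m\big)^{2^{\beta s\sqrt t}}$, and the remaining $\le t$ levels each multiply by at most $2^s$, so $\ZZ_t \le 2^{st}\cdot\big(2^{s\sqrt t}\exp(-2^{m^{1/3}})\big)^{2^{\beta s\sqrt t}}$. One then needs to check that this is at most $\exp(-st\cdot\l^{\a t})=\exp(-s t\,2^{s\a t})$, which is where the hypothesis $t\ge C\log^6 s$ enters: I would verify that $\beta s\sqrt t \ge \a s t + \log(st) + O(1)$ and that $2^{m^{1/3}}$ dominates $s\sqrt t$, both of which hold once $t$ is polylogarithmically large in $s$ (here $m\ge\sqrt t$, so $m^{1/3}\ge t^{1/6}$, and $t^{1/6}$ beats $\log(s t)$ when $t\ge C\log^6 s$).

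Finally, poly-time constructibility is inherited exactly as in Step 2: to test whether a bit-channel $W_i$ lies in $E_0^{(t-\sqrt t)}$ we check, for each of the $\sqrt t$ stage boundaries $m$, whether $R_m$ holds along the path to $W_i$ (poly-time checkable by Lemma~\ref{lem:step2}) and whether the number of good $A_2$-branches in the next $s\sqrt t$ steps — information read directly off the index $i$ — is at least $\beta s\sqrt t$; the Bhattacharyya thresholds are evaluated on the binned channels $W_i^{(m),\bin}$ that we already store during construction. Choosing these $\ge N(I(W)-N^{-1/2+16\a})$ indices as information positions gives a code with decoding error $\le N\exp(-st\l^{\a t}) = \exp(-N^{\a})$ for $t\ge 1/\a^2$, and construction time $\poly(N)$. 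The main obstacle I anticipate is not conceptual but the careful tracking of the interplay between the binning error of Proposition~\ref{prop:approx_accumulation} and the doubly-exponentially small quantities $\exp(-2^{m^{1/3}})$: one must make sure the additive $\poly(1/N)$ binning slack never pushes $\HHb$ above the $\l^{-4}$ threshold on the relevant part of the path, and that it is negligible compared to $\ZZ$ at every level where we invoke the $A_2$ evolution bounds \eqref{eq:Z+}--\eqref{eq:Z-}; this requires being slightly careful that the suction regime, once entered along a path, persists all the way down to level $t$, which is exactly why the threshold $\l^{-4}$ (rather than something shrinking with $N$) was chosen in Algorithm~\ref{algo:kernel_search}.
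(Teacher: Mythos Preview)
Your proposal has a genuine gap: by mimicking Step~2 and counting good branches only in a single stage of depth $\sqrt t$ (the event $B_m$ with threshold $\beta s\sqrt t$), you can guarantee at most $\beta s\sqrt t$ squarings of $\ZZ$, which yields at best $\ZZ_t \lesssim \exp(-2^{O(s\sqrt t)})$. The target of the lemma is $\ZZ_t \le \exp(-st\cdot 2^{\a s t})$, which requires $\Omega(st)$ squarings. Your claimed verification ``$\beta s\sqrt t \ge \a s t + \log(st)+O(1)$'' is exactly backwards: this inequality forces $t \le (\beta/\a)^2$, so it \emph{fails} for all large $t$, not holds. In short, one more iteration of the Step~2 scheme is not enough to pass from $\exp(-2^{t^{1/3}})$ to $\exp(-\l^{\a t})$.

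The paper's proof remedies this by changing the branching event to look at the \emph{entire remaining path} from level $n$ down to level $t$: it sets
\[
D_n = C_n \cap \Big\{\textstyle\sum_{i=1}^{s(t-n)} g_i \le \a s t\Big\},
\]
so that on the complement one has at least $\a s t$ good branches below level $n$, giving $\ZZ_t \le (\l^{t-n}\ZZ_n)^{2^{\a s t}} \le \exp(-st\cdot \l^{\a t})$. This is possible precisely because once $R_n$ fires, $\ZZ_n<\exp(-2^{n^{1/3}})$ is so small that the suction regime (hence Ar{\i}kan kernels) persists all the way to level~$t$, not just for $\sqrt t$ more levels---a point you do note at the end but do not exploit. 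The price is that $d_n/c_n=\P[\mathrm{Binom}(s(t-n),1/2)\le \a s t]$ now depends on $\delta=\a t/(t-n)$, and the paper needs a short case split ($\delta>1/10$ vs.\ $\delta\le 1/10$) together with a slightly different recursion on $p_n-f_n^+$ (rather than directly on $q_m^+$) to push through the bound $\l^{-(1/2-16\a)t+2\sqrt t}$. Your rate analysis and poly-time constructibility argument are otherwise on the right track; the missing ingredient is solely the ``count good branches over $s(t-n)$ levels with threshold $\a s t$'' idea.
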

\begin{proof}

We will again closely follow the approach from~\cite{Wang-Duursma}, though we are going to change the indexing notations to avoid any confusion with the previous step. We return to having the total depth of the tree to be $t$, and we will have $\sqrt{t}$ stages in the tree, each of length $\sqrt{t}$, similarly to the previous step. As before, we will define several events, starting with $C_0^{(0)} = \emptyset$ and $Q_0^{(0)} = \emptyset$. Then, for $n$ being $\sqrt{t}, 2\sqrt{t}, \dots, t-\sqrt{t}$, we define:
% \todo{Why not define $C_0^n = C_0^{n-\sqrt{t}}\cup R_n$? May be disjoint union is better for the argument which I didn't read carefully.}
% \textcolor{red}{I find the definitions of $C_n$ and $C_0^n$ very confusing. These sets are defined recursively. Maybe we should first tell the readers what are the initial conditions, i.e., what are $C_0$ and $C_0^0$.}
%
\begin{align}
    &C_n = R_n\setminus C_0^{(n-\sqrt{t})} \\
    &C_0^{(n)} = C_0^{(n-\sqrt{t})} \cup C_n \\
    &D_n = C_n \bigcap \left\{\sum_{i=1}^{s(t-n)}g_{i} \leq \a\cdot s\cdot t\right\}\\
    &Q_n = C_n \setminus D_n\\
    &Q_0^{(n)} = Q_0^{(n-\sqrt{t})}\cup Q_n,
\end{align}
where $R_n$ is defined at the end of previous step, and $g_i$'s can again be thought of as independent $\text{Bern}(1/2)$ random variables. The intuition behind what these events correspond to is almost the same as in Step 2, but the bit-channels in $D_n$ have conditions on branching from level $n$ down to the bottom level $t$ (instead of levels between $n$ and $n+\sqrt{t}$). Here, the channels in $Q_0^{(n)}$ are the channels that we mark as ``good" up to level $n$ in the tree, and we will be interested in the final set $Q_0^{(t-\sqrt{t})}$ of ``good" channels in the end.
We again denote by corresponding lowercase letters the probabilities of these events. Define also
\[ f_n = I(W) - c_0^{(n)} \quad \text{and} \quad p_n = I(W) - q_0^{(n)}  \ . \]

First, consider event $C_n$ happening. It means that $R_n$ happens, so $\ZZ_n < \exp\left(-2^{n^{1/3}}\right)$. Then at least for some time, we are going to pick Ar\i kan's kernel in the construction phase, since the Bhattacharyya parameter is small enough. But assuming that we take Ar\i kan's kernels all the way down to the bottom of the tree, one can see 
\[ \ZZ_t < \l^{t-n}\cdot \ZZ_n < \l^{t}\cdot\exp\left(-2^{n^{1/3}}\right) \leq 2^{st}\cdot\exp\left(-2^{t^{1/6}}\right) < 2^{-4s} = \l^{-4} \]
for $t \geq C\log^6 s$, where $C$ is large enough.
% \todo{Again, be careful about unspecified large enough qualification?} 
Again, by using  Proposition~\ref{prop:approx_accumulation} and~\eqref{eq:Z-H} it is easy to show that the entropy of the binned version of the bit-channel will also always be below the threshold $\l^{-4}$.  It means that we cannot in $(t-n)$ levels go over the threshold of choosing Ar\i kan's kernel, thus we indeed take Ar\i kan's kernel all the way down in the tree for the path for which $R_n$ happens. Thus, similarly to the proof of Lemma~\ref{lem:step2} in the Step~2, we can think of it as taking the basic $2\times 2$ Ar\i kan's kernels $s\cdot(t-n)$ times, starting at level $n$. Therefore if $R_n$ happens, the branching down from level $n$ can be viewed as taking ``good" or ``bad" branches in the $A_2$ kernels, so we again define indicator random variables $g_i$, for $i \in [s(t-n)]$, to denote these branches. It is clear that these random variables are going to be independent $\text{Bern}(1/2)$. These are exactly the random variables $g_i$, for $i\in[s(t-n)]$, appearing in the definition of $D_n$. 

We have 
\[ \frac{d_n}{c_n} = \P\left[\sum_{i=1}^{s(t-n)}g_{i} \leq \a s t\right] \leq 2^{-s(t-n)\left(1-h_2(\delta)\right)} \ , \]
where we denote $\delta \coloneqq \min\left\{\frac{\a t}{t-n}, 1\right\}$. The inequality again follows from the entropic inequality on the sum of binomial coefficients.

Recall that we denoted $f_n = I(W) - c_0^{(n)}$. The event $C_0^{(n)}$ contains the event $R_n$, thus ${f_n \leq  \l^{-\frac{n}{2} + 11\a n + \sqrt{n}}}$, which follows from the proof of Lemma~\ref{lem:step2}. Same inequality holds for $f_n^+$.

We will obtain a recurrence on $p_n - f_n^+$ as follows:
\begin{align}
    p_n - f_n^+ &= I(W) - q_0^{(n)} - (I(W) - c_0^{(n)})^+ \\&= p_{n-\sqrt{t}} - q_n - (f_{n-\sqrt{t}} - c_n)^+ \\
    &\leq  p_{n-\sqrt{t}} - q_n - \frac{q_n}{c_n}(f_{n-\sqrt{t}} - c_n)^+ \\
    &\leq p_{n-\sqrt{t}} - q_n - \frac{q_n}{c_n}(f_{n-\sqrt{t}}^+ - c_n) \\
    &\leq p_{n-\sqrt{t}} - f_{n-\sqrt{t}}^+ + \left(1 - \frac{q_n}{c_n}\right)f_{n-\sqrt{t}}^+ \\
    & = p_{n-\sqrt{t}} - f_{n-\sqrt{t}}^+ + \frac{d_n}{c_n}f_{n-\sqrt{t}}^+ \\
    &\leq p_{n-\sqrt{t}} - f_{n-\sqrt{t}}^+ + \l^{-(1/2 - 11\a)(n-\sqrt{t})+\sqrt{n}}\cdot 2^{-s(t-n)\left(1-h_2(\delta)\right)},
\end{align}
where recall that $\delta =  \min\left\{\frac{\a t}{t-n}, 1\right\}$. We want to obtain an upper bound on the additive term in the inequality above. Consider the following two cases:
\begin{enumerate}[label=\roman*)]
    \item $\delta > \frac1{10}$, i.e. $10\a t > t - n$, thus $n > (1-10\a)t$. Then we give up on the term  $2^{-s(t-n)\left(1-h_2(\delta)\right)}$ completely, and we can write
    \begin{equation}
        \l^{-(1/2 - 11\a)(n-\sqrt{t})+\sqrt{n}}\cdot 2^{-s(t-n)\left(1-h_2(\delta)\right)} \leq \l^{-(1/2 - 11\a)(1-10\a)t+\frac32\sqrt{t}} \leq \l^{-(1/2 - 16\a)t+\frac32\sqrt{t}};
    \end{equation}
    \item $\delta \leq \frac1{10}$, and then $h_2(\delta) < 1/2$. In this case we derive
    \begin{align}
        \l^{-(1/2 - 11\a)(n-\sqrt{t}) + \sqrt{n}}\cdot 2^{-s(t-n)\left(1-h_2(\delta)\right)} \leq \l^{-(1/2 - 11\a)n + \frac32\sqrt{t}}\cdot \l^{-1/2\cdot (t-n)} &= \l^{-1/2\cdot t + 11\a n + \frac32\sqrt{t}} \\&< \l^{-1/2\cdot t + 11\a t + \frac32\sqrt{t}}.
    \end{align}
\end{enumerate}

Putting the above together, we obtain
\begin{align}
    p_0 - f_{0}^+ &= 0 \\
    p_n - f_n^+ &\leq p_{n-\sqrt{t}} - f_{n-\sqrt{t}}^+ + \l^{-(1/2 - 16\a)t+\frac32\sqrt{t}}.
\end{align}
Therefore $p_{t-\sqrt{t}} - f_{t-\sqrt{t}}^+ \leq \sqrt{t}\cdot\l^{-(1/2 - 16\a)t+\frac32\sqrt{t}}$. Combining this with $f_{t-\sqrt{t}}^+ \leq \l^{-(1/2 - 11\a)(t-\sqrt{t})+\sqrt{t}}$, we obtain $p_{t-\sqrt{t}} \leq \l^{-(1/2 - 16\a)t+2\sqrt{t}}$, and thus 
\begin{equation}
\label{eq:step3:1}
 \P\left[Q_0^{(t-\sqrt{t})}\right] = q_0^{(t-\sqrt{t})} \geq I(W) - \l^{-(1/2 - 16\a)t+2\sqrt{t}}.
 \end{equation}

Let us now check that the event $Q_0^{(t-\sqrt{t})}$ is actually ``good" and allows us achieve the needed polarization. If $Q_0^{(t-\sqrt{t})}$ happens, then $Q_n$ happened for some $n = k\cdot\sqrt{t}$. It means that $C_n$, and therefore $R_n$ takes place, thus $\ZZ_n < \exp\left(-2^{n^{1/3}}\right)$. It also means that $D_n$ does not happen, and thus there is at least $\a s t$ ``good" branches taken in the way down the tree, which corresponds to $\a s t$ squarings of the Bhattacharyya parameter. Therefore 
\begin{equation}
\label{eq:step3:2}
    \ZZ_t \leq \left(\l^{t-n}\ZZ_n\right)^{2^{\a st}} < \left(2^{st}\exp\left(-2^{n^{1/3}}\right)\right)^{2^{\a st}} < \exp\left(-st\cdot 2^{\a st}\right) = \exp\left(-st\cdot \l^{\a t}\right) = \frac1N\exp\left(-N^{\a}\right),
\end{equation}
where the third inequality trivially follows from $n \geq \sqrt{t}$ and $t \geq C\log^6 s$ for large enough $C$. Combining this with~\eqref{eq:step3:1}, we obtain the desired polarization:
\begin{equation}
\label{eq:step3polarization}
    \P\left[\ZZ_t  <\exp\left(-st\cdot 2^{\a st}\right) \right] \geq q_0^{(t-\sqrt{t})} \geq I(W) - \l^{-(1/2 - 16\a)t+o(t)}.
\end{equation}

It only remains to argue that this polarization is poly-time constructible. But this easily follows from the fact that the event $R_n$ is poly-time checkable, which we proved in Step~2. Indeed, now for any bit-channel $W_i$, $i\in[\l^t]$, we need to check if it is in $Q_0^{(t-\sqrt{t})}$. This means that one need to see if $Q_n$ happened for some $n = k\sqrt{t}$. To do this, one checks in poly-time if $C_n$ happened, which reduces to checking $R_n$ (which can be done in poly-time). If $R_n$ happened, then the only thing to check is how many ``good" branches the remaining path to $W_i$ has, which is easily (in poly-time) retrievable information from the index $i$. Therefore, the event $Q_0^{(t-\sqrt{t})}$ is indeed poly-time checkable, which finishes the proof of the lemma.
\end{proof}

\appendix

\bigskip
\noindent \centerline{{\Large \textbf{Appendices}}}

\section{Proofs of entropic lemmas for BMS channels}
\label{app:BMS_lemmas}

In the following two proofs we use the representation of BMS channel $W$ as a convex combination of several BSC subchannels $W^{(1)}$, $W^{(2)}$, \dots, $W^{(m)}$, see the beginning of Section~\ref{sec:BMS_large_alphabet} for details. Each subchannel $W^{(j)}$ can output one of two symbols $z^{(0)}_j, z^{(1)}_j$, and $\Wj(\zj|0) = \Wj(\zzj|1)$, $\Wj(\zzj|0) = \Wj(\zj|1)$. The output alphabet for $W$ is thus $\Y = \{z^{(0)}_1, z^{(1)}_1, z^{(0)}_2, z^{(1)}_2, \dots, z^{(0)}_m, z^{(1)}_m\}$. Define for these proofs the ``flip" operator $\oplus\,:\, \Y \times\bit\to\Y$ as follows: $z_j^{(c)} \oplus b = z_j^{(b+c)}$, where $b,c \in \bit$, and $(b+c)$ is addition mod $2$. In other words, $z_j^{(c)}\oplus 0$ doesn't change anything, and $z_j^{(c)}\oplus 1$ flips the output of the subchannel $\Wj$ to the opposite symbol. Note then that $W^{(j)}(z_j^{(c)}\,|\,b) = W^{(j)}(z_j^{(c)}\oplus b\,|\,0)$. Finally, we overload the operator to also work on $\Y^{\l} \times \bit^{\l} \to \Y^{\l}$ by applying it coordinate-wise.
It then easily follows that $W^{\l}\left(\by\,|\,\bx \right) = W^{\l}\left(\by\oplus \bx \,|\,\mathbi{0}\right)$ for any $\by\in\Y^{\l}$ and  $\bx\in\bit^{\l}$.

\begin{proof}[Proof of Lemma~\ref{typical_entropy_BSC}] We can write 
\begin{align}
     \E_{g\sim G}\hspace{-1pt}\big[H^{(g)}(V_1|\bY)\big] \hspace{-1.5pt}&=\sum_g \P(G=g) \left(\sum_{\by \in \Y^{\l}} \P\nolimits^{(g)}[\bY=\by] H^{(g)}(V_1|\bY=\by) \right)  \nonumber\\
    &= \sum_g \P(G=g) \left(\sum_{\by \in \Y^{\l}}
    \left(\sum_{\bv \in \bit^k}\P\nolimits^{(g)}[\bY=\by,\bV=\bv] \right) 
    h \left( \frac{\P\nolimits^{(g)}[V_1=0,\bY=\by]}{\P\nolimits^{(g)}[\bY=\by]} \right) \right) \nonumber\\
    &= \frac{1}{2^{k}}
    \sum_{\bv \in \bit^k}
    \sum_g  \P(G=g) \sum_{\by \in \Y^{\l}}
    \P\nolimits^{(g)}[\bY=\by \big| \bV=\bv]
    h \left( \frac{\P\nolimits^{(g)}[V_1=0,\bY=\by]}{\P\nolimits^{(g)}[\bY=\by]} \right), \label{eq:calA}
\end{align}
where $h(x):=-x\log_2 x-(1-x)\log_2(1-x)$ is the binary entropy function.
Next, we show that for any fixed codebook $g$ and any fixed $\bv \in \bit^k$ it holds
\begin{align}
\label{eq:cond2:1}
    \hspace{-4pt}\sum_{\by \in \Y^{\l}}
    \P\nolimits^{(g)}[\bY=\by \big| \bV=\bv]  
    h\hspace{-3pt}\left( \frac{\P\nolimits^{(g)}[V_1=0,\bY=\by]}{\P\nolimits^{(g)}[\bY=\by]} \right) \hspace{-2pt}
    = \hspace{-4pt}
     \sum_{\by \in \Y^{\l}}\hspace{-2.5pt}
    \P\nolimits^{(g)}[\bY=\by \big| \bV=\mathbi{0}]
    h\hspace{-3pt}\left( \frac{\P\nolimits^{(g)}[V_1=0,\bY=\by]}{\P\nolimits^{(g)}[\bY=\by]} \right),  
\end{align}
where $\mathbi{0}$ is the all-zero vector.

First of all, we know that 
\begin{align}
\label{eq:cond2:2}
     \P\nolimits^{(g)}[\bY=\by \big| \bV=\bv]  = W^{\l}(\by\,|\,\bv G) = W^{\l}(\by\oplus\bv G\,|\,\bz) = \P\nolimits^{(g)}[\bY = \by \oplus\bv G\, |\, \bV = \bz],
\end{align}
as was discussed at the beginning of this appendix. In the same way, it's easy to see
\begin{align}
    \P\nolimits^{(g)}[\bY=\by] = \frac1{2^k}\sum_{\bu\in\bit^k}\P\nolimits^{(g)}[\bY=\by \big| \bV=\bu] 
    &= \frac1{2^k}\sum_{\bu\in\bit^k}\P\nolimits^{(g)}[\bY=\by \oplus \bv G\big| \bV=\bu + \bv] \\
    &= \frac1{2^k}\sum_{\bu + \bv\in\bit^k}\P\nolimits^{(g)}[\bY=\by \oplus \bv G\big| \bV=\bu + \bv] \\
    &= \P\nolimits^{(g)}[\bY = \by\oplus\bv G].\label{eq:cond2:3}\noeqref{eq:cond2:3}
\end{align}
The above equality uses the fact the we are considering linear codes, and $\bv G$ is an arbitrary codeword. It follows from the symmetry of linear codes that ``shifting" the output by a codeword does not change anything. Shifting here means the usual shifting for the BSC case, though for general BMS channel this is actually flipping the outputs or appropriate BSC subchannels, without changing which subchannel was actually used for which bit.

Denote now $\wV = \bV_{>1}$, and recall that we are considering fixed $\bv$ for now. Denote then also $v_1$ as the first coordinate of $\bv$ and $\wv = \bv_{>1}$. Then we derive similarly

\begin{equation}
\begin{aligned}
    \P\nolimits^{(g)}[V_1=0, \bY=\by] &= \frac1{2^k}\sum_{\wu\in\bit^{k-1}}\P\nolimits^{(g)}[\bY=\by \big|V_1=0, \wV=\wu] \\
    &= \frac1{2^k}\sum_{\wu\in\bit^{k-1}}\P\nolimits^{(g)}[\bY=\by \oplus \bv G \,|\, V_1 = v_1, \wV=\wu + \wv] \\
    &= \frac1{2^k}\sum_{\wu + \wv\in\bit^{k-1}}\P\nolimits^{(g)}[\bY=\by \oplus \bv G\,\big|\, V_1 = v_1, \wV=\wu + \wv] \\
    &= \P\nolimits^{(g)}[V_1 = v_1, \bY = \by\oplus\bv G].
\label{eq:cond2:4}
\end{aligned}
\end{equation}

Notice that $\P\nolimits^{(g)}[V_1 = v_1, \bY = \by\oplus\bv G] + \P\nolimits^{(g)}[V_1 = 1- v_1, \bY = \by\oplus\bv G] = \P\nolimits^{(g)}[\bY = \by\oplus\bv G]$, and thus using the symmetry of the binary entropy function around $1/2$ obtain
\begin{align}
    h\left(\dfrac{\P\nolimits^{(g)}[V_1 = v_1, \bY = \by\oplus\bv G] }{\P\nolimits^{(g)}[\bY = \by\oplus\bv G]} \right) = h\left(\dfrac{\P\nolimits^{(g)}[V_1 = 1 - v_1, \bY = \by\oplus\bv G] }{\P\nolimits^{(g)}[\bY = \by\oplus\bv G]} \right).
\end{align}
Using this and~\eqref{eq:cond2:2}--\eqref{eq:cond2:4} derive
\begin{align}
    \label{eq:cond2:5}
    & \P\nolimits^{(g)}[\bY=\by \big| \bV=\bv]  
    h \left( \frac{\P\nolimits^{(g)}[V_1=0,\bY=\by]}{\P\nolimits^{(g)}[\bY=\by]} \right) \\
    =& \P\nolimits^{(g)}[\bY = \by \oplus\bv G\, |\, \bV = \bz]h \left( \frac{\P\nolimits^{(g)}[V_1=0,\bY=\by \oplus\bv G]}{\P\nolimits^{(g)}[\bY=\by \oplus\bv G]} \right).
\end{align}
Finally, summing both parts over $\by\in\Y^{\l}$ and noticing that $\by\oplus\bv G$ will also range through all $\Y^{\l}$ in this case, we establish~\eqref{eq:cond2:1}. Then in~\eqref{eq:calA} deduce

\begin{align}
     \E_{g\sim G}\big[H^{(g)}(V_1|\bY)\big] &=
    \frac{1}{2^{k}}
    \sum_{\bv \in \bit^k}
    \sum_g  \P(G=g) \sum_{\by \in \Y^{\l}}
    \P\nolimits^{(g)}[\bY=\by \big| \bV=\bz]
    h \left( \frac{\P\nolimits^{(g)}[V_1=0,\bY=\by]}{\P\nolimits^{(g)}[\bY=\by]} \right) \\
    &=  \sum_{\by \in \Y^{\l}}\sum_g  \P(G=g)
    \P\nolimits^{(g)}[\bY=\by \big| \bV=\bz]
    h \left( \frac{\P\nolimits^{(g)}[V_1=0,\bY=\by]}{\P\nolimits^{(g)}[\bY=\by]} \right) \\
    &= \sum_{\by \in \Y^{\l}}
    \P[\bY=\by \big| \bV=\bz] \E_{g\sim G}\left[H^{(g)}(V_1|\bY=\by)\right],
\end{align}
since $\P\nolimits^{(g)}[\bY=\by \big| \bV=\bz]$ does not depend on the matrix $g$.
\end{proof}

\begin{proof}[Proof of Proposition~\ref{prop:Arikan-bit}] Let us unfold the conditioning in the LHS as follows
\begin{equation}
\label{eq:cond:1}
     H\left(U_i\;\Big\lvert\; W^{\l}(\bU\cdot K), \bU_{<i}\right) = \E_{\bw\sim\bit^{i-1}}\left[ H\left(U_i\;\Big\lvert\; W^{\l}(\bU\cdot K), \bU_{<i} = \bw\right) \right].
\end{equation}
We are going to show that the conditional entropy inside the expectation doesn't depend on the choice of $\bw$, which will allow us to restrict to $\bw = \mathbi{0}.$

Return now to the settings of the Proposition, and denote the (random) output $\bY = W^{\l}(\bU\cdot K)$. Let us now fix some $\bw \in\bit^{i-1}$ and consider $H\left(U_i\;\Big\lvert\; \bY   , \bU_{<i} = \bw\right)$. Unfolding the conditional entropy even more, derive
\begin{equation}
\label{eq:cond:2}
     H\left(U_i\;\Big\lvert\; \bY   , \bU_{<i} = \bw\right) = \sum_{\by\in\Y^{\l}}\P[\bY = \by\,|\,\bU_{<i}=\bw]\cdot H\left(U_i\;\Big\lvert\; \bY = \by  , \bU_{<i} = \bw\right).
\end{equation}

Denote now by $B$ the first $(i-1)$ rows of $K$, and
thus $\bY = W^{\l}(\bU\cdot K) = W^{\l}(\bU_{<i}\cdot B + \bU_{\geq i}\cdot G)$. We then have
\begin{align}
    \P[\bY = \by\,|\,\bU_{<i}=\bw] 
    &= \sum_{\bv\in\bit^{k}}\frac1{2^k}\P[\bY = \by\,|\,\bU_{<i}=\bw, \bU_{\geq i} = \bv] \\ 
    &= \sum_{\bv\in\bit^{k}}\frac1{2^k}W^{\l}\Big(\by\,\Big\lvert\,\bw\cdot B + \bv\cdot G \Big) \\ 
    &= \sum_{\bv\in\bit^{k}}\frac1{2^k}W^{\l}\Big(\by \oplus \bw B\,\Big\lvert\, \bv\cdot G \Big) \\
    &= \sum_{\bv\in\bit^{k}}\frac1{2^k}\P[\bY = \by\oplus \bw B\,|\,\bU_{<i}=\bz, \bU_{\geq i} = \bv]   \\
    &= \P[\bY = \by\oplus\bw B\,|\, \bU_{<i} = \bz].
    \label{eq:cond:3}
\end{align}

For the entropy i the RHS of~\eqref{eq:cond:2}, observe
\begin{align}
    H\left(U_i\;\Big\lvert\; \bY = \by  , \bU_{<i} = \bw\right) = h\Big(\P\left[U_i = 0\,|\,\bY=\by, \bU_{<i}=\bw\right]\Big),
\end{align}
where $h(\cdot)$ is a binary entropy function. Out of the definition of conditional probability, obtain
\begin{align}
    \P\left[U_i = 0\,|\,\bY=\by, \bU_{<i}=\bw\right] 
    &= \dfrac{\P[U_i = 0, \bY=\by\,|\,\bU_{<i} = \bw]}{\P[\bY=\by\,|\,\bU_{<i}=\bw]} \\ 
    &= \dfrac{\P[U_i = 0, \bY=\by\oplus\bw B\,|\,\bU_{<i} =\bz]}{\P[\bY=\by\oplus\bw B\,|\,\bU_{<i}=\bz]} \\
    &= \P\left[U_i = 0\,|\,\bY=\by\oplus\bw B, \bU_{<i}=\bz\right],
\end{align}
where the second equality also uses~\eqref{eq:cond:3} (and similar equality with $U_i=0$ inside the probability, which is completely analogical to~\eqref{eq:cond:3}). Therefore, deduce in~\eqref{eq:cond:2}
\begin{align}
    H\left(U_i\;\Big\lvert\; \bY   , \bU_{<i} = \bw\right) &= \sum_{\by\in\Y^{\l}}\P[\bY = \by\oplus\bw B\,|\,\bU_{<i}=\bz]\cdot H\left(U_i\;\Big\lvert\; \bY = \by \oplus\bw B , \bU_{<i} = \bz\right) \\
    &= \sum_{\bzz\in\Y^{\l}}\P[\bY = \bzz\,|\,\bU_{<i}=\bz]\cdot H\left(U_i\;\Big\lvert\; \bY = \bzz, \bU_{<i} = \bz\right) \\
    &= H\left(U_i\;\Big\lvert\; \bY   , \bU_{<i} = \bz\right),
\end{align}
since $\bzz = \by \oplus \bw B$ ranges over all $\Y^{\l}$ for $\by \in \Y^{\l}$. Therefore, in~\eqref{eq:cond:1} there is no actual dependence on $\bw$ under the expectation in the RHS, and thus
\begin{equation}
\label{eq:cond:4}
     H\left(U_i\;\Big\lvert\; W^{\l}(\bU\cdot K), \bU_{<i}\right) = H\left(U_i\;\Big\lvert\; W^{\l}(\bU\cdot K), \bU_{<i} = \bz\right).
\end{equation}

Finally, note that we can take $\bV = \bU_{\geq i}$, since it is uniformly distributed over $\bit^k$, and then $V_1 = U_{i}$. Since $\bU\cdot K = \bU_{\geq i}\cdot G = \bV\cdot G$ when $\bU_{<i} = \bz$, we indeed obtain 
\begin{align}
\label{eq:cond:final}
     H\left(U_i\;\Big\lvert\; W^{\l}(\bU\cdot K), \bU_{<i}\right) = H\left(U_i\;\Big\lvert\; W^{\l}(\bU\cdot K), \bU_{<i} = \bz\right) = H\left(V_1\;\Big\lvert\; W^{\l}(\bV\cdot G)\right). &\qedhere
\end{align}
\end{proof}
\section{Proofs in Section~\ref{sec:concentration}}
\label{app:moments}
\begin{proof}[Proof of Claim~\ref{cl:exp_of_Di}]
     Denote for convenience the distribution $\O_i \coloneqq \text{Binom}(d_i, p_i)$. Note that $\E_{\chi_i\sim\O_i}\left[\frac{\chi_i}{d_i}\right] = p_i$. Then we derive
      \begingroup
 \allowdisplaybreaks
     \begin{align*}
         \left|\E_{\chi_i\sim\D_i}\left[\frac{\chi_i}{d_i}\right] - p_i\right| &= \left|\E_{\chi_i\sim\D_i}\left[\frac{\chi_i}{d_i}\right] - \E_{\chi_i\sim\O_i}\left[\frac{\chi_i}{d_i}\right]\right|\\ &= \left| \sum_{s\in[0:d_i]}\frac{s}{d_i}\P_{\chi_i\sim\D_i}[\chi_i=s] -  \sum_{s\in[0:d_i]}\frac{s}{d_i}\P_{\chi_i\sim\O_i}[\chi_i=s]   \right|\\
         &\overset{\eqref{truncated_distr}}{=}  \left| \sum_{s\in\T_1^{(i)}}\frac{s}{d_i}\P_{\chi_i\sim\O_i}[\chi_i=s]\cdot\theta_i^{-1} -  \sum_{s\in[0:d_i]}\frac{s}{d_i}\P_{\chi_i\sim\O_i}[\chi_i=s]   \right| \\
         &= \left| \sum_{s\in\T_1^{(i)}}\frac{s}{d_i}\P_{\chi_i\sim\O_i}[\chi_i=s]\cdot\left(\theta_i^{-1} - 1\right) -  \sum_{s\notin\T_1^{(i)}}\frac{s}{d_i}\P_{\chi_i\sim\O_i}[\chi_i=s]   \right|\\
         &\leq \sum_{s\in\T_1^{(i)}}\frac{s}{d_i}\P_{\chi_i\sim\O_i}[\chi_i=s]\cdot\left(\theta_i^{-1} - 1\right) +  \sum_{s\notin\T_1^{(i)}}\frac{s}{d_i}\P_{\chi_i\sim\O_i}[\chi_i=s].
 \end{align*}
 \endgroup
 We have $\sum\limits_{s\notin\T_1^{(i)}}\frac{s}{d_i}\P_{\chi_i\sim\O_i}[\chi_i=s] \leq \sum\limits_{s\notin\T_1^{(i)}}\P_{\chi_i\sim\O_i}[\chi_i=s] \overset{\eqref{theta_def}}{=} (1 - \theta_i) \overset{\eqref{indiv_Chernoff}}{\leq} 2\l^{-(\log\l)/3}$.\\
 Next, $\sum_{s\in\T_1^{(i)}}\frac{s}{d_i}\P_{\chi_i\sim\O_i}[\chi_i=s] \leq \E_{\chi_i\sim\O_i}\left[\frac{\chi_i}{d_i}\right] \leq 1$, and $\theta_i^{-1} - 1 = \frac{1 - \theta_i}{\theta_i} \leq 2(1-\theta_i) \leq 4 \l^{-(\log\l)/3}$.\\
 Combining the above together, conclude $\left|\E\left[\frac{\chi_i}{d_i}\right] - p_i\right| \leq 6\l^{-(\log\l)/3} \leq \frac1{\l} \leq \frac1{d_i}$.
\end{proof}

\begin{proof}[Proof of Claim~\ref{cl:mean_absolute}]
Using the result of Claim~\ref{cl:exp_of_Di} derive
\begin{equation}
\label{eq:mean_absolute}
\E\Big\lvert\chi_i - \E[\chi_i]\Big\lvert \leq \E\Big\lvert\chi_i - p_id_i\Big\lvert + \E\Big\lvert p_id_i - \E[\chi_i]\Big\lvert \leq  \E\Big\lvert\chi_i - p_id_i\Big\lvert + 1.
\end{equation}
From~\eqref{theta_def},~\eqref{distributions_almost_equal}, and definition~\eqref{chernoff_intervals} of $\T_1^{(i)}$ for $i\in F_2$ observe also the following:
 \begingroup
 \allowdisplaybreaks
\begin{align}
    \E_{\chi_i\sim\D_i}\Big\lvert\chi_i - p_id_i\Big\lvert &= \sum_{s\in\T_1^{(i)}} \Big\lvert s - p_id_i\Big\lvert\cdot\P_{\eta_i\sim\O_i}[s]\cdot\theta_i^{-1} \\
    &= \sum_{s\in\T_1^{(i)}} \Big\lvert s - p_id_i\Big\lvert\cdot\P_{\eta_i\sim\O_i}[s] + \sum_{s\in\T_1^{(i)}} \Big\lvert s - p_id_i\Big\lvert\cdot\P_{\eta_i\sim\O_i}[s]\cdot(\theta_i^{-1} - 1)\\
    &\leq \sum_{s\in\T_1^{(i)}} \Big\lvert s - p_id_i\Big\lvert\cdot\P_{\eta_i\sim\O_i}[s] + \sqrt{d_ip_i}\log\l\cdot\underbrace{\sum_{s\in\T_1^{(i)}}\P_{\eta_i\sim\O_i}[s]}_{\theta_i}\cdot\left(\frac{1 - \theta_i}{\theta_i}\right)\\
    &= \sum_{s\in\T_1^{(i)}} \Big\lvert s - p_id_i\Big\lvert\cdot\P_{\eta_i\sim\O_i}[s] + \sqrt{d_ip_i}\log\l\cdot(1 - \theta_i) \\
    &= \sum_{s\in\T_1^{(i)}} \Big\lvert s - p_id_i\Big\lvert\cdot\P_{\eta_i\sim\O_i}[s] + \sum_{s\notin\T_1^{(i)}}\sqrt{d_ip_i}\log\l\cdot\P_{\eta_i\sim\O_i}[s] \\
    &\leq \sum_{s\in\T_1^{(i)}} \Big\lvert s - p_id_i\Big\lvert\cdot\P_{\eta_i\sim\O_i}[s] + \sum_{s\notin\T_1^{(i)}}\Big\lvert s - p_id_i\Big\lvert\cdot\P_{\eta_i\sim\O_i}[s] =  \E_{\eta_i\sim\O_i}\Big\lvert\eta_i - p_id_i\Big\lvert.
\end{align}
\endgroup
Combining this with~\eqref{eq:mean_absolute}, obtain the needed.
\end{proof}

\section{Proof in Section~\ref{sec:BMS_any_alphabet}}
\label{app:upgraded_calc}
Here we formally show that the channel $\wW$ we constructed in Section~\ref{sec:BMS_any_alphabet} is indeed upgraded with respect to $W$. Recall that $W$, $\wW$, and $W_1$ are defined in~\eqref{eq:W_def},~\eqref{eq:wW_def}, and~\eqref{eq:W1_def} correspondingly, and our goal is to prove~\eqref{eq:upgrad_calculation}.
First, to check that $W_1$ is a valid channel, observe
\[ \sum_{k\in[m],\; c\in\bit} W_1\left(z_k^{(c)} \;\Big\lvert\;\wt{z_j^{(b)}}\right) = \sum_{k\in T_j}\left( W_1\left(z_k^{0} \;\Big\lvert\;\wt{z_j^{(b)}}\right) + W_1\left(z_k^{1} \;\Big\lvert\;\wt{z_j^{(b)}}\right)\right) = \sum_{k\in T_j} \dfrac{q_k}{\sum\limits_{i\in T_j}q_i} = 1.  \]
 Finally, for any $k\in[m],\;c\in\bit$, let $j_k$ be such that $k \in T_{j_k}$. Then we have for any $x\in \bit$
 \[ \sum_{j\in[\sqrt{\l}],\;b\in\bit}\wW\left(\wt{z_j^{(b)}}\,\Big\lvert\,x\right)W_1\left(z_k^{(c)} \;\Big\lvert\;\wt{z_j^{(b)}}\right) = \sum_{b\in\bit} \wW\left(\wt{z_{j_k}^{(b)}}\,\Big\lvert\,x\right)W_1\left(z_k^{(c)} \;\Big\lvert\;\wt{z_{j_k}^{(b)}}\right).  \]
 Now, if $x = c$, we derive
 \begingroup
 \allowdisplaybreaks
 \begin{align*}
 \sum_{b\in\bit} \wW & \left(\wt{z_{j_k}^{(b)}}  \,\Big\lvert\,x \right) W_1\left(z_k^{(c)} \;\Big\lvert\;\wt{z_{j_k}^{(b)}}\right)\\ 
 &=  \wW\left(\wt{z_{j_k}^{(x)}}\,\Big\lvert\,x\right)W_1\left(z_k^{(x)} \;\Big\lvert\;\wt{z_{j_k}^{(x)}}\right) + \wW\left(\wt{z_{j_k}^{(1-x)}}\,\Big\lvert\,x\right)W_1\left(z_k^{(x)} \;\Big\lvert\;\wt{z_{j_k}^{(1-x)}}\right)\\
 &=\sum_{i\in T_{j_k}}q_i\cdot(1-\theta_{j_k})\cdot\dfrac{q_k}{\sum\limits_{i\in T_{j_k}}q_i}\cdot\left(1 - \dfrac{p_k - \theta_{j_k}}{1-2\theta_{j_k}} \right) + \sum_{i\in T_{j_k}}q_i\cdot\theta_{j_k}\cdot\dfrac{q_k}{\sum\limits_{i\in T_{j_k}}q_i}\cdot\left(\dfrac{p_k - \theta_{j_k}}{1-2\theta_{j_k}} \right) \\
 &= q_k \left( 1 - \theta_{j_k} - (1 - \theta_{j_k})\cdot \left(\dfrac{p_k - \theta_{j_k}}{1-2\theta_{j_k}} \right) + \theta_{k_j}\cdot\left(\dfrac{p_k - \theta_{j_k}}{1-2\theta_{j_k}} \right) \right) \\
 &= q_k \left( 1 - \theta_{j_k} - (1 - 2\theta_{j_k})\cdot \left(\dfrac{p_k - \theta_{j_k}}{1-2\theta_{j_k}} \right) \right) = q_k\cdot(1-p_k). 
 \end{align*}
 \endgroup
Otherwise, then $x = 1-c$, obtain
 \begingroup
 \allowdisplaybreaks
\begin{align*}
     \sum_{b\in\bit} \wW & \left(\wt{z_{j_k}^{(b)}}  \,\Big\lvert\,x \right) W_1\left(z_k^{(c)} \;\Big\lvert\;\wt{z_{j_k}^{(b)}}\right)\\
 &=  \wW\left(\wt{z_{j_k}^{(x)}}\,\Big\lvert\,x\right)W_1\left(z_k^{(1-x)} \;\Big\lvert\;\wt{z_{j_k}^{(x)}}\right) + \wW\left(\wt{z_{j_k}^{(1-x)}}\,\Big\lvert\,x\right)W_1\left(z_k^{(1-x)} \;\Big\lvert\;\wt{z_{j_k}^{(1-x)}}\right)\\
 &=\sum_{i\in T_{j_k}}q_i\cdot(1-\theta_{j_k})\cdot\dfrac{q_k}{\sum\limits_{i\in T_{j_k}}q_i}\cdot\left(\dfrac{p_k - \theta_{j_k}}{1-2\theta_{j_k}} \right) + \sum_{i\in T_{j_k}}q_i\cdot\theta_{j_k}\cdot\dfrac{q_k}{\sum\limits_{i\in T_{j_k}}q_i}\cdot\left( 1 - \dfrac{p_k - \theta_{j_k}}{1-2\theta_{j_k}} \right) \\
 &= q_k \left( (1 - \theta_{j_k})\cdot \left(\dfrac{p_k - \theta_{j_k}}{1-2\theta_{j_k}} \right) + \theta_{j_k} -  \theta_{j_k}\cdot\left(\dfrac{p_k - \theta_{j_k}}{1-2\theta_{j_k}} \right) \right) \\
 &= q_k \left( (1 - 2\theta_{j_k})\cdot \left(\dfrac{p_k - \theta_{j_k}}{1-2\theta_{j_k}} \right) + \theta_{j_k} \right) = q_k\cdot p_k.
 \end{align*}
 \endgroup
 Therefore, for any $k\in[m]$ and $c,x \in \bit$ it holds
 \[  \sum_{j\in[\sqrt{\l}],\;b\in\bit}\wW\left(\wt{z_j^{(b)}}\,\Big\lvert\,x\right)W_1\left(z_k^{(c)} \;\Big\lvert\;\wt{z_j^{(b)}}\right) = W\left(z^{(c)}_k\;\Big\lvert\;x\right).  \]

\section{Proof of Proposition~\ref{prop:eqv}}
\label{app:proofeqv}
We still use $\bU_{[1:N]}$ to denote the information vector and use $\bX_{[1:N]}=\bU_{[1:N]} M^{(t)}$ to denote the encoded vector. Assume that $\bU_{[1:N]}$ consists of $N$ i.i.d. Bernoulli-$1/2$ random variables.
Similarly to the example in Section~\ref{sect:bit}, we define the random vectors $\bV_{[1:N]}^{(j)},\bU_{[1:N]}^{(j)}$ for $j=t-1,t-2,\dots,1$ recursively
\begin{equation}  \label{eq:dfrecu}
\begin{aligned}
\bV_{[1:N]}^{(t-1)} &= \bU_{[1:N]} D^{(t-1)},  \\
\bU_{[1:N]}^{(j)} &= \bV_{[1:N]}^{(j)} Q^{(j)}
\text{~for~}j=t-1,t-2,\dots,1 ,  \\
\bV_{[1:N]}^{(j)} &=  \bU_{[1:N]}^{(j+1)} D^{(j)}
\text{~for~}j=t-2,t-3,\dots,1 ,  \\
\bX_{[1:N]} &=  \bU_{[1:N]}^{(1)} D^{(0)}  .
\end{aligned}
\end{equation}
Moreover, let $\bU_{[1:N]}^{(t)}:=\bU_{[1:N]}$.
We will prove the following two claims:
\begin{enumerate}
    \item For every $a=1,2,\dots,t$, the following $\ell^{t-a}$ random vectors
    $$
    (\bU_{[h\ell^a+1:h\ell^a+\ell^a]}^{(a)},\bY_{[h\ell^a+1:h\ell^a+\ell^a]}), \quad
    h=0,1,\dots,\ell^{t-a}-1
    $$
    are i.i.d.
    
    \item For every $a=1,2,\dots,t$ and every $i\in[\ell^a]$, we write $\tau_a(i)=(i_1,i_2,\dots,i_a)$, where $\tau_a$ is the $a$-digit expansion function defined in \eqref{eq:deftau}. Then for every $h=0,1,\dots,\ell^{t-a}-1$ and every $i\in[\ell^a]$, we have
    \begin{equation}\label{eq:cl2}
    \P( U_{h\ell^a+i}^{(a)}
    \to (\bU_{[h\ell^a+1:h\ell^a+i-1]}^{(a)},\bY_{[h\ell^a+1:h\ell^a+\ell^a]}) )
    \equiv
    W_{i_1,\dots,i_a}(K_1^{(0)},K_{i_1}^{(1)},\dots,K_{i_1,\dots,i_{a-1}}^{(a-1)})
    .
    \end{equation}
\end{enumerate}
Note that Proposition~\ref{prop:eqv} follows immediately from taking $a=t$ in \eqref{eq:cl2}. Therefore, we only need to prove these two claims.

We start with the first claim.
By \eqref{eq:defD}, for every $j=0,1,\dots,t-1$, the matrix $D^{(j)}$ is a block diagonal matrix with $\ell^{t-j-1}$ blocks on the diagonal, where each block has size $\ell^{j+1}\times \ell^{j+1}$, and all the $\ell^{t-j-1}$ blocks are the same.
According to \eqref{eq:defpi}--\eqref{eq:defQ}, the permutation matrix $Q^{(j)}$ keeps the first $t-j-1$ digits of the $\ell$-ary expansion to be the same and performs a cyclic shift on the last $j+1$ digits.
Therefore, for every $j=1,\dots,t-1$, the permutation matrix $Q^{(j)}$ is also a block diagonal matrix with $\ell^{t-j-1}$ blocks on the diagonal, where each block has size $\ell^{j+1}\times \ell^{j+1}$, and all the $\ell^{t-j-1}$ blocks are the same.
Therefore, for every $j\in[t]$, the matrix $M^{(j)}$ defined in \eqref{eq:defMj} can be written in the following block diagonal form
\begin{equation} \label{eq:bdia}
M^{(j)}:=\underbrace{\{\overline{M}^{(j)},\overline{M}^{(j)},\dots,\overline{M}^{(j)}\}}_{
        \text{number of }\overline{M}^{(j)} \text{ is } \ell^{t-j}}  ,
\end{equation}
where the size of $\overline{M}^{(j)}$ is $\ell^j \times \ell^j$.
By the recursive definition \eqref{eq:dfrecu}, one can show that for every $j\in[t]$, we have
$$
\bX_{[1:N]}=\bU_{[1:N]}^{(j)} M^{(j)}.
$$
Combining this with \eqref{eq:bdia}, we obtain that for every $a\in[t]$ and every $h=0,1,\dots,\ell^{t-a}-1$,
\begin{equation}\label{eq:blk}
\bX_{[h\ell^a+1:h\ell^a+\ell^a]}=\bU_{[h\ell^a+1:h\ell^a+\ell^a]}^{(a)} \overline{M}^{(a)}.
\end{equation}
Since $\bX_{[1:N]}$ consists of $N$ i.i.d. Bernoulli-$1/2$ random variables, the following $\ell^{t-a}$ random vectors
    $$
    (\bX_{[h\ell^a+1:h\ell^a+\ell^a]},\bY_{[h\ell^a+1:h\ell^a+\ell^a]}), \quad
    h=0,1,\dots,\ell^{t-a}-1
    $$
    are i.i.d.
Combining this with \eqref{eq:blk}, we conclude that the random vectors
    $$
    (\bU_{[h\ell^a+1:h\ell^a+\ell^a]}^{(a)},\bY_{[h\ell^a+1:h\ell^a+\ell^a]}), \quad
    h=0,1,\dots,\ell^{t-a}-1
    $$
    are also i.i.d.
This proves claim 1.

Next we prove claim 2 by induction.
The case of $a=1$ is trivial. Now we assume that \eqref{eq:cl2} holds for $a$ and prove it for $a+1$.
In light of claim 1, we only need to prove \eqref{eq:cl2} for the special case of $h=0$ because the distributions for different values of $h$ are identical, i.e. we only need to prove that
\begin{equation} \label{eq:aplus1}
\P( U_i^{(a+1)}
\to (\bU_{[1:i-1]}^{(a+1)},\bY_{[1:\ell^{a+1}]}) )
\equiv
W_{i_1,\dots,i_{a+1}}(K_1^{(0)},K_{i_1}^{(1)},\dots,K_{i_1,\dots,i_a}^{(a)}) 
\quad\quad \forall i\in[\ell^{a+1}] .
\end{equation}

For a given $i\in[\ell^{a+1}]$, we write its $(a+1)$-digit expansion as $\tau_{a+1}(i)=(i_1,i_2,\dots,i_{a+1})$.
By \eqref{eq:dfrecu}, we know that
$\bV_{[1:N]}^{(a)} =  \bU_{[1:N]}^{(a+1)} D^{(a)}$.
By \eqref{eq:defD}, the matrix $D^{(a)}$ is a block diagonal matrix with $\ell^{t-1}$ blocks on the diagonal, where each block has size $\ell \times \ell$. (Note that these $\ell^{t-1}$ blocks are not all the same unless $a=0$.)
Therefore, for every $h=0,1,\dots,\ell^{t-1}-1$, there is a bijection between the two vectors $\bV_{[h\ell+1:h\ell+\ell]}^{(a)}$ and $\bU_{[h\ell+1:h\ell+\ell]}^{(a+1)}$. Consequently, there is a bijection between the two vectors
$\bU_{[1:i-i_{a+1}]}^{(a+1)}$ and $\bV_{[1:i-i_{a+1}]}^{(a)}$, so
we have
\begin{equation}  \label{eq:df41}
\P( U_i^{(a+1)}
\to (\bU_{[1:i-1]}^{(a+1)},\bY_{[1:\ell^{a+1}]}) )
\equiv
\P( U_i^{(a+1)}
\to (\bU_{[i-i_{a+1}+1:i-1]}^{(a+1)},\bV_{[1:i-i_{a+1}]}^{(a)},\bY_{[1:\ell^{a+1}]}) ) .
\end{equation}
By \eqref{eq:defD}, we also have that
\begin{equation}  \label{eq:df17}
\bV_{[i-i_{a+1}+1:i-i_{a+1}+\ell]}^{(a)} =  \bU_{[i-i_{a+1}+1:i-i_{a+1}+\ell]}^{(a+1)} K_{i_1,i_2,\dots,i_a}^{(a)}   .
\end{equation}
Let $\hat{i}:=(i-i_{a+1})/\ell$, so $\tau_a(\hat{i})=(i_1,i_2,\dots,i_a)$.
According to the induction hypothesis,
$$
\P( U_{\hat{i}}^{(a)}
\to (\bU_{[1:\hat{i}-1]}^{(a)},\bY_{[1:\ell^a]}) )
\equiv
W_{i_1,\dots,i_a}(K_1^{(0)},K_{i_1}^{(1)},\dots,K_{i_1,\dots,i_{a-1}}^{(a-1)})  .
$$
Combining this with the relation $\bU_{[1:N]}^{(a)} = \bV_{[1:N]}^{(a)} Q^{(a)}$ and \eqref{eq:df41}--\eqref{eq:df17}, we can prove \eqref{eq:aplus1} with the ideas illustrated in Fig.~\ref{fig:top3}--\ref{fig:bot3}. 
This completes the proof of claim 2 as well as Proposition~\ref{prop:eqv}.

\section*{Acknowledgment}

The authors are grateful to Hamed Hassani for useful discussions and sharing his insights on random coding theorems during the initial stages of this work.  They also thank the anonymous
reviewers for their careful reading and valuable suggestion and corrections to the final paper.

\bibliographystyle{alpha}
\bibliography{arikan-meets-shannon}

\end{document}